\tikzset{cross/.style={cross out, draw=black, minimum size=2*(#1-\pgflinewidth), inner sep=0pt, outer sep=0pt},
cross/.default={1pt}}
\tikzstyle{int}=[draw, fill=blue!20, minimum size=2em]
\tikzstyle{dot}=[circle, draw, fill=blue!20, minimum size=2em]
\tikzstyle{dotred}=[circle, draw, fill=red!20, minimum size=2em]
\tikzstyle{init} = [pin edge={to-,thin,black}]
\tikzstyle{initred} = [pin edge={to-,thin,red}]
\tikzstyle{plan}=[draw, fill=blue!20, minimum size=2em, text width=5em, rounded corners,align=center]
\tikzstyle{planwide}=[draw, fill=blue!20, minimum size=2em, text width=8em, rounded corners,align=center]
\tikzstyle{nodedot}=[circle, draw, fill=white, minimum size=0.3cm,inner sep=0pt]
\tikzstyle{nodedot}=[circle, draw, fill=white, minimum size=3,inner sep=0pt]
\tikzstyle{Medge}=[green!60!black, thick]
\tikzstyle{Bedge}=[red, thick]
\tikzstyle{Cedge}=[blue, thick]
\tikzstyle{Sedge}=[black, thick]
\tikzstyle{Mgiantedge}=[green!60!black, line width=3.0pt]
\tikzstyle{Bgiantedge}=[red,line width=3.0pt]
\tikzstyle{Cgiantedge}=[blue,line width=3.0pt]
\tikzstyle{Sgiantedge}=[black,line width=3.0pt]
\tikzstyle{shadedgiantnode}=[circle, draw, fill=black!10, minimum size=1cm, inner sep=0pt]
\tikzstyle{unshadedgiantnode}=[circle, draw, fill=white, minimum size=1cm, inner sep=0pt]
\tikzset{my loop/.style =  {to path={
  \pgfextra{}
  [looseness=5,min distance=10mm]
  \tikz@to@curve@path},font=\sffamily\small
  }}  
\newcolumntype{C}[1]{>{\centering\arraybackslash}p{#1}}
\tikzstyle{vertexdot}=[circle, draw, fill=white, minimum size=3,inner sep=0pt]
\tikzstyle{root}=[circle, draw, fill=black, minimum size=3,inner sep=0pt]
\tikzstyle{vertexdotsolid}=[circle, draw, fill=black, minimum size=3,inner sep=0pt]
\pgfplotsset{
    standard/.style={
        axis x line=middle,
        axis y line=middle,
        every axis x label/.style={at={(current axis.right of origin)},anchor=north west},
        every axis y label/.style={at={(current axis.above origin)},anchor=north west}
    }
}
\theoremstyle{plain}
\newtheorem{theorem}{Theorem}
\newtheorem{lemma}{Lemma}
\newtheorem{proposition}{Proposition}
\theoremstyle{definition}
\newtheorem{definition}{Definition}
\newtheorem{problem}{Problem}
\newtheorem{remark}{Remark}
\newtheorem{claim}{Claim}
\newtheorem*{remark*}{Remark}
\newtheorem*{theorem*}{Theorem}
\newcommand{\floor}[1]{\left\lfloor #1 \right\rfloor}
\newcommand{\ceil}[1]{\left\lceil #1 \right\rceil}
\newcommand \E[1]{\mathbb{E}[#1]}
\newcommand{\Universal}{\beta}
\newcommand{\id}{\mathrm{id}}
\newcommand{\diverge}{\to\infty}
\newcommand{\iiddistr}{{\stackrel{\text{\iid}}{\sim}}}
\newcommand{\naturals}{{\mathbb{N}}}
\newcommand{\Expect}{\mathbb{E}}
\newcommand{\expect}[1]{\mathbb{E}\left[ #1 \right]}
\newcommand{\prob}[1]{ \mathbb{P}\left\{ #1 \right\} }
\newcommand{\Cov}{\mathrm{Cov}}
\def\Var{\mathrm{Var}}
\def\E{\mathbb{E}}
\newcommand{\Binom}{{\rm Binom}}
\newcommand{\ie}{i.e.\xspace}
\newcommand{\iid}{i.i.d.\xspace}
\newcommand{\pth}[1]{\left( #1 \right)}
\newcommand{\indc}[1]{{\mathbf{1}_{\left\{{#1}\right\}}}}
\newcommand{\sfL}{{\mathsf{L}}}
\newcommand{\sfM}{{\mathsf{M}}}
\newcommand{\sfN}{{\mathsf{N}}}
\newcommand{\sfT}{{\mathsf{T}}}
\newcommand{\calB}{{\mathcal{B}}}
\newcommand{\calC}{{\mathcal{C}}}
\newcommand{\calF}{{\mathcal{F}}}
\newcommand{\calG}{{\mathcal{G}}}
\newcommand{\calH}{{\mathcal{H}}}
\newcommand{\calJ}{{\mathcal{J}}}
\newcommand{\calK}{{\mathcal{K}}}
\newcommand{\calM}{{\mathcal{M}}}
\newcommand{\calN}{{\mathcal{N}}}
\newcommand{\calO}{{\mathcal{O}}}
\newcommand{\calR}{{\mathcal{R}}}
\newcommand{\calT}{{\mathcal{T}}}
\newcommand{\calU}{{\mathcal{U}}}
\newcommand{\calW}{{\mathcal{W}}}
\newcommand{\ER}{Erd\H{o}s--R\'enyi\xspace}
\renewcommand{\tilde}{\widetilde}
\renewcommand{\bar}{\overline}
\newcommand{\aut}{\mathsf{aut}}
\newcommand{\sub}{\mathsf{sub}}
\newcommand{\Vprime}{\UTone'(i)}
\newcommand{\Vdoubleprime}{\UTone''(i)}
\newcommand{\Wprime}{\UTone'(j)}
\newcommand{\Wdoubleprime}{\UTone''(j)}
\renewcommand{\SS}{\texttt{S}}
\newcommand{\TT}{\texttt{T}}
\newcommand{\R }{R}
\newcommand{\B}{\calB}
\newcommand{\complete}{\mathbb{K}_n}
\newcommand{\UTone}{U_{\sfL}}
\newcommand{\UTtwo}{U_{\sfM}}
\newcommand{\UN}{U_{\sfN}}
\newcommand{\UT}{U_{\sfT}}
\newcommand{\UNone}{U_{\sfN_1}}
\newcommand{\UNtwo}{U_{\sfN_2}}
\newcommand{\PTone}{P_{\sfL}}
\newcommand{\PTtwo}{P_{\sfM}}
\newcommand{\PN}{P_{\sfN}}
\newcommand{\bTtwo}{b_{\sfM}}
\newcommand{\calUTone}{\calU_{\sfL}}
\newcommand{\calUTtwo}{\calU_{\sfM}}
\newcommand{\calUN}{\calU_{\sfN}}
\newcommand{\calNTone}{\calN_{\sfL}}
\newcommand{\calNTtwo}{\calN_{\sfM}}
\newcommand{\calNT}{\calN_{\sfT}}
\newcommand{\calNN}{\calN_{\sfN}}
\newcommand{\calNNone}{\calN_{\sfN_1}}
\newcommand{\calNNtwo}{\calN_{\sfN_2}}
\newcommand{\stepa}[1]{\overset{\rm (a)}{#1}}
\newcommand{\stepb}[1]{\overset{\rm (b)}{#1}}
\newcommand{\stepc}[1]{\overset{\rm (c)}{#1}}
\begin{document}
\title{Random graph matching at Otter's threshold\\ via counting chandeliers
}

\author{Cheng Mao, Yihong Wu,  Jiaming Xu, and Sophie H.\ Yu\thanks{
C.\ Mao is with the School of Mathematics, Georgia Institute of Technology, Atlanta, Georgia, USA
\texttt{cheng.mao@math.gatech.edu}.
Y.\ Wu is  with the Department of Statistics and Data Science, Yale University, New Haven CT, USA, 
\texttt{yihong.wu@yale.edu}.
J.\ Xu and S.\ H.\ Yu are with The Fuqua School of Business, Duke University, Durham NC, USA, \texttt{\{jx77,haoyang.yu\}@duke.edu}.
C.~Mao is supported in part by the NSF Grants DMS-2053333 and DMS-2210734. Y.~Wu is supported in part by the NSF Grant CCF-1900507, an NSF CAREER award CCF-1651588, and an Alfred Sloan fellowship. J. Xu is supported in part by the NSF Grant CCF-1856424
and an NSF CAREER award CCF-2144593. S. H.~Yu is supported by the NSF Grant CCF-1856424.
}}

\date{\today}
\maketitle

\begin{abstract}

We propose an efficient algorithm for graph matching based on similarity scores constructed from counting a certain family of weighted trees rooted at each vertex.
For two \ER graphs $\calG(n,q)$ whose edges are correlated through a latent vertex correspondence, we show that this algorithm correctly matches all but a vanishing fraction of the vertices with high probability, provided that $nq\to\infty$ and the edge correlation coefficient $\rho$ satisfies  $\rho^2>\alpha \approx 0.338$, where $\alpha$ is Otter's tree-counting constant. Moreover, this almost exact matching can be made exact under an extra condition that is information-theoretically necessary. This is the first polynomial-time graph matching algorithm that succeeds at an explicit constant correlation and applies to both sparse and dense graphs. In comparison, previous methods either require $\rho=1-o(1)$ or are restricted to sparse graphs. 

The crux of the algorithm is a carefully curated family of rooted trees called \textit{chandeliers}, which
allows effective extraction of the graph correlation from the counts of the same tree while suppressing the undesirable correlation between those of different trees.

\end{abstract}

\tableofcontents
\section{Introduction}
\label{sec:intro}

 Graph matching (also known as network alignment) refers to the problem of finding the bijection between the vertex sets of the two graphs that maximizes the total number of common edges. When the two graphs are exactly isomorphic to each other, this reduces to the classical graph isomorphism problem, for which the best known algorithm runs in quasi-polynomial time~\cite{Babai2016}. In general, graph matching is an instance of the \textit{quadratic assignment problem}~\cite{burkard1998quadratic},
which is known to be NP-hard to solve or even approximate~\cite{makarychev2010maximum}.

Motivated by real-world applications (such as social network de-anonymization \cite{narayanan2008robust} and computational biology \cite{singh2008global})
as well as the need to understand the average-case computational complexity, a recent line of work is devoted to the study of theory and algorithms for graph matching
under statistical models, by assuming the two graphs are randomly generated with correlated edges under a hidden vertex correspondence. 
A canonical model is the following
\emph{correlated \ER graph model}~\cite{pedarsani2011privacy}.

\begin{definition}[Correlated \ER graph model] 
\label{def:er-model}
Let $\pi$ denote a latent permutation on $[n] \triangleq \{1,\dots,n\}$. We generate two random graphs
on the common vertex set $[n]$
with adjacency matrices $A$ and $B$ such that  $(A_{ij}, B_{\pi(i)\pi(j)})$ are i.i.d.\ pairs of Bernoulli random variables with
mean $q \in [0,1]$ and correlation coefficient $\rho$ 
for $1 \leq i<j\leq n$.
We write $(A,B) \sim \calG(n,q,\rho)$. 
\end{definition}

Given $(A,B) \sim \calG(n,q,\rho)$, our goal is to recover the latent vertex correspondence $\pi$.
 The information-theoretic thresholds for both exact and partial recovery have been derived \cite{cullina2016improved,cullina2017exact,Hall2020partial,wu2021settling,ganassali2021impossibility,ding2022matching} and
various efficient matching algorithms have been developed with performance
guarantees  \cite{dai2019analysis,ding2021efficient,FMWX19a,FMWX19b,ganassali2020tree,ganassali2021correlation,mao2021random,mao2021exact}. 
Despite these exciting progresses, most existing efficient algorithms require the two graphs to be almost perfectly correlated; as such, the problem of polynomial-time recovery with a constant correlation remains largely unresolved except for sufficiently sparse graphs. 
Specifically, if the correlation $\rho$ is an (unspecified) constant sufficiently close to $1$, exact recovery is achievable in polynomial time for graphs whose average degrees satisfy $(1+\epsilon)\log n \le nq \le n^{\frac{1}{\Theta(\log \log n)}}$ \cite{mao2021exact}, while partial recovery is achievable for sparse graphs with $nq=O(1)$ \cite{ganassali2020tree,ganassali2021correlation}. 
For dense graphs, the best known result for polynomial-time recovery requires $\rho \ge 1 - (\log\log(n))^{-C}$ for some constant $C>0$ \cite{mao2021random}. 
The current paper significantly advances the state of the art by establishing the following results.


\begin{theorem*} 
Assume that $0 < q \le 1/2$ and 
$$
\rho^2 > \alpha \approx 0.338,
$$ 
where 
\[
\alpha=\lim_{K\to\infty} \frac{K}{\log (\text{number of unlabeled trees with $K$ edges})}
\]
is Otter's tree-counting constant \cite{otter1948number}.
Given a pair of correlated \ER graphs $(A,B) \sim \calG(n,q,\rho)$, the following holds:
\begin{itemize}
\item 
\emph{(Exact recovery)} 
If $\rho>0$ and $nq(q+\rho(1-q)) \ge (1+\epsilon) \log n$ for any constant $\epsilon > 0$,\footnote{The condition $nq(q+\rho(1-q)) \ge (1+\epsilon) \log n$ is information-theoretically necessary, for otherwise the intersection graph between $A$ and $B$ (under the vertex correspondence $\pi$) contains isolated vertices with high probability and exact recovery is impossible.}
there is a polynomial-time algorithm that recovers $\pi$ exactly with high probability.

\item
\emph{(Almost exact recovery)}
If $nq = \omega(1)$,  there is a polynomial-time algorithm that outputs a subset $I \subset [n]$ and a map $\hat \pi : I \to [n]$ such that $\hat \pi = \pi|_I$ and $|I| = (1-o(1)) n$ with high probability.

\item
\emph{(Partial recovery)}
For any constant $\delta \in (0,1)$, there is a constant $C(\rho,\delta) > 0$ depending only on $\rho$ and $\delta$ such that if $nq \ge C(\rho,\delta)$,  the above $I$ and $\hat \pi$ satisfy that $\hat \pi = \pi|_I$ with high probability and $\E[|I|] \ge (1-\delta) n$. 
\end{itemize}
\end{theorem*}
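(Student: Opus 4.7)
The plan is to construct, for each ordered pair $(i,u) \in [n]\times[n]$, a similarity score of the form
\[
W_{iu} \;=\; \sum_{T \in \calT^\star} w_T \cdot X_T^A(i)\, X_T^B(u),
\]
where $\calT^\star$ is a carefully curated family of unlabeled rooted trees (the \emph{chandeliers}), $X_T^A(i)$ is a centered count of embeddings of $T$ into $A$ rooted at~$i$, and $w_T$ is a calibrated weight. The centering of the subgraph counts will use polynomial corrections in the edge marginal $q$, reminiscent of the orthogonalised signed counts underlying the low-degree method, so that $\E{X_T^A(i)}=0$ under $\calG(n,q)$ and cross-covariances $\Cov(X_T^A(i),X_{T'}^A(i))$ vanish whenever $T\ne T'$. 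The purpose of restricting to chandeliers, rather than all trees, is precisely to curb such covariances and shape the variance constants that show up in the next step.

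First I compute the first moments. Under the correlated model, a pair of embeddings of $T$ into $A$ (rooted at $i$) and into $B$ (rooted at $\pi(i)$) that share the same preimage labels picks up a factor of $\rho$ on each of the $|E(T)|$ common edges, giving
\[
\E{W_{i,\pi(i)}} \;\gtrsim\; \sum_{T\in\calT^\star} w_T\cdot \rho^{2|E(T)|}\cdot \#\{\text{rooted embeddings of }T\}.
\]
For $u\ne\pi(i)$, the relabeling destroys the joint edge alignment and, combined with the centering, $\E{W_{iu}}$ reduces to an ``incidental'' term that I can bound separately. The difference of these two means is the signal.

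Second, I control $\Var(W_{iu})$. Expanding the square produces a double sum over pairs $(T,T')\in\calT^\star\times\calT^\star$ of embeddings sharing some vertices and edges; by the centering, only genuinely overlapping configurations contribute. The crucial point is that the number of unlabeled overlap patterns with a total of $K$ edges grows like $\alpha^{-K}$ by Otter's theorem, so the variance carries a factor $\alpha^{-K}$ while the signal-squared carries $\rho^{4K}$. Taking the chandelier size to be $K=\Theta(\log n)$ and using $\rho^2>\alpha$, the squared signal exceeds the variance by a polynomial factor in $n$, which suffices to beat the union bound over all $n^2$ pairs $(i,u)$. The main obstacle I expect is showing that chandeliers are engineered finely enough that (i) the combinatorial constants in the variance expansion really achieve the rate dictated by Otter's bound, and (ii) the exponentially many cross-terms $(T,T')$ with partial overlap can be controlled simultaneously; this is where the detailed structural definition of chandelier has to do its work.

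Third, I convert the pointwise score guarantees into a permutation. For partial and almost exact recovery, output $\hat\pi(i)=u$ whenever $W_{iu}$ exceeds a threshold and is the unique such entry in its row and column; the moment bounds give at least $(1-\delta)n$ correctly matched vertices when $nq\ge C(\rho,\delta)$ and $(1-o(1))n$ when $nq\to\infty$. For exact recovery, I use this matching as a seed and boost via a second-round local refinement: for each vertex $i$, among candidates $u$, count the already-matched neighbours $j$ of $i$ in $A$ whose images $\hat\pi(j)$ are neighbours of $u$ in $B$, and pick the maximizer. Under the degree hypothesis $nq(q+\rho(1-q))\ge(1+\epsilon)\log n$, the intersection graph has minimum degree $\Omega(\log n)$, giving a Chernoff-plus-union argument that closes the remaining $o(n)$ errors simultaneously. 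A secondary difficulty beyond the variance step is handling the dependence of the $W_{iu}$'s across $i$ when turning pointwise guarantees into a consistent bijection; I expect to address this by restricting the seed matching to a subset of vertices whose chandelier-sized neighbourhoods are tree-like and essentially vertex-disjoint, so that their scores are conditionally independent given a suitable tree-exposure filtration.
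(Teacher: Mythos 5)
Your proposal mirrors the paper's high-level architecture (centered chandelier-count signatures, first/second moment comparison against Otter's constant, thresholding for partial recovery, seeded boosting for exact recovery), but there are three substantive gaps that would prevent the argument from closing.

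\paragraph{The moment bookkeeping is off and would not yield the $\rho^2>\alpha$ threshold.}
You write that a pair of aligned embeddings ``picks up a factor of $\rho$ on each of the $|E(T)|$ common edges, giving\ldots $\rho^{2|E(T)|}$,'' and then that ``the variance carries a factor $\alpha^{-K}$ while the signal-squared carries $\rho^{4K}$.'' These exponents are inconsistent. Each common edge contributes \emph{one} factor of $\rho\sigma^2$ from $\Cov(\bar A_e,\bar B_e)$, so the mean per tree of size $N$ scales as $\rho^N$ (not $\rho^{2N}$); summing over $|\calT|\approx\alpha^{-N}$ trees, the total mean scales as $|\calT|\rho^N\sigma^{2N}n^N$, so the squared signal carries $|\calT|^2\rho^{2N}$. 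The variance, in the idealized uncorrelated-counts baseline, carries $|\calT|$ (not $\alpha^{-K}$ as an additional multiplicative boost). The ratio $\Var/\E^2 \asymp 1/(|\calT|\rho^{2N}) = (\alpha/\rho^2)^N$, which vanishes precisely when $\rho^2>\alpha$. Your stated scalings $\rho^{4K}$ vs.\ $\alpha^{-K}$ would require $\rho^4>1/\alpha$, which is impossible for $|\rho|\le 1$. Beyond this, saying the variance is controlled because overlap patterns number $\alpha^{-K}$ under-describes the real difficulty: the hard part is not enumerating tree shapes but controlling the cross-covariances between counts of \emph{different} trees when their embeddings overlap, and this is exactly where the chandelier structure (wires of calibrated length attached to non-isomorphic bulbs, with a cap $R$ on bulb automorphisms) has to work. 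The paper spends most of its length on a delicate decomposition of the union graph $S_1\cup T_1\cup S_2\cup T_2$ into a ``cycle part,'' a ``fully-overlapping-wires part,'' and a ``partially-overlapping-wires part,'' bounding each separately; a qualitative appeal to ``chandeliers are engineered finely enough'' does not substitute for this.

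\paragraph{No mechanism for polynomial-time computation.}
Your chandeliers have $\Theta(\log n)$ edges, so na\"ive enumeration of rooted embeddings costs $n^{\Theta(\log n)}$, which is quasi-polynomial, not polynomial. The paper resolves this by a color-coding approximation of the signed tree counts and then shows that the approximation error is subsumed in the same second-moment bounds. Your proposal is silent on this step, so as written your algorithm would not be polynomial time, and the theorem claims polynomial time.

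\paragraph{The proposed fix for dependence across pairs would reintroduce the sparsity restriction the paper is trying to remove.}
You suggest handling the dependence of $W_{iu}$ across pairs by restricting to vertices ``whose chandelier-sized neighbourhoods are tree-like and essentially vertex-disjoint,'' exposed via a filtration. In dense graphs (say $nq=n^{\Omega(1)}$), radius-$\Theta(\log n)$ neighbourhoods are nowhere near tree-like — and indeed, the whole reason the paper introduces signed subgraph counts over chandeliers is to avoid any reliance on locally-tree-like structure (the paper explicitly contrasts with prior sparse-regime methods on this point). In fact no such conditioning is needed: the per-pair Chebyshev bounds $\Var[\Phi_{ij}]/\E[\Phi_{ii}]^2=o(n^{-2})$ for fake pairs and $\Var[\Phi_{ii}]/\E[\Phi_{ii}]^2=o(1)$ for true pairs, combined with a union bound over $n^2$ pairs, suffice directly. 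Your proposed filtration approach is not only unnecessary but would break for the dense regime the theorem is meant to cover.

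Your seeded-boosting step (grow the matching greedily by common-neighbour counts, justified by a Chernoff-plus-expansion argument on the intersection graph) matches the paper's approach and is fine; the paper's Lemma showing the expansion property of $\calG(n,qs)$ for all small adversarial $I$ is the key additional ingredient you would need to make the argument work with a data-dependent seed set.
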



The above theorem identifies an explicit threshold $\rho^2 > \alpha$ that allows polynomial-time graph matching for both sparse and dense graphs. 
In certain regimes, the condition for exact recovery in this result is in fact \textit{optimal}, matching the information-theoretic threshold identified in \cite{cullina2017exact,wu2021settling} (see \prettyref{rmk:exactIT} and \prettyref{fig:GM_phase_new} for a detailed discussion).

\subsection{Key challenges and algorithmic innovations}

A principled approach to graph matching is the following three-step procedure:
\begin{enumerate}
    \item 
\textit{Signature embedding}: Associate to each vertex $i$ in $A$ a \emph{signature} $s_i$ and to each vertex $j$ in $B$ a signature $t_j$.

\item \textit{Similarity scoring}: Compute the similarity score $\Phi_{ij}$ based on $s_i$ and $t_j$ using a certain similarity measure on the signature space.

\item \textit{Linear assignment}: 
Solve  max-weight bipartite matching with weights $\Phi_{ij}$ either exactly or approximately (e.g.,~greedy algorithm).
\end{enumerate}
In this way, we reduce the problem from the NP-hard quadratic assignment to the  tractable linear assignment.
Clearly, the key to this approach is the construction of the similarity scores.

Many existing algorithms for graph matching largely follow this paradigm using similarity scores based on  neighborhood statistics
\cite{
bollobas1982distinguishing,czajka2008improved, ding2021efficient,dai2019analysis,mao2021random},
spectral methods
\cite{umeyama1988eigendecomposition,singh2008global,FMWX19b}, or convex relaxations \cite{zaslavskiy2008path,aflalo2015convex,lyzinski2016graph}.
In terms of theoretical guarantees, these methods either require extremely high correlation or are tailored to sparse graphs.
Note that two $\rho$-correlated \ER graphs differ by $\Theta(1-\rho)$ fraction of edges. Thus, to succeed at a constant $\rho$ bounded away from $1$, 
the similarity scores need to be robust to perturbing a constant fraction of edges. 
All existing  algorithms~\cite{mao2021exact,ganassali2020tree,ganassali2021correlation} achieving this goal 
crucially exploit the tree structure of local neighborhoods and are thus restricted to sparse graphs. On the other hand, algorithms that apply to both sparse and dense graphs~\cite{ding2021efficient,FMWX19b,mao2021random} so far can  only tolerate a vanishing fraction of edge perturbation and thus all require $\rho=1-o(1)$.


 The major algorithmic innovation of this work is a new construction based on \textit{subgraph counts}.
 Specifically, the signature assigned to a node $i$ is a vector indexed by a family of non-isomorphic subgraphs, where each entry records the total number of subgraphs rooted at $i$ that appear in the graph weighted by the centered adjacency matrix, known as the signed graph count \cite{bubeck2016testing} (cf.~\prettyref{eq:W_i_H} and \prettyref{eq:def-sig-vec-w-h-i} for the formal definition). 
The similarity score for each pair of vertices is the weighted inner product between their signatures. 
The key to executing this strategy is a carefully curated family of trees called \emph{chandeliers}, which, as we explain next, allows one to extract the graph correlation from the counts of the same tree while suppressing the undesirable correlation between those of different trees.
This leads to a robust construction of signatures
that can withstand perturbing a constant fraction of edges, 
without relying on the locally tree-like property that limits the previous methods to sparse graphs.

Counting subgraphs is a popular method for network analysis in both theory \cite{mossel2015reconstruction,bubeck2016testing} and practice \cite{milo2002network,alon2008biomolecular,ribeiro2021survey}.
We refer to \cite[Sec.~2.4]{mao2021testing} for a comprehensive overview of hypothesis testing and estimation based on subgraph counting for networks with latent structures. Notably, most of these previous works focus on counting cycles. 
%
However, here in order to succeed at a constant $\rho$,
we need to count a sufficiently rich class of subgraphs (whose cardinality grows at least exponentially with the number of edges)\footnote{A high-level explanation is as follows. For a single subgraph $H$ with $N$ edges, the correlation between the subgraph counts of $H$ rooted at vertex $i$
across $A$ and $B$ -- the \emph{signal}, is smaller than their variances by a multiplicative factor of $\rho^{N}$.  Therefore, to pick up the signal, we need
to further average over a  family $\calH$ of such subgraphs so that $|\calH| \rho^{2N} \to \infty$ (cf.~\prettyref{eq:fake_lowerbound} for a more detailed explanation).} and cycles clearly fall short of this basic requirement. A much richer family of \emph{strictly balanced, asymmetric} subgraphs is considered in~\cite{barak2019nearly}, where the edge density of the subgraphs is carefully chosen so that typically they co-occur in both graphs at most once.
Hence, by searching for such rare  subgraphs, dubbed ``black swans'', one can match the corresponding vertices. Although this method succeeds even for vanishing correlation $\rho \ge (\log n)^{-o(1)}$,  it has a quasi-polynomial time complexity $n^{\Theta(\log n)}$ due to the exhaustive search of subgraphs of size $\Theta(\log n)$. Moreover, the construction of this special family of subgraphs requires the average degree 
$nq$ to fall into a
very specific range of $[n^{o(1)},n^{1/153}]\cup[n^{2/3},n^{1-\epsilon}]$ for some arbitrarily small constant $\epsilon>0$ and in particular does not accommodate relatively sparse graphs such as $nq=O(\log n)$.

As opposed to relying on rare subgraphs,
our approach is to count a family of unlabeled rooted trees with size $N=\Theta(\log n)$, which are abundant even in very sparse graphs. 
Moreover, by leveraging the method of \emph{color coding}~\cite{alon1995color,arvind2002approximation,hopkins2017bayesian}, such trees can be counted approximately but sufficiently accurately in polynomial time. While centering the adjacency matrices and counting signed trees are helpful, 
there still remains excessive correlation among  different trees counts which is hard to control -- this is the key difficulty in analyzing signatures based on subgraph counts.
To resolve this challenge, we propose to count a special family $\calT$ of unlabeled rooted trees, which we call chandeliers; see~\prettyref{eq:W_i_H} for the formal definition. 
As discussed in \prettyref{sec:chandelier}, the chandelier structure plays a crucial role in curbing the undesired correlation between different tree counts. 
Moreover, even though chandeliers only occupy a vanishing fraction of all trees, by choosing the parameters appropriately, we can ensure that $|\calT|=(1/\alpha+o(1))^N$, which grows almost at the same rate as the entire family of trees.

A similar idea of counting signed but unrooted trees has been applied
in~\cite{mao2021testing} for the graph correlation detection problem, \ie, testing whether the two graphs are independent \ER graphs or $\rho$-correlated through a latent vertex matching chosen uniformly at random. It is shown that the two hypotheses can be distinguished with high probability in polynomial time at the same threshold of $\rho^2>\alpha$. However, unlike the present paper, 
averaging over the random permutation dramatically simplifies the analysis of correlations between different tree counts. As a result, it suffices to simply count all trees as opposed to a carefully constructed collection of special trees. We refer to the last two paragraphs in~\prettyref{sec:chandelier} for a detailed comparison.

\subsection{Notation}

Given a graph $H$, let $V(H)$ denote its vertex set and $E(H)$ denote its edge set. 
Let $v(H) = |V(H)|$ and 
$e(H) = |E(H)|$.
We call $e(H)-v(H)$ the \textit{excess} of the graph $H$.
We denote by $\complete$ the complete graph with vertex set $[n]$ and edge set $\binom{[n]}{2} \triangleq \{\{u,v\}: u,v\in[n], \, u \ne v\}$. 
An empty graph is denoted as $\emptyset$, if it does not contain any vertex or edge. 
A \emph{rooted} graph is a graph in which one vertex has been distinguished as the root. An \textit{isomorphism} between two rooted graphs $H$ and $G$ is a bijection between the vertex sets that preserves both edges and the root, namely, $f: V(H) \to V(G)$ such that the root of $H$
is mapped to that of $G$ and any two vertices $u$ and $v$ are adjacent in $H$ if and only if $f(u)$ and $f(v)$
are adjacent in $G$.
An \emph{automorphism} of a rooted graph is an isomorphism to itself. Let $\aut(H)$ be the number of automorphisms of $H$. 
For a rooted tree $T$ and a vertex $a\in V(T)$, let $(T)_a$ denote the subtree of $T$ consisting of all descendants of $a$ and we set $(T)_a=\emptyset$ if $a \notin V(T)$. 

For two real numbers $x$ and $y$, we let $x \vee y \triangleq \max\{x, y\}$
and $x\wedge y \triangleq \min\{x, y\}$. 
We use standard asymptotic notation: for two positive sequences $\{x_n\}$ and $\{y_n\}$, we write $x_n = O(y_n)$ or $x_n \lesssim y_n$, if $x_n \le C y_n$ for an absolute constant $C$ and for all $n$; $x_n = \Omega(y_n)$ or $x_n \gtrsim y_n$, if $y_n = O(x_n)$; $x_n = \Theta(y_n)$ or $x_n \asymp y_n$, if $x_n = O(y_n)$ and $x_n = \Omega(y_n)$; 
$x_n = o(y_n)$ or $y_n = \omega(x_n)$, if $x_n / y_n \to 0$ as $n\diverge$. 

\subsection{Organization}
The rest of the paper is organized as follows.
In \prettyref{sec:scores-results}, we first introduce the similarity scores between vertices of the two graphs based on counting signed chandeliers, and then state our main results on the recovery of the latent vertex correspondence for correlated \ER graphs. 
In \prettyref{sec:chandelier}, we explain the rationale for focusing on the class of chandeliers. 
\prettyref{sec:statistical-analysis} provides a statistical analysis of the similarity scores, proving our results on partial and almost exact recovery stated in \prettyref{thm:Phi_ij_almost}. 
In particular, Propositions~\ref{prop:true_pair} and~\ref{prop:fake_pairs_improved} are the key ingredients controlling the variance of the similarity scores, and their proofs are given in \prettyref{sec:second-moment}, which constitutes the bulk of the paper.
In \prettyref{sec:color_coding}, we use the method of color coding to approximate the proposed similarity scores in polynomial time, and show that the same statistical guarantees continue to hold for the approximated scores, thereby proving \prettyref{thm:tilde_Phi_ij_almost}.
Finally, in \prettyref{sec:boost}, we demonstrate how to upgrade an almost exact matching to an exact matching,
establishing \prettyref{thm:exact_reovery}.
\prettyref{app:pre} consists of auxiliary results, and \prettyref{app:unknown_rho} discusses a data-driven way to choose a threshold parameter in our algorithm.

\section{Main results and discussions}
\label{sec:main-results}

\subsection{Similarity scores and statistical guarantees}
\label{sec:scores-results}
We start with some preliminary definitions before specializing to chandeliers.
For any weighted adjacency matrix $M$,  node $i\in [n]$, and rooted graph $H$, define the weighted subgraph count
\begin{align}
    W_{i,H} (M)  \triangleq \sum_{S(i) \cong H} M_S \,, \text{ where } M_S  \triangleq \prod_{e\in E(S)} M_e\, ,
    \label{eq:W_i_H}
\end{align}
and $S(i)$ denotes a subgraph of $\complete$ rooted at $i$.
(Whenever the context is clear, we also abbreviate $S(i)$ as 
$S$.) Note that when $M$ is the adjacency matrix $A$, $W_{i,H}$ reduces to the usual subgraph count, \ie, the number of subgraphs rooted at $i$ in $M$ that are isomorphic to $H$.  
When $M$ is a centered adjacency matrix $\bar A \triangleq A-q$, we call $W_{i,H}$ a \textit{signed} subgraph count  following~\cite{bubeck2016testing}.
For example (with solid vertex as the root),
$W_{i,\,\tikz[scale=0.8,baseline=(zero.base)]{\draw (0,0) node (zero) [vertexdotsolid] {} -- (0.25,0) node[vertexdot] {};}}(\bar A) = d_i-(n-1)q$
and
$W_{i,\,\tikz[scale=0.8,baseline=(zero.base)]{\draw (0,0) node (zero) [vertexdot] {} -- (0.25,0) node[vertexdotsolid] {} -- (0.5,0) node[vertexdot] {};}}(\bar A) = \binom{d_i}{2} - (n-2) d_i q + \binom{n-1}{2}q^2$,
where $d_i$ is the degree of $i$ in $A$.

Next, given a family $\calH$ of non-isomorphic rooted graphs $H$, the \emph{subgraph count signature} 
of a node $i$ is defined as the vector 
\begin{equation}
W_{i}^{\calH}(M) \triangleq \left( W_{i,H}(M) \right)_{H\in\calH}.
\label{eq:def-sig-vec-w-h-i}
\end{equation}
\prettyref{alg:GMCC} below describes our proposed method for graph matching based on subgraph count signatures.

\begin{algorithm}
\normalsize
\caption{Graph Matching by Counting Signed Graphs}\label{alg:GMCC}
\begin{algorithmic}[1]
\State {\bfseries Input:} Adjacency matrices $A$ and $B$ on $n$ vertices, a family $\calH$ of non-isomorphic rooted graphs, and a threshold $\tau>0$.
 \State  {\bfseries Output:} A mapping $\hat{\pi}:I \to [n]$.
\State For each pair of node $i$ in $A$ and node $j$ in $B$, compute their similarity score as the \emph{weighted}\footnotemark~inner product between their subgraph count signatures: 
\begin{align}
    \Phi_{ij}^{\calH}
    & \triangleq ~ 
    \left \langle W_{i}^{\calH} (\bar{A}) \,, W_{j}^{\calH}(\bar{B}) \right \rangle \triangleq ~ 
    \sum_{H \in\calH} \aut(H) \, W_{i,H} (\Bar{A}) \, W_{j,H} (\Bar{B}) \, ,  \label{eq:Phi_ij}
\end{align}
where $\Bar{A} = A - q$ and
$\Bar{B} = B - q$ are the centered adjacency matrices. 
\State For each $i\in [n]$, if there exists a unique  $j\in [n]$ such that $\Phi^{\calH}_{ij} \ge \tau$, 
let $\hat{\pi}(i)=j$ and include $i$ in set $I$.
\end{algorithmic}
\end{algorithm}
\footnotetext{Note that in \prettyref{eq:Phi_ij} the coefficient $\aut(H)$ accounts for the symmetry of $H$ and compensates for the fact that the number of copies of $H$ in the complete graph $\complete$ is inversely proportional to $\aut(H)$.
This simplifies the first moment calculation in \prettyref{prop:mean_Phi_ij}.}

At this point \prettyref{alg:GMCC} is a ``meta algorithm'' and the key to its application is to carefully choose this collection of subgraphs $\calH$. 
Ideally, we would like $\Phi_{ij}^{\calH}$ to be maximized at $j=\pi(i)$, at least on average. To this end, we require $H \in \calH$ to be \emph{uniquely rooted}, under which we have  $\Expect[\Phi_{ij}^{\calH}] \propto \indc{\pi(i)=j}$ (see \prettyref{prop:mean_Phi_ij}).

\begin{definition}[Uniquely rooted graph]
\label{def:unique_root}
We say that a graph $H$ rooted at $i$ is uniquely rooted, if 
$H(i)$ is non-isomorphic to $H(v)$ for any vertex $v \neq i$ in $V(H)$.
\end{definition}

However, the uniquely rooted property is far from enough. In order for the signature $\Phi_{ij}^{\calH}$ to distinguish
whether $j=\pi(i)$ or not, we need to ensure that the fluctuation of $\Phi_{ij}^{\calH}$  does not overwhelm the mean  $\Expect[\Phi_{ii}^{\calH}]$ for all $j \in [n]$. 
In particular, we need 
$\Var[\Phi_{ij}^{\calH}]$ to be much smaller than $(\Expect[\Phi_{ii}^{\calH}])^2$.
This turns out to be  extremely challenging to show and calls for a rather delicate choice of $\calH$. To this end, we construct a special family of trees $\calT$, which we call \textit{chandeliers} (see~\prettyref{fig:chandelier} for an illustration).

\begin{definition}[Chandelier]
\label{def:chandelier}
An \emph{$(L, M, K, R)$-chandelier} is a rooted tree with $L$ branches, each of which consists of a path with $M$ edges (which we call an \emph{$M$-wire}) followed by a rooted tree with $K$ edges (which we call a \emph{$K$-bulb}); the $K$-bulbs are non-isomorphic to each other and each of them has 
at most $R$  automorphisms.

\end{definition}


\begin{figure}
    \centering
    \begin{tikzpicture}[scale=3,transform shape,auto,font=\scriptsize]
	\draw (0,-0.4) node (root) [root] {};
	\draw (-0.5,-0.5) node (a1) [vertexdot] {};
	\draw (-1,-1) node (a2) [vertexdot] {};
	\draw (0,-0.7) node (b1) [vertexdot] {};
	\draw (0,-1) node (b2) [vertexdot] {};
	\draw (0.5,-0.5) node (c1) [vertexdot] {};
	\draw (1,-1) node (c2) [vertexdot] {};
	
	\draw (root) edge[red, bend right=10, thick] (a1);
	\draw (a1) edge[red, bend right=20, thick] (a2);
	
	\draw (root) edge[red, thick] (b1);
	\draw (b1) edge[red, thick] (b2);
	
	\draw (root) edge[red, bend left=10, thick] (c1);
	\draw (c1) edge[red, bend left=20, thick] (c2);
	
	\draw (-1,-1.3) node (a3) [vertexdot] {};
	\draw (-1,-1.6) node (a4) [vertexdot] {};
	\draw (-1,-1.9) node (a5) [vertexdot] {};
	\draw (-1,-2.2) node (a6) [vertexdot] {};
	\draw (a2) edge[blue, thick] (a3);
	\draw (a3) edge[blue, thick] (a4);
	\draw (a4) edge[blue, thick] (a5);
	\draw (a5) edge[blue, thick] (a6);	
	\fill [nearly transparent,blue!50] (a4) ellipse (0.3 and 0.8);
	
	\draw (0.2,-1.3) node (b3) [vertexdot] {};
	\draw (0.2,-1.6) node (b4) [vertexdot] {};
	\draw (0.2,-1.9) node (b5) [vertexdot] {};
	\draw (-0.2,-1.3) node (b6) [vertexdot] {};
	\draw (b2) edge[blue, thick] (b3);
	\draw (b3) edge[blue, thick] (b4);
	\draw (b4) edge[blue, thick] (b5);
	\draw (b2) edge[blue, thick] (b6);
	\fill [nearly transparent,blue!50] (0.05,-1.5) ellipse (0.4 and 0.7);
	
	\draw (1,-1.3) node (c3) [vertexdot] {};
	\draw (1.2,-1.6) node (c4) [vertexdot] {};
	\draw (1.2,-1.9) node (c5) [vertexdot] {};
	\draw (0.8,-1.6) node (c6) [vertexdot] {};
	\draw (c2) edge[blue, thick] (c3);
	\draw (c3) edge[blue, thick] (c4);
	\draw (c4) edge[blue, thick] (c5);
	\draw (c3) edge[blue, thick] (c6);
	\fill [nearly transparent,blue!50] (1.07,-1.45) ellipse (0.4 and 0.6);
\end{tikzpicture}
    \caption{
    A chandelier with $L=3$, $M=2$, $K=4$, rooted at the solid vertex. The wires are shown in red, and the bulbs in blue.
    In this case $R=1$ since each bulb has no non-trivial automorphism (as rooted graphs). 
    }
    \label{fig:chandelier}
\end{figure}
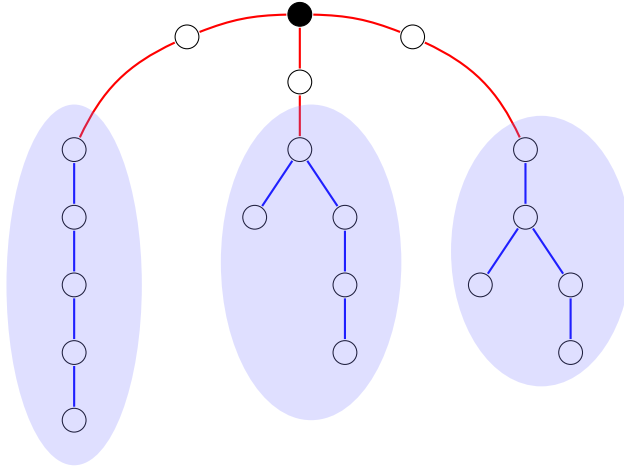

For any chandelier $H$, let
$\calK(H)$ denote its set of bulbs. Since all bulbs are \emph{non-isomorphic} to each other, we have 
\begin{align}
\aut(H)= \prod_{\calB\in \calK(H)}\aut(\calB), \label{eq:aut_H}
\end{align}
which is a special case of the classical recursive formula for the number of automorphisms of rooted trees~\cite{Jordan1869}.
Moreover, when $L \ge 2$, the root of $H$ is the unique vertex incident to $L$ branches each having $M+K$ edges. As a result, each chandelier is uniquely rooted. 




Let $\calT$ denote the family of non-isomorphic $(L,M,K,R)$-chandeliers.
Then 
\begin{equation}
|\calT| = \binom{|\calJ|}{L},
\label{eq:Tsize}
\end{equation}
where $\calJ\equiv \calJ(K,R)$ denotes the collection of unlabeled rooted trees having
$K$ edges and at most $R$ automorphisms. 
Counting unlabeled trees with a prescribed number of automorphisms has been well studied in the literature:
\begin{itemize}
    \item 
    \emph{All trees}: 
    As mentioned earlier in \prettyref{sec:intro}, a classical result in enumerative combinatorics is that
    the total number of unlabeled trees with $K$ edges satisfies as $K\to\infty$ \cite{polya1937kombinatorische,otter1948number}
    \begin{equation}
    |\calJ(K)| \equiv 
    |\calJ(K,\infty)| = (\alpha+o(1)) ^{-K} ,
    \label{eq:otter}
    \end{equation}
    where $\alpha \approx 0.338$ is Otter's constant.
    
  

    \item \emph{Typical trees}: The recent result \cite{olsson2022automorphisms} implies that the majority of the trees have $e^{\Theta(K)}$ automorphisms.\footnote{Indeed, \prettyref{eq:ow22} is an immediate corollary of the following asymptotic normality result in  \cite[Theorem 2]{olsson2022automorphisms}: 
    $
\frac{1}{\sqrt{K}}(\log\aut(H_K)-\mu K) \xrightarrow{K\to\infty} N(0,\sigma^2)$,
 where $H_K$ is a uniform random unlabeled tree with $K$ edges (known as the P\'olya tree of order $K+1$), and
 $\mu \approx 0.137$ and $\sigma^2 \approx 0.197$ are absolute constants.}
    In other words, for some absolute  constant $C$,
     \begin{equation}
    |\calJ(K,\exp(CK))| = (\alpha+o(1))^{-K}.
    \label{eq:ow22}
    \end{equation}
\end{itemize}
It turns out that to bound the fluctuation of the similarity score it is more advantageous if the bulbs do not have too much symmetry.
Thanks to \prettyref{eq:ow22} and in view of \prettyref{eq:aut_H}--\prettyref{eq:Tsize}, by choosing 
$ \R = \exp(CK)$ we can ensure that
$|\calT| = (\alpha+o(1))^{-N}$ has maximal growth while keeping $\aut(H)$ for each $H\in\calT$ relatively small.


In the rest of the paper, we will 
apply the similarity score 
$\Phi_{ij}\equiv \Phi_{ij}^\calT$ in 
\prettyref{eq:Phi_ij}
to the collection $\calT$ of chandeliers with carefully chosen parameters.
Crucially, by exploiting the structure of chandeliers, we show:
\begin{itemize}
    \item For true pairs $j =\pi(i)$, 
    \[
    \Expect[\Phi_{i\pi(i)}] = \mu, \quad 
    \Var(\Phi_{i\pi(i)}) = o(\mu^2) ,
    \]
    where 
    \begin{equation}
        \mu \triangleq |\calT|(\rho\sigma^2)^N \frac{(n-1)!}{(n-N-1)!}, \quad \sigma^2\triangleq q(1-q). \label{eq:tau}
    \end{equation}

    \item For fake pairs $j \neq \pi(i)$, 
  \[
    \Expect[\Phi_{ij}] = 0, \quad 
    \Var(\Phi_{ij}) = o\pth{\frac{\mu^2}{n^2}}.
    \]
\end{itemize}
This immediately implies that by running a greedy matching with weights $\Phi_{ij}$ (or simply thresholding 
$\Phi_{ij}$), we can match all but a vanishing fraction of vertices correctly with high probability. 
This is made precise by the following theorem.

Throughout this paper, we assume without loss of generality\footnote{If $q>1/2$, we can consider the complement graphs of $A$ and $B$, which are correlated \ER graphs with parameter $(n,1-q,\rho)$. In addition, it is not hard to see that the similarity scores $\Phi_{ij}$ remain unchanged.} that $q\leq 1/2$.

\begin{theorem}[Partial and almost exact recovery] \label{thm:Phi_ij_almost}
There exist  absolute constants $C_1, \ldots, C_4>0$ such that the following holds.
 Suppose  
\begin{align}
    \rho^{2} \ge \alpha +\epsilon \, , \label{eq:nq_rho_condition}
\end{align}
where $\epsilon$ is an arbitrarily small constant.
Choose $K,L,M, \R \in \naturals$ such that $N=(K+M)L$ is even\footnote{For simplicity, we assume $N$ is even so that $\mu \ge 0$ even when $\rho<0$. To lighten the notation, we do not explicitly round each parameter in \prettyref{eq:K_L_M_R_simple} to integers as this only changes constant factors; 
see~\prettyref{eq:K_L_M} for a more general condition.}, 
\begin{align}
 L= \frac{C_1}{\epsilon}, \quad K=C_2 \log n, \quad  M= \frac{C_3 K}{\log (nq)}, \quad \R = \exp\left(C_4 K \right) \, .  \label{eq:K_L_M_R_simple}
\end{align}
Fix any constant $0<c<1$ and let $\mu$ be given in \prettyref{eq:tau}.  
Let $\hat\pi:I\to[n]$ denote the output of 
\prettyref{alg:GMCC} applied to the collection $\calT$ of $(L,M,K,R)$-chandeliers and threshold $\tau=c \mu$.
Then $\hat{\pi}=\pi|_I$ with probability $1-o(1)$.
Moreover,
\begin{itemize}
    \item If $nq=\omega(1)$, then $|I|=(1-o(1))n$ with probability $1-o(1)$.  
    \item For any constant $\delta \in (0,1)$, there exists 
    a positive constant $C(\epsilon,\delta)$ depending only on $\epsilon$ and $\delta$, such that 
    if $nq \ge C(\epsilon,\delta)$,
    then $\expect{|I|} \ge (1-\delta) n$. 
\end{itemize}
\end{theorem}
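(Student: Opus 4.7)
The plan is to reduce the theorem to two moment computations and then apply standard concentration. First I would establish the first-moment identities $\E[\Phi_{i\pi(i)}] = \mu$ and $\E[\Phi_{ij}] = 0$ for $j \neq \pi(i)$. By linearity and independence of disjoint edge pairs, $\E[W_{i,H}(\bar A) W_{j,H}(\bar B)]$ factorizes into per-edge contributions of the form $\E[\bar A_e \bar B_{\pi(e')}]$, which equal $\rho \sigma^2$ when $e = \pi(e')$ and vanish otherwise. Since each chandelier in $\calT$ is uniquely rooted (its root is the unique vertex incident to $L \geq 2$ branches of length $M+K$), the only way for two embedded copies of $H\in\calT$ rooted at $i$ and $j$ to align under $\pi$ is $j = \pi(i)$; summing over labeled embeddings with the weight $\aut(H)$ then yields the asserted means.

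The heart of the argument is the pair of variance bounds $\Var(\Phi_{i\pi(i)}) = o(\mu^2)$ and $\Var(\Phi_{ij}) = o(\mu^2/n^2)$ for $j \neq \pi(i)$, which the excerpt identifies as the two key propositions established later under the parameter choices in \prettyref{eq:K_L_M_R_simple}. These bounds are the rationale for the chandelier construction: the family $\calT$ with $|\calT|=(\alpha+o(1))^{-N}$ must be rich enough that the signal $\mu^2$ beats the per-tree fluctuations (this forces $|\calT|\rho^{2N}\to\infty$, hence $\rho^2 > \alpha$), yet structured rigidly enough that cross-correlations $\E[W_{i,H}(\bar A)\,W_{i,H'}(\bar A)]$ between distinct $H, H' \in \calT$ do not accumulate. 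The $M$-wires force any shared substructure between two embedded chandeliers to contain a long path, and the non-isomorphic low-symmetry bulbs prevent near-identical branches; together these features tame the double sum defining the variance, which is the decisive obstacle of the proof. Without this architecture, the cross-variance sum (ranging over pairs $H,H'$ and overlap subgraphs) would swamp $\mu^2$, which is precisely why previous approaches were restricted either to $\rho = 1 - o(1)$ or to sparse, locally tree-like regimes.

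Given these moment bounds, the conclusions follow by Chebyshev's inequality and a union bound. For any fixed $i$,
\begin{align*}
\Pr\qth{\Phi_{i\pi(i)} < c\mu} &\leq \frac{\Var(\Phi_{i\pi(i)})}{((1-c)\mu)^2} = o(1), \\
\Pr\qth{\Phi_{ij} \geq c\mu} &\leq \frac{\Var(\Phi_{ij})}{(c\mu)^2} = o(n^{-2}) \quad \text{for } j \neq \pi(i).
\end{align*}
A union bound over the $n(n-1)$ fake pairs yields $\hat\pi = \pi|_I$ with probability $1-o(1)$, so \prettyref{alg:GMCC} makes no incorrect assignment. Almost-exact recovery in the regime $nq = \omega(1)$ then follows from Markov's inequality applied to the count of vertices $i$ failing the threshold test $\Phi_{i\pi(i)} \geq c\mu$, whose expectation is $n\cdot o(1)=o(n)$; hence $|I| = (1-o(1))n$ with high probability. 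For the partial-recovery regime with $nq$ a sufficiently large constant, I would track the explicit $\delta$-dependence in the true-pair variance bound and choose $C(\epsilon,\delta)$ large enough that $\Pr[\Phi_{i\pi(i)} < c\mu] \leq \delta/2$ for every $i$, which combined with the $o(1/n)$ failure probability for the fake-pair test gives $\E[|I|] \geq (1-\delta)n$ by linearity.
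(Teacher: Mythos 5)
Your plan is correct and follows the same route as the paper: the first moment comes from the uniquely rooted property (the paper's Proposition~\ref{prop:mean_Phi_ij}), the two variance bounds are exactly the paper's Propositions~\ref{prop:true_pair} and~\ref{prop:fake_pairs_improved}, and the remainder is Chebyshev plus union bound for fake pairs and a Markov argument for the bad set $F$. The only thing worth flagging is that you treat the variance bounds as black boxes, which is appropriate at this level since the paper itself defers them to Section~\ref{sec:second-moment} where the real work happens.
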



\begin{remark}[Adapting to unknown parameters]
\label{rmk:unknown_rho}
Note that the choice of $M$ and $\tau$ in \prettyref{eq:K_L_M_R_simple} assumes
the knowledge of $q$ and $\rho$. 
The edge probability $q$ can be easily estimated by the empirical graph density of $A$ and $B$. 
Moreover, the threshold $\tau$ can be specified in a data-driven manner (cf.~\prettyref{app:unknown_rho}).

\end{remark}

From a computational perspective, na\"ive evaluation of $W_{i,H}(\bar{A})$ by exhaustive search for each $H$ with $N$ edges   takes $n^{\Theta(N)}$ time which is super-polynomial when $N=\omega(1)$.
To resolve this computational issue, in \prettyref{sec:color_coding}, we give a polynomial-time
algorithm (\prettyref{alg:approximated_Phi_ij}) that computes an approximation $\tilde{\Phi}_{ij}$
for $\Phi_{ij}$ using the strategy of \emph{color coding} as done in~\cite{mao2021testing}. 
 The following result shows that the approximated similarity score $\tilde{\Phi}_{ij} $  enjoys the same statistical guarantee
 under the same condition \prettyref{eq:nq_rho_condition} as~\prettyref{thm:Phi_ij_almost}.

\begin{theorem}\label{thm:tilde_Phi_ij_almost}
\prettyref{thm:Phi_ij_almost} continues to hold with $\tilde{\Phi}_{ij} $ in place of $\Phi_{ij}$.
Moreover, $\{\tilde{\Phi}_{ij}\}_{i,j\in[n]}$ can be computed in $O(n^C)$ for some constant $C\equiv C(\epsilon)$ 
depending only on $\epsilon$.


\end{theorem}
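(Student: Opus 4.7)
The plan is to (i) define $\tilde{\Phi}_{ij}$ via the color-coding approximation of Alon--Yuster--Zwick, (ii) bound its running time, and (iii) verify that $\tilde{\Phi}_{ij}$ satisfies the same first and second moment bounds on which the proof of \prettyref{thm:Phi_ij_almost} rests. Since the greedy/thresholding argument in \prettyref{sec:statistical-analysis} only uses $\E[\Phi_{i\pi(i)}]$, $\Var(\Phi_{i\pi(i)})$, and $\Var(\Phi_{ij})$ for $j\neq\pi(i)$, once the analogous bounds carry over, the conclusion of \prettyref{thm:Phi_ij_almost} follows verbatim (with the same threshold $\tau=c\mu$ and the same $\mu$ given in \prettyref{eq:tau}).

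For the construction, independently for $r=1,\ldots,R_0$ with $R_0=n^{O(1)}$ to be chosen, I would color each vertex of $[n]$ uniformly at random with one of $N+1$ colors, obtaining independent colorings $\chi_A^{(r)},\chi_B^{(r)}$ for the two graphs. For every $H\in\calT$, vertex $i$, and round $r$, let $\tilde W_{i,H}(\bar A;\chi_A^{(r)})$ be $\frac{(N+1)^{N+1}}{(N+1)!}$ times the sum of $\bar A_S$ over rooted copies $S(i)\cong H$ whose $N+1$ vertices receive distinct colors; average over $r$ and take the weighted inner product as in \prettyref{eq:Phi_ij} to obtain $\tilde{\Phi}_{ij}$. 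For a fixed coloring, the colorful signed count rooted at each vertex can be computed by a standard bottom-up dynamic program that tracks the subset of colors used in the subtree; since $H$ is a tree of $N+1$ vertices, this costs $n^{O(1)}\cdot 2^{N+1}$ per chandelier. Using $N=\Theta(\log n)$ (hence $2^{N+1}=n^{O(1)}$), $|\calT|=n^{O(1)}$ by \prettyref{eq:Tsize}--\prettyref{eq:otter}, and $R_0=n^{O(1)}$, the total runtime is $n^{C}$ for some $C=C(\epsilon)$.

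For the moment analysis, independence of $\chi$ from $(A,B)$ and the choice of scaling give $\E_\chi[\tilde W_{i,H}(\bar A;\chi)]=W_{i,H}(\bar A)$, hence $\E[\tilde{\Phi}_{ij}]=\E[\Phi_{ij}]$. The law of total variance yields
\[
\Var(\tilde{\Phi}_{ij})=\Var(\Phi_{ij})+\E\!\left[\Var(\tilde{\Phi}_{ij}\mid A,B)\right],
\]
so I only need to control the second term. Expanding it as a double sum over pairs $(S,S')$ of rooted copies of chandeliers in $\bar A$ and $(T,T')$ in $\bar B$, pairs with no overlap in color constraints contribute zero, while the remaining contributions carry coloring-probability factors that decay with the size of $|V(S)\cup V(S')|$ and $|V(T)\cup V(T')|$. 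These can be bounded by the same overlap-based combinatorial sums analyzed in \prettyref{sec:second-moment} for $\Var(\Phi_{ij})$, inflated by a color-coding factor $((N+1)^{N+1}/(N+1)!)^2=n^{O(1)}$. Averaging over $R_0$ independent colorings divides the extra variance by $R_0$, and choosing $R_0=n^{O(1)}$ large enough absorbs the polynomial inflation and leaves an extra contribution of order $o(\mu^2/n^2)$, matching the required bounds from Propositions~\ref{prop:true_pair} and \ref{prop:fake_pairs_improved}.

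The main obstacle is the conditional-variance bound: one must show that the extra randomness from the coloring does not create fundamentally new overlap patterns between chandelier copies across $\bar A$ and $\bar B$, so that the delicate cancellations exploited in the proofs of Propositions~\ref{prop:true_pair} and \ref{prop:fake_pairs_improved} (in particular those driven by the wire-and-bulb geometry) still apply. The blueprint is the color-coded tree-count analysis in \cite{mao2021testing} for the detection problem; the new ingredient here is that our signatures are rooted and our object is a product of two signed counts, one in $\bar A$ and one in $\bar B$, so the overlap bookkeeping must be done jointly across the two graphs. Once this is in place, $\tilde{\Phi}_{ij}$ inherits all the moment bounds needed, and the proof of \prettyref{thm:Phi_ij_almost} applies with no further change.
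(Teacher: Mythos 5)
Your proposal is correct and follows essentially the same approach as the paper: color coding in the style of \cite{mao2021testing}, a conditional-variance decomposition $\Var(\tilde{\Phi}_{ij}) = \Var(\Phi_{ij}) + \E[\Var(\tilde{\Phi}_{ij}\mid A,B)]$, and bounding the additional conditional variance by the same overlap-based combinatorial quantity $\Gamma_{ij}$ that already controls $\Var(\Phi_{ij})$ in Propositions~\ref{prop:true_pair} and \ref{prop:fake_pairs_improved}. The one point worth flagging is that your coloring scheme is slightly less efficient than the paper's: you propose $R_0$ i.i.d.\ \emph{pairs} of colorings and then need $R_0 \gtrsim 1/r^2 = \big((N+1)^{N+1}/(N+1)!\big)^2$ to kill the $1/r^2$ inflation, whereas the paper draws $t=\lceil 1/r\rceil$ colorings $\{\mu_a\}$ for $\bar A$ and $t$ colorings $\{\nu_b\}$ for $\bar B$ and averages over all $t^2$ ordered pairs; the cross terms $\Cov(Y(\mu_a,\nu_b),Y(\mu_c,\nu_d)\mid A,B)$ with both $a\ne c$ and $b\ne d$ vanish exactly, which reduces the number of coloring rounds to $O(1/r)$ and is how \prettyref{lmm:cov_Phi_ij} gets $\Var[\tilde\Phi_{ij}-\Phi_{ij}]\le 3\Gamma_{ij}$. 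Both choices are $n^{O(1)}$ since $N=\Theta(\log n)$, so this is a constant-in-the-exponent improvement rather than a correctness issue. The ``main obstacle'' you identify is resolved exactly as you anticipate: the $(S_1=S_2,\,T_1=T_2,\,V(S_1)\cap V(T_1)=\{i\})$ terms contribute zero because the coloring expectations equal $r^2$ there, and all remaining terms are already counted in $\Gamma_{ij}$; the coloring randomness does not create genuinely new overlap configurations.
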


\prettyref{thm:tilde_Phi_ij_almost} shows that our matching algorithm achieves the almost exact recovery in polynomial time when $nq=\omega(1)$ and $\rho^2\ge \alpha+\epsilon$. 
In comparison,  the almost exact recovery is information-theoretically possible if and only if $nq\rho =\omega(1)$, when $\rho>0$ and $q =n^{-1/2-\Omega(1)}$~\cite{cullina2019partial,wu2021settling}.


Moreover, 
under an extra condition that is information-theoretically necessary,
we can upgrade the almost exact recovery to exact recovery in polynomial time.
The main idea is to use the partial matching $\hat\pi|_I$ correctly identified by \prettyref{alg:GMCC} as \textit{seeds} and apply a seeded matching algorithm (which is similar to 
percolation-based matching in \cite{yartseva2013performance,barak2019nearly})
to extend it to a full matching.  
For this purpose we assume $\rho>0$ as the current seeded matching algorithm requires positive correlation.




\begin{theorem}[Exact recovery]\label{thm:exact_reovery} 
Suppose 
\begin{align}
  n q\left( q + \rho (1-q ) \right) \ge (1+\epsilon)\log n, \quad \rho \ge \sqrt{\alpha +\epsilon} \label{eq:exact_recovery}
\end{align}
for some arbitrarily small constant  $\epsilon$.
Then a seeded matching algorithm
(see \prettyref{alg:recovery} in \prettyref{sec:boost})
with input $\hat{\pi}$ outputs $\tilde{\pi} = \pi$
in $O(n^3q^2)$ time with  probability $1-o(1)$. 
\end{theorem}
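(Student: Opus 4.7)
The plan is to use \prettyref{thm:Phi_ij_almost} to produce a very large set of trustworthy seeds and then boost to exact recovery by a one-shot percolation/witness-counting step. The correlation requirement $\rho^2\ge\alpha+\epsilon$ in \prettyref{eq:exact_recovery} is identical to that of \prettyref{thm:Phi_ij_almost}. Moreover, since $q+\rho(1-q)=\rho+q(1-\rho)\le 1$ for all $q,\rho\in[0,1]$, the first condition of \prettyref{eq:exact_recovery} forces $nq\ge(1+\epsilon)\log n=\omega(1)$. Hence \prettyref{alg:GMCC} outputs $\hat\pi:I\to[n]$ satisfying $\hat\pi=\pi|_I$ and $|I|=(1-o(1))n$ with probability $1-o(1)$.

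The seeded boosting step (\prettyref{alg:recovery}) computes, for each pair $(u,v)\in[n]\times[n]$, the witness count
\[
N_{uv} \triangleq \sum_{s\in I\setminus\{u\}} A_{us}\,B_{v,\hat\pi(s)},
\]
and declares $\tilde\pi(u)=v$ whenever $N_{uv}$ exceeds a threshold $\tau'$ chosen slightly below $nq(q+\rho(1-q))$. Conditioning on the $1-o(1)$ event $\{\hat\pi=\pi|_I,\,|I|\ge(1-o(1))n\}$, each summand of $N_{u\pi(u)}$ is a Bernoulli indicator with mean $q^2+\rho q(1-q)=q(q+\rho(1-q))$, so $\E N_{u\pi(u)}\ge(1-o(1))(1+\epsilon)\log n$. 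For any fake pair $v\ne\pi(u)$, the summands $A_{us}$ and $B_{v,\pi(s)}$ correspond to disjoint vertex pairs in the two graphs and are therefore independent Bernoullis whose product has mean $q^2$, yielding $\E N_{uv}\approx nq^2$. Since $\rho>0$, the ratio $\E N_{u\pi(u)}/\E N_{uv}=1+\rho(1-q)/q$ is strictly above $1$ (bounded away from $1$ when $q=\Theta(1)$ and divergent when $q\to 0$). A standard Chernoff bound then gives $\Prob\{N_{u\pi(u)}<\tau'\}\le n^{-1-\Omega(\epsilon)}$ and $\Prob\{N_{uv}\ge\tau'\}\le n^{-2-\Omega(\epsilon)}$ for every $v\ne\pi(u)$, and a union bound over the $O(n^2)$ pairs concludes $\tilde\pi=\pi$ with probability $1-o(1)$.

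The main obstacle is the dependence between $\hat\pi$, which is a complicated function of the entire $(A,B)$, and the edges $A_{us},B_{v,\hat\pi(s)}$ appearing in $N_{uv}$. The cleanest resolution is sample splitting: independently thin the edge pairs of $(A,B)$ into two halves, feed the first half into \prettyref{alg:GMCC} to obtain $\hat\pi$, and compute $\{N_{uv}\}$ from the second half. The thinned pair is still a correlated \ER graph with slightly perturbed parameters $(q',\rho')$, and the $\epsilon$ slack in \prettyref{eq:exact_recovery} absorbs the resulting losses in both density and correlation. Alternatively, one argues directly that chandelier counts are sufficiently localized for conditioning on $\{\hat\pi=\pi|_I\}$ to perturb the marginals of the edges used in the boost only negligibly. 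Finally, the running time $O(n^3q^2)$ is immediate: forming $\{N_{uv}\}$ amounts to iterating, for each $s\in I$, over the $\approx nq$ neighbors of $s$ in $A$ and the $\approx nq$ neighbors of $\hat\pi(s)$ in $B$, incrementing one counter for each resulting pair $(u,v)$.
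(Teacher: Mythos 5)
Your high-level plan---feed the almost-exact matching from \prettyref{thm:Phi_ij_almost} into a seeded boosting step---is the same as the paper's, and your observations that $\rho^2\ge\alpha+\epsilon$ and $nq\ge(1+\epsilon)\log n$ follow from~\prettyref{eq:exact_recovery} are correct. But the boosting step you describe is a \emph{one-shot} threshold test, which does not work in the regime the theorem is stated for, and your claimed Chernoff bound $\prob{N_{u\pi(u)}<\tau'}\le n^{-1-\Omega(\epsilon)}$ is false near the threshold. When $nq\bigl(q+\rho(1-q)\bigr)=(1+\epsilon)\log n$ and $\tau'=(1-\delta)\cdot nq(q+\rho(1-q))$, the lower tail gives only $\exp(-\tfrac{\delta^2}{2}\cdot(1+\epsilon)\log n)=n^{-\delta^2(1+\epsilon)/2}$; for the union bound over $n$ true pairs to succeed you would need $\delta^2(1+\epsilon)/2>1$, i.e.\ $\delta>\sqrt{2/(1+\epsilon)}>1$ for small $\epsilon$, which is impossible. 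Concretely, the minimum degree of $A\cap B\sim\calG(n,q(q+\rho(1-q)))$ is far below its mean at the connectivity threshold, so any fixed threshold that all true pairs pass must be so small that it is no longer ruled out for fake pairs by a union bound. Your sample-splitting fallback does not salvage this: splitting multiplies $q$ by a constant $<1$, which shifts $nq(q+\rho(1-q))$ below $(1+\epsilon)\log n$---the $\epsilon$ slack cannot absorb a constant-factor loss at a sharp information-theoretic threshold---and even with a valid split the one-shot tail issue persists.

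The paper's \prettyref{alg:recovery} and \prettyref{prop:seeded} avoid all of this with a genuinely different argument. The algorithm is \emph{iterative}: at each step it grows the matched set $J$ by admitting any pair whose current witness count exceeds $\gamma(n-2)q^2$, and the witness counts of remaining vertices only increase as $J$ grows. Correctness then requires only two ingredients, both uniform over all possible seed sets: (i) a Chernoff upper bound showing that no fake pair $(u,v)$ with $u\ne v$ ever has $\sfN(u,v)\ge\gamma(n-2)q^2$, which rules out wrong matches by induction; and (ii) an expansion lemma (\prettyref{lmm:crossing_edge}) stating that for \emph{every} set $J^c$ with $|J^c|\le\tfrac{\epsilon}{16}n$ simultaneously, $e_{A\cap B}(J,J^c)\ge\eta|J||J^c|qs$, so the \emph{average} witness count over $J^c$ exceeds the threshold and at least one vertex can always be added. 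This sidesteps the lower tail of individual true pairs entirely---you never need every true pair to clear the threshold at once, only one at a time. It also resolves the dependence of $I$ on $(A,B)$ without sample splitting: the bad events in (i) and (ii) are functions of $A\cap B$ alone and are controlled simultaneously over all pairs and all small subsets, so they hold regardless of which data-dependent $\hat\pi$ is handed to the boosting step.
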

\begin{remark}[Comparison to the exact recovery threshold]
\label{rmk:exactIT}

It is instructive to compare the performance guarantee~\prettyref{eq:exact_recovery} of our polynomial-time algorithm with the information-theoretic threshold of exact recovery derived in \cite{wu2021settling} for positive correlation,
that is,
\begin{align}
      \rho \geq
      \left(1+\epsilon\right)
      \left(2\sqrt{\frac{\log n}{n}}+  \frac{\log n}{nq }\right)  \,. \label{eq:exact_recovery_threshold}
\end{align}

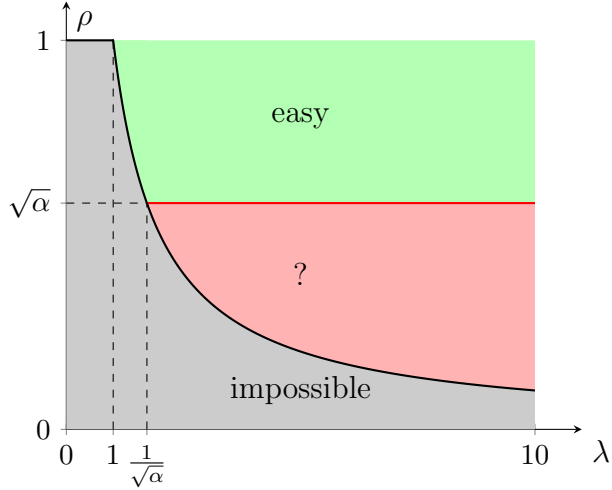
\begin{figure}[ht]
    \centering
\begin{tikzpicture}[transform shape,scale=1]

\pgfmathsetmacro{\a}{0.33832185689}
\pgfmathsetmacro{\n}{200}
\pgfmathsetmacro{\t}{1}
\pgfmathsetmacro{\s}{1/sqrt(\a)}
\pgfmathsetmacro{\b}{10}

\begin{axis}[
    standard,
    enlargelimits=lower,
        enlarge x limits={0.1,upper},
        enlarge y limits={0.1,upper},
         xmin=0,   xmax=10,
	ymin=0,   ymax=1,
extra x ticks={0},
extra y ticks={0},
xlabel={\large $\lambda$},ylabel={\large $\rho$},
scaled ticks=false, tick label style={/pgf/number format/fixed},
xtick={\t,\s, 10}, xticklabels={ $1$, $\frac{1}{\sqrt{\alpha}}$, $10$},
ytick={ sqrt{\a},1}, yticklabels={$\sqrt{\alpha}$,$1$},
every axis plot post/.append style={
  mark=none,samples=1000,smooth,thick} 
]

\addplot[name path =Otter, domain=\s:\b,red]{sqrt(\a)};

\addplot[name path =IT_1, domain=0:\s,black]{min(1,1/x)};
\addplot[name path =IT_2, domain=\s:\b,black]{min(1,1/x)};
\path [name path=xaxis_1] (axis cs:0,0) --(axis cs:\s,0);
\path [name path=xaxis_2] (axis cs:\s,0) --(axis cs:\b,0);
 \path [name path=yaxis] (axis cs:0,0) --(axis cs:0,1);
  \path [name path=xaxis_3] (axis cs:\s,1) --(axis cs:\t,1);
\path [name path=xaxis_4] (axis cs:\t,1) --(axis cs:\b,1);


\addplot[fill=black!20!white] fill between [of=IT_1 and xaxis_1];
\addplot[fill=black!20!white] fill between [of=IT_2 and xaxis_2];

\draw[black,dashed] (axis cs:\t,0) -- (axis cs:\t,1);
\draw[black,dashed] (axis cs:\s,0) -- (axis cs:\s,sqrt{\a});
\draw[black,dashed] (axis cs:0,sqrt{\a}) -- (axis cs:\s,sqrt{\a});

\addplot[fill=red!30!white] fill between [of=IT_2 and Otter];

\addplot[fill=green!30!white] fill between [of=xaxis_3 and IT_1 ];
\addplot[fill=green!30!white] fill between [of= xaxis_4 and Otter];

\node at (axis cs:5,.4) [rotate=0,align=center] {\large ?};

\node at (axis cs:5,.1) [rotate=0,align=center] {\large impossible};
\node at (axis cs:5,.8) [align=center] {\large easy};






\end{axis}
\end{tikzpicture}

   \centering
    \caption{The phase diagram for exact recovery in the logarithmic degree regime, where $nq=\lambda \log n$
    for a fixed constant $\lambda>0$.
    The impossible and easy regime are given by $\rho < \min\{1, 1/\lambda\}$ and $\rho > \max\{\sqrt{\alpha},1/\lambda\}$, respectively. No polynomial-time algorithm is known to achieve exact recovery in the red regime. 
    }
    \label{fig:GM_phase_new}
\end{figure}

Assuming $nq=\lambda \log n$ for a fixed constant $\lambda$, \prettyref{eq:exact_recovery} simplifies to $\rho> \max\{1/\lambda,\sqrt{\alpha}\}$,
while~\prettyref{eq:exact_recovery_threshold} is reduced to $\rho > 1/\lambda$; see \prettyref{fig:GM_phase_new} for an illustration. Observe that when $\lambda< 1/\sqrt{\alpha}$,  the  condition \prettyref{eq:exact_recovery} for exact recovery matches \prettyref{eq:exact_recovery_threshold}
and hence our polynomial-time matching algorithm is information-theoretically optimal.
If $\lambda>1/\sqrt{\alpha}$, there exists a gap,  between~\prettyref{eq:exact_recovery} and~\prettyref{eq:exact_recovery_threshold}, depicted as the red regime in~\prettyref{fig:GM_phase_new}.
It is an open problem whether exact recovery is attainable in polynomial time in the red regime when $\rho < \sqrt{\alpha}$. 
So far the only rigorous evidence for hardness is that  detection (and hence recovery) is computationally hard in the low-degree polynomial framework\footnote{Specifically, it is shown in \cite{mao2021testing} that any test statistic that is a degree-$\text{polylog}(n)$ polynomial of $(A,B)$ fails to detect correlation
$\rho= 1/\text{polylog}(n)$.} when $\rho \le 1/\text{polylog}(n)$ \cite{mao2021testing}.
\end{remark}

\subsection{On the choice of chandeliers}\label{sec:chandelier}

The key to the success of our matching algorithm is to leverage the correlation
of subgraph counts in the two graphs $A$ and $B$ as much as possible,
while suppressing the undesired correlation between different subgraph counts. In this subsection, we explain why restricting to the special family of chandeliers is crucial, as well as some basic guidelines on the choice of its parameters. 
Assume for convenience that $\pi = \id$.

First of all, we require the expected similarity score $\E[\Phi_{ij}]$ to be zero except for $i = j$. 
As discussed in the previous subsection, this is guaranteed by the uniquely rooted property of each chandelier in $\calT$.
Further, to distinguish a true pair $(i,i)$ from fake pairs $(i,j)$, we need $\Var(\Phi_{ij})$  to be much smaller than $\E[\Phi_{ii}]^2$ for any pair $(i,j)$.
More precisely, in order to apply a union bound over all fake pairs, we need $\Var(\Phi_{ij})/\expect{\Phi_{ii}}^2 = o(1/n^2)$ for all $i \ne j$. 
Later in \prettyref{eq:fake_lowerbound}, we will see that even if all the tree counts were uncorrelated, 
the variance would always be lower bounded by $\Var(\Phi_{ij})/\expect{\Phi_{ii}}^2 = \Omega(|\calT|^{-1} \rho^{-2N})$.
It follows that our class of chandeliers $\calT$ needs to satisfy
\begin{equation}
|\calT|  = \omega(n^2 \rho^{-2N}).
\label{eq:class-entropy}
\end{equation}
By choosing the parameters appropriately, we can ensure that 
$|\calT|$ grows as $(\alpha+o(1))^{-N}$, almost
at the same rate as the entire set of unlabelled rooted trees. Therefore, whenever $\rho^2>\alpha$, \prettyref{eq:class-entropy} holds by choosing $N =\Theta(\log n)$.


To further see the significance of chandeliers on the correlations between subgraph counts, let us expand out the variance of 
$\Phi_{ij}$:
\begin{align}
\Var[\Phi_{ij}]
& =\sum_{H, I \in \calT}  \aut(H)  \aut(I)  \Cov \left( W_{i,H} (\Bar{A}) W_{j,H} (\Bar{B}) , W_{i,I} (\Bar{A}) W_{j,I} (\Bar{B}) \right) \nonumber \\
& = \sum_{H, I \in \calT} \aut(H)  \aut(I)
\sum_{S_1(i), S_2(j) \cong H}
\sum_{T_1(i), T_2(j) \cong I}
\Cov \left( \bar{A}_{S_1}\bar{B}_{S_2},
\bar{A}_{T_1} \bar{B}_{T_2} \right). \label{eq:var_Phi_ij_W}
\end{align}
Here, $S_1$ and $T_1$ are labeled subgraphs of $\complete$ isomorphic to chandeliers $H$ and $I$ respectively and both rooted at $i$, and similarly for $S_2$ and $T_2$ rooted at $j$. 
It turns out that, 
thanks to centering, 
$\Cov \left( \bar{A}_{S_1}\bar{B}_{S_2},
\bar{A}_{T_1} \bar{B}_{T_2} \right) = 0$ unless  every
edge in the union graph $U \triangleq S_1\cup T_1 \cup S_2 \cup T_2$ appears \emph{at least twice} in the 4-tuple $(S_1, T_1, S_2, T_2)$. Furthermore, each covariance in \prettyref{eq:var_Phi_ij_W} is upper bounded by $\sigma^{4N} q^{-2N+e(U)}$ (cf.~\prettyref{eq:product_expect}).
To proceed, we need to enumerate all possible $4$-tuples $(S_1, T_1, S_2, T_2)$ according to the union graph $U$. Note that the number of 
different vertex labelings of $U$ (excluding vertices $i$ and $j$) is simply upper bound by $n^{v(U)-1-\indc{i\neq j}}$. However, there are many configurations for the four chandeliers $(S_1, T_1, S_2, T_2)$ to generate the same unlabeled graph $U$, which may lead to excessive correlation. 
The chandelier structure is designed specifically to limit the possible overlapping patterns and reduce the correlations.



To convey some intuitions, let us focus on a true pair $(i,i)$ and consider the simple case  where $U$ is a tree and every edge in $U$ appears exactly twice in the $4$-tuple. 
In this case, $e(U)=2N$ and $v(U)=2N+1$.
Moreover, $U$ is a chandelier with $2L$ branches, each of which belongs to exactly two out of the four chandeliers $(S_1,T_1,S_2,T_2)$. 
For example, in \prettyref{fig:whychandelier}(a), we show two branches of $U$, one comes from $S_1, S_2$ and the other comes from $T_1, T_2$. 
Using this specific structure, we can precisely enumerate all possible $4$-tuples that generate such a union graph $U$ and bound their contributions to the variance.

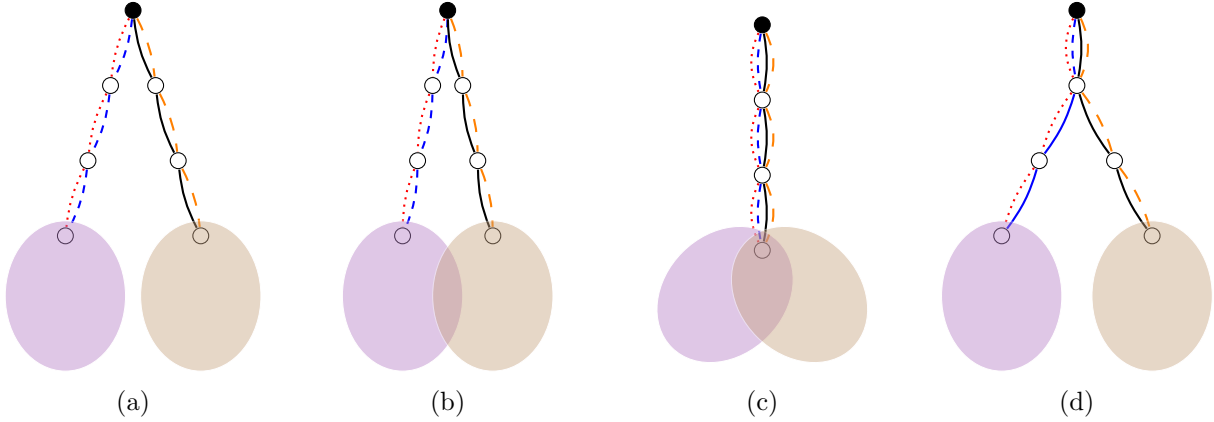
\begin{figure}
    \centering

\begin{subfigure}[b]{0.24\textwidth}
\centering
\begin{tikzpicture}[scale=2,transform shape]
\draw (0,0) node (root) [root] {};
\draw (-0.15,-0.5) node (a1) [vertexdot] {};
\draw (-0.30,-1) node (a2) [vertexdot] {};
\draw (-0.45,-1.5) node (a3) [vertexdot] {};
\draw (0.15,-0.5) node (c1) [vertexdot] {};
\draw (0.30,-1) node (c2) [vertexdot] {};
\draw (0.45,-1.5) node (c3) [vertexdot] {};

\draw (root) edge[blue, bend left=10, dashed, thick] (a1);
\draw (a1) edge[blue, bend left=10, dashed, thick] (a2);
\draw (a2) edge[blue, bend left=10, dashed, thick] (a3);
\draw (root) edge[red, bend right=10, dotted, thick] (a1);
\draw (a1) edge[red, bend right=10, dotted, thick] (a2);
\draw (a2) edge[red, bend right=10, dotted, thick] (a3);

\draw (root) edge[black, bend right=10, thick] (c1);
\draw (c1) edge[black, bend right=10, thick] (c2);
\draw (c2) edge[black, bend right=10, thick] (c3);
\draw (root) edge[orange, bend left=10, thick, dash pattern={on 5pt off 5pt on 5pt off 5pt}] (c1);
\draw (c1) edge[orange, bend left=10, thick, dash pattern={on 5pt off 5pt on 5pt off 5pt}] (c2);
\draw (c2) edge[orange, bend left=10, thick, dash pattern={on 5pt off 5pt on 5pt off 5pt}] (c3);

\filldraw[nearly transparent, color=white, fill=red!50](-0.45,-1.9) ellipse (0.4 and 0.5);
\filldraw[nearly transparent, color=white, fill=blue!50](-0.45,-1.9) ellipse (0.4 and 0.5);
\filldraw[nearly transparent, color=white, fill=black!50](0.45,-1.9) ellipse (0.4 and 0.5);
\filldraw[nearly transparent, color=white, fill=orange!50](0.45,-1.9) ellipse (0.4 and 0.5);
\end{tikzpicture}
\caption{}
\end{subfigure}
\hfill
\begin{subfigure}[b]{0.24\textwidth}
\centering
\begin{tikzpicture}[scale=2,transform shape]
\draw (0,0) node (root) [root] {};
\draw (-0.1,-0.5) node (a1) [vertexdot] {};
\draw (-0.2,-1) node (a2) [vertexdot] {};
\draw (-0.3,-1.5) node (a3) [vertexdot] {};
\draw (0.1,-0.5) node (c1) [vertexdot] {};
\draw (0.2,-1) node (c2) [vertexdot] {};
\draw (0.3,-1.5) node (c3) [vertexdot] {};

\draw (root) edge[blue, bend left=10, dashed, thick] (a1);
\draw (a1) edge[blue, bend left=10, dashed, thick] (a2);
\draw (a2) edge[blue, bend left=10, dashed, thick] (a3);
\draw (root) edge[red, bend right=10, dotted, thick] (a1);
\draw (a1) edge[red, bend right=10, dotted, thick] (a2);
\draw (a2) edge[red, bend right=10, dotted, thick] (a3);

\draw (root) edge[black, bend right=10, thick] (c1);
\draw (c1) edge[black, bend right=10, thick] (c2);
\draw (c2) edge[black, bend right=10, thick] (c3);
\draw (root) edge[orange, bend left=10, thick, dash pattern={on 5pt off 5pt on 5pt off 5pt}] (c1);
\draw (c1) edge[orange, bend left=10, thick, dash pattern={on 5pt off 5pt on 5pt off 5pt}] (c2);
\draw (c2) edge[orange, bend left=10, thick, dash pattern={on 5pt off 5pt on 5pt off 5pt}] (c3);

\filldraw[nearly transparent, color=white, fill=red!50](-0.3,-1.9) ellipse (0.4 and 0.5);
\filldraw[nearly transparent, color=white, fill=blue!50](-0.3,-1.9) ellipse (0.4 and 0.5);
\filldraw[nearly transparent, color=white, fill=black!50](0.3,-1.9) ellipse (0.4 and 0.5);
\filldraw[nearly transparent, color=white, fill=orange!50](0.3,-1.9) ellipse (0.4 and 0.5);
\end{tikzpicture}
\caption{}
\end{subfigure}
\hfill
\begin{subfigure}[b]{0.24\textwidth}
\centering
\begin{tikzpicture}[scale=2,transform shape]

\draw (0,0) node (root) [root] {};
\draw (0,-0.5) node (a1) [vertexdot] {};
\draw (0,-1) node (a2) [vertexdot] {};
\draw (0,-1.5) node (a3) [vertexdot] {};

\draw (root) edge[red, bend right=25, dotted, thick] (a1);
\draw (root) edge[blue, bend right=10, thick, dashed] (a1);
\draw (a1) edge[red, bend right=25, dotted, thick] (a2);
\draw (a1) edge[blue, bend right=10, thick, dashed] (a2);
\draw (a2) edge[red, bend right=25, dotted, thick] (a3);
\draw (a2) edge[blue, bend right=10, thick, dashed] (a3);

\draw (root) edge[black, bend left=10, thick] (a1);
\draw (root) edge[orange, bend left=25, thick, dash pattern={on 5pt off 5pt on 5pt off 5pt}] (a1);
\draw (a1) edge[black, bend left=10, thick] (a2);
\draw (a1) edge[orange, bend left=25, thick, dash pattern={on 5pt off 5pt on 5pt off 5pt}] (a2);
\draw (a2) edge[black, bend left=10, thick] (a3);
\draw (a2) edge[orange, bend left=25, thick, dash pattern={on 5pt off 5pt on 5pt off 5pt}] (a3);

\filldraw[nearly transparent, color=white, fill=red!50,rotate around={-45:(0,-1.9)}](-0.25,-2) ellipse (0.4 and 0.5);
\filldraw[nearly transparent, color=white, fill=blue!50,rotate around={-45:(0,-1.9)}](-0.25,-2) ellipse (0.4 and 0.5);
\filldraw[nearly transparent, color=white, fill=black!50,rotate around={45:(0,-1.9)}](0.25,-2) ellipse (0.4 and 0.5);
\filldraw[nearly transparent, color=white, fill=orange!50,rotate around={45:(0,-1.9)}](0.25,-2) ellipse (0.4 and 0.5);
\end{tikzpicture}
\caption{}
\end{subfigure}
\hfill
\begin{subfigure}[b]{0.24\textwidth}
\centering
\begin{tikzpicture}[scale=2,transform shape,auto,font=\scriptsize]
\draw (0,0) node (root) [root] {};
\draw (0,-0.5) node (a1) [vertexdot] {};
\draw (-0.25,-1) node (a2) [vertexdot] {};
\draw (0.25,-1) node (c2) [vertexdot] {};
\draw (-0.5,-1.5) node (a3) [vertexdot] {};
\draw (0.5,-1.5) node (c3) [vertexdot] {};

\draw (root) edge[black, bend left=10, thick] (a1);
\draw (a1) edge[black, bend right=10, thick] (c2);
\draw (c2) edge[black, bend right=10, thick] (c3);

\draw (root) edge[orange, bend left=25, thick, dash pattern={on 5pt off 5pt on 5pt off 5pt}] (a1);
\draw (a1) edge[orange, bend left=10, thick, dash pattern={on 5pt off 5pt on 5pt off 5pt}] (c2);
\draw (c2) edge[orange, bend left=10, thick, dash pattern={on 5pt off 5pt on 5pt off 5pt}] (c3);

\draw (root) edge[blue, bend right=10, dashed, thick] (a1);
\draw (a1) edge[blue, bend left=10, thick] (a2);
\draw (a2) edge[blue, bend left=10, thick] (a3);

\draw (root) edge[red, bend right=25, dotted, thick] (a1);
\draw (a1) edge[red, bend right=10, dotted, thick] (a2);
\draw (a2) edge[red, bend right=10, dotted, thick] (a3);

\filldraw[nearly transparent, color=white, fill=red!50](-0.5,-1.9) ellipse (0.4 and 0.5);
\filldraw[nearly transparent, color=white, fill=blue!50](-0.5,-1.9) ellipse (0.4 and 0.5);
\filldraw[nearly transparent, color=white, fill=black!50](0.5,-1.9) ellipse (0.4 and 0.5);
\filldraw[nearly transparent, color=white, fill=orange!50](0.5,-1.9) ellipse (0.4 and 0.5);
\end{tikzpicture}
\caption{}
\end{subfigure}
\caption{Examples of overlapping patterns of two branches from, say, $S_1 \cap S_2$ and $T_1 \cap T_2$, shown in red/blue and black/orange respectively. 
The solid vertex is the root $i$. 
(a): The two branches overlap only at the root $i$. 
(b): The two wires are disjoint and the two dangling bulbs intersect creating cycle(s).
(c): The two wires completely overlap and the bulbs can intersect into an arbitrary tree.
(d): The two wires overlap in the beginning before branching out and the bulbs are disjoint.
}
    \label{fig:whychandelier}
\end{figure}

Moving from this simple case (referred to as the baseline) to more general cases, the following three observations are crucial for bounding the total variance (although the proof does not exactly follow this classification):
\begin{itemize}
\item If bulbs from different branches overlap (\prettyref{fig:whychandelier}(b)), this
will create cycles and hence increase the excess $e(U)-v(U)$, gaining extra factors of $1/n$ in the variance bound \prettyref{eq:var_Phi_ij_W} compared to the contribution of the baseline. As a result, although the structure of $U$ is difficult to track, a crude enumeration based on $e(U)$ and $v(U)$ suffices.
Next we assume $U$ is a \emph{tree}.

    \item If two wires completely overlap (\prettyref{fig:whychandelier}(c)), both $e(U)$ and $v(U)$ are reduced by $M$ and hence we gain a factor of $(nq)^{-M}$ in the variance bound. On the other hand, the two bulbs can intersect to form an arbitrary tree which has at most $\exp(O(K))$ possibilities up to isomorphism. To ensure $(nq)^{-M}$ dominates $\exp(O(K))$, we need $M \gtrsim K/\log (nq)$. 
    
    \item If two wires first intersect then branch out (\prettyref{fig:whychandelier}(d)), the attached bulbs must be disjoint (otherwise a cycle will ensue), so that each bulb appears in 
    exactly two out of $(S_1,T_1,S_2,T_2)$.
    It turns out that the worst case occurs when the two wires share a single edge, for which there are at most $L^2$ possible ways (since each chandelier has $L$ wires). On the other hand, we gain a factor of $(nq)^{-1}$ in the variance bound \prettyref{eq:var_Phi_ij_W} (cf.~\prettyref{rmk:true_lowerbound}). 
    Thus to ensure $L^2$ is dominated by $nq^{-1}$, we need $L=o(\sqrt{nq})$.
\end{itemize}

In all, we see that it is critical for chandeliers to be a ``thin'' tree with 
only a few long wires, especially when the graphs get sparser. To further reduce the symmetry, we require the bulbs in each chandelier $H$ are all non-isomorphic so that $\aut(H)$ is given by \prettyref{eq:aut_H}, namely 
$\aut(H) = \prod_{\calB\in\calK(H)} \aut(\calB)$, and each $\aut(\calB)$ is required to be at most $R=\exp(O(K))$.

The method of counting signed trees has been applied to the \emph{detection} problem in the previous work \cite{mao2021testing}. The goal therein is to decide whether two \ER random graphs are independent or correlated using the test statistic
\begin{equation}
f(A,B) = \sum_{H \in\calT'} \aut(H) \, W_{H} (\Bar{A}) \, W_{H} (\Bar{B}),
\label{eq:test}
\end{equation}
where the weighted subgraph count $W_{H}(W)$ is  similarly defined as~\prettyref{eq:W_i_H} for unrooted $H$.
Compared to  \prettyref{eq:Phi_ij}, there are three major distinctions:
First, the trees in  \prettyref{eq:test}  are not rooted and $\aut(\cdot)$ is for unrooted graphs.
Second, trees in $\calT'$ only have $\Theta\left(\log n / \log \log n \right)$ edges, instead of $\Theta(\log n)$ edges required in this paper. 
This is because for detection, one only needs to achieve a vanishing error, instead of a specific $o(1/n^2)$ error probability for recovery in the current work. 
Third (and most importantly), $\calT'$ contains all trees without special structure, while here we choose $\calT$ to be a family of special trees called chandeliers, which, as explained earlier, is crucial for reducing the correlation between different signed tree counts.

In terms of analysis, for the detection problem in \cite{mao2021testing} the latent permutation is chosen uniformly at random, so one can average the second moment calculation over the random permutation 
which drastically simplifies the analysis of the tree counting statistic. 
In contrast, for the recovery problem in the present paper, we need to condition on the realization of the latent permutation.
As such, the second moment calculation here is much more challenging combinatorially and involves delicate enumeration procedures that revolve around the chandelier construction.
In addition, since 
the trees in \cite{mao2021testing} are much smaller with only $\Theta(\frac{\log n}{\log \log n})$ edges, so that many quantities can be bounded very crudely (e.g.,~$\aut(H)\leq v(H)!$); for the current paper since the trees have $\Theta(\log n)$ edges such simple analysis does not suffice.

\section{Statistical analysis of similarity scores}
\label{sec:statistical-analysis}


Throughout the analysis, without loss of generality, we assume $\pi = \id$. 
First, we compute the first moment of the similarity scores $\Phi_{ij}^{\calH}$ for a general collection $\calH$ of subgraphs.

\begin{proposition}\label{prop:mean_Phi_ij}
Let $\calH$ be a family of unlabeled uniquely rooted graphs with $N$ edges and $V+1$ vertices.
For any $i,j\in [n]$, we have
\begin{align}
    \expect{\Phi_{i j}^{\calH}} & = |\calH| \left(\rho \sigma^2\right) ^{N} \frac{(n-1)!}{(n-V-1)!} \indc{i = j} \, , \label{eq:mean_Phi_ij}
\end{align}
where $\sigma^2=q(1-q)$.
If $V^2 =o(n)$, then $\expect{\Phi_{i j}^{\calH}} = (1+o(1)) |\calH| \left(\rho \sigma^2\right) ^{N}  n^V$.
\end{proposition}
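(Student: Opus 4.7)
The plan is a direct first-moment calculation exploiting the centering of $\bar A,\bar B$ and the uniquely-rooted hypothesis. Expanding the similarity score via \prettyref{eq:Phi_ij} and the definition of $W_{i,H}$ yields
\[
\mathbb{E}[\Phi_{ij}^{\calH}] \;=\; \sum_{H\in\calH}\aut(H)\sum_{S_1(i)\cong H}\sum_{S_2(j)\cong H}\mathbb{E}\bqth{\bar A_{S_1}\bar B_{S_2}},
\]
so the task reduces to evaluating each summand and enumerating the pairs that contribute.

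First I would observe that, under the correlated \ER model with $\pi=\id$, the variables $\{\bar A_e\}_{e\in\binom{[n]}{2}}$ are independent with mean zero, the variables $\{\bar B_e\}$ likewise, and across the two graphs $\bar A_e$ is independent of $\bar B_f$ whenever $e\neq f$, while $\mathbb{E}[\bar A_e\bar B_e]=\rho\sigma^2$ with $\sigma^2=q(1-q)$. Consequently, writing $\bar A_{S_1}\bar B_{S_2}=\prod_{e\in E(S_1)}\bar A_e\prod_{f\in E(S_2)}\bar B_f$, any edge appearing in $E(S_1)\triangle E(S_2)$ contributes an isolated centered factor and kills the expectation. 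Hence the only contributing pairs satisfy $E(S_1)=E(S_2)$, and for each such pair $\mathbb{E}[\bar A_{S_1}\bar B_{S_2}]=(\rho\sigma^2)^N$.

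Next I would argue that $E(S_1)=E(S_2)$ forces $i=j$. Because $H$ has no isolated vertices (being uniquely rooted implies connected, or at least the underlying graph sense works), $E(S_1)=E(S_2)$ implies $V(S_1)=V(S_2)$. Thus the same unlabeled subgraph of $\complete$ supports two rooted copies of $H$, one rooted at $i$ and one at $j$. By \prettyref{def:unique_root}, however, rooting $H$ at any vertex other than the designated root produces a non-isomorphic rooted graph, so the two rootings must coincide, i.e.\ $i=j$. This gives the indicator $\indc{i=j}$ in \prettyref{eq:mean_Phi_ij}.

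When $i=j$, for each fixed $S_1(i)\cong H$ there is exactly one valid $S_2(i)$, namely $S_2=S_1$, so it remains to count rooted copies of $H$ in $\complete$ rooted at $i$. Labeling the $V$ non-root vertices of $H$ injectively into $[n]\setminus\{i\}$ gives $(n-1)!/(n-V-1)!$ maps, and two maps yield the same rooted subgraph iff they differ by a rooted automorphism of $H$, so the count is $\frac{1}{\aut(H)}\cdot\frac{(n-1)!}{(n-V-1)!}$. Multiplying by $\aut(H)$ and $(\rho\sigma^2)^N$ and summing over $H\in\calH$ produces exactly \prettyref{eq:mean_Phi_ij}; the $\aut(H)$ factors cancel, which is precisely why $\aut(H)$ was inserted in \prettyref{eq:Phi_ij}. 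Finally, the asymptotic $(n-1)!/(n-V-1)!=n^V(1+o(1))$ under $V^2=o(n)$ is immediate from $\log\prod_{k=1}^{V}(1-k/n)=-\Theta(V^2/n)=o(1)$. There is no real obstacle; the only subtle point is the reduction to $i=j$, which is exactly where the uniquely-rooted hypothesis enters.
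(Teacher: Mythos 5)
Your proof is correct and follows essentially the same route as the paper's: center-to-kill unless the edge sets of $S_1$ and $S_2$ coincide, invoke unique-rootedness to force $i=j$, count rooted embeddings by $\frac{(n-1)!}{(n-V-1)!}/\aut(H)$ so that the $\aut(H)$ weights cancel, and pass to the asymptotic via $\log\prod_k(1-k/n)=o(1)$ when $V^2=o(n)$. One small inaccuracy worth fixing: the parenthetical claim that uniquely rooted implies connected is false (one can build uniquely rooted graphs with an isolated vertex); what you actually need, and what both you and the paper implicitly use, is that $H$ has no isolated vertices so that the edge set determines the vertex set, which holds for the chandeliers the proposition is applied to.
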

\begin{proof}
For a rooted graph $H$ with $N$ edges and $V+1$ vertices, the number of copies of $H$ in the complete graph $\complete$ that are rooted at $i \in [n]$ is 
\begin{align}
    \sub_n(H)\equiv\sub(H,\complete)= \frac{\binom{n-1}{V}V!}{\aut(H)}  \, , \label{eq:a_H}
\end{align}
where recall that $\aut(H)$ denotes the number of automorphisms of $H$.
For any weighted adjacency matrix $M$ and any subgraph  $S$ of $\complete$, recall that $M_S= \prod_{e \in E(S)} M_e$ as in \prettyref{eq:W_i_H}.  
Then, 
\begin{align}
    \expect{W_{i,H}(\Bar{A})W_{j,H}(\Bar{B})}  
    & = \sum_{S(i)\cong H} \sum_{T(j)\cong H} \expect{\Bar{A}_S  \Bar{B}_T} \nonumber  \\
    & \stepa{=} \sum_{S(i)\cong H,S(j)\cong H}\expect{\Bar{A}_S \Bar{B}_S }  \nonumber \\
    & \stepb{=} \left(\rho \sigma^2\right)^N \sub_n(H) \indc{i = j}  \,, \label{eq:W_i_j_H}
\end{align}
where (a) is because $\expect{\Bar{A}_S \Bar{B}_T } = 0 $ unless $S=T$ (as unrooted graphs);
(b) is because $S(i)\cong H$ and $S(j)\cong H$ imply that $i=j$, thanks to the unique-rootedness of $H$. By \prettyref{eq:Phi_ij},
\begin{align*}
    \Expect[\Phi_{ij}^{\calH}] =  ~ \sum_{H \in\calH} \aut(H) \Expect[W_{i,H} (\Bar{A}) W_{j,H} (\Bar{B})] 
    & = ~ |\calH|  \left(\rho \sigma^2\right)^N \binom{n-1}{V}V!\indc{i=j}.
\end{align*}
In view of $\binom{n-1}{V}V!=\frac{(n-1)!}{(n-V-1)!}, $ we obtain the desired~\prettyref{eq:mean_Phi_ij}.
Finally, since $\left(1-\frac{V}{n}\right)^V \le \frac{(n-1)!}{(n-V-1)!n^V} \le \left(1-\frac{1}{n}\right)^V$ and $V=o(\sqrt{n})$, we have $ \frac{(n-1)!}{(n-V-1)!} = (1+o(1))n^V$.
\end{proof}

Next, we bound the variance of the similarity scores 
$\Phi_{ij}\equiv \Phi_{ij}^{\calT}$, where $\calT$ is the collection of $(K,L,M,R)$-chandeliers, for both true pairs $i=j$ and fake pairs $i\neq j$.
In the remainder of the paper, let $\Universal$ denote a universal constant such that 
\begin{equation}
|\calJ(K)| \le \Universal^K, \quad \forall K\geq 1.
\label{eq:Universal}
\end{equation}
Such a $\Universal$ (not to be confused with Otter's constant $\alpha$) exists thanks to~\prettyref{eq:otter}.

\begin{proposition}[True pairs]
\label{prop:true_pair}
Suppose $q\le \frac{1}{2}$, $L\ge 2$, and
\begin{align}
    \frac{14 L^2  }{\rho^{2(K+M)}|\calJ|}  \le \frac{1}{2} \,, \quad \frac{11 \R^{4}(2N)^3 (11 \Universal )^{2(K+M)}}{n} \le  \frac{1}{2}\,,
    \quad  \frac{\R^{\frac{4}{M}} (11 \Universal )^{\frac{4M+4K}{M}}}{nq} \le \frac{1}{2}\,,
    \quad \frac{1+2L^2}{\rho^2 nq} \le \frac{1}{2}\,. \label{eq:true_pair_constraint}
\end{align}
Then, for any $i \in [n]$, we have
\begin{align}
    \frac{\Var\left[\Phi_{ii}\right]}{\expect{\Phi_{ii}}^2} = O
     \left(\frac{L^2 }{\rho^2 nq} + \frac{L^2 }{\rho^{2(K+M)}|\calJ|} \right) \, . \label{eq:upper_bound_true_pair}
\end{align}
\end{proposition}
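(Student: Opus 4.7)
The plan is to expand the variance as
\begin{align*}
\Var[\Phi_{ii}] = \sum_{H,I\in\calT} \aut(H)\aut(I) \sum_{\substack{S_1(i),S_2(i)\cong H \\ T_1(i),T_2(i)\cong I}} \Cov\pth{\bar A_{S_1}\bar B_{S_2},\bar A_{T_1}\bar B_{T_2}}
\end{align*}
and enumerate the surviving $4$-tuples by the isomorphism type of the union graph $U=S_1\cup T_1\cup S_2\cup T_2$. Centering forces the covariance to vanish unless every edge of $U$ is present in at least two of the four subgraphs, and in particular unless the $A$-side edges $S_1\cup T_1$ and the $B$-side edges $S_2\cup T_2$ overlap; each surviving covariance is bounded by $\sigma^{4N}q^{e(U)-2N}$ times a bounded correlation factor. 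Since the number of vertex labellings of $U$ with the root fixed at $i$ is at most $n^{v(U)-1}$, this reduces the estimate to a purely combinatorial sum over isomorphism classes of $U$ together with the pairing data that assigns each edge of $U$ to its occurrences in $(S_1,T_1,S_2,T_2)$.

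I would then partition the configurations into a \emph{baseline} class and \emph{error} classes as outlined in Section~\ref{sec:chandelier}. The baseline class consists of those 4-tuples in which $U$ is a tree and every edge is covered exactly twice; then $U$ is itself a $2L$-branch chandelier whose branches are distributed pairwise among $(S_1,T_1,S_2,T_2)$. The pure ``diagonal'' pairing ($S_1\equiv S_2$, $T_1\equiv T_2$, disjoint) contributes zero to the variance since the covariance factors. The minimal nontrivial pairing has exactly one branch swapped between an $S$-chandelier and a $T$-chandelier, and can be realized in at most $L^2$ ways by choosing which branch of each chandelier is swapped; combined with the cardinality estimate \prettyref{eq:Tsize} and \prettyref{eq:otter} together with the $\rho^{2(K+M)}$ deficit incurred by transferring one branch-pair, this produces the $O\bigl(L^2/(\rho^{2(K+M)}|\calJ|)\bigr)$ term of \prettyref{eq:upper_bound_true_pair}. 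Baseline pairings with more swapped branches incur additional $\rho^{2(K+M)}$ deficits and are dominated by this leading term.

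The remaining non-baseline contributions are controlled by the three gain mechanisms from Section~\ref{sec:chandelier}: (i) \emph{bulb overlaps creating cycles} strictly increase the excess $e(U)-v(U)$, and each extra unit absorbs a $1/n$ factor against $n^{v(U)-1}$; a crude enumeration of bulb isomorphism types via $|\calJ(K)|\le \Universal^K$ from \prettyref{eq:Universal} suffices and is absorbed by the second smallness condition in \prettyref{eq:true_pair_constraint}. (ii) \emph{Complete wire coincidence} drops both $e(U)$ and $v(U)$ by $M$ and yields an $(nq)^{-M}$ gain that dominates the $\exp(O(K))$ possibilities for how the trailing bulbs may glue together, absorbed by the third condition in \prettyref{eq:true_pair_constraint}. (iii) \emph{Partial wire overlap}, whose worst case is a single shared edge between two wires followed by disjoint bulbs, admits at most $L^2$ realizations (one wire per chandelier) with an $(nq)^{-1}$ gain, producing the $L^2/(\rho^2 nq)$ term.

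The principal obstacle is the uniform combinatorial bookkeeping across all non-baseline patterns, which can mix cycles, partial and complete wire coincidences, and bulb intersections simultaneously across up to $4L$ branches. The chandelier design is essential here: the non-isomorphism of the $L$ bulbs inside any chandelier gives the clean product formula \prettyref{eq:aut_H}, and the bound $\aut(\calB)\le R$ prevents the automorphism prefactors $\aut(H)\aut(I)$ from overwhelming the entropy-style denominator $|\calJ|\approx \alpha^{-K}$ coming from \prettyref{eq:ow22}. Summing the resulting geometric series over excess edges and configuration types under the constraints of \prettyref{eq:true_pair_constraint} closes the argument and yields \prettyref{eq:upper_bound_true_pair} after dividing by $(\E[\Phi_{ii}])^2$ computed via \prettyref{prop:mean_Phi_ij}.
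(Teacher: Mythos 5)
Your proposal correctly reproduces the high-level intuition that the paper itself lays out in Section~\ref{sec:chandelier}: expand the variance over $4$-tuples, organize by the union graph $U$, isolate a baseline of tree-like, exactly-twice-covered configurations, and argue that the three deviation mechanisms (cycles, complete wire coincidence, partial wire overlap) are each absorbed by a corresponding gain factor under~\prettyref{eq:true_pair_constraint}. The leading-term identifications---$L^2/(\rho^{2(K+M)}|\calJ|)$ from a single swapped branch-pair, $L^2/(\rho^2 nq)$ from a single shared wire edge---are also correct and match~\prettyref{rmk:true_lowerbound}.

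However, there is a genuine gap: you identify the ``uniform combinatorial bookkeeping across all non-baseline patterns'' as the principal obstacle and then close the argument with ``summing the resulting geometric series,'' without supplying the machinery that actually makes that sum tractable. The three mechanisms do not separate cleanly---a single $U$ can simultaneously contain cycles, coincident wires, and partially overlapping wires, and one has to show that the gains compose multiplicatively rather than being double-counted or interfering. The paper resolves this with an explicit edge-disjoint decomposition $U = \UTone \cup \UTtwo \cup \UN$ driven by the decorated-graph formalism: for each neighbor $a$ of the root, the component $\calG(i,a)$ is routed to $\UN$ if it creates a cycle, to $\UTtwo$ if it has at least $M$ edges that are $3$- or $4$-decorated, and to $\UTone$ otherwise. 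Claim~\ref{claim:U_decompose_true} verifies this is a valid decomposition, factorizes the automorphism weight $(\aut(S_1)\aut(T_1)\aut(S_2)\aut(T_2))^{1/2} = w(\UTone)w(\UTtwo)w(\UN)$ across the pieces, and caps $w(\UTtwo)$ and $w(\UN)$ by powers of $R$. This factorization---which hinges on the structural fact that each bulb lands in exactly one of the three pieces---is what lets the paper bound the three pieces independently and multiply.

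A second, equally essential idea you omit is the enumeration of $\UTone$. The crude bound $\Universal^{v_1}11^{v_1}$ you gesture at for bulb isomorphism types is far too loose on this piece: $\UTone$ can carry almost all $2N$ edges, and $(11\Universal)^{2N} \gg n^2$ when $N = \Theta(\log n)$. The paper instead characterizes $\UTone$ exactly: every branch has fewer than $M$ edges that are $3$- or $4$-decorated, which (via Claim~\ref{claim:U_T_1_structure}) forces all its bulbs to be exactly $2$-decorated with uniform decoration along each bulb and its wire. This lets $\UTone$ be generated from a $t$-chandelier in the auxiliary family $\calF_{t',t}$ by the merging (tangled-chandelier) procedure of Lemma~\ref{lmm:merge_bound}, with the count controlled by $f_{t',t}$ from~\prettyref{eq:f_t_tilde_t} and $(1+2L^2)^{e_1}$. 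Lemma~\ref{lmm:sum_f_t_tilde} then extracts both leading terms from the $f_{t',t}$ recursion. Without this structural characterization and the merging argument, the enumeration does not close, and the $L^2$ prefactors in~\prettyref{eq:upper_bound_true_pair} cannot be obtained.
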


\begin{proposition}[Fake pairs]
\label{prop:fake_pairs_improved}
Suppose $q\le \frac{1}{2}$, $L \ge 2$, and
\begin{align}
\frac{R^{\frac{2}{M} } (11 \Universal)^{\frac{4(K+M)}{M}}}{nq}\le \frac{1}{2}\,,
\quad
       4^{L+3} L^{ 2L \wedge (4K+2)} (11\Universal)^{8(K+M)} \R^2 (2N+1)^3 \le \frac{n}{2}\,. \label{eq:fake_pairs_constraint}
\end{align}
Then, for any $i\neq j$, we have
\begin{align}
\frac{\Var\left[\Phi_{ij}\right]}{\expect{\Phi_{ii}}^2} = O\left(\frac{1}{|\calT| \rho^{2N} }\right) \, . \label{eq:upper_bound_fake_pair}
\end{align}
\end{proposition}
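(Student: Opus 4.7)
Since $\E[\Phi_{ij}] = 0$ for $i \ne j$ by \prettyref{prop:mean_Phi_ij}, the plan is to bound $\Var[\Phi_{ij}] = \E[\Phi_{ij}^2]$. Expanding as in \prettyref{eq:var_Phi_ij_W}, I sum over 4-tuples $(S_1, T_1, S_2, T_2)$ of labeled rooted chandelier copies with $S_1, T_1$ rooted at $i$ and $S_2, T_2$ rooted at $j$. Because every entry has been centered, $\E[\bar A_{S_1} \bar A_{T_1} \bar B_{S_2} \bar B_{T_2}]$ vanishes unless every edge of the union $U \triangleq S_1 \cup T_1 \cup S_2 \cup T_2$ is covered at least twice in the 4-tuple, and is otherwise bounded by $\sigma^{4N} q^{-2N+e(U)}$.

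The dominant configuration is $H = I$, $S_1 = T_1$ rooted at $i$, $S_2 = T_2$ rooted at $j$, with $S_1$ and $S_2$ sharing no edges. Each edge of $U$ is then covered exactly twice, the expectation reduces to $\sigma^{4N}$, and using $\aut(H)^2 \sub_n(H)^2 = \left(\tfrac{(n-1)!}{(n-N-1)!}\right)^2$ (independent of $H$), the total contribution is $|\calT| \left(\tfrac{(n-1)!}{(n-N-1)!}\right)^2 \sigma^{4N}$ up to a $1 + O(N^2/n)$ correction accounting for random overlaps between the $i$-side and the $j$-side. Dividing by $\E[\Phi_{ii}]^2 = |\calT|^2 (\rho\sigma^2)^{2N}\left(\tfrac{(n-1)!}{(n-N-1)!}\right)^2$ gives exactly $\frac{1+o(1)}{|\calT|\rho^{2N}}$, matching \prettyref{eq:upper_bound_fake_pair}. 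It remains to show that all other configurations contribute at most a constant multiple of this.

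To control these remaining configurations, I classify 4-tuples by the isomorphism type of $U$ together with the combinatorial data describing how the four chandelier copies embed into $U$, and then sum over labelings of $U$ into $\complete$ with $i$ and $j$ pinned at the two roots. Three chandelier-specific mechanisms produce the needed savings, in line with the heuristic in \prettyref{sec:chandelier}: (i)~a full wire overlap of length $M$ shrinks both $e(U)$ and $v(U)$ by $M$, yielding $(nq)^{-M}$ which, by the first constraint in \prettyref{eq:fake_pairs_constraint}, dominates the $R \cdot \Universal^{O(K)}$ possibilities for how two bulbs may be attached at the shared endpoint; (ii)~a partial wire overlap that branches out into distinct bulbs necessarily creates at least one cycle, contributing a factor of $1/n$; (iii)~pairing branches across two chandeliers is enumerated by $L^{2L} \wedge L^{4K+2}$, while $R^2(2N+1)^3$ absorbs bulb-automorphism counts and the summation over tree parameters, matching the second constraint in \prettyref{eq:fake_pairs_constraint}.

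The main obstacle will be the ``mixed'' case in which $H \ne I$ or $S_1 \ne T_1$ or $S_2 \ne T_2$, since then some edges of the symmetric differences $S_1\triangle T_1$ must be covered from the $B$-side via $S_2\cup T_2$, coupling the two halves of $U$ in intricate ways. Any such coupling produces additional vertex overlaps beyond the fixed roots $i,j$, and the combinatorial challenge is to verify that each such overlap either passes through wires (paying $(nq)^{-1}$ per shared wire edge, up to the full $(nq)^{-M}$ for a completely shared wire) or creates cycles (paying $1/n$), with enough aggregate savings to absorb the bulb and branch combinatorial factors and match the thresholds in \prettyref{eq:fake_pairs_constraint}. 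Carrying out this enumeration cleanly, across all overlap topologies, is where the bulk of the work lies.
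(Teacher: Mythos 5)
Your proposal correctly identifies the overall strategy, which coincides with the paper's: expand the variance, use the centering of $\bar A, \bar B$ to restrict to 4-tuples in which every edge of $U=S_1\cup T_1\cup S_2\cup T_2$ is covered at least twice, isolate the dominant disconnected configuration ($k=-2$, where $S_1=T_1$ is vertex-disjoint from $S_2=T_2$) which produces the target $1/(|\calT|\rho^{2N})$, and argue that every other configuration is subdominant by exploiting the chandelier structure. Your computation of the dominant contribution is correct, and the three mechanisms you list roughly correspond to the savings the paper exploits.

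However, the proof stops exactly where the work needs to begin, and you acknowledge as much in your closing sentence. The paper's argument requires concrete machinery that you have not set up: a decorated-graph framework assigning $D_e \subset \{\SS_1,\TT_1,\SS_2,\TT_2\}$ to each edge of $U$ so that 4-tuples become single combinatorial objects; a decomposition of $U$ into $\UTone(i), \UTone(j), \UTtwo(i), \UTtwo(j), \UN$ governed by which branches contain at least $M$ edges that are $3$- or $4$-decorated; a cycle-basis lemma (\prettyref{lmm:k_fake}) bounding $|\calNN(i)|+|\calNN(j)| \le 4(k+2)$ so that the excess $k$ controls the number of cycle-creating branches; and, most critically, the fake-pair-specific structural fact (\prettyref{claim:U_20}) that, because $i\neq j$, every bulb in $\UTone(i)$ is decorated exactly $\{\SS_1,\TT_1\}$ and every bulb in $\UTone(j)$ exactly $\{\SS_2,\TT_2\}$, so that the $i$-side and $j$-side can overlap only along wires, never along bulbs. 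This last observation is what makes the $L^{2L\wedge(4K+2)}$ bound of \prettyref{lmm:fake_enumeration} possible and has no analogue in the true-pair analysis; without it your mechanism (iii) cannot close. There is also a small conceptual slip in mechanism (ii): when two wires partially overlap and then branch out to distinct, disjoint bulbs, no cycle is formed --- the saving there is $(nq)^{-1}$ per shared wire edge --- while the $1/n$-per-cycle factor arises from the separate case where bulbs from different branches intersect. As written, the proposal is a correct roadmap but not a proof.
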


The next remark shows that 
the results in Propositions 
\ref{prop:true_pair} and 
\ref{prop:fake_pairs_improved}
are essentially optimal, by identifying which configurations of $(S_1,T_1,S_2,T_2)$ in \prettyref{eq:var_Phi_ij_W} contribute predominantly to the variance.

\begin{remark}\label{rmk:true_lowerbound}
The upper bound~\prettyref{eq:upper_bound_true_pair} for true pairs is almost tight. In fact,
when $N^2 \ll n$, $q=o(1)$ and $\rho\ge 0$,
\begin{align}
    \frac{\Var\left[\Phi_{ii}\right]}{\expect{\Phi_{ii}}^2} \ge \Omega \left(\frac{L^2 }{n q }+  \frac{L^2}{\rho^{2(K+M)}|\calJ|}\right) \,.
    \label{eq:upper_bound_true_pair1}
\end{align}
For the first term in this lower bound,
fix any $H,I \in \calT$ and consider the special case where $S_1=S_2\cong H$, $T_1=T_2\cong I$, where $S_1$ and $T_1$ only intersect on one edge that connects to $i$ (see \prettyref{fig:whychandelier}(d)). Then, we can show that $  \Cov\left(\bar{A}_{S_1}\bar{B}_{S_2}, \bar{A}_{T_1} \bar{B}_{T_2}\right) = \Omega(\left(\rho\sigma^2\right)^{2N}  q^{-1})$. 
There are $\Omega(L^2 \sub_n(H) \sub_n(I)n^{-1})$ number of $(S_1,T_1,S_2,T_2)$ that satisfies the above condition. Combining this with \prettyref{eq:var_Phi_ij_W} and applying \prettyref{prop:mean_Phi_ij}, we obtain
\begin{align*}
    \frac{\Var\left[\Phi_{ii}\right]}{\expect{\Phi_{ii}}^2} 
    & \gtrsim ~ \frac{1}{|\calT|^2\left(\rho\sigma^2\right)^{2N} n^{2N}} \sum_{H,I \in \calT} \aut(H)\aut(I) \sub_n(H)\sub_n(I)\left(\rho\sigma^2\right)^{2N}   L^2  (nq)^{-1}  =  \Omega\left(\frac{L^2}{nq}\right) \,,
\end{align*}
where the last equality holds because $\aut(H) \sub_n(H) = \Omega(n^{N})$ by \prettyref{eq:a_H} and $N^2 \ll n $.

For the second term in \prettyref{eq:upper_bound_true_pair1}, suppose the chandeliers 
$H$ and $I$ only share one common bulb $\calB$ (i.e., $|\calK(H)\cap \calK(I)|=1$). Consider $(S_1,T_1,S_2,T_2)$ such that (i) $S_1$ (resp.~$T_1$) completely overlaps with $S_2$ (resp.~$T_2$) except for $\calB$ and its attached wire; (ii) 
$S_1$ (resp.~$S_2$) only overlaps with $T_1$ (resp.~$T_2$) on $\calB$ and its attached wire.
This corresponds to a baseline case as described in~\prettyref{sec:chandelier}.
Then, we can show $  \Cov\left(\bar{A}_{S_1}\bar{B}_{S_2}, \bar{A}_{T_1} \bar{B}_{T_2}\right) \ge \left(\rho\sigma^2\right)^{2N} \rho^{-2(M+K)}$, and there are $\Omega(\sub_n(H) \sub_n(I))$ number of $(S_1,T_1,S_2,T_2)$ satisfying the above conditions (i) and (ii). 
Therefore, combining this with  \prettyref{eq:var_Phi_ij_W} and  \prettyref{prop:mean_Phi_ij} yields
\begin{align*}
   \frac{\Var\left[\Phi_{ii}\right]}{\expect{\Phi_{ii}}^2} 
    & \gtrsim ~\frac{1}{|\calT|^2\left(\rho\sigma^2\right)^{2N} n^{2N}} \sum_{H,I \in \calT} \aut(H)\aut(I) \sub_n(H) \sub_n(I) \left(\rho\sigma^2\right)^{2N} \rho^{-2(M+K)}  \indc{|\calK(H)\cap\calK(I)| = 1} \\
    & = ~  \frac{\sum_{H,I \in \calT}\indc{|\calK(H)\cap\calK(I)| = 1} }{|\calT|^2\rho^{2(M+K)}} 
    \gtrsim ~\frac{L^2}{\rho^{2(K+M)}|\calJ|}\,.
\end{align*}
where the last step holds because there are $L\binom{|\calJ|}{L}\binom{|\calJ|}{L-1}$ number of pairs of $H$ and $I$ that only share a single bulb.
The upper bound~\prettyref{eq:upper_bound_fake_pair} for fake pairs is sharp. In fact, if $N^2\ll n$, $q \le 1/2$, and $\rho \ge 0$,  for any collection $\calH$ of uniquely rooted trees
 (not just chandeliers) and any fake pair $i \neq j$, we have
\begin{align}
\frac{\Var[\Phi_{ij}^{\calH}]}{\expect{\Phi_{ii}^{\calH}}^2}  \ge \Omega\left(\frac{1}{|\calH| \rho^{2N} }\right)\,. \label{eq:fake_lowerbound}
\end{align}
To see this, first note that 
for any $S_1,T_1,S_2,T_2$ where $S_1(i),S_2(j) \cong H$ and $T_1(i),T_2(j) \cong I$ with $H,I \in \calH$, 
\begin{align*}
    \Cov\left({W_{i,H} (\Bar{A}) W_{j,H} (\Bar{B})\, , \, W_{i,I} (\Bar{A}) W_{j,I} (\Bar{B}) }\right)
    & = ~ \expect{{W_{i,H} (\Bar{A}) W_{j,H} (\Bar{B}) W_{i,I} (\Bar{A}) W_{j,I} (\Bar{B}) }} 
    \ge 0 \,,
\end{align*}
where the first equality applies \prettyref{eq:W_i_j_H} for uniquely rooted trees, and the last inequality holds 
because   $ \expect{\Bar{A}_{S_1} \Bar{B}_{S_2}\Bar{A}_{S_1} \Bar{B}_{S_2} } \ge 0$ whenever $\rho \ge 0$
(cf.~\prettyref{eq:cross_moment_equal} in \prettyref{lmm:cond-exp}, \prettyref{app:pre}). 
Second, consider the special case where $H =I$ and $S_1=T_1$, $S_2=T_2$, 
$S_1$ and $S_2$ are vertex-disjoint (\ie, just focus on the diagonal terms in the expansion of the variance~\prettyref{eq:var_Phi_ij_W} and ignore the possible correlations between counts of distinct trees in $\calH$),
we get
\begin{align*}
    & \Cov\left({W_{i,H} (\Bar{A}) W_{j,H} (\Bar{B}) }, {W_{i,H} (\Bar{A}) W_{j,H} (\Bar{B}) }\right) \\
    & \ge ~ \sum_{S_1(i)=T_1(i)\cong H} \sum_{S_2(j)=T_2(j)\cong H} \indc{\text{$S_1$ and $S_2$ are vertex-disjoint}}
\Cov\left( \Bar{A}_{S_1} \Bar{B}_{S_2}, 
\Bar{A}_{T_1} \Bar{B}_{T_2}\right)\\
    & = ~ 
     \sigma^{4N} \sum_{S_1(i)\cong H} \sum_{S_2(j)\cong H} \indc{\text{$S_1$ and $S_2$ are vertex-disjoint}}    = \Omega\left(\sigma^{4N} n^{2N}/ \aut^2(H)\right) \,.
\end{align*}
Therefore, 
$$
\Var\left[\Phi_{ij}^{\calH}\right]
\ge \sum_{H \in \calH} \aut(H)^2 
\Cov\left({W_{i,H} (\Bar{A}) W_{j,H} (\Bar{B}) }, {W_{i,H} (\Bar{A}) W_{j,H} (\Bar{B}) }\right )
\ge \Omega\left( |\calH| \sigma^{4N} n^{2N}\right).
$$
Combining the above with \prettyref{prop:mean_Phi_ij} yields \prettyref{eq:fake_lowerbound}. 
\end{remark}

\subsection{Proof of \prettyref{thm:Phi_ij_almost}}


We aim to prove \prettyref{thm:Phi_ij_almost} under the assumption~\prettyref{eq:nq_rho_condition}, that is, $\rho^2 \ge \alpha+\epsilon$,
and the following more general condition than \prettyref{eq:K_L_M_R_simple}: 
\begin{align}
      & L \le   \frac{ c_1 \log n}{\log \log n} \wedge c_6\sqrt{nq}, \quad \frac{c_2}{\log (nq)} \le \frac{M}{K} \le   \frac{\log \frac{ \rho^2}{\alpha}}{2\log \frac{1}{\rho^2}} , \quad  KL \ge  \frac{c_3 \log n}{\log \frac{\rho^2}{\alpha}}, \nonumber \\
    &  K+M \le c_4 \log n, \quad R=\exp(c_5 K),
    \label{eq:K_L_M} 
\end{align}
for some absolute constants $c_1, \ldots, c_6>0$.
Indeed, the specific choice of $K,L,M,R$ in~\prettyref{eq:K_L_M_R_simple} satisfies~\prettyref{eq:K_L_M} when $nq \ge C(\epsilon)$
for a sufficiently large constant $C(\epsilon)$ that only depends on $\epsilon$.

Next, we verify that \prettyref{eq:K_L_M} with appropriately chosen $(c_1,\ldots,c_6)$
ensures that the condition 
\prettyref{eq:true_pair_constraint} in~\prettyref{prop:true_pair} and the condition \prettyref{eq:fake_pairs_constraint} in~\prettyref{prop:fake_pairs_improved} are both satisfied for all sufficiently large $n$.
To start, we note that 
\begin{align}
\frac{M}{K} \le \frac{\log \frac{\rho^2}{\alpha} }{ 2 \log \frac{1}{\rho^2} } 
\quad \Longleftrightarrow  \quad \frac{\rho^{2(K+M)/K} }{\alpha} \ge \sqrt{\frac{\rho^2}{\alpha}}. \label{eq:M_K}
\end{align}
Moreover, since $R=\exp(c_5 K)$, by choosing $c_5$ to be an appropriate absolute constant and applying \prettyref{eq:ow22}, we have that for all $K$ large enough,
\begin{align}
|\calJ| \ge \left( \alpha (1+c_0)\right)^{-K} \label{eq:bound_J},
\end{align}
where $c_0>0$ is an arbitrarily small constant.
Combining the last two displayed equation gives that 
\begin{align}
\rho^{2(K+M)} |\calJ| \ge  \left( \frac{\rho^{2(K+M)/K} }{\alpha (1+c_0)} \right)^{K}
\ge  \left( \frac{\rho^2 }{\alpha} \right)^{K/4}
\label{eq:rho_M_cond_0},
\end{align}
where the last inequality holds by choosing $c_0=\rho^2/\alpha-1\ge\epsilon/\alpha$.
Since $L\le  \frac{c_1\log n}{\log \log n}$
and $KL \ge \frac{c_3 \log n}{\log (\rho^2/\alpha)}$,
$ K \ge  \frac{c_3\log \log n} {c_1 \log (\rho^2/\alpha)}$.
We deduce from~\prettyref{eq:rho_M_cond_0} that 
\begin{align}
\rho^{2(K+M)} |\calJ|  \ge  \left( \log n \right)^{c_3/(4c_1)} \ge \omega(L^2), \label{eq:rho_L_condition}
\end{align}
where the last inequality holds by choosing $c_1, c_3$ so that 
$c_3/c_1>8$.

Assuming that $K+M \le c_4 \log n$, $L\le c_1 \frac{\log n}{\log \log n}$,
and $R=\exp(c_5 K)$,
by choosing $c_4$ to be a sufficiently small constant and noting that $N=(K+M)L$, we deduce that
$$
\frac{11 \R^{4}(2N)^3 (11 \Universal )^{2(K+M)}}{n} \le  \frac{1}{2}\,.
$$
Assuming that $M/K \ge c_2/\log (nq)$,
by choosing $c_2$ to be a sufficiently large constant,
we get that 
$$
\frac{\R^{\frac{4}{M}} (11 \Universal )^{\frac{4M+4K}{M}}}{nq} \le \frac{1}{2}.
$$
Finally, assuming that $L\le c_6 \sqrt{nq}$ and $\rho^2>\alpha$, by choosing $c_6$ to be a sufficiently small constant, we conclude that
$$
\frac{1+2L^2}{\rho^2 nq} \le \frac{1}{2}
$$
completing the verification of  \prettyref{eq:true_pair_constraint}. 

For \prettyref{eq:fake_pairs_constraint}, under the assumption $L\le c_1 \frac{\log n}{\log \log n}$,
$
L^{L} \le n^{c_1}. \label{eq:L_L}
$
Thus, under the assumptions that $K+M \le c_4 \log n$,
and $R=\exp(c_5 K)$,  by choosing $c_1, c_4$ to be sufficiently small constants,  we get that 
$$
 4^{L+3} L^{ 2L \wedge (4K+2)} (11\Universal)^{8(K+M)} \R^2 (2N+1)^3 \le \frac{n}{2}\,,
$$
hence the desired\prettyref{eq:fake_pairs_constraint}.

Now we are ready to prove \prettyref{thm:Phi_ij_almost} by 
applying Propositions \ref{prop:mean_Phi_ij} and \ref{prop:fake_pairs_improved}. Define
\begin{align}
    F= \{ i: | \Phi_{ii} - \mu | > (1-c)\mu  \} \supset \{ i: \Phi_{ii} < \tau \}   \, , \label{eq:set_F}
\end{align}
in view of $\tau = c\mu$.
Applying~\prettyref{prop:mean_Phi_ij}, \prettyref{prop:fake_pairs_improved}, and Chebyshev's inequality, we get that
for any $i \neq j$, 
\begin{align}
\prob{ \Phi_{ij} \ge \tau } =
\prob{ \Phi_{ij} - \expect{\Phi_{ij} } \ge c \expect{\Phi_{ii} }}
\le \frac{\Var\left[\Phi_{ij}\right]}{c^2 \expect{\Phi_{ii} }^2 }
=O \left( \frac{1}{|\calT| \rho^{2N}} \right). \label{eq:fake_error}
\end{align}

Note that 
\begin{align}
    |\calT| \rho^{2N} = \binom{|\calJ|}{L}  \rho^{2N} \ge \left( \frac{|\calJ|}{L} \right)^{L} \rho^{2L(K+M)} \ge \left(\frac{1}{L}\right)^L \left( \frac{\rho^2}{\alpha} \right)^{KL/4} \ge n^{c_3/4-c_1} = \omega(n^2)  \,, \label{eq:calT_rho_2N}
\end{align}
where the second inequality holds due to~\prettyref{eq:rho_M_cond_0};
the last inequality holds due to the assumptions that $L \le c_1 \log n/\log \log n$ and $KL \ge c_3 \log n/\log (\rho^2/\alpha)$; the last equality holds by choosing $c_3/4-c_1>2$. 

Hence, applying union bound together with~\prettyref{eq:fake_error} yields that 
\begin{align}
\prob{\exists i \neq j \in [n]: \Phi_{ij} \ge  \tau } = o(1) \, . \label{eq:i_neq_j_union}
\end{align}
It follows that with probability at least $1-o(1)$,
$\Phi_{ij}< \tau$ for all $i \neq j \in [n]$,
which, by our construction of $I$ and $\hat{\pi}$, further implies 
further implies $I \supset [n]\setminus F $ and $\hat{\pi}=\pi|_I$.


        


By Chebyshev's inequality and our choice of $\tau=c \expect{\Phi_{ii}} \ge 0$, for any $i\in [n]$, 
\[
\prob{| \Phi_{ii} - \mu | > (1-c) \mu } = \prob{|\Phi_{ii} - \expect{\Phi_{ii}} | > (1-c) \expect{\Phi_{ii}} } 
\le \frac{\Var\left[\Phi_{ii}\right]}{(1-c)^2\expect{\Phi_{ii}}^2  }
\triangleq  \gamma \, ,
\]

Applying \prettyref{prop:true_pair} yields that  
\begin{align}
    \gamma 
    =  O\left(\frac{L^2 }{nq} + \frac{L^2 }{\rho^{2(K+M)}|\calJ|} \right) \,. \label{eq:gamma_F}
\end{align}
It follows that 
$\expect{|F|} \le \gamma n $. For any constant $\delta \in (0,1)$, we can choose the constant $C(\epsilon,\delta)$ large enough, so that when $nq \ge C(\epsilon, \delta)$, 
the assumption $L \le c_6 \sqrt{nq} $ holds for a sufficiently small constant $c_6$ and consequently $\gamma \le \delta$. Thus, $\expect{|I|}=n-\expect{|F|}\ge (1-\delta) n$.

If $nq=\omega(1)$, then by choosing $c_6=o(1)$ we get $\gamma=o(1)$.   
Therefore, by Markov's inequality, 
$$
\prob{ |F| \ge \sqrt{\gamma } n } \le \sqrt{\gamma} =o(1). 
$$
It follows that with probability at least $1-o(1)$,
$|F| \le \sqrt{\gamma} n$ and hence $|I| \ge (1-\sqrt{\gamma}) n = (1-o(1))n$.

\section{Second moment calculation}
\label{sec:second-moment}



In this section we bound the variance of the similarity scores $\Phi_{ij}$; this computation forms the bulk of the paper.
We start with some preliminary steps that are common for both true pairs ($i=j$) and fake pairs ($i\neq j$).
To be clear, let us first define the following set operations applied to graphs that will be used in the rest of the paper. 

\begin{definition}\label{def:graph_operation}
For any graph $S$ and $T$, 
\begin{itemize}
    \item $S\cap T$ denotes the graph with $E(S \cap T) \triangleq E(S)\cap E(T)$, and $V(S \cap T) \triangleq V(S)\cap V(T)$; 
    \item $S \cup T$ denotes the graph with $E(S \cup T) \triangleq E(S)\cup E(T)$, and $V(S \cup T) \triangleq V(S)\cup V(T)$;  
    \item $S \backslash T$ denotes the graph  with
    $E(S \backslash T) \triangleq E(S)\backslash E(T)$, and $V(S \backslash T)$ as the vertex set induced by $E(S \backslash T)$. 
    \item $S \Delta T$ denotes the graph with
    $E(S \Delta T) \triangleq E(S)\Delta E(T)$, and $V(S \Delta T)$ as the vertex set induced by $E(S \Delta T)$. 
    
\end{itemize}
\end{definition}

Continuing the expansion \prettyref{eq:var_Phi_ij_W},  note that, for fixed $S_1(i), S_2(j) \cong H$ and $T_1(i), T_2(j) \cong I$,
\begin{align}
& 
\Cov \left( \bar{A}_{S_1}\bar{B}_{S_2},
\bar{A}_{T_1} \bar{B}_{T_2} \right)  \nonumber \\
\stepa{=}& ~
\Cov \left( \bar{A}_{S_1}\bar{B}_{S_2},
\bar{A}_{T_1} \bar{B}_{T_2} \right) \indc{S_1 \neq S_2 \text{ or }  T_1\neq T_2 \text{ or } V(S_1)\cap V(T_1) \neq \{i\}} \nonumber\\
\stepb{\leq}& ~
\Expect[ \bar{A}_{S_1}\bar{B}_{S_2}
\bar{A}_{T_1} \bar{B}_{T_2}] \indc{S_1 \neq S_2 \text{ or }  T_1\neq T_2 \text{ or } V(S_1)\cap V(T_1) \neq \{i\}} \nonumber \\
\stepc{=} &~
\Expect[ \bar{A}_{S_1}\bar{B}_{S_2}
\bar{A}_{T_1} \bar{B}_{T_2}] \indc{S_1 \neq S_2 \text{ or }  T_1\neq T_2 \text{ or } V(S_1)\cap V(T_1) \neq \{i\}}  \indc{S_1\Delta T_1 \subset S_2 \cup T_2, S_2\Delta T_2 \subset S_1 \cup T_1} \nonumber
\end{align}
where 
(a) is because 
$\bar{A}_{S_1}\bar{B}_{S_2}$ and
$\bar{A}_{T_1} \bar{B}_{T_2}$ are independent if 
$S_1 \cup S_2$ and 
$T_1 \cup T_2$ are edge-disjoint, in particular, when $S_1=S_2$, $ T_1=T_2$ and $S_1$ only share the root vertex with $T_1$;
(b) applies $\Expect[\bar{A}_{S_1}\bar{B}_{S_2}] \geq 0$, since we have from \prettyref{eq:W_i_j_H} 
\[
\expect{\Bar{A}_{S_1} \Bar{B}_{S_2} } \expect{\Bar{A}_{T_1} \Bar{B}_{T_2}} 
= \begin{cases}
\left(\rho \sigma^2 \right)^{2N} & \text{if $S_1=S_2$ and $T_1=T_2$}\\
0 & \text{otherwise}
\end{cases} \,;
\]
(c) is because of the following crucial observation: 
Note that $\bar{A}_{S_1}\bar{B}_{S_2}
\bar{A}_{T_1} \bar{B}_{T_2}$ is a product of monomials of in $(\bar A,\bar B)$ of maximum degree at most four, and has zero mean if a degree-one term is present. Thus, 
$\Expect[\bar{A}_{S_1}\bar{B}_{S_2},
\bar{A}_{T_1} \bar{B}_{T_2}]=0$ unless each edge in the $S_1\cup T_1 \cup S_2\cup T_2$ appears at least twice in the 4-tuple $(S_1, T_1, S_2, T_2)$.
 Combining the above with \prettyref{eq:var_Phi_ij_W} yields
\begin{align}
    \Var\left[\Phi_{i j}\right] 
    & \le ~  \sum_{H, I \in \calT} \aut(H)  \aut(I)  \sum_{S_1(i),S_2(j)\cong H} \sum_{T_1(i), T_2(j)\cong I} \nonumber 
    \left|\expect{\Bar{A}_{S_1} \Bar{B}_{S_2} \Bar{A}_{T_1} \Bar{B}_{T_2} 
   } \right|\nonumber \\
   &~~~~~ 
   \indc{S_1 \neq S_2 \text{ or }  T_1\neq T_2 \text{ or } V(S_1)\cap V(T_1) \neq \{i\}} 
   \indc{S_1\Delta T_1 \subset S_2 \cup T_2, S_2\Delta T_2 \subset S_1 \cup T_1} \triangleq \Gamma_{ij}
   \, , \label{eq:var_Phi_ij}
\end{align}

For any $i,j\in [n]$, let $\calW_{ij}$ denote the collection of $(S_1(i),T_1(i),S_2(j),T_2(j))$ where $S_1(i), S_2(j) \cong H$ and $T_1(i), T_2(j) \cong I$ for some $H,I \in \calT$ such that 
\begin{align} 
    S_1\Delta T_1 \subset S_2 \cup T_2 , \quad S_2\Delta T_2 \subset S_1 \cup T_1 \, , \label{eq:constraint1}
\end{align} 
and 
\begin{align}
 S_1 \neq S_2 \text{ or }  T_1\neq T_2 \text{ or } V(S_1)\cap V(T_1) \neq \{i\}. 
 \label{eq:constraint2}
\end{align}
Then, for any $i,j\in [n]$, we have $\aut(H)=\aut(S_1)=\aut(S_2)$, $\aut(I)=\aut(T_1)=\aut(T_2) $, and 
\begin{align}
    \Gamma_{ij}
    & \le ~ \sum_{(S_1(i),T_1(i),S_2(j),T_2(j))\in \calW_{ij}} \left(\aut(S_1)\aut(T_1)\aut(S_2)\aut(T_2)\right)^{\frac{1}{2}}\left| \expect{\Bar{A}_{S_1} \Bar{B}_{S_2} \Bar{A}_{T_1} \Bar{B}_{T_2}}  \right|\,. 
    \label{eq:var_Phi_ij_new}
\end{align}

To compute the expectation in \prettyref{eq:var_Phi_ij_new}, 
for any $(S_1,T_1,S_2,T_2) \in \calW_{ij}$ with $i,j\in [n]$, 
let us decompose their union into the following intersection graphs denoted by $\{K_{\ell,m}\}$ in \prettyref{tab:K_ell_m}:
\begin{table}[H]
\begin{center}
\begin{tabular}{ c |c| c| c }
 & $S_1 \Delta T_1$ & $S_1 \cap T_1$  & $(S_1 \cup T_1)^c$\\
 \hline
 $S_2 \Delta T_2$ & $K_{11}$ & $K_{21}$ &   $\emptyset$ \\
 \hline 
 $S_2 \cap T_2$ & $K_{12}$ & $K_{22}$ & $K_{02}$ \\  
 \hline
 $(S_2 \cup T_2)^c$ & $\emptyset$ & $K_{20}$ & 
\end{tabular}
\caption{Decomposition of the union graph $S_1 \cup T_1 \cup S_2\cup T_2 = \cup_{2 \leq \ell+m\leq 4} K_{\ell m}$, thanks to \prettyref{eq:constraint1}.}
\label{tab:K_ell_m}
\end{center}
\end{table}
Again conditioning on the latent matching being $\pi=\id$, we have
\begin{align}
    \expect{\Bar{A}_{S_1} \Bar{B}_{S_2} \Bar{A}_{T_1} \Bar{B}_{T_2}} 
= & \prod_{2\leq \ell+m\leq 4}  \prod_{(i,j)\in K_{\ell,m}}
\underbrace{ \Expect[\bar A_{ij}^\ell \bar B_{ij}^m ]}_{\triangleq \beta_{\ell m}}
= \prod_{2\leq \ell+m\leq 4}  \beta_{\ell m}^{k_{\ell m}}
\end{align}
where the cross moments satisfy (see  \prettyref{eq:cross_moment_bound} in \prettyref{lmm:cond-exp}),
\begin{equation}
\sigma^{-(\ell+m)}  |\beta_{\ell m}|\leq
\begin{cases}
|\rho| & (\ell,m)=(1,1)\\
1 & (\ell,m)=(2,0) \text{ or } (0,2) \\
\frac{1}{\sqrt{q}} & (\ell,m)=(2,1) \text{ or } (1,2) \\
\frac{1}{q} & (\ell,m)=(2,2)
\end{cases}.    
\label{eq:cross_moments1}
\end{equation}
Since $e(S_1)=e(T_1)=e(S_2)=e(T_2)=N$ and in view of the decomposition in \prettyref{tab:K_ell_m}, we have
\begin{align}
4N & = 2(e(K_{11})+e(K_{20})+e(K_{02})) + 3(e(K_{21})+e(K_{12})) + 4 e(K_{22}) \label{eq:decomp1} \\
e(S_1 \cup T_1 \cup S_2\cup T_2) & = e(K_{11})+e(K_{20})+e(K_{02}) + e(K_{21})+e(K_{12}) + e(K_{22}), \label{eq:decomp2}
\end{align}
so that
\begin{equation}
    e(K_{21})+e(K_{12})+ 2e(K_{22}) = 4N - e(S_1 \cup T_1 \cup S_2\cup T_2) \label{eq:S1_T1_S2_T2}.
\end{equation}
Then
\begin{align}
    \sigma^{-4N} \left|\expect{\Bar{A}_{S_1} \Bar{B}_{S_2} \Bar{A}_{T_1} \Bar{B}_{T_2}}\right| 
    \stepa{=} & ~ \prod_{2\leq \ell+m\leq 4}  (\sigma^{-(\ell+m)} \beta_{\ell m})^{e(K_{\ell m})} \nonumber\\
    \stepb{\leq} & ~ |\rho|^{e(K_{11}))} q^{-\frac{1}{2}\left(e(K_{21})+e(K_{12})+2e(K_{22})\right) } \nonumber\\
    \stepc{=} & |\rho|^{e(K_{11})} q^{-2N+ e(S_1 \cup T_1 \cup S_2\cup T_2)}  \,, \label{eq:product_expect}
\end{align}
where 
(a) applies \prettyref{eq:decomp1}; 
(b) applies \prettyref{eq:cross_moments1};
(c) applies \prettyref{eq:S1_T1_S2_T2}.

Combining \prettyref{eq:var_Phi_ij_new} and \prettyref{eq:product_expect},  we have
\begin{align}
    \Gamma_{ij}
    \le  \sigma^{4N}\sum_{(S_1(i),T_1(i),S_2(j),T_2(j))\in \calW_{ij}} & |\rho|^{e(K_{11})}   q^{-2N+ e(S_1 \cup T_1 \cup S_2\cup T_2)} 
    \nonumber \\ 
    &
    \left(\aut(S_1)\aut(T_1)\aut(S_2)\aut(T_2)\right)^{\frac{1}{2}} . \nonumber 
\end{align}
Let $\calW_{ij}(v,k)$ be comprised of those $(S_1(i),T_1(i),S_2(j),T_2(j))$ in $\calW_{ij}$ such that $S_1\cup T_1\cup S_2\cup T_2$ 
has $v$ distinct vertices (except for $i,j$) and excess $k$. 
Then $e(S_1 \cup T_1 \cup S_2\cup T_2) = v+k+1+\indc{i\neq j}$ and hence
\begin{align}    
    \Gamma_{ij}
    \le  ~ \sigma^{4N} \sum_{k}  \sum_{v}q^{-2N+v+k+1+\indc{i\neq j}} P_{ij}(v,k) \,, \label{eq:var_bound_P}
\end{align}
where 
\begin{align}
    P_{ij}(v,k) 
    & \triangleq ~ \sum_{(S_1(i),T_1(i),S_2(j),T_2(j))\in \calW_{ij}(v,k)}  |\rho|^{e(K_{11})}  \left(\aut(S_1)\aut(T_1)\aut(S_2)\aut(T_2)\right)^{\frac{1}{2}} \,. \label{eq:P_ij}
\end{align}

It remains to bound $P_{ij}(v,k)$, which involves enumerating $(S_1, T_1, S_2, T_2)$ in $\calW_{ij}(v,k)$ and bounding their number of automorphisms. To this end, we first introduce the notion of \emph{decorated graph} and establishes a one-to-one mapping between $(S_1, T_1, S_2, T_2)$ and the corresponding decorated graph $U =S_1 \cup T_1 \cup S_2 \cup T_2 $.  Then we carefully decompose the decorated graph $U$
into edge-disjoint parts and apply a ``divide-and-conquer'' strategy by separately enumerating each part and bounding its contribution to the number of 
automorphisms. Along the way, we crucially exploit the chandelier structure.



\subsection{Decorated graphs}\label{sec:decorated_graphs}
In this subsection, we introduce the notion of decorated graphs.
In a decorated graph $U$, every edge $e \in E(U)$ is associated with a subset $D_e$ of 4 symbols $\{\SS_1,\TT_1,\SS_2,\TT_2\}$. We call $D_e$ the decoration of $e$ in $U$. We say an edge is $\ell$-decorated if $|D_e| = \ell$, and a graph is $\ell$-decorated if every edge in the graph is $\ell$-decorated. If $e$ is not in $E(U)$, then we set $D_e = \emptyset$ by default.


Next we extend the set operation in \prettyref{def:graph_operation} to decorated graphs.
\begin{definition}
Fix two decorated graphs $U$ and $U'$ with decoration set $D_e$ and $D'_e$ respectively. 
Then, 
\begin{itemize}
    \item  For each $e \in E(U \cap U')$, 
    its decoration is $D_e \cap D_e'$;
    \item For each $e \in E(U \cup U')$, its decoration is $D_e \cup D_e'$; 
    \item For each $e \in E(U  \backslash U')$, its decoration is $D_e \backslash D_e'$;  
    \item For each  $e \in E(U  \Delta U')$, its decoration is $D_e  \Delta D_e'$\,.
\end{itemize}

According to these definitions, each intersection graph 
$K_{\ell m}$ defined in \prettyref{tab:K_ell_m}
is $(\ell+m)$-decorated. For example,
each edge in $K_{21}$ is decorated by  either
$\{\SS_1,\TT_1,\SS_2\}$ or $\{\SS_1,\TT_1,\TT_2\}$, and 
each edge in $K_{22}$ is decorated by the whole 
$\{\SS_1,\TT_1,\SS_2,\TT_2\}$.

\end{definition}


There is a natural one-to-one correspondence
between a 4-tuple $(S_1, T_1, S_2, T_2)$ and a decorated graph 
$U = S_1\cup T_1 \cup S_2 \cup T_2$, where for every edge $e\in E(U)$, we have $\SS_m \in D_e$ if and only if $e \in E(S_m)$ for $m=1,2$
and $\TT_m \in D_e$ if and only if $e \in E(T_m)$ for $m=1,2$. 
In other words, the decoration $D_e$ specifies the membership of each edge $e \in E(U)$ in $S_1, T_1, S_2, T_2$. Hence, from the decorated graph $U$, we can uniquely determine the corresponding $(S_1, T_1, S_2, T_2)$.

Now, to enumerate $(S_1, T_1, S_2, T_2) \in \calW_{ij}(v,k)$, it is equivalent to enumerate the corresponding decorated graph $U$.
 For every $ e \in E(U)$, it follows from  \prettyref{eq:constraint1} that $2 \le |D_e| \le 4$ 
and hence there are at most $\sum_{m=2}^4 \binom{4}{m}=11$ different choices of the decoration $D_e$. Since each edge in $U$ is at least $2$-decorated, we must have 
\begin{equation}
\label{eq:eU2N}
e(S_1\cup T_1 \cup S_2 \cup T_2) = e(U)  \le 2N.
\end{equation}

%




\subsection{Proof of \prettyref{prop:true_pair} for true pairs}
 
By \prettyref{eq:var_Phi_ij}, it suffices to show that under \prettyref{eq:true_pair_constraint},
\begin{align}
    \frac{\Gamma_{ii}}{\expect{\Phi_{ii}}^2} =  O
     \left(\frac{L^2 }{\rho^2 nq} + \frac{L^2 }{\rho^{2(K+M)}|\calJ|} \right)  \,.  \label{eq:Xi_ii_mean_square}
\end{align}
Recall that $\calW_{ii}(v,k)$ is the set of $(S_1(i),T_1(i),S_2(i),T_2(i))$ such that $S_1\cup T_1 \cup S_2 \cup T_2$ has $v+1$ vertices and excess $k$ with constraints \prettyref{eq:constraint1} and \prettyref{eq:constraint2} satisfied. Under the constraint \prettyref{eq:constraint1}, we must have $e(S_1\cup T_1 \cup S_2 \cup T_2)=v+k+1 \le 2N$ . Moreover, since $S_1(i),T_1(i), S_2(i), T_2(i)$ are all rooted at $i$, $S_1\cup T_1 \cup S_2 \cup T_2$ must be a connected graph, and hence $k\ge -1$.

By \prettyref{eq:true_pair_constraint}, we have $N^2=o(n)$.
Then, together with \prettyref{prop:mean_Phi_ij} and \prettyref{eq:var_bound_P}, we obtain that 
\begin{align}
   \frac{\Gamma_{ii}}{\expect{\Phi_{ii}}^2}
    & \le ~ (1+o(1)) n^{-2N} |\calT|^{-2} \rho^{-2N} \sum_{k\ge -1} \sum_{v=0}^{2N-k-1} q^{-2N+v+k+1} P_{ii}(v,k),  \label{eq:true_pair}
\end{align}
where $P_{ii}$ is given by \prettyref{eq:P_ij}.
By \prettyref{eq:eU2N}, $e(K_{11}) \le e(U) \le 2N$. Thus the RHS of \prettyref{eq:true_pair} is monotonically decreasing in $|\rho|$. 
Hence, to prove the proposition, it suffices to focus on $|\rho| \le \frac{3}{4}$. 
The following lemma provides an upper bound on $P_{ii}(v,k)$. 
\begin{lemma}\label{lmm:W_v_k_ell}
      For integers  $v, k$  with $v\le 2N-k-1$ and $k\ge -1 $, if \prettyref{eq:true_pair_constraint} holds and $|\rho| \le \frac{3}{4}$, then
\begin{align}
    P_{ii}(v,k)
    & \le ~ 2^6 n^{v} \rho^{2N}  |\calT|^2 \left(\R^4 11(2N+1)^{3} (11\Universal)^{2(K+M)}\right)^{k+1} 
    \sum_{e_2\ge 0}  \left(\R^{\frac{4}{M}} (11\Universal)^{ \frac{4M+4K}{M}} \right)^{e_2}     \nonumber \\
     &~~~~~  \sum_{e_1\ge 0} \left(1+ 2L^2 \right)^{e_1} \left(\rho^{-2e_1} \indc{e_1 \neq 0}+ \frac{12L^2}{\rho^{2(K+M)}|\calJ|}\indc{e_1 = 0} \right)   \indc{e_1+e_2 \le 2N-(v+k+1)} \, . \label{eq:W_v_k_ell}
\end{align}
 \end{lemma}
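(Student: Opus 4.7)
The plan is to enumerate $\calW_{ii}(v,k)$ by identifying each 4-tuple $(S_1,T_1,S_2,T_2)$ with the decorated rooted graph $U = S_1 \cup T_1 \cup S_2 \cup T_2$ introduced in \prettyref{sec:decorated_graphs}. Having fixed an isomorphism class of the decorated unlabeled rooted graph (rooted at $i$, with $v+1$ vertices and $v+k+1$ edges), the number of labelings of the non-root vertices in $\complete$ contributes at most $n^v$. It then remains to enumerate the isomorphism classes of decorated rooted graphs and bound the combined weight $|\rho|^{e(K_{11})} (\aut(S_1)\aut(T_1)\aut(S_2)\aut(T_2))^{1/2}$. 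The chandelier property $\aut(H) = \prod_{\calB \in \calK(H)} \aut(\calB) \le R^L$ from \prettyref{eq:aut_H} factors the automorphism product as at most $R^{2L}$, a constant that I would distribute across the various deviation counts.

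The key step is to decompose $U$ according to the $L$ branches of each chandelier and to classify how the $4L$ resulting branches pair up. The baseline configuration, in which every edge carries decoration exactly $\{\SS_1,\SS_2\}$ or exactly $\{\TT_1,\TT_2\}$ and $S_1\cup S_2$ meets $T_1\cup T_2$ only at the root (\prettyref{fig:whychandelier}(a)), contributes the dominant $|\calT|^2 \rho^{2N} n^{2N}$ factor, matching $(\Expect[\Phi_{ii}])^2$. All other configurations are deviations parameterized by three integers. The count $e_1$ records branches where an $\SS$-wire and a $\TT$-wire share a prefix before diverging (\prettyref{fig:whychandelier}(d)); each such event absorbs a factor of at most $(2L^2)^{e_1}$ from the $L^2$ branch-pair choices per merge, together with $\rho^{-2e_1}$ from the now-unpaired edges where the $\rho$-correlation is lost. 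The count $e_2$ records branches where a wire coincides entirely and the attached bulbs may intersect arbitrarily (\prettyref{fig:whychandelier}(c)); each such branch contributes the per-unit factor $R^{4/M}(11\Universal)^{(4M+4K)/M}$, where $\Universal^K$ bounds the enumeration of bulb trees via \prettyref{eq:Universal}, $R$ bounds the shared-bulb automorphisms, and the factor $11$ absorbs the choice of decoration per edge. The count $k+1$ records cycle-creating bulb intersections (\prettyref{fig:whychandelier}(b)); each unit of excess costs $R^4 \cdot 11 (2N+1)^3 (11\Universal)^{2(K+M)}$, where $(2N+1)^3$ bounds the choices of which vertices are identified to form the cycle and $\Universal^{K+M}$ accounts for the enumerated subtree involved.

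The main obstacle will be organizing this enumeration so that the three deviation types combine into independent multiplicative factors without double-counting. I would proceed branch by branch starting from the root: first pair each of the $L$ $\SS_1$-branches with an $\SS_2$-, $\TT_1$-, or $\TT_2$-branch; within each pair, record whether the wires coincide, partially coincide, or are disjoint; and finally specify the bulb intersections. The separation into $e_1$, $e_2$, and $k+1$ factors works because each deviation is localized to a bounded set of branches, and the chandelier rigidity (long wires, non-isomorphic bulbs with $\aut(\calB) \le R$, and the unique root) rules out the pathological overlaps that would entangle the counts. The $e_1=0$ case needs to be singled out to extract the $\frac{12L^2}{\rho^{2(K+M)}|\calJ|}$ term: in this regime, the deviation from the baseline is a single shared branch (one full wire and one full bulb) between $S_1$ and $T_1$ (equivalently $S_2$ and $T_2$), so there are roughly $L^2$ branch-position choices, $|\calJ|$ shared-bulb choices, and a $\rho^{-2(K+M)}$ factor from the reduced edge multiplicity, which together yield the stated bound.
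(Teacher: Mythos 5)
Your overall decomposition into a baseline plus three deviation types ($e_1$, $e_2$, cycle excess) follows the same contours as the paper's argument (the paper's $\UTone$, $\UTtwo$, $\UN$ decomposition), and you correctly identify the role of the $e_1=0$ case in producing the $\frac{12L^2}{\rho^{2(K+M)}|\calJ|}$ term. However, there is a genuine gap in how you handle the automorphism weight.

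You propose bounding $(\aut(S_1)\aut(T_1)\aut(S_2)\aut(T_2))^{1/2} \le R^{2L}$ and ``distributing'' this constant across deviation counts. This cannot work: with $R = \exp(\Theta(K))$ and $L$ constant, $R^{2L} = \exp(\Theta(KL)) = n^{\Theta(1)}$ is polynomially large, yet the target bound has a prefactor $2^6$ with no global $R^{2L}$ factor — the $R$'s appear \emph{only} inside the $(k+1)$- and $e_2$-indexed deviation factors. In the baseline configuration (the decorated union $U$ is a $2L$-chandelier with every branch paired cleanly), the $4$-tuple $(S_1,T_1,S_2,T_2)=(H,I,H,I)$ has weight $\aut(H)\aut(I)$, which can itself be as large as $R^{2L}$. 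There is no deviation count onto which to push this, so your distributed bound would leave an extraneous $\exp(\Theta(\log n))$ factor multiplying $|\calT|^2 \rho^{2N} n^{2N}$, which breaks the lemma.

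What the paper actually does is cancel this automorphism weight against $\aut(U)$ arising from the enumeration of labeled embeddings. When you count labeled copies of an unlabeled decorated $U$, the correct count is $\binom{n}{v}v!/\aut(U)$, not merely $n^v$; the quantity $\aut(U)$ is not a constant to be thrown away. The heart of the argument (\prettyref{lmm:calU_T_1_bound} and \prettyref{claim:U_T_1_structure}) is showing that the restriction of the automorphism weight to the tree part $\UTone$ (every bulb in $\UTone$ is exactly $2$-decorated, hence appears in exactly two of $S_1,T_1,S_2,T_2$) satisfies $w(\UTone) \le \aut(\UTone)$, so these automorphisms are absorbed exactly when dividing by $\aut(\UTone)$ in the labeled enumeration, leaving $n^{v_1}$ with no $R$ penalty. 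Only the bulbs in $\UTtwo$ and $\UN$ contribute uncancelled $R$-factors, and their number is controlled by $e_2$ and $k+1$ via \prettyref{lmm:k_true} and \prettyref{eq:calNM_i_j}. Without this cancellation, the bound fails. Relatedly, your enumeration by ``isomorphism class, then $n^v$ labelings'' is exactly the naive count that discards $\aut(U)$ — you would need to keep it and then prove $\aut(U) \ge (\aut(S_1)\aut(T_1)\aut(S_2)\aut(T_2))^{1/2}$ on the tree part, which is the paper's merging/chandelier-structure lemma that your sketch does not address.
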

Next, by applying \prettyref{lmm:W_v_k_ell}, we get
\begin{align*}
     \frac{\Gamma_{ii}}{\expect{\Phi_{ii}}^2}
     & \le ~ (1+o(1)) 2^6 \sum_{k\ge -1}  \left(\frac{\R^4 11(2N+1)^{3} (11\Universal)^{2(K+M)}}{n}\right)^{k+1} 
    \sum_{e_2 \ge 0}  \left(\frac{\R^{\frac{4}{M}} (11\Universal)^{ \frac{4M+4K}{M}} }{nq}\right)^{e_2}     \nonumber \\
     &~~~~~  \sum_{e_1\ge 0} \left(\frac{1+ 2L^2}{ nq} \right)^{e_1} \left(\rho^{-2e_1} \indc{e_1 \neq 0}+ \frac{12L^2}{\rho^{2(K+M)}|\calJ|}\indc{e_1 = 0} \right)  \\
     &~~~~~ \sum_{v=0}^{2N-k-1}(nq)^{-2N+v+k+1 +e_1+e_2}\indc{e_1+e_2 \le 2N-(v+k+1)} \\
    & = O\left(\frac{L^2 }{\rho^2 nq} + \frac{L^2}{\rho^{2(K+M)}|\calJ|}\right) \,,
\end{align*}
where the last equality applies the following four facts: $(a)$  the last assumption in \prettyref{eq:true_pair_constraint} implies $nq \ge 2$ so that 
\begin{align*}
\sum_{v=0}^{2N-e_1-e_2-k-1} (nq)^{v+k+1-2N+e_1+e_2} \le 2 \,;
\end{align*}
$(b)$ under the last assumption in \prettyref{eq:true_pair_constraint}, we have
\begin{align*}
    &  \sum_{e_1\ge 0} \left(\frac{1+ 2L^2}{ nq} \right)^{e_1} \left(\rho^{-2e_1} \indc{e_1 \neq 0}+ \frac{12L^2}{\rho^{2(K+M)}|\calJ|}\indc{e_1 = 0} \right) \\
     & = ~
    \sum_{e_1\ge 1} \left(\frac{1+ 2L^2 }{\rho^2 nq}\right)^{e_1} +  \frac{12L^2}{\rho^{2(K+M)}|\calJ|}\\
    & \le ~ \frac{2+ 4L^2 }{\rho^2 nq} + \frac{12L^2}{\rho^{2(K+M)}|\calJ|} \,;
\end{align*}
$(c)$ under the third assumption in \prettyref{eq:true_pair_constraint},
we have 
\begin{align*}
\sum_{e_2\ge 0} \left(\frac{\R^{\frac{4}{M}}(11\Universal)^{\frac{4M+4K}{M}}}{nq}\right)^{e_2} \le 2 \,;
\end{align*}
$(d)$ under the second assumption in \prettyref{eq:true_pair_constraint}, we have
\begin{align*}
 \sum_{k\ge -1} \left(\frac{\R^{4}11(2N)^3 (11 \Universal )^{2(K+M)} }{n}\right)^{k+1} \le 2 \,.
\end{align*}

\subsubsection{Proof of \prettyref{lmm:W_v_k_ell}}
\label{sec:pf-W_v_k_ell}

As stated in \prettyref{sec:decorated_graphs},  to enumerating  $(S_1, T_1, S_2, T_2) \in \calW_{ii}(v,k)$, it is equivalent to enumerating the corresponding decorated graph $U$. Recall that $U$ has $v+1$ vertices and excess $k$ where $v+k+1\le 2N$ and $k\ge -1$.
To proceed, we first decompose $U$ into three edge-disjoint parts $\UTone$, $\UTtwo$, and $\UN$,
where $\UTone$ and  $\UTtwo$ are trees rooted at $i$, and $\UN$ is the remaining part that contains $i$ and has the same excess as $U$, 
as depicted in~\prettyref{fig:decomp_true}.

Specifically, for any neighbor $a$ of $i$ in $U$,
consider the graph $U$ with the edge $(i,a)$ removed and 
let $\calC(i,a)$ be the connected component therein that contains $a$.
Let $\calG(i,a) $ denote the graph union of $ \calC(i,a)$ and the edge $(i,a)$. 
Then we define
\begin{align}
    \calNT & = ~ \big\{ a : (i,a) \in E(U), \; \calG(i,a) \text{ is a tree} \big\} \, , \label{eq:calN_true}\\
      \calNN 
    & = ~ \{a: (i,a)\in E(U)\}\backslash \calNT \label{eq:calN_N}\,. 
\end{align} 
Further, we decompose $\calNT$ into a disjoint union $\calNTone \cup \calNTtwo$, defined by
\begin{align}
     \calNTtwo  
    & = \left\{ a \in \calNT: \left|  \{e \in E(\calG(i,a))  : |D_e| \ge 3\} \right| \ge M \right\}\label{eq:calN_T_2}, \\
     \calNTone 
    & = ~ \calNT\backslash \calNTtwo  \label{eq:calN_T_1} \,,
\end{align}
where recall that $D_e$ is the decoration set of edge $e \in U$.
Then we decompose $U$ according to $\calNTone$, $\calNTtwo$ and $\calNN$: 
\begin{align}
    \UTone \triangleq ~\bigcup_{a \in \calNTone} \calG(i,a) , \quad \UTtwo \triangleq ~\bigcup_{a \in \calNTtwo} \calG(i,a) , \quad
    \UN \triangleq ~ U \backslash (\UTone \cup \UTtwo) \, . \label{eq:U_decompose_true}
\end{align} 
Note that when $\calNTone$ (resp.\ $\calNTtwo$) is empty, 
we set $\UTone$ (resp.\ $\UTtwo$) to be the graph consisting of the single vertex $i$ by default. Similarly, if $\UN$ is empty, we set $\UN$ to be the graph consisting of the single vertex $i$. 

We pause to give some intuition behind this decomposition. Roughly speaking, $\UN$, $\UTtwo$,
and $\UTone$ correspond to the scenarios depicted in~\prettyref{fig:whychandelier} (b), (c), and (d) respectively. In particular, $\UN$ captures the overlapping part that induces cycles, while $\UTtwo$
deals with the case where two $M$-wires completely overlap. For both $\UN$ and $\UTtwo$, their sizes can be controlled and hence a crude enumeration bound solely based on their number of vertices and edges suffices. In contrast, $\UTone$ is more delicate. By definition, for each branch in $\UTone$, the number of $3$ or $4$-decorated edges is strictly less than $M$, so it cannot accommodate any $M$-wire whose edges are all at least $3$-decorated. Crucially, this further implies that each bulb in $\UTone$ must be exactly $2$-decorated -- see~\prettyref{fig:whychandelier} for an illustration and an intuitive explanation. 
We will make precise this claim and exploit it later in the proof of~\prettyref{lmm:U_T_1_true} when we bound the contribution of $\UTone$.

Next, based on the decomposition of $U$,
we assign appropriate weights to $\UTone$, $\UTtwo$, and $\UN$, so that $\left(\aut(S_1) \aut(T_1) \aut(S_2) \aut(T_2)\right)^{\frac{1}{2}} = w(\UTone) w(\UTtwo) w(\UN)$ (see \prettyref{claim:U_decompose_true}\ref{T:4} below). To this end,
for any chandelier $S$,
recall that $\calK(S)$ denotes the set of its bulbs.  For any subgraph $G \subset U$, 
define
\begin{align}
    w_S(G) \triangleq \prod_{\B \in \calK(S) , \, \B\subset G} \aut(\B)^{\frac{1}{2}}  \,.  \label{eq:w_S_U'}
\end{align}
with the understanding that $w_S(G)=1$ if $G$ contains no bulbs in  $ \calK(S)$.
Let 
\begin{align}
    w(G) \triangleq  w_{S_1}(G) w_{T_1}(G) w_{S_2}(G) w_{T_2}(G) \,. \label{eq:w_U'}
\end{align}


It is easy to verify that $\UTone$, $\UTtwo$, and $\UN$ satisfy the following properties (proved at the end of this subsection). 
\begin{figure}[ht]
    \centering
    \includegraphics[width=0.9\textwidth,trim=1cm 0cm 1cm 0cm,clip]{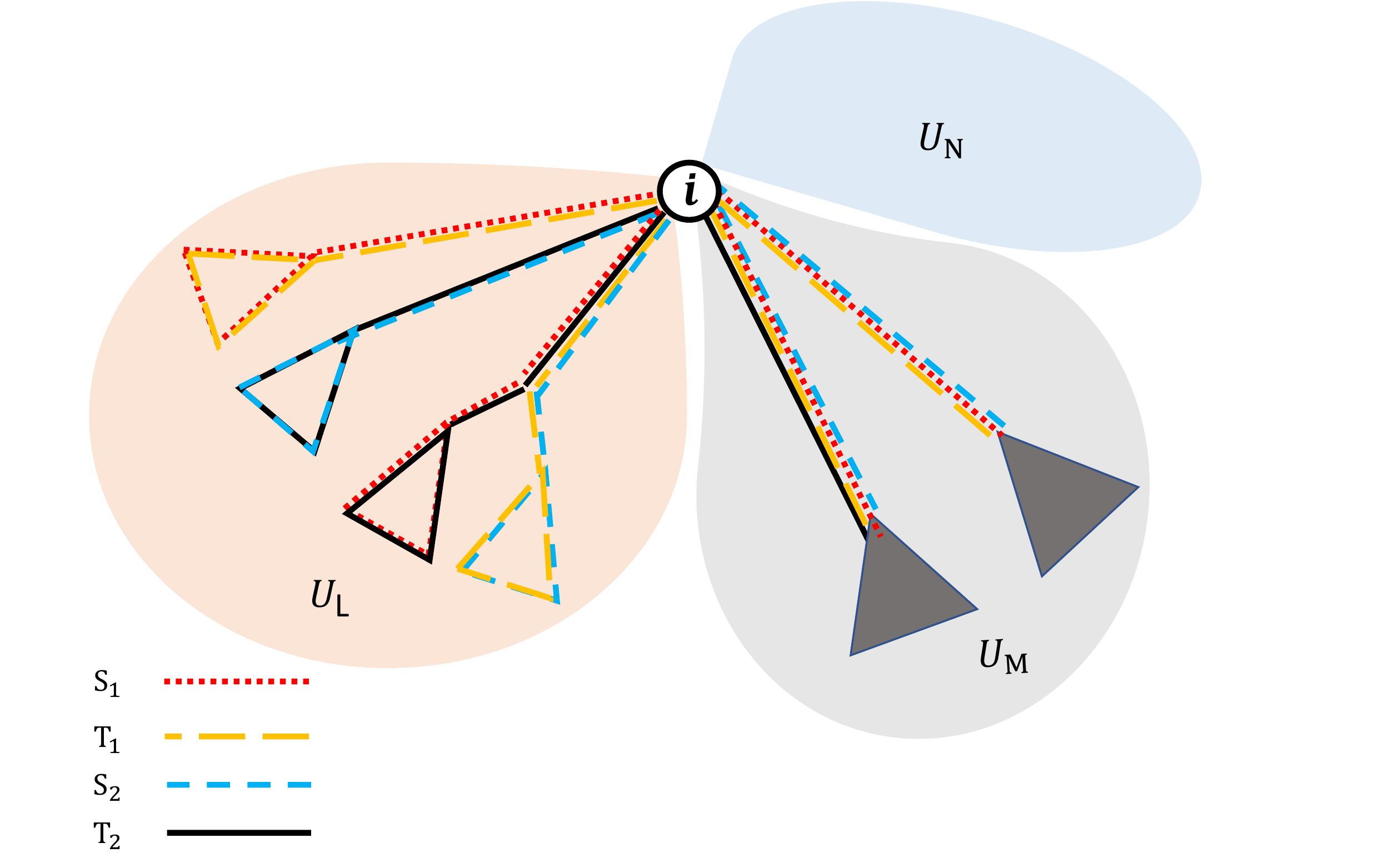}
    \caption{An illustration of decomposition of $U$ as $\UTone \cup \UTtwo \cup \UN$. In $\UTone$, each transparent triangle represents a $2$-decorated bulb with decorations indicated by the line types of its border. 
    In $\UTtwo$, each solid triangle represents a subtree whose edges are all at least $2$-decorated. The decoration of edges in the wires are indicated by their line types. 
   }
    \label{fig:decomp_true}
\end{figure}
\begin{claim}\label{claim:U_decompose_true}
\begin{enumerate}[label=(\roman*)]

  \item \label{T:0} 
  For any neighbor $a$ of $i$ in $U$, we have
  $\calC(i,a) \supset (S_1)_a \cup (T_1)_a \cup (S_2)_a \cup (T_2)_a$.
  (Recall that $(S_1)_a$ denotes the subtree of $S_1$ consisting of all descendants of $a$.)
  
  \item \label{T:1} For all $a \in \calNT$, $(S_1)_a \cup (T_1)_a \cup (S_2)_a \cup (T_2)_a = \calC(i,a) $, and the number of vertices in $\calC(i,a)$ is between $M+K$ and $2M+2K-1$. 
  
  \item \label{T:2}
  $\UTone$, $\UTtwo$ are two trees  that only share one common vertex $i$;
  $\UN$ and $\UTone$ (resp.~$\UTtwo$) share only one common vertex $i$. 
  
  \item \label{T:3}
   $\UN$ is a connected graph with the same excess as $U$ and $e(\UN) \leq |\calNN|(2K+2M)$. 
   \item \label{T:4} $ w(\UTone)w(\UTtwo)w(\UN) =(\aut(S_1)\aut(T_1)\aut(S_2)\aut(T_2))^{\frac{1}{2}}$, $w(\UTtwo) \le \R^{2|\calNTtwo|}$ and $w(\UN) \le \R^{2|\calNN|}$. 
\end{enumerate}
\end{claim}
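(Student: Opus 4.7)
My plan is to verify the five assertions in sequence, leveraging the definitions of $\UTone, \UTtwo, \UN$ together with the central constraint \prettyref{eq:constraint1} that every edge of $U$ carries decoration of cardinality at least $2$. For part~\ref{T:0}, each rooted subtree $(S_m)_a$ consists of descendants of $a$ in the tree $S_m$; in particular it is connected, contains $a$, omits $i$, and excludes the edge $(i,a)$. Hence $(S_m)_a$ lies in the connected component of $U \setminus \{(i,a)\}$ that contains $a$, which by definition is $\calC(i,a)$.

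For part~\ref{T:1}, the tree condition on $\calG(i,a)$ forces $i \notin \calC(i,a)$, since otherwise adding the edge $(i,a)$ to $\calC(i,a)$ would close a cycle. I would then argue that every edge $e$ of $\calC(i,a)$ must lie in some $(S_m)_a$ or $(T_m)_a$: pick any chandelier $S$ containing $e$; in $S$ rooted at $i$, the unique $i$-to-$e$ path must pass through $a$, otherwise $e$ would be connected to $i$ inside $\calC(i,a)$ via $S$ alone, placing $i$ in $\calC(i,a)$. For the size bound, I would use a double-counting argument driven by the 2-decoration constraint: every edge of $\calC(i,a)$ belongs to at least $2$ of the four rooted subtrees $(S_1)_a, (T_1)_a, (S_2)_a, (T_2)_a$, each of which has at most $M+K-1$ edges; hence $|E(\calC(i,a))| \le 4(M+K-1)/2 = 2(M+K-1)$, and tree-ness yields $|V(\calC(i,a))| \le 2M+2K-1$. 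The lower bound $|V(\calC(i,a))| \ge M+K$ follows from the presence of at least one full branch of $M+K$ vertices in $\calC(i,a)$ coming from any chandelier containing $(i,a)$.

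For parts~\ref{T:2} and~\ref{T:3}, I would first show that for distinct $a_1, a_2 \in \calNT$, the trees $\calG(i, a_1)$ and $\calG(i, a_2)$ can only share the vertex $i$: any other common vertex would furnish a path from $a_1$ to $a_2$ avoiding $i$, which together with $(i, a_1)$ and $(i, a_2)$ would create a cycle in $\calG(i, a_1)$, contradicting its tree property. This gives that $\UTone$ and $\UTtwo$ are trees meeting one another (and $\UN$) only at $i$. For $\UN$, each $a \in \calNN$ has $\calG(i,a)$ not a tree, forcing $i \in \calC(i,a)$; hence every piece $\calG(i,a)$ contributing to $\UN$ contains $i$, making $\UN$ connected. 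Since $\UTone$ and $\UTtwo$ are trees glued to the rest only at $i$, they contribute $0$ to the excess, so a direct count yields $e(\UN)-v(\UN) = e(U)-v(U) = k$. The edge bound $e(\UN) \le |\calNN|(2K+2M)$ follows from the same double-counting as in part~\ref{T:1}, applied to each $\calC(i,a)$ with $a \in \calNN$, plus the single edge $(i,a)$.

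For part~\ref{T:4}, the classical formula $\aut(H) = \prod_{\calB \in \calK(H)} \aut(\calB)$ of \prettyref{eq:aut_H} reduces $(\aut(S_1)\aut(T_1)\aut(S_2)\aut(T_2))^{1/2}$ to $\prod_m \prod_{\calB \in \calK(S_m)} \aut(\calB)^{1/2}$ together with the analogous factors for $T_m$. The key observation is that each bulb of each chandelier lies entirely in exactly one of $\UTone, \UTtwo, \UN$: every bulb sits inside a single branch of its chandelier hanging from a unique first-neighbor $a$ of $i$, so the bulb lives inside $\calG(i,a)$ and hence in the part indexed by whichever of $\calNTone, \calNTtwo, \calNN$ contains $a$. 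This yields $w(\UTone) w(\UTtwo) w(\UN) = (\aut(S_1)\aut(T_1)\aut(S_2)\aut(T_2))^{1/2}$. The bounds $w(\UTtwo) \le \R^{2|\calNTtwo|}$ and $w(\UN) \le \R^{2|\calNN|}$ then follow because each $a$ contributes at most four bulbs (one per chandelier) and each bulb's automorphism count is at most $\R$, yielding a factor of at most $\R^{1/2}$ per bulb, hence $\R^2$ per $a$. The main obstacle I anticipate is the double-counting argument for the edge bounds in parts~\ref{T:1} and~\ref{T:3}: one must be careful that every edge of $\calC(i,a)$ really is covered by at least two of the four subtrees $(S_m)_a$ and $(T_m)_a$, which hinges on the preceding claim that every edge of $\calC(i,a)$ lies in some such subtree. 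Everything else is either definitional tracking of the decomposition or standard tree combinatorics.
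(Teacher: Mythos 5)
You follow the same decomposition and most of the same steps as the paper, but two concrete gaps remain, both traceable to a missing structural identity for $\UN$ that the paper proves explicitly, namely
\[
\UN \;=\; \bigcup_{a \in \calNN}\bigl(\{(i,a)\} \cup (S_1)_a \cup (T_1)_a \cup (S_2)_a \cup (T_2)_a\bigr),
\]
obtained from $U = \bigcup_{a}\bigl(\{(i,a)\}\cup(S_1)_a\cup\cdots\bigr)$ over all neighbors $a$ of $i$ together with the fact (your part~\ref{T:1}) that $\calG(i,a)=\{(i,a)\}\cup(S_1)_a\cup\cdots$ whenever $a\in\calNT$.

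In~\ref{T:2}, your cycle argument correctly gives $\calG(i,a_1)\cap\calG(i,a_2)=\{i\}$ for $a_1,a_2\in\calNT$ (both trees), but it does not extend to the claim that $\UN$ meets $\UTone$ or $\UTtwo$ only at $i$: $\UN$ is defined as $U\backslash(\UTone\cup\UTtwo)$, not as a union of $\calG(i,b)$'s over $b\in\calNN$, and the latter are not trees, so the cycle argument has no purchase. The paper's argument here is different: if $c\neq i$ lies in both $V(\UTone)$ and $V(\UN)$, then $c\in\calC(i,a)$ for some $a\in\calNTone$, and $c$ is incident to some $e\in E(\UN)$; since $e\neq (i,a)$ and $e$ is incident to a vertex of $\calC(i,a)$, one has $e\in E(\calC(i,a))\subset E(\UTone)$, contradicting $e\in E(\UN)$. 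Likewise, your connectivity claim for $\UN$ (``every piece $\calG(i,a)$ contributing to $\UN$ contains $i$'') presumes $\UN$ is a union of $\calG(i,a)$'s, which it is not; the fix again is the displayed identity, whose pieces all contain $i$.

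In~\ref{T:3}, your plan to bound $e(\UN)$ by the double-counting of~\ref{T:1} ``applied to each $\calC(i,a)$ with $a\in\calNN$'' cannot work: for $a\in\calNN$ the identity $\calC(i,a)=(S_1)_a\cup(T_1)_a\cup(S_2)_a\cup(T_2)_a$ fails, since $i\in\calC(i,a)$ and $\calC(i,a)$ can be nearly all of $U$. The correct object to apply the double count to is the union displayed above. One then verifies that any $e\in E(\UN)$ with $e\neq(i,a)$ is counted at least twice in $\sum_{a\in\calNN}\sum_m\bigl(e((S_m)_a)+e((T_m)_a)\bigr)$, because each chandelier containing $e$ must do so via some $(S_m)_a$ or $(T_m)_a$ with $a\in\calNN$ (any $a\in\calNT$ would put $e\in\calG(i,a)\subset\UTone\cup\UTtwo$), and that sum is at most $4|\calNN|(K+M-1)$. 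The same identity also underlies the ``each bulb lies in exactly one of $\UTone,\UTtwo,\UN$'' step in~\ref{T:4}: for a bulb hanging from $a\in\calNN$, the inclusion $(S_m)_a\subset\UN$ must be established, not assumed. Parts~\ref{T:0} and~\ref{T:1} and the remainder of~\ref{T:4} match the paper's arguments.
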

In addition, we have the following lemma, showing that $|\calNN|$ is small when the excess $k$ is small. This will be useful for bounding the size of $w(\UN)$ and facilitating the enumeration of $\UN$. 
\begin{lemma}\label{lmm:k_true}
        For any $-1\le k\le N-1$, we have
        \begin{align*}
            |\calNN|  \le  2k+2 \, .
        \end{align*}
\end{lemma}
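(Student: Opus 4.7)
The plan is to decompose the decorated graph $U$ by removing vertex $i$ together with its incident edges and to analyze each resulting connected component. Let $C_1,\ldots,C_t$ be these components, let $m_j$ denote the number of neighbors of $i$ (in $U$) that lie in $C_j$, and let $k_j \triangleq e(C_j) - v(C_j) \ge -1$ denote the excess of $C_j$. Since $U$ is connected (each of $S_1,T_1,S_2,T_2$ is a chandelier rooted at $i$), a direct vertex/edge count gives the ``budget'' identity
\[
k + 1 \;=\; \sum_{j=1}^{t} (k_j + m_j).
\]

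Next I would classify each neighbor $a$ of $i$ according to the component $C_j$ containing it. If $m_j = 1$, then deleting $(i,a)$ disconnects $a$ from $i$, so $\calC(i,a) = C_j$ and $\calG(i,a)$ is a tree iff $C_j$ is one (i.e., $k_j = -1$); thus $a \in \calNN$ iff $k_j \ge 0$, contributing at most $1$ to $|\calNN|$ from this component. If $m_j \ge 2$, then $(i,a)$ is not a bridge of $U$ (there is an alternative $a$--$i$ path through $C_j$ and some other neighbor $a'$ of $i$), so $\calC(i,a) = U \setminus \{(i,a)\}$ and $\calG(i,a) = U$; in particular $\calG(i,a)$ contains the cycle $a - (\text{path in } C_j) - a' - i - a$, so all $m_j$ such neighbors belong to $\calNN$.

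Combining the two cases,
\[
|\calNN| \;=\; \#\{j : m_j = 1,\, k_j \ge 0\} \;+\; \sum_{j : m_j \ge 2} m_j.
\]
The lemma then reduces to the per-component inequality $(\text{contribution of } C_j\text{ to }|\calNN|) \le 2(k_j + m_j)$. This holds trivially for $m_j = 1,\, k_j = -1$ (both sides vanish); for $m_j = 1,\, k_j \ge 0$ it reads $1 \le 2(k_j + 1)$; and for $m_j \ge 2$ it reads $m_j \le 2k_j + 2m_j$, equivalent to $k_j \ge -m_j/2$, which holds since $k_j \ge -1$ and $m_j \ge 2$. Summing over $j$ and invoking the budget identity yields $|\calNN| \le 2(k+1) = 2k+2$.

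The only conceptual step is the case $m_j \ge 2$: this is exactly the configuration in which $i$ itself sits on a cycle through $C_j$, forcing every neighbor of $i$ in that component to be a ``non-tree'' neighbor simultaneously. Everything else is arithmetic accounting, so I anticipate no real obstacles.
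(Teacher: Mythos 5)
Your proof is correct, and it takes a genuinely different route from the paper. The paper first reduces the claim to $2(e(\UN) - v(\UN) + 1) \ge |\calNN|$, then splits $\calNN$ into $\calNNone$ (those $a$ for which $\calC(i,a)$ does not contain $i$) and $\calNNtwo$ (those for which it does). It handles $\calNNone$ by observing that the corresponding $\calG(i,a)$ meet only at $i$ and each carries an independent cycle, so they contribute at least $|\calNNone|$ to the excess; it handles $\calNNtwo$ via a cycle-basis argument on the residual graph $\UNtwo$: every edge $(i,a)$ lies in some basis cycle, and each basis cycle uses at most two edges incident to $i$. Your argument instead deletes the vertex $i$, works with the connected components $C_1,\ldots,C_t$ of $U-i$, uses the budget identity $k+1 = \sum_j (k_j + m_j)$, and finishes with a per-component case analysis relying only on the elementary fact that a connected graph has excess at least $-1$. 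This avoids the cycle-basis machinery entirely and, in fact, unifies the paper's two cases: your components with $m_j = 1$ and $k_j \ge 0$ yield precisely the neighbors in $\calNNone$, while your components with $m_j \ge 2$ yield precisely those in $\calNNtwo$. Both proofs are valid; yours is more compact and keeps the accounting local to each component rather than global across $\UNtwo$.
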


Next, let $\calU(v,k)$ denote the set of 
$U = S_1\cup T_ 1 \cup S_2\cup T_2$ for some $(S_1,T_1,S_2,T_2)\in \calW_{ii}(v,k)$.  
Let $\calUTone(v_1,e_1,\ell_1)$  denote the set of all possible $\UTone$ that has $v_1$ vertices excluding $i$, $\ell_1 = e(K_{11}\cap \UTone)$ and 
\[ e_1 = \frac{1}{2} \left( e( (K_{22}\cup K_{21}) \cap \UTone) +  e((K_{22}\cup K_{12}) \cap \UTone) \right)  \,,
\]
 for some $v_1,e_1,\ell_1\in\naturals$. 
 
Let $\calUTtwo(v_2,e_2)$ denote the set of all possible $\UTtwo$ that has $v_2$ vertices excluding $i$, and 
\[e_2 = \frac{1}{2} \left( e( (K_{22}\cup K_{21}) \cap \UTtwo) +  e((K_{22}\cup K_{12}) \cap \UTtwo) \right) \, .
\] 
for some $v_2, e_2\in\naturals$.  

Let $\calUN(v_N, k)$ denote the set of all possible $\UN$ with $v_N$ vertices excluding $i$ and excess $k$,  
for some $v_N \in\naturals$. 

Then, by \prettyref{eq:P_ij} and \prettyref{claim:U_decompose_true}~\ref{T:4}, for any $v\le 2N-k-1$ and $k\ge -1$, we have
\begin{align}
     P_{ii}(v,k)
     & =  ~ \sum_{U\in \calU(v,k)}   |\rho|^{e(K_{11})} w(\UTone)w(\UTtwo)w(\UN) \nonumber\\
     &  \overset{(a)}{\le} ~ \sum_{v_1,v_2,v_N \ge 0 } 
     \sum_{e_1,e_2\ge 0} \sum_{ \ell =0} |\rho|^{\ell} \PTone(v_1,e_1,\ell)  \PTtwo(v_2, e_2) \PN(v_N, k) \nonumber \\
     & ~~~~~~\indc{v_1+v_2 + v_N =v}\indc{e_1+e_2 \le 2N -(v+k+1)}  
    \,,
     \label{eq:U_decompose_true_bound}
 \end{align}
where
\begin{align}
    \PTone(v_1,e_1,\ell)
    & \triangleq ~ \sum_{\UTone \in \calUTone(v_1,e_1,\ell)}   w(\UTone) \label{eq:P_T_1_true} \\
    \PTtwo(v_2, e_2)
    & \triangleq ~\sum_{\UTtwo \in \calUTtwo(v_2,e_2)} w(\UTtwo) \label{eq:P_T_2_true} \\
    \PN(v_N, k)
    & \triangleq ~\sum_{\UN \in \calUN(v_N,k)} w(\UN) \label{eq:P_N_true} \,;
\end{align}
The inequality $(a)$ holds due to the following two facts. First
given $U$ is the union of $\UTone, \UTtwo$ and $\UN$, by \prettyref{claim:U_decompose_true}~\ref{T:2}, we have $   v_1+v_2+v_N = v(U)= v-1$. 
Second, 
\begin{align*}
    e_1+e_2 \le  \frac{1}{2} \left( e(K_{22}\cup K_{21}) +  e(K_{22}\cup K_{12})\right) = 2N - (v+k+1) \,,
\end{align*}
where the first inequality holds because $\UTone$ and $\UTtwo$ are edge-disjoint, and $\UTone\cup \UTtwo\subset U $; the last equality holds by \prettyref{eq:S1_T1_S2_T2} and $e(U)=v+k+1$. 

Hence, it remains to separately bound $\PTone(v_1,e_1,\ell)$, $\PTtwo(v_2,e_2)$ and $\PN(v_N, k)$.

\paragraph{Bounding $\PN(v_N, k)$.} 
By \prettyref{lmm:k_true},  $|\calNN| \le 2k+2$.
Hence it follows from \prettyref{claim:U_decompose_true}~\ref{T:1} and~\ref{T:3}
that
\begin{align}
    v_N & \le |\calNN| (2K+2M) \le (2k+2)(2K+2M)\,. \label{eq:v_N_true} \\
    w(\UN) & \le \R^{2|\calNN|}\le \R^{4(k+1)} \,.  \label{eq:w_UN_true}
\end{align}
Next we enumerate $\UN$ using its number of vertices and edges.
By \prettyref{lmm:enum} and \prettyref{eq:Universal}, the total number of unlabeled non-decorated graphs $\UN$ with $v_N$ vertices and excess $k$ is bounded by 
\begin{align*}
    \Universal^{v_N} {\binom{\binom{v_N+1}{2}}{k+1}}\le \Universal^{v_N} (v_N+1)^{2(k+1)} \,.  
\end{align*}

Recall that each edge can be decorated at most $11$ ways and $\UN$ contain $v_N$ vertices in addition to  $i$. Therefore, for $k\ge -1$, 
\begin{align*}
        |\calUN(v_N,k)|
        &  \le ~   \binom{n}{v_N} (v_N+1)!\Universal^{v_N} (v_N+1)^{2(k+1)}
        11^{v_N+k+1}  \nonumber \\
        & \le ~  n^{v_N}\Universal^{v_N} (v_N+1)^{2k+3} 
        11^{v_N+k+1} \,.
\end{align*}
Hence, Combining the above inequality with \prettyref{eq:P_N_true}, \prettyref{eq:v_N_true}, \prettyref{eq:w_UN_true}  yields that 
\begin{align}
       \PN(v_N,k)
        & \le ~  \R^{4(k+1)}|\calUN(v_N,e_N,k)| \indc{v_N \le (2k+2)(2K+2M)} \nonumber\\
        & \le ~ n^{v_N} 
        (11\Universal)^{v_N} \left(\R^4 11(v_N+1)^{3} \right)^{k+1}\indc{v_N \le (2k+2)(2K+2M)}
     \,. 
        \label{eq:calUN_true}
\end{align}

\paragraph{Bounding $\PTtwo(v_2,e_2)$.} 
Note that $\{e \in E(\calG(i,a))  : |D_e| \ge 3\}
= e(\calG(i,a)\cap K_{21}) + e(\calG(i,a)\cap (K_{12}))+ e(\calG(i,a)\cap (K_{22}))$.
So by definition in \prettyref{eq:calN_T_2}, for any $a\in \calNTtwo$,  $ e(\calG(i,a)\cap (K_{21}\cup K_{22})) + e(\calG(i,a)\cap (K_{12}\cup K_{22})) \ge M$. 
Then, we have $|\calNTtwo| \le \frac{2e_2}{M}$. 
By \prettyref{claim:U_decompose_true}~\ref{T:1} and~\ref{T:4}, we have
\begin{align}
      v_2 
      & \le |\calNTtwo|(2K+2M) \le  2e_2 \left( \frac{2K+2M}{M} \right) \, . \label{eq:v_2_true} \\
w(\UTtwo) & \le \R^{2|\calNTtwo|}\le \R^{\frac{4e_2}{M}} \,.  \label{eq:w_U_T_2_true}
\end{align}
Recall that each edge can be decorated by at most $11$ ways, and there are at most $\Universal^{v_2}$ number of unlabeled rooted trees with $v_2$ edges, we have
\begin{align*}
    |\calUTtwo(v_2,e_2) |
    & \le \binom{n}{v_2}v_2! \Universal^{v_2} 11^{v_2}  \le n^{v_2} (11\Universal)^{v_2} \,.
\end{align*}
Therefore, combining the above inequality with \prettyref{eq:P_T_2_true}, \prettyref{eq:v_2_true} and  \prettyref{eq:w_U_T_2_true} yields that 
\begin{align}
    \PTtwo(v_2,e_2) 
    & \le ~  n^{v_2} (11\Universal)^{v_2}  \R^{\frac{4e_2}{M}} \indc{v_2\le 2 e_2 \left( \frac{2M+2K}{M} \right)} \,.  \label{eq:calU_T_2_true}
\end{align}

\paragraph{Bounding $ \PTone(v_1,e_1,\ell)$.} This is the most challenging part of the proof. 
In contrast to $\UN$ and $\UTtwo$, 
straightforward bounds to enumeration and weight will no longer suffice. Instead,
we show that $\UTone$ can be obtained from a certain  class of chandeliers by \textit{merging}
the wires appropriately. The result of this enumeration is summarized in the following lemma.

\begin{lemma} \label{lmm:U_T_1_true}
    For any $a,b\in\naturals$ with $\frac{b}{2} \le a \le b$, define
    \begin{align}
         f_{a,b} \triangleq ~ \binom{|\calJ|}{a} \binom{a}{b-a} \binom{2a-b}{(a-L) \vee 0,(a-L) \vee 0, (2L-b)\wedge (2a-b)} 6^{ (2L-a)\wedge a} 
         \,, \label{eq:f_t_tilde_t} 
    \end{align}
    For any $v_1, e_1, \ell \in \naturals$,  define
     \begin{align}
       t \triangleq \frac{v_1+e_1}{K+M}\,,\qquad  
       \tilde{t} \triangleq ~ t \wedge (2L-\indc{e_1=0}) \wedge \left(\floor{\frac{\ell+2e_1}{2(K+M)}}{} + L\right) \,. \label{eq:tilde_t}
    \end{align}
    If $|\calJ| \ge 16 L^2$, 
    then\footnote{In \prettyref{eq:U_T_1_true}, we write $a \mid b$ if  $a$ divides $b$.}
    \begin{align}
      \PTone(v_1,e_1,\ell) \le ~  2 
      n^{v_1} \left(1+ 2L^2 \right)^{e_1}
      f_{\tilde{t},t} \indc{K+M \mid v_1+e_1   }  \indc{ v_1+e_1 \le 2N } . \label{eq:U_T_1_true}
    \end{align}

\end{lemma}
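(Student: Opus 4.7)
The plan is to show that every valid $\UTone$ can be reconstructed by first choosing a multiset of bulbs and then specifying how the wires from the four chandelier-branches \emph{merge}. The key structural observation is: since $K \geq M$ and each branch $\calG(i,a)$ for $a \in \calNTone$ contains fewer than $M$ edges with $|D_e| \geq 3$, no bulb appearing in $\UTone$ can have any edge with $|D_e| \geq 3$ (otherwise all $K$ edges of that bulb would be, exceeding the threshold). Hence every bulb in $\UTone$ is exactly $2$-decorated, i.e., shared by exactly one pair of chandeliers. Combined with $|D_e| \geq 2$ throughout, this forces $d_a := |D_{(i,a)}| \in \{2,4\}$: a $d_a = 2$ branch is a complete ``standard'' chandelier-branch (wire of $M$ edges followed by a bulb, all $2$-decorated by the same chandelier pair); a $d_a = 4$ branch starts with a $4$-decorated shared wire prefix of length $m_a \in \{1,\ldots,M-1\}$ and then splits into two $2$-decorated sub-branches, each consisting of the remaining $M - m_a$ wire edges plus a bulb.

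Let $a_2$ and $a_4$ denote the numbers of $d_a = 2$ and $d_a = 4$ branches respectively. Summing edge contributions gives $v_1 = (a_2 + 2 a_4)(K+M) - \sum_{d_a = 4} m_a$ and $e_1 = \sum_{d_a = 4} m_a$ (no $3$-decorated edges arise in $\UTone$, as neither structure above allows them), so $v_1 + e_1 = t(K+M)$ with $t = a_2 + 2 a_4$. This yields the divisibility indicator $\indc{(K+M) \mid (v_1 + e_1)}$ and identifies $t$. Since each chandelier has $L$ branches and the total number of chandelier-branches used equals $2t$, we have $t \leq 2L$; in the case $e_1 = 0$ (i.e., $a_4 = 0$) the extremal configuration $t = 2L$ forces $\UTone = S_1 \cup T_1 \cup S_2 \cup T_2$ with the pairings ruled out by~\eqref{eq:constraint2}, giving $\tilde t \leq 2L - 1$. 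The third bound $\tilde t \leq \lfloor(\ell + 2 e_1)/(2(K+M))\rfloor + L$ arises from counting bulbs whose pair-type corresponds to edges in $K_{11}$.

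With the structure in hand, I would enumerate $\UTone$ in six stages whose product gives the right-hand side of~\eqref{eq:U_T_1_true}: (a) select a set of $\tilde t$ distinct rooted trees from $\calJ$ as bulbs, contributing $\binom{|\calJ|}{\tilde t}$; (b) designate $t - \tilde t$ of these as being reused (each appearing in two different chandelier pairs in $\UTone$), contributing $\binom{\tilde t}{t - \tilde t}$; (c) partition the remaining $2\tilde t - t$ single-use bulbs into three types by chandelier pair ($\{S_1, T_1\}$-type, $\{S_2, T_2\}$-type, or ``cross''-type), subject to per-category upper bounds of $L$ from the $L$-branches-per-chandelier constraint, yielding the multinomial factor; (d) for each cross-type bulb, choose one of the $\leq 6$ specific cross-pair types, giving $6^{(2L - \tilde t) \wedge \tilde t}$; (e) for each $d_a = 4$ branch, choose the split point $m_a$ and the pairing of sub-branches, bounding the contribution per $4$-decorated edge by $1 + 2L^2$ (the ``$1$'' for extending the shared wire without splitting at this edge, the ``$2L^2$'' for initiating a split at this edge with up to $L^2$ branch-choices times a factor $2$ for orderings), giving $(1 + 2L^2)^{e_1}$; (f) label the $v_1$ non-root vertices in $[n]$, contributing $n^{v_1}$, with the weight $w(\UTone)$ absorbed into the labeling step because each bulb, being shared by exactly two chandeliers, contributes $\aut(\B)^{1/2} \cdot \aut(\B)^{1/2} = \aut(\B)$, exactly cancelling the over-counting of labelings that yield the same labeled bulb.

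The main obstacle is executing the enumeration (c)--(e) tightly without overcounting. Justifying the multinomial factor with the constraints $(\tilde t - L) \vee 0$ requires a careful pigeonhole argument: when $\tilde t > L$, the ``surplus'' bulbs beyond $L$ must be pushed into the same-side categories to respect the $L$-branches-per-chandelier bound, and the exact multinomial structure must be derived by combining this with the constraint that at most $2L - t$ slots remain for the cross-type category. Establishing the per-edge bound $1 + 2L^2$ in step (e) requires a telescoping walk along each $4$-decorated wire showing that each edge either continues the merged wire or initiates a split (with at most $2L^2$ sub-configurations). Finally, verifying that the decorations counted by $\ell$ are correctly tracked through both the cross-pair bulb assignments in step (d) and the cross-pair sub-branches of $d_a = 4$ branches yields the third bound in the definition of $\tilde t$. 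The assumption $|\calJ| \geq 16 L^2$ will be used at the end to absorb sub-leading factors into the leading factor $2$ in~\eqref{eq:U_T_1_true}.
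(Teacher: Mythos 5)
Your overall plan mirrors the paper's: classify $\UTone$'s branches as $1$-chandeliers ($d_a=2$) or tangled chandeliers ($d_a=4$), observe divisibility $v_1+e_1=t(K+M)$, and enumerate by (i) choosing a decorated ``un-merged'' chandelier from the paper's family $\calF_{t',t}$ (your steps (a)--(d), which reproduce the paper's Lemma~\ref{lmm:decoupled_chandelier_tree}), (ii) choosing a merging pattern (your step (e), reproducing Lemma~\ref{lmm:merge_bound} specialized to $t\le 2L$), and (iii) labeling vertices while cancelling $w(\UTone)$ against $\aut(\UTone)$ (your step (f), reproducing $w(\UTone)\le\aut(\UTone)$ in Lemma~\ref{lmm:calU_T_1_bound}). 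So this is the paper's argument in outline, but there is a gap at the very first structural step.

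You justify ``every bulb in $\UTone$ is exactly $2$-decorated'' by saying that one $\geq3$-decorated bulb edge would force \emph{all} $K$ bulb edges to be $\geq3$-decorated and then appealing to $K\geq M$. Neither half holds. First, a single $\geq3$-decorated edge $e$ in a bulb does not force the other bulb edges to be $\geq3$-decorated: membership of $e$ in, say, $S_1\cap T_1\cap S_2$ forces the path from the root to $e$ into those three chandeliers, but it constrains sibling bulb edges not at all. Second, the paper never assumes $K\geq M$; indeed under the general condition \eqref{eq:K_L_M} the ratio $M/K$ is only bounded above by $\log(\rho^2/\alpha)/(2\log(1/\rho^2))$, which can exceed $1$. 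The correct argument (in the paper's Claim~\ref{claim:U_T_1_structure}~\ref{UT:1}) propagates decorations \emph{along the wire}: if $e$ lies in the bulb $\calB$ at the end of an $M$-wire $P$ in $\calG(i,a)$, then every chandelier among $S_1,T_1,S_2,T_2$ containing $e$ must contain the whole root-to-$e$ path, hence all of $P$; so if $|D_e|\geq3$, all $M$ edges of $P$ are $\geq3$-decorated, contradicting $a\in\calNTone$. The same wire argument is also what establishes the stronger fact you tacitly use later --- that each bulb together with its incident (child) wire is \emph{uniformly} $2$-decorated by a single pair: if two bulb edges had distinct $2$-element decorations $D_e\neq D_{\tilde e}$, the wire $P$ would have to carry $D_e\cup D_{\tilde e}$, again making it $\geq3$-decorated. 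Without this propagation step, the $d_a\in\{2,4\}$ dichotomy and the identification $e_1=\sum m_a$ (no $3$-decorated edges in $\UTone$, hence $e_1=e(\UTone\cap K_{22})$) are not actually established.
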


As we will see later in the proof, the factor $f_{\tilde{t}, t}$ counts all possible (unlabeled) chandeliers to start with, and the factor $(1+2L^2)^{e_1}$ counts all possible ways of merging.

\paragraph{Assembling all pieces.} 
By the first condition in \prettyref{eq:true_pair_constraint},
we have $|\calJ|\ge 16 L^2$. Hence, combining \prettyref{eq:U_decompose_true_bound},   \prettyref{eq:calUN_true},
\prettyref{eq:calU_T_2_true},
and \prettyref{eq:U_T_1_true}, 
for any $v\le 2N-k-1$ and $k\ge -1$,  
\begin{align*}
     P_{ii}(v,k)
     & \le ~ 
     \sum_{v_1,v_2,v_N\ge 0}\sum_{e_1,e_2 \ge 0}   \indc{v_1+v_2+v_N =v}\indc{e_1+e_2 \le 2N-(v+k+1)}\nonumber \\
     &~~~~~  2  n^{v}  (11\Universal)^{v_2+v_N} \R^{\frac{4e_2}{M}} \left( \R^4 11(v_N+1)^{3} \right)^{k+1}  \left(1+ 2L^2 \right)^{e_1} \\
     &~~~~~ \left( \sum_{\ell \ge 0 }
   |\rho|^{\ell}  f_{\tilde{t},t}\indc{K+M \mid v_1+e_1   }  \indc{ v_1+e_1 \le 2N }  \right) \indc{ v_2 \le 2 e_2 \left( \frac{2M+2K}{M} \right) }  \indc{v_N \le (2k+2)(2K+2M)} \\
    & \le ~ 2^3 n^{v}  \left(\R^4 11(2N+1)^{3} (11\Universal)^{2(K+M)}\right)^{k+1} 
    \sum_{e_2\ge 0}  \left(\R^{\frac{4}{M}} (11\Universal)^{ \frac{4M+4K}{M}} \right)^{e_2}     \nonumber \\
     &~~~~~  \sum_{e_1\ge 0} \left(1+ 2L^2 \right)^{e_1}  \left(\sum_{v_1=0}^v \sum_{\ell \ge 0 }
   |\rho|^{\ell} f_{\tilde{t},t}\indc{K+M \mid v_1+e_1   }  \indc{ v_1+e_1 \le 2N }  \right)  \indc{e_1+e_2 \le 2N-(v+k+1)}
    \,,
\end{align*}
where the last inequality holds because $v_N+1\le v+1 \le 2N- k  \le 2N+1$ and
\[
\sum_{v_2=0}^{2e_2 \left( \frac{2M+2K}{M} \right)}(11\Universal)^{v_2} 
\le 2  (11\Universal)^{2e_2 \left( \frac{2M+2K}{M} \right) } \,, \quad \sum_{v_N=0}^{(2k+2)(2K+M)} (11\Universal)^{v_N} \le 2 (11 \Universal)^{(2k+2)(2K+M)}  \,.
\]
Finally, applying the following lemma, we arrive at the desired \prettyref{eq:W_v_k_ell}.
\begin{lemma} \label{lmm:sum_f_t_tilde}
    If $\frac{12 L^2}{\rho^{2(K+M)}(|\calJ|-2L)} \le \frac{1}{2} $ and $|\rho| \le \frac{3}{4}$, we have
    \begin{align}
        & \sum_{v_1\ge 0} \sum_{\ell \ge 0 } |\rho|^{\ell} f_{\tilde{t},t}\indc{K+M \mid v_1+e_1   }  \indc{ v_1+e_1 \le 2N } \nonumber \\
        & \le~ 
       8 \rho^{2N}  \left(\rho^{-2e_1} \indc{e_1 \neq 0}+ \frac{12L^2}{\rho^{2(K+M)}|\calJ|}\indc{e_1 = 0} \right) |\calT|^2\,.  \label{eq:sum_f_t_tilde}
    \end{align}  
\end{lemma}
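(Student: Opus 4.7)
The plan is to evaluate the double sum by a change of variable in $v_1$ followed by a geometric-series argument in $\ell$, leveraging $|\calT|^2=\binom{|\calJ|}{L}^2$. The divisibility constraint $K+M\mid v_1+e_1$ lets me reparameterize by $t:=(v_1+e_1)/(K+M)$, an integer in $\{0,1,\ldots,2L\}$, with a unique $v_1$ per $t$. Inside each $t$, I split the inner sum over $\ell\ge 0$ by the value $s:=\tilde t$. From \prettyref{eq:tilde_t}, for $s<t\wedge(2L-\indc{e_1=0})$ the event $\{\tilde t=s\}$ pins $\ell$ to an interval of length $2(K+M)$ starting at $\ell_s:=(2(K+M)(s-L)-2e_1)_+$, and for $s$ equal to this maximum $\ell$ ranges over $[\ell_s,\infty)$. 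Since $|\rho|\le 3/4$, a geometric sum yields $\sum_{\ell:\tilde t=s}|\rho|^\ell\le 4|\rho|^{\ell_s}$, so the task reduces to bounding $\sum_{t\le 2L}\sum_s |\rho|^{\ell_s}f_{s,t}$.

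Next I control $f_{s,t}$. For $L\le s\le t\le 2L$ the multinomial in \prettyref{eq:f_t_tilde_t} simplifies to $\binom{2s-t}{s-L,s-L,2L-t}$; for $s<L$ it collapses to $1$. The key combinatorial input is the Vandermonde identity $\binom{|\calJ|}{2L}\binom{2L}{L}=\binom{|\calJ|}{L}\binom{|\calJ|-L}{L}\le\binom{|\calJ|}{L}^2$, which yields $f_{2L,2L}\le|\calT|^2$. A direct binomial computation then gives, for $L\le s<2L$,
\[
|\rho|^{2(K+M)}\,\frac{f_{s+1,s+1}}{f_{s,s}}=|\rho|^{2(K+M)}\cdot\frac{(|\calJ|-s)(2L-s)}{6\,(s+1-L)^2}\ge 4,
\]
where the last inequality invokes the hypothesis $12L^2/(\rho^{2(K+M)}(|\calJ|-2L))\le 1/2$. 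Hence the diagonal sum over $L\le s=t\le 2L$ is geometrically dominated (ratio at most $1/4$) by its top term.

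When $e_1\ne 0$, the top diagonal term is at $s=t=2L$ with $\ell_s=2N-2e_1$, contributing at most $4\rho^{2N-2e_1}\binom{|\calJ|}{L}^2$. When $e_1=0$, the cap $\tilde t\le 2L-1$ shifts the top to $s=t=2L-1$; here $\ell_s=2N-2(K+M)$ and a short binomial simplification (using $\binom{2L-2}{L-1}=\frac{L}{2(2L-1)}\binom{2L}{L}$ and $\binom{|\calJ|}{2L-1}=\frac{2L}{|\calJ|-2L+1}\binom{|\calJ|}{2L}$) yields $f_{2L-1,2L-1}\le\frac{6L^2}{|\calJ|-2L+1}\binom{|\calJ|}{L}^2$, producing precisely the $L^2/(\rho^{2(K+M)}|\calJ|)$ factor required. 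Off-diagonal terms $s<t$ carry an extra $\binom{s}{t-s}6^{2L-s}$ that a parallel ratio argument shows is dominated by the diagonal $s=t$; the range $t<L$ (where $\tilde t=t$ forces $f_{t,t}=\binom{|\calJ|}{t}6^t$ with no $\rho^{-2e_1}$ compensation) contributes even less and is bounded by the $s=t=L$ diagonal term. Combining everything and absorbing the geometric-sum factor $4/3$ together with the factor $4$ from $\sum_\ell|\rho|^\ell$ yields the constant $8$ in \prettyref{eq:sum_f_t_tilde}.

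The main obstacle is the uniform ratio estimate in the second step: unpacking the multinomial $\binom{2s-t}{s-L,s-L,2L-t}$ and matching $f_{s,s}$ against $\binom{|\calJ|}{L}^2$ requires careful binomial manipulation, and the separate treatment of $e_1=0$ versus $e_1\ne 0$ (through the discontinuity $\indc{e_1=0}$ in $\tilde t$) must be handled to produce the correct $L^2/(\rho^{2(K+M)}|\calJ|)$ correction. Once these bounds are in place, the remaining summations are routine geometric series whose convergence is guaranteed by the two hypotheses $|\rho|\le 3/4$ and $12L^2/(\rho^{2(K+M)}(|\calJ|-2L))\le 1/2$.
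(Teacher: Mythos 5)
Your proposal is correct and takes essentially the same route as the paper's proof: reparameterize by $t=(v_1+e_1)/(K+M)$, perform the geometric sum in $\ell$ using $|\rho|\le 3/4$, establish the ratio estimate $\rho^{2(K+M)} f_{a+1,\cdot}/f_{a,\cdot}\ge 4$ from the hypothesis, and evaluate the dominant term at $(2L,2L)$ (resp.\ $(2L-1,\cdot)$ when $e_1=0$) via the Vandermonde identity $\binom{|\calJ|}{2L}\binom{2L}{L}=\binom{|\calJ|}{L}\binom{|\calJ|-L}{L}\le \binom{|\calJ|}{L}^2$. The only organizational difference is that the paper folds the entire row sum $\sum_b f_{a,b}$ into a single quantity $h_a$ and shows $h_a/h_{a+1}\le 1/2$ (using $f_{a,a}\le f_{a,a+1}$ to handle the diagonal), whereas you treat the diagonal and off-diagonal separately; these are equivalent given the ratio estimate. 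One caveat: your final constant bookkeeping ($4\cdot 4/3=8$) is arithmetically off ($=16/3$), and the $e_1=0$ case needs the extra off-diagonal term $f_{2L-1,2L}$ (equal to $f_{2L-1,2L-1}$) accounted for; the resulting constant still comes in under $8$, so the stated bound holds, but the bookkeeping should be tightened for a rigorous writeup.
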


\begin{proof}[Proof of \prettyref{claim:U_decompose_true}]

Recall that for any rooted tree $S$ with root $i$, $(S)_a$ is a subtree of $S$ that consists of all descendants of $a$ with respect to the original root at $i$. If $a$ is not a neighbor of $i$, then by default, $(S_1)_a$ is an empty graph. 

\begin{enumerate}[label=(\roman*)]  

    \item If $a$ is a neighbor of $i$ in $S_1$, then $(S_1)_a \subset \calC(i,a)$ by definition. Hence, it follows that 
    for any $a$ that is a neighbor of $i$ in $U$, $\calC(i,a) \supset (S_1)_a \cup (T_1)_a \cup (S_2)_a \cup (T_2)_a$ . 
     
    \item First, we show that for any  $a \in \calNT$, $\calC(i,a)=(S_1)_a \cup (S_2)_a \cup (T_1)_a \cup (T_2)_a$. By \ref{T:0}, it suffices to show 
    $\calC(i,a) \subset
    (S_1)_a \cup (S_2)_a \cup (T_1)_a \cup (T_2)_a$,
    which further boils down to check that $E(\calC(i,a)) \subset 
    E((S_1)_a \cup (S_2)_a \cup (T_1)_a \cup (T_2)_a)$, as  there are no isolated vertices.  
    If not, then 
    there must exist $e \in E(C(i,a))$ but 
    $e \notin E((S_1)_a \cup (S_2)_a \cup (T_1)_a \cup (T_2)_a)$.
    
    Note that if $e =(i,b)$ for some $b\neq a$, then $\calG(i,a)$ contains a cycle through $i$, contradicting the definition of $a \in \calNT$. Hence, $e$
    must belong to $(S_1)_b \cup (S_2)_b \cup (T_1)_b \cup (T_2)_b$ for some $b \neq a$ that is a neighbor of $i$. However, given $a\neq b$,
    this immediately implies that $\calC(i,a)$ contains $(i,b)$, which is a contradiction.  Therefore, we conclude that $\calC(i,a)=(S_1)_a \cup (S_2)_a \cup (T_1)_a \cup (T_2)_a$. 
    

    It remains  to show that $M+K-1\le e(\calC(i,a)) \le 2M+2K-2$, given $\calC(i,a)$ is a tree. Note that for each edge in $U$, we have $2 \le |D_e | \le 4$. Given $\calC(i,a)=(S_1)_a \cup (S_2)_a \cup (T_1)_a \cup (T_2)_a$, we must have $ \frac{1}{4 } m \le e(\calC(i,a))  \le \frac{1}{2} m $, where $m = e((S_1)_a) + e((S_2)_a) + e((T_1)_a) + e((T_2)_a)$. Moreover, $(S_1)_a$, if non-empty, is a tree whose root $a$ is connected to a $K$-bulb by a path with $M-1$ edges, where $e(S_1)_a = K+M-1$. Analogous arguments hold for $(T_1)_a, (S_2)_a, (T_2)_a$ respectively. Then, the claim follows. 

    \item  Note that
    $\calG(i,a)$ only share a common vertex $i$ across different $a$. It follows that $\UTone$ and $\UTtwo$ are two vertices with only one common vertex $i$. Next, we show that $\UN$ and $\UTone$ do not share any vertex other than $i$. 
    Suppose not and let $c \neq i$ denote the common vertex shared by $\UN$ and $\UTone$. Then $c$ must belong to $\calC(i,a)$ for some $a \in \calNTone$. Also, since $\UN$ is induced by edges not in $\UTone$ and $\UTtwo$, $c$ must be incident to an edge $e$ not in $\UTone$. However, by definition of $\calC(i,a)$, $e$ must belong to $\calC(i,a)$ and hence $\UTone$, which leads to a contradiction. 
    Analogously, we can show that $\UN$ and $\UTtwo$ do not share any vertex other than $i$. 
    
    \item 
    By~\ref{T:2} and the connectivity of $U$, $\UN$ must be connected. Moreover, we have that 
    \begin{align*}
        v(U) & = v(\UTone)+v(\UTtwo)+v(\UN) - 2  \\
        e(U)  & = e(\UTone)+e(\UTtwo)+e(\UN) \, .
    \end{align*}
    Combining the above with the facts that $\UTone$ and $\UTtwo$ are trees, we get that 
        $
        e(\UN)- v(\UN) =  e(U)-  v(U) =k \, .
        $
    To prove that the number of edges in $\UN$ is bounded by $|\calNN|(2K+2M)$, it suffices to show that 
    \begin{align}
        \UN  = \bigcup_{a\in\calNN} \left(\{(i,a)\} \cup (S_1)_a \cup (S_2)_a \cup (T_1)_a \cup (T_2)_a\right) \,. \label{eq:UN_rewrite}
    \end{align}
    Note that
    \[
     U = \bigcup_{a\in\calNN \cup \calNT} \left(\{(i,a)\} \cup (S_1)_a \cup (S_2)_a \cup (T_1)_a \cup (T_2)_a\right) \, .
    \]
    By \ref{T:0}, for any $a \in \calNT$, we have $\calG(i,a)= \{(i,a)\} \cup (S_1)_a \cup (S_2)_a \cup (T_1)_a \cup (T_2)_a$. 
    Then, by \prettyref{eq:U_decompose_true}, \prettyref{eq:UN_rewrite} follows directly.

    \item

    Fix a bulb $\calB \in \calK(S_1)$. 
    Then $\calB \subset (S_1)_a$ for some   neighbor $a$  of $i$ in $S_1$. 
    By~\ref{T:0} and \ref{T:1}, $(S_1)_a \subset \calC(i,a) \subset \calG(i,a)$. 
    By \prettyref{eq:U_decompose_true}, 
    $(S_1)_a$ is contained in  $\UTone$ (resp.~$\UTtwo$,  $\UN$), if and only if $a \in \calNTone$ (resp.~$\calNTtwo $, $\calNN$). 
    Therefore, $\calB \subset \UTone$ (resp.~$\UTtwo$,  $\UN$), if and only if $a \in \calNTone$ (resp.~$\calNTtwo $, $\calNN$). 
    In particular, for any bulb $\B$ from $S_1$, it must be contained in exactly one of $\UTone$, $\UTtwo$ or $\UN$. 
    
    Recall that that $\aut(S_1) = \prod_{\calB\in \calK(S_1)}\aut(\calB)$, given $S_1$ is a chandelier with $L$ non-isomorphic bulbs.  Analogous arguments hold for $T_1,S_2,T_2$. Together with \prettyref{eq:w_S_U'} and \prettyref{eq:w_U'}, it follows that
    \[
    (\aut(S_1)\aut(T_1)\aut(S_2)\aut(T_2))^{\frac{1}{2}} = w(\UTone)w(\UTtwo)w(\UN) \,.
    \]
    
    Note that each bulb $\calB \in \calK(S_1)$ must be contained in distinct $(S_1)_a$ for $a$ that is a neighbor of $i$ in $S_1$. There are at most $|\calNTtwo|$ (resp.~$|\calNN|$) bulbs from $S_1$ contained in $\UTtwo$ (resp.~$\UN$). Then,  given $\aut(\calB)\le \R$ for any $\calB \in \calK(S_1)$, by \prettyref{eq:w_S_U'}, we have
    $w_{S_1}(\UTtwo)\le \R^{\frac{1}{2}|\calNTtwo|}$ and $w_{S_1}(\UN)\le \R^{\frac{1}{2}|\calNN|}$. Analogous arguments holds for $T_1,S_2,T_2$.  Therefore, by \prettyref{eq:w_U'}, we have 
    \[
    w(\UTtwo) \le \R^{2|\calNTtwo|} \,,  \quad w(\UN) \le \R^{2|\calNN|} \,.
    \]
    

\end{enumerate}
\end{proof}

\subsubsection{Proof of \prettyref{lmm:k_true}} 

Note that by \prettyref{claim:U_decompose_true}~\ref{T:3}, $\UN$ is a connected graph with the same excess as $U$, which is $k$. Then, it is equivalent to show that
\begin{align}
  2(e(\UN) - v(\UN)+ 1) \ge ~ |\calNN|\, . \label{eq:target_1_true}
\end{align}
Define 
\begin{align*}
    \calNNone& = ~ \{a \in \calNN(i): \text{ $ \calC(i,a) $ does not contain $i$}\} \, .
\end{align*}
Let 
\[
\UNone = ~ \bigcup_{a\in \calNNone} \calG(i,a) \,. 
\]
Note that for any $a \neq b \in \calNNone(i)$, $\calG(i,a)$ and $\calG(i,b)$ are edge disjoint and share the only one common vertex $i$, for otherwise $\calC(i,a)$ would contain $i$. Since $a \notin \calNT(i)$ (recall the definition of $\calNT(i)$ in \prettyref{eq:calN_T_i_fake}), 
each $\calG(i,a)$ has at least one cycle and hence $e(\calG(i,a)) \geq v(\calG(i,a))$. Thus, we have 
\begin{align}
    e(\UNone)- v(\UNone)+1 & \ge ~ |\calNNone| \, . \label{eq:UN_1}
\end{align}

If $\calNNone = \calNN$, our desired result follows directly.  Without loss of generality, we assume that $\calNNone \neq \calNN$.  Define
\begin{align}
    \calNNtwo  = ~ \calNN\backslash \calNNone \, . \label{eq:calN_N_2_true}
\end{align}
Let $ \UNtwo = \UN \backslash U_{N_1 } .  $ 
By definition of $\calNNone$, we have $\UNone$ only share a common vertex with $\UNtwo$. Moreover, since $\UN$ is a connected graph, $\UNtwo$ must also be a connected graph. 
Then, we have $e(\UNtwo) = e(\UN) - e(\UNone)$ and $v(\UNtwo) = v(\UN) - v(\UNone) + 1$. Then by \prettyref{eq:target_1_true} and \prettyref{eq:UN_1}, it suffices to show that 
\begin{align}
    2(e(\UNtwo) - v(\UNtwo) +  1) \ge  |\calNNtwo| \, . \label{eq:target_2_true}
\end{align}

Analogous to the proof of \prettyref{lmm:k_fake} in \prettyref{sec:k_fake}, we apply the cycle basis argument. Note that $\UNtwo$ is a connected graph. Then, it has a cycle basis of size $e(\UNtwo)- v(\UNtwo) + 1 \equiv m$. We claim that that for each $a\in \calNNtwo$, there exists a cycle containing the edge $(i,a)$. 
Therefore, this edge is contained in one of the cycle in the cycle basis. Since each cycle contains at most two edges connecting $i$, we have $2m + 2 \geq |\calNNtwo|$. This completes the proof of \prettyref{eq:target_2_true} and hence the lemma. 

It remains to justify the above claim. 
Indeed, by \prettyref{eq:calN_N_2_true}, we have 
\begin{align*}
    \calNNtwo =  ~ \{a \in \calNN(i): \text{ $ \calC(i,a) $ contains $i$}\} \, . 
\end{align*}
Then, there must be a path from $a$ to $i$ in $\calC(i,a)\subset \UNtwo$, and hence a cycle containing $(i,a)$
\subsubsection{Proof of \prettyref{lmm:U_T_1_true}}



If a chandelier has $t$ branches, we call it a $t$-chandelier.  
Before bounding $\PTone(v,k,\ell)$, we describe an auxiliary family $\calF_{t',t}$
of \emph{unlabeled} decorated graphs $W$ rooted at $i$ such that $W$ is a $t$-chandelier satisfying the following conditions: 
\begin{enumerate}
    \item There are in total $t'$ non-isomorphic bulbs where $\frac{t}{2}\le t' \le t$;
    \item Each bulb and its connecting wire share the same decoration $D_b \subset \{\SS_1,\TT_1,\SS_2,\TT_2\}$, where $|D_b|=2$;
    \item Each non-isomorphic bulb appears at most twice in $W$. If a bulb appears twice in $W$ and its decoration in one bulb is $D_b \subset \{\SS_1,\TT_1,\SS_2,\TT_2\}$, then its decoration in the other bulb must be $D_b^c = \{\SS_1,\TT_1,\SS_2,\TT_2\}\backslash D_b$, where $|D_b|=|D_b^c| = 2$; 
    \item There are at least $(t'-L)\vee 0$ non-isomorphic bulbs decorated by $\{\SS_1, \SS_2\}$ and at least $(t'-L)\vee 0$ non-isomorphic bulbs decorated by $\{\TT_1, \TT_2\}$, and all these bulbs appear only once in $W$.
\end{enumerate}

\begin{lemma}\label{lmm:decoupled_chandelier_tree}
Let $f_{t',t}$ be defined by \prettyref{eq:f_t_tilde_t}. Then $|\calF_{t',t}| 
     \le f_{t',t}$.
\end{lemma}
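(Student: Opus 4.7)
The plan is to enumerate $\calF_{t',t}$ by first picking the ``shape'' (the multiset of non-decorated bulbs) and then decorating, treating the upper bound on each step loosely where convenient. Throughout, we use that every element of $\calF_{t',t}$ is an unlabeled $t$-chandelier; since bulbs are assumed non-isomorphic (as non-decorated trees), branches are distinguished by their bulb shape together with their decoration, so there is no hidden branch-symmetry to correct for.

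First I would handle the bulb shapes. By condition 1 there are exactly $t'$ distinct unlabeled bulbs, each drawn from $\calJ$, giving $\binom{|\calJ|}{t'}$ choices for the set of shapes. By condition 3 a shape either appears once or twice in $W$; since the total is $t$, exactly $t-t'$ shapes appear twice, and they can be specified by $\binom{t'}{t-t'}$ further choices. At this point the underlying (non-decorated) chandelier is determined, with $2t'-t$ \emph{single} bulbs and $t-t'$ \emph{doubled} bulbs.

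Next I would attach decorations. By condition 2 a wire's decoration is inherited from its bulb, so it suffices to decorate each bulb. Condition 3 forces each doubled bulb to carry a complementary pair of $2$-subsets of $\{\SS_1,\TT_1,\SS_2,\TT_2\}$, of which there are exactly $3$ pairs, trivially bounded by $6$ choices per doubled bulb. For single bulbs, condition 4 requires at least $(t'-L)\vee 0$ to be decorated by $\{\SS_1,\SS_2\}$ and at least $(t'-L)\vee 0$ by $\{\TT_1,\TT_2\}$; I would upper bound the count by first \emph{reserving} exactly $(t'-L)\vee 0$ of the single bulbs to each of these two decorations. The remaining single bulbs number $(2t'-t)-2(t'-L)\vee 0=(2L-t)\wedge(2t'-t)$, where the identity is a straightforward case check on the sign of $t'-L$. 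The reservation is a multinomial partition of the $2t'-t$ single bulbs and contributes $\binom{2t'-t}{(t'-L)\vee 0,(t'-L)\vee 0,(2L-t)\wedge(2t'-t)}$ choices. The ``unreserved'' single bulbs can then carry any of the $6$ two-element decorations (this is the overcount, since some of them might actually end up with $\{\SS_1,\SS_2\}$ or $\{\TT_1,\TT_2\}$, already accounted for in a different partition).

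Finally I would combine the per-bulb decoration counts. Single unreserved bulbs contribute $6^{(2L-t)\wedge(2t'-t)}$ and doubled bulbs contribute at most $6^{t-t'}$, and the exponents satisfy the identity $(2L-t)\wedge(2t'-t)+(t-t')=(2L-t')\wedge t'$ (again via a two-case split on whether $t'\le L$). Multiplying everything together yields
\[
|\calF_{t',t}|\le \binom{|\calJ|}{t'}\binom{t'}{t-t'}\binom{2t'-t}{(t'-L)\vee 0,(t'-L)\vee 0,(2L-t)\wedge(2t'-t)}\,6^{(2L-t')\wedge t'}=f_{t',t},
\]
as required. I expect the main (really only) obstacle is bookkeeping the case analysis on $t'\le L$ vs.\ $t'>L$ to verify the two index identities above and to confirm that reserving exactly $(t'-L)\vee 0$ bulbs of each prescribed decoration is indeed a valid overcount of condition 4; everything else is straightforward multinomial counting on the unlabeled chandelier.
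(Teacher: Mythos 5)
Your proof is correct and takes essentially the same approach as the paper's: count the underlying non-decorated chandelier via $\binom{|\calJ|}{t'}\binom{t'}{t-t'}$, then bound the decorations by reserving $(t'-L)\vee 0$ single bulbs to each of $\{\SS_1,\SS_2\}$ and $\{\TT_1,\TT_2\}$ via the multinomial and giving the rest $6$ choices each. The only organizational difference is that the paper fixes a $t'$-chandelier subtree $W'$ and decorates that (with doubled bulbs' outside copies determined by complementation), whereas you decorate bulb-by-bulb directly and observe doubled bulbs have $3\le 6$ choices; the resulting count, including the exponent identity $(2L-t)\wedge(2t'-t)+(t-t')=(2L-t')\wedge t'$, is identical.
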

\begin{proof}
For each $W$, there are $t-t'$ non-isomorphic bulbs that appear twice in $W$, and $2t'-t$ non-isomorphic bulbs that appear once in $W$. 
Hence, the total number of unlabeled non-decorated trees $W$ is upper bounded by $\binom{|\calJ|}{t'} \binom{t'}{t-t'}$. 

Given a non-decorated $W$, to see how many ways we can decorate it, we fix a subtree $W'$ of $W$ that is a $t'$-chandelier with $t'$ non-isomorphic bulbs. 
Then $W'$ contains all the bulbs that appear once in $W$. 
For each bulb that appears twice in $W$, it must appear both inside and outside $W'$. If its decoration inside $W'$ is $D_b\subset\{\SS_1,\TT_1,\SS_2,\TT_2\}$ where $|D_b| = 2$, then its decoration outside $W'$ must be $D_b^c$.
Therefore, to decorate $W$, it suffices to decorate $W'$.

Note that $W'$ 
has $2t'-t$ bulbs that appear once in $W$, among which at least $(t'-L)\vee 0$ bulbs are decorated by $\SS_1$ and $\SS_2$, and at least $(t'-L)\vee 0$ bulbs are decorated by $\TT_1$ and $\TT_2$. For the remaining bulbs in $W'$, there are at most $\binom{4}{2}=6$ ways to decorate each of them. 
Hence, the total number of ways to decorate $W'$ is upper bounded by
\begin{align*}
    & \binom{2t'-t}{(t'-L) \vee 0,(t'-L) \vee 0, (2L-t) \wedge (2t'-t)}  6^{ (2L-t')\wedge t'}  \, .
\end{align*}
Consequently, by the definition of $f_{t',t}$ in \prettyref{eq:f_t_tilde_t}, \prettyref{lmm:decoupled_chandelier_tree} is proved.
\end{proof}


The key is to recognize that any unlabeled $\UTone \in \calUTone(v_1,e_1,\ell)$ can be generated by \textit{merging} a chandelier in $\calF_{t',t}$ for some $t' \le t$. To this end, let us  first describe this merging procedure for $2$-chandeliers, as illustrated in~\prettyref{fig:merging}.
Let $T$ denote an unlabeled $2$-chandelier, with one branch (a bulb $\B_1$ and the attached wire) decorated by $D_{1}$, and the other branch (a bulb $\B_2$ and the attached wire) decorated by $D_{2}$. 
Consider the path $P_1$ with $m$ edges starting from the root on one wire, and another path $P_2$ with $m$ edges starting from the root on the other wire. 
We merge $P_1$ and $P_2$ so that they become a wire of length $m$ decorated by $D_1\cup D_2$ throughout.
After merging, we call the resulting unlabeled $T'$ as a \emph{tangled chandelier}. (The analogy is that two bulbs hanging from the ceiling have their wires partly tangled up.)
In addition, the merged wire is called the \textit{parent wire} and the unmerged wires the \textit{children wires}. We denote the merging procedure as $\calM_m$, where $T' = \calM_m(T)$ and $m$ is the number of merged edges. 
We call the two paths of length $M-m$ attached to the parent wire as the children wires. 

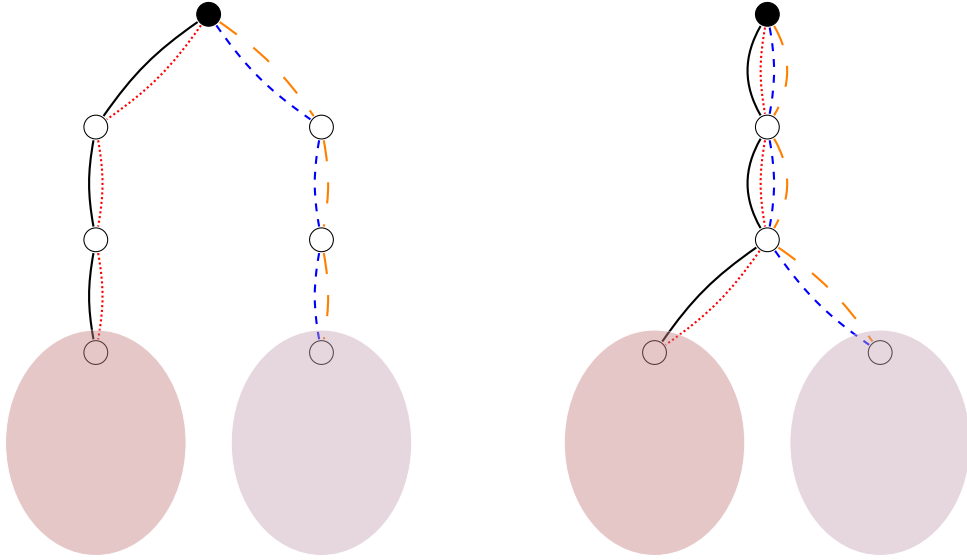
\begin{figure}[ht]
\centering
\begin{tikzpicture}[scale=3,transform shape]
\draw (0,0) node (root) [root] {};
\draw (-0.5,-0.5) node (a1) [vertexdot] {};
\draw (-0.5,-1) node (a2) [vertexdot] {};
\draw (-0.5,-1.5) node (a3) [vertexdot] {};
\draw (0.5,-0.5) node (c1) [vertexdot] {};
\draw (0.5,-1) node (c2) [vertexdot] {};
\draw (0.5,-1.5) node (c3) [vertexdot] {};

\draw (root) edge[black, bend right=10, thick] (a1);
\draw (root) edge[red, bend left=10, thick, densely dotted] (a1);
\draw (a1) edge[black, bend right=10, thick] (a2);
\draw (a1) edge[red, bend left=10, thick, densely dotted] (a2);
\draw (a2) edge[black, bend right=10, thick] (a3);
\draw (a2) edge[red, bend left=10, thick, densely dotted] (a3);

\draw (root) edge[blue, bend right=10, thick,  dashed] (c1);
\draw (root) edge[orange, bend left=10, thick, dash pattern={on 8pt off 8pt on 8pt off 8pt} ] (c1);
\draw (c1) edge[blue, bend right=10, thick,  dashed] (c2);
\draw (c1) edge[orange, bend left=10, thick, dash pattern={on 8pt off 8pt on 8pt off 8pt} ] (c2);
\draw (c2) edge[blue, bend right=10, thick,  dashed] (c3);
\draw (c2) edge[orange, bend left=10, thick, dash pattern={on 8pt off 8pt on 8pt off 8pt} ] (c3);

\filldraw[nearly transparent, color=white, fill=black!50](-0.5,-1.9) ellipse (0.4 and 0.5);

\filldraw[nearly transparent, color=white, fill=red!50](-0.5,-1.9) ellipse (0.4 and 0.5);

\filldraw[nearly transparent, color=white, fill=blue!50](0.5,-1.9) ellipse (0.4 and 0.5);
\filldraw[nearly transparent, color=white, fill=orange!50](0.5,-1.9) ellipse (0.4 and 0.5);
\end{tikzpicture}
\hspace{50pt}
\begin{tikzpicture}[scale=3,transform shape,auto,font=\scriptsize]
\draw (0,0) node (root) [root] {};
\draw (0,-0.5) node (a1) [vertexdot] {};
\draw (0,-1) node (a2) [vertexdot] {};
\draw (-0.5,-1.5) node (a3) [vertexdot] {};
\draw (0.5,-1.5) node (c3) [vertexdot] {};

\draw (root) edge[black, bend right=30, thick] (a1);
\draw (root) edge[red, bend right=10, thick, densely dotted] (a1);
\draw (a1) edge[black, bend right=30, thick] (a2);
\draw (a1) edge[red, bend right=10, thick, densely dotted] (a2);
\draw (a2) edge[black, bend right=10, thick] (a3);
\draw (a2) edge[red, bend left=10, thick, densely dotted] (a3);

\draw (root) edge[blue, bend left=10, thick, dashed] (a1);
\draw (root) edge[orange, bend left=30, thick, dash pattern={on 8pt off 8pt on 8pt off 8pt} ] (a1);
\draw (a1) edge[blue, bend left=10, thick, dashed] (a2);
\draw (a1) edge[orange, bend left=30, thick, dash pattern={on 8pt off 8pt on 8pt off 8pt} ] (a2);
\draw (a2) edge[blue, bend right=10, thick, dashed] (c3);
\draw (a2) edge[orange, bend left=10, thick, dash pattern={on 8pt off 8pt on 8pt off 8pt} ] (c3);

\filldraw[nearly transparent, color=white, fill=black!50](-0.5,-1.9) ellipse (0.4 and 0.5);

\filldraw[nearly transparent, color=white, fill=red!50](-0.5,-1.9) ellipse (0.4 and 0.5);

\filldraw[nearly transparent, color=white, fill=blue!50](0.5,-1.9) ellipse (0.4 and 0.5);
\filldraw[nearly transparent, color=white, fill=orange!50](0.5,-1.9) ellipse (0.4 and 0.5);
\end{tikzpicture}
\caption{A step in the merging procedure. {Left:} A 2-chandelier. {Right:} A tangled chandelier obtained from the 2-chandelier by merging the first two edges in each of the two wires. Here $M=3$ and $m=2$. The 4-decorated edges form the parent wire, and the 2-decorated edges form the children wires. 
} 
\label{fig:merging} 
\end{figure}

More generally, we can merge an unlabeled $t$-chandelier for any integer $t \ge 2$, by repeatedly merging $2$-chandeliers, to obtain a union of $1$-chandeliers and tangled chandeliers with a common root. 
For any unlabeled $t$-chandelier tree $T$, let $\calM_e(T)$ denote the set of all possible unlabeled trees that can be obtained from this procedure by merging a total of $e$ edges. Then, we have the following lemma.
\begin{lemma}\label{lmm:merge_bound}
For any unlabeled $t$-chandelier tree $T$, 
\begin{align}
|\calM_e(T)| \le~ \left(1+ t^2/2 \right)^{e} \, . \label{eq:merge_bound}
\end{align}
\end{lemma}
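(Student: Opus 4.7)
The plan is to bound $|\calM_e(T)|$ by decomposing the merging into atomic one-edge merges and counting the choices available at any step.

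First I would observe that any wire-merging operation merging $m$ initial edges of two wires decomposes into $m$ \emph{elementary merges}, each of which merges exactly one more edge. An elementary merge can be described uniformly as follows: in the current intermediate tree, pick a branching vertex $v$ at depth strictly less than $M$, pick two of its wire-children $w_a, w_b$, and identify $w_a$ with $w_b$, so the edges $(v,w_a)$ and $(v,w_b)$ collapse into a single edge and the resulting vertex inherits the children of both. Since every tree in $\calM_e(T)$ is reachable from $T$ by some such sequence of $e$ elementary merges, it suffices to bound the number of such sequences. The subtle point to verify is that this sibling-identification description faithfully captures all allowed mergers, including extensions of an already-merged region (which one handles by targeting the vertex at the tip of the merged region, whose children are the next vertices of the two wires being merged one edge further).

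Next I would bound the number of options at each elementary merge step by $\binom{t}{2}$. Throughout the entire process the $t$ bulbs are preserved as leaves of the wire skeleton (bulbs never merge), and every wire-child of a branching vertex leads to a subtree containing at least one bulb. At a branching vertex $v$ with wire-children $w_1, \ldots, w_{c_v}$ and $\ell_i \ge 1$ bulbs below $w_i$, there are $\binom{c_v}{2}$ sibling pairs eligible for an elementary merge; mapping each such pair $(w_a, w_b)$ to any pair of bulbs located below $w_a$ and below $w_b$ respectively shows $\binom{c_v}{2} \le \sum_{a<b} \ell_a \ell_b$, which counts the bulb-pairs whose least common ancestor equals $v$. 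Summing over all branching vertices and using that each pair of bulbs has a unique LCA yields $\sum_v \binom{c_v}{2} \le \binom{t}{2}$.

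Combining these two observations, any sequence of $e$ elementary merges applied to $T$ has at most $\binom{t}{2}^e$ possibilities, and every element of $\calM_e(T)$ is produced by at least one such sequence. This gives $|\calM_e(T)| \le \binom{t}{2}^e \le (t^2/2)^e \le (1 + t^2/2)^e$, as desired. The main obstacle I anticipate is verifying that the elementary-merge description is truly complete (covering tangled chandeliers with nested merged regions, where mergers happen at internal branching vertices created by previous merges) and that the uniform bound $\binom{t}{2}$ on the number of options holds at every intermediate stage, not merely at the initial $T$; both follow once one checks that $t$ is invariant along the process and that every wire-child still reaches at least one bulb.
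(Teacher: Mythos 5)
Your proof is correct, and it takes a genuinely different route from the paper's. The paper's argument is a \emph{global} enumeration: it first chooses a partial matching of $\hat t$ pairs of wires (out of $t$), then distributes the $e$ merged edges among the $\hat t$ pairs as a composition, and sums the resulting product over $\hat t$, arriving at
\[
|\calM_e(T)| \le \sum_{\hat t=0}^{e} \binom{t}{2\hat t}\frac{(2\hat t)!}{2^{\hat t}}\binom{e}{\hat t}
\le \sum_{\hat t=0}^{e}\binom{e}{\hat t}\left(\tfrac{t^2}{2}\right)^{\hat t} = \left(1+\tfrac{t^2}{2}\right)^e.
\]
You instead take a \emph{local/inductive} route: decompose into $e$ elementary one-edge merges and bound the branching factor of each step. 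Your key observation is the LCA argument, which shows that the number of mergeable sibling pairs in the current tree, $\sum_v \binom{c_v}{2}$, is at most $\binom{t}{2}$ because each sibling pair injects into a pair of bulbs with LCA $v$, and pairs of bulbs have unique LCAs; since the $t$ bulbs are preserved throughout, this bound is uniform over all intermediate stages. This gives $|\calM_e(T)| \le \binom{t}{2}^e \le (t^2/2)^e \le (1+t^2/2)^e$, in fact a marginally stronger estimate. Your per-step bound also gracefully handles configurations (e.g.\ nested or multi-way overlaps of wires) that the paper's partial-matching enumeration does not produce, which is harmless since you only need an upper bound. One small wording caution: take ``wire-children of $v$'' to mean simply all children of a branching vertex $v$ at depth $<M$ (so that when $v$ is at depth $M-1$, the two bulb roots below it may also be identified); otherwise merges that consume a full $M$-wire would be excluded. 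The LCA counting is unaffected by this extension.
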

\begin{proof}
To generate a tree by merging $e$ edges in $T$, we need to pick $\hat{t}$ number of pairs of wires to merge, where $0\le \hat{t} \le e$. 
The total number of ways to pick $\hat{t}$ pairs of wires is equal to 
$\binom{t}{2\hat{t}}(2\hat{t})!/2^{\hat{t}}$.  After fixing $\hat{t}$ pairs of wires, it remains to specify the length of the parent wires after merging the $\hat{t}$ pairs of wires, \ie, $(m_1, m_2, \ldots, m_{\hat{t}})$ such that $\sum_{j=1}^{\hat{t}}m_j=e$ and $m_j \ge 1$ for all $1 \le j \le \hat{t}$. The number of such sequences is same as the number of ways of putting $\hat{t}-1$ bars among $e$ items, which is $\binom{e-1}{\hat{t}-1}\le \binom{e}{\hat{t}}$. 
Hence, we have that
\begin{align*}
|\calM_e(T)| \le \sum_{\hat{t} = 0}^{e} \binom{t}{2\hat{t}}\frac{(2\hat{t})!}{2^{\hat{t}}}\binom{e}{\hat{t}}  \le 
\sum_{0\le \hat{t} \le e } \binom{e}{\hat{t}}  \left(\frac{1}{2}t^2\right)^{ \hat{t}}
\le \left(1+ t^2/2 \right)^{e }.
\end{align*}
\end{proof}

Then, we present the following lemma. 
\begin{lemma}\label{lmm:calU_T_1_bound}
If $\calUTone(v_1,e_1,\ell) \neq \emptyset$, then $K+M$ divides $v_1+e_1$ and $v_1+e_1 \le 2N$.
Furthermore, any 
$\UTone \in \calUTone(v_1,e_1,\ell)$ satisfies $ \UTone \in \calM_{e_1}(W)$ for some $W \in \bigcup_{t' \le \tilde{t}}\calF_{t',t}$ with $t$ and $\tilde{t}$ defined in \prettyref{eq:tilde_t}, and 
$w(\UTone) \leq \aut(\UTone)$.
\end{lemma}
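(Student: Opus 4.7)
The proof proceeds by first establishing a key structural property of $\UTone$, then constructing the decoupled chandelier $W$ by ``pulling apart'' the shared wire portions, and finally verifying the conclusions.

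\textbf{Step 1: Every bulb in $\UTone$ is exactly 2-decorated.} Since every edge in $U$ has $|D_e|\ge 2$ by~\prettyref{eq:constraint1}, bulbs are at least 2-decorated. Suppose for contradiction some bulb $\B$ in $\UTone$ belongs to three chandeliers, say $S_1,T_1,S_2$. In each of these chandeliers the bulb $\B$ is attached to the root $i$ via an $M$-wire. Because $\UTone$ is a tree by \prettyref{claim:U_decompose_true}\ref{T:2}, the path from any vertex of $\B$ back to $i$ is unique, forcing these three $M$-wires to coincide as a single path in $\UTone$. Consequently, $M$ wire edges are decorated $\supseteq\{\SS_1,\TT_1,\SS_2\}$, hence $\ge 3$-decorated, all lying within the same $\calG(i,a)$ for $a\in\calNTone$. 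This contradicts the defining condition $|\{e\in E(\calG(i,a)):|D_e|\ge 3\}|<M$ of~\prettyref{eq:calN_T_1}.

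\textbf{Step 2: Construct $W$ by decoupling wires, verify the combinatorial bounds.} For each physical bulb $\B\subset\UTone$ with 2-decoration $D_\B$, attach a private copy of $\B$ to $i$ via its own $M$-wire decorated uniformly by $D_\B$. The resulting $W$ is a $t$-chandelier, where $t$ equals the number of physical bulbs in $\UTone$, and $\UTone$ is recovered by merging $e_1$ pairs of wire edges (the identifications that reinstate the shared wires). Each branch of $W$ has $K+M$ edges, so $W$ has $t(K+M)$ edges and $t(K+M)$ non-root vertices; merging preserves the tree structure and reduces both counts by $e_1$, giving $v_1+e_1=t(K+M)$ and hence the divisibility $K+M\mid v_1+e_1$. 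Since every physical bulb in $\UTone$ consumes exactly two of the $4L$ available (chandelier, bulb) pairs, $t\le 2L$, yielding $v_1+e_1\le 2N$. A bulb type $\B$ occurs at most once in each chandelier and at most four times overall, so it occurs at most twice in $W$; if twice, the two 2-decorations must jointly cover $\{\SS_1,\TT_1,\SS_2,\TT_2\}$, establishing Properties~2 and~3 of $\calF_{t',t}$ and hence $t/2\le t'\le t$. Property~4 follows from a counting argument: the number of distinct bulb types whose decoration meets $\{\TT_1,\TT_2\}$ is at most $2L$ (each of $T_1,T_2$ has only $L$ distinct bulb types), and once the complementary-pair structure of twice-appearing bulbs is taken into account, at least $(t'-L)\vee 0$ distinct bulb types are single-occurring with decoration exactly $\{\SS_1,\SS_2\}$ (and symmetrically $\{\TT_1,\TT_2\}$).

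\textbf{Step 3: Refined bounds on $\tilde t$.} The bound $t\le 2L-\indc{e_1=0}$ uses \prettyref{eq:constraint2}: saturating $t=2L$ with $e_1=0$ (no wire merging) forces every bulb of every chandelier to sit in $\UTone$ with a private $M$-wire, which in turn forces $S_1=S_2$, $T_1=T_2$, and $V(S_1)\cap V(T_1)=\{i\}$, contradicting~\prettyref{eq:constraint2}. For $\tilde t\le \lfloor(\ell+2e_1)/(2(K+M))\rfloor+L$, I will track the contribution of each edge to $\ell+2e_1$ through the identity $\ell+2e_1=\sum_{e\in E(\UTone)}\min\{|D_e\cap\{\SS_1,\TT_1\}|,|D_e\cap\{\SS_2,\TT_2\}|\}$. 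Each ``mixed'' branch of $W$ (decoration with one symbol from $\{\SS_1,\TT_1\}$ and one from $\{\SS_2,\TT_2\}$) contributes $K+M$ to this sum; a case analysis over the possible mergings shows other branches contribute only through Case-1 type mergers (same-side $\{\SS_1,\TT_1\}$ with $\{\SS_2,\TT_2\}$), which only increases the sum. Hence the number of mixed branches is at most $(\ell+2e_1)/(K+M)$, and combining with the bound $\le 2L$ on same-side branches gives the claim after appropriate re-packaging.

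\textbf{Step 4: $w(\UTone)\le\aut(\UTone)$.} By Step~1 each physical bulb $\B\subset\UTone$ is 2-decorated, so it contributes $\aut(\B)^{1/2}$ to exactly two of the four factors in~\prettyref{eq:w_U'}, giving $w(\UTone)=\prod_{\B}\aut(\B)$ where the product runs over physical bulbs of $\UTone$. Since $\UTone$ is a rooted tree and each bulb is a rooted subtree hanging from a distinguished internal vertex, every automorphism of $\B$ lifts to an automorphism of $\UTone$ (fixing the complement), so by Jordan's recursive formula $\aut(\UTone)\ge\prod_\B \aut(\B)=w(\UTone)$.

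The main obstacle is the refined bound in Step~3: the six possible 2-decorations interact nontrivially under merging, and isolating exactly the case-1 mergers that shift $\ell+2e_1$ requires patient enumeration. Step~1 is the conceptual heart and makes everything else tractable by ruling out higher decorations on bulbs, thereby pinning down the 2-decorated structure of $W$.
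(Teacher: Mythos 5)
Your overall architecture mirrors the paper's proof (reconstruct a decoupled $t$-chandelier $W$ via the merging procedure, then land in $\calF_{t',t}$, then handle the weight bound), and the core observation in Step~1 is the right one. But two steps carry genuine gaps.

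First, Step~1 proves less than what Step~2 uses. You show that no bulb of $\UTone$ can be $\ge 3$-decorated, but to define $W$ by ``pulling apart'' shared wire portions you need the full structural characterization in the paper's Claim~\ref{claim:U_T_1_structure}\ref{UT:1}: each $\calG(i,a)$ with $a\in\calNTone$ is either a uniformly $2$-decorated $1$-chandelier or a \emph{tangled} chandelier with a $4$-decorated parent wire and $2$-decorated children wires matching their bulbs; equivalently, every edge of $\UTone$ has $|D_e|\in\{2,4\}$, never $3$. Ruling out $3$-decorated edges is what makes $e_1 = e(\UTone\cap K_{22})$ (rather than counting $K_{12}$ and $K_{21}$ contributions) and what lets the reverse-merging produce a well-formed chandelier with each bulb at the end of an $M$-wire decorated uniformly by $D_\calB$. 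Your contradiction argument handles the bulb edges, but you never run the (closely related) argument to conclude that the bulb has a \emph{single} uniform decoration, nor that the wire edges are $2$- or $4$-decorated only and that the $4$-decorated ones form a path adjacent to the root. Without this, the construction of $W$ in Step~2 is not justified.

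Second, the crucial bound $t' \le \lfloor(\ell+2e_1)/(2(K+M))\rfloor + L$ (you write $\tilde t \le \cdots$ and elsewhere bound $t$ instead of $t'$, but it is $t'$ that must be controlled to place $W$ in $\bigcup_{t'\le\tilde t}\calF_{t',t}$) is not established. Your ``$\min$'' identity for $\ell + 2e_1$ is correct, but the case analysis you allude to is exactly where the work lies, and you explicitly leave it as an obstacle rather than resolving it. The paper's route is shorter: Property~4 of $\calF_{t',t}$ (which follows from $S_1,S_2\cong H$ and $T_1,T_2\cong I$ with each of $H,I$ having only $L$ bulb types — note your stated ``$\le 2L$'' bound should really be $L$ for this to bite) gives at least $2((t'-L)\vee 0)$ bulbs decorated $\{\SS_1,\SS_2\}$ or $\{\TT_1,\TT_2\}$; together with their children wires these furnish at least $2((t'-L)\vee 0)(K+M) - 2e_1$ edges, all of which lie in $K_{11}$, giving $\ell \ge 2((t'-L)\vee 0)(K+M) - 2e_1$ directly. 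I'd suggest either completing your enumeration carefully or switching to this lower-bound counting, which is the argument the lemma actually needs.
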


Let $\widetilde{\calUTone}(v_1,e_1, \ell)$ denote the set of unlabeled $\UTone \in \calUTone(v_1,e_1, \ell)$. Then, by \prettyref{eq:P_T_1_true}, we have

\begin{align*}
    \PTone(v_1,e_1,\ell) 
     & \le ~  \sum_{\UTone\in \widetilde{\calUTone}(v_1,e_1,\ell)} w(\UTone) \binom{n}{v_1}\frac{v_1!}{\aut(\UTone)}\\
     & \le ~ \binom{n}{v_1}v_1! |\widetilde{\calUTone}(v_1,e_1,\ell)| \\
     & \le ~ 
     2 
      n^{v_1} \left(1+ 2L^2\right)^{e_1}
     f_{\tilde{t},t} \indc{K+M \mid v_1+e_1   }  \indc{ v_1+e_1 \le 2N } \,,
\end{align*}
where the second inequality holds because $\aut(\UTone) \ge w(\UTone)$ by \prettyref{lmm:calU_T_1_bound}, and the last inequality holds because 
\begin{align*}
    |\widetilde{\calUTone}(v_1,e_1,\ell)| 
    & \overset{(a)}{\le }~
     \sum_{t'=0}^{\tilde{t}} |\calF_{t',t}| \left(\max_{T\in\calF_{t',t} }|\calM_{e_1}(T)|\right) \indc{K+M \mid v_1+e_1   }  \indc{ v_1+e_1 \le 2N }   \\
    & \overset{(b)}{\le }~ ~ (1+2L^2)^{e_1} \sum_{t'=0}^{\tilde{t}} f_{t',t} \indc{K+M \mid v_1+e_1   }  \indc{ v_1+e_1 \le 2N } \\
    & \overset{(c)}{\le }~ ~ 2 (1+2L^2)^{e_1}  f_{\tilde{t},t} \indc{K+M \mid v_1+e_1   }  \indc{ v_1+e_1 \le 2N } \,,
\end{align*}
where $(a)$ holds by \prettyref{lmm:calU_T_1_bound}; $(b)$ holds by $t \le 2L$, \prettyref{lmm:decoupled_chandelier_tree}, and \prettyref{eq:merge_bound}; $(c)$ follows from $\frac{f_{t',t}}{f_{t'+1,t}} \le \frac{1}{2}$ by \prettyref{eq:f_a_b_mono} in \prettyref{claim:f_a_b} and the assumption $|\calJ| \ge 16L^2$. 
This completes the proof of \prettyref{lmm:U_T_1_true}.



It remains to prove \prettyref{lmm:calU_T_1_bound}. 
\begin{proof}[Proof of \prettyref{lmm:calU_T_1_bound}]
Fix any $\UTone \in \calUTone(v_1,e_1,\ell)$.
Let $t'$ denote the number of non-isomorphic bulbs in $\UTone$.
Then, we have the following claim. 
\begin{claim}\label{claim:U_T_1_structure}
\begin{enumerate}[label=(\roman*)]
    \item \label{UT:1} For all $a \in \calNTone$, we have $|D_{i,a}| = 2$ or $4$. 
    \begin{itemize}
        \item If $|D_{i,a}| = 2$, then $\calG(i,a)$ is a $1$-chandelier which is decorated by $D_{i,a}$ throughout. 
        \item If $|D_{i,a}| = 4$, then $\calG(i,a)$ is a tangled chandelier such that each edge on the parent wire is $4$-decorated, and each edge outside the parent wire is $2$-decorated. Moreover, each child wire and its incident bulb have the same decoration. %
    \end{itemize}
      Overall, every bulb in $\UTone$ is 2-decorated.
    \item \label{UT:2} Each non-isomorphic bulb appears at most twice in $\UTone$. If a bulb appears twice in $\UTone$ and its decoration in one bulb is $D_b \subset \{\SS_1,\TT_1,\SS_2,\TT_2\}$, then its decoration in another bulb must be $D_b^c = \{\SS_1,\TT_1,\SS_2,\TT_2\}\backslash D_b$, where $|D_b|=|D_b^c| = 2$;
    
    \item \label{UT:3} There are at least $(t'-L)\vee 0$ non-isomorphic bulbs decorated by $\{\SS_1, \SS_2\}$ and at least $(t'-L)\vee 0$ non-isomorphic bulbs decorated by $\{\TT_1, \TT_2\}$, and all these bulbs appear only once in $W$.  
\end{enumerate}
\end{claim}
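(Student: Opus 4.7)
The plan is to prove Claim~\ref{claim:U_T_1_structure} in three parts, leveraging throughout that every edge in $U$ satisfies $|D_e|\ge 2$ (by the constraint \prettyref{eq:constraint1}) and that $\calC(i,a) = (S_1)_a \cup (T_1)_a \cup (S_2)_a \cup (T_2)_a$ is a tree whenever $a\in\calNT$ (by Claim~\ref{claim:U_decompose_true}~\ref{T:1}).

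For part (UT.1) I would argue by case analysis on $|D_{(i,a)}| \in \{2,3,4\}$. The central structural observation is the following wire-rigidity principle: the $|D_{(i,a)}|$ wires of length $M-1$ emanating from $a$ inside the tree $\calC(i,a)$ can, at each step, either all traverse the same edge (producing a $|D_{(i,a)}|$-decorated edge) or split into two pairs traversing two distinct edges (producing two 2-decorated edges, only feasible when $|D_{(i,a)}|=4$). Any other behaviour---a lone wire diverging, a $\{3,1\}$ split when $|D_{(i,a)}|=4$, or an all-different split---would create at least one edge lying in only one wire, contradicting $|D_e|\ge 2$; moreover, once two pairs have separated, the tree structure forbids them to re-merge. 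This immediately rules out $|D_{(i,a)}|=3$: all three wires must completely coincide, producing $M-1$ 3-decorated wire edges plus the 3-decorated edge $(i,a)$ itself, totalling $M$ edges with $|D_e|\ge 3$ and contradicting the defining condition of $\calNTone$. For $|D_{(i,a)}|=2$ the coincidence argument extends directly to the bulbs (only two subtrees can cover any bulb edge), producing a 1-chandelier with uniform 2-decoration. For $|D_{(i,a)}|=4$ the condition of $\calNTone$ forces the 4-decorated parent wire to have length at most $M-1$, so the split occurs strictly before the bulb level; applying the coincidence argument within each children pair then yields identical wires and identical 2-decorated bulbs, which is precisely the tangled-chandelier structure.

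Part (UT.2) reduces to a short slot-counting argument. Within each of $S_1,T_1,S_2,T_2$ the $L$ bulbs are pairwise non-isomorphic by definition of a chandelier, so any given isomorphism class $B$ occupies at most one bulb slot per chandelier, i.e.\ at most four slots in total. By (UT.1) each appearance of $B$ in $\UTone$ is 2-decorated and therefore consumes exactly two of these slots, so $B$ admits at most $\lfloor 4/2\rfloor = 2$ appearances. If it attains this maximum, its two decorations together consume all four slots, forcing them to be disjoint 2-subsets of $\{\SS_1,\TT_1,\SS_2,\TT_2\}$, hence complementary.

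For part (UT.3), I would introduce $n_X$ as the number of non-isomorphic bulbs in $\UTone$ whose decoration in some appearance contains the label $X$; the chandelier structure immediately gives $n_X \le L$ for each $X\in\{\SS_1,\TT_1,\SS_2,\TT_2\}$. By (UT.2) every doubly-appearing bulb involves all four chandeliers, so any bulb whose decorations contain neither $\TT_1$ nor $\TT_2$ must appear exactly once with decoration $\{\SS_1,\SS_2\}$; expanding $n_{T_1}\le L$ and $n_{T_2}\le L$ in terms of the singly- and doubly-appearing decoration counts then yields a lower bound on the number of singly-appearing $\{\SS_1,\SS_2\}$-decorated bulbs, and the symmetric computation handles $\{\TT_1,\TT_2\}$. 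I expect this to be the main obstacle: obtaining the tight bound $(t'-L)\vee 0$ rather than the cruder $(t'-2L)\vee 0$ that arises from combining $n_{T_1}\le L$ and $n_{T_2}\le L$ independently requires a more delicate accounting which additionally invokes the root-level branch constraint that each chandelier contributes at most $L$ branches to $\UTone$, in conjunction with the refined upper bound on $t$ encoded in $\tilde t$ (cf.\ \prettyref{eq:tilde_t}).
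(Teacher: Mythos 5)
Your treatment of parts (UT.1) and (UT.2) is essentially the paper's argument. For (UT.1) the paper traces, for each bulb edge $e$, the unique path from $i$ to $e$ in the tree $\calG(i,a)$ and observes that any chandelier containing $e$ must also contain this entire path; your ``wire-rigidity'' framing is the same mechanism phrased from the root outward, and both correctly eliminate $|D_{(i,a)}|=3$ by producing $M$ edges with $|D_e|\ge 3$. For (UT.2) your four-slot count is exactly the paper's one-line justification.

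For (UT.3) there is a genuine gap, and the route you sketch would not close it. You correctly identify that combining $n_{T_1}\le L$ and $n_{T_2}\le L$ only yields the weaker $(t'-2L)\vee 0$, but the fix is not the ``more delicate accounting'' involving $\tilde t$ that you anticipate: $\tilde t$ is an object that \prettyref{lmm:calU_T_1_bound} \emph{derives} using this very claim as an input (it shows $t'\le \tilde t$ via \prettyref{eq:ell_1_2_e_1}), so invoking $\tilde t$ here would be circular. The observation you are missing is that the two bounds $n_{T_1}\le L$ and $n_{T_2}\le L$ do not come from two independent budgets but from the \emph{same} set: since $T_1\cong T_2\cong I$, any bulb in $\UTone$ whose decoration in any appearance contains $\TT_1$ \emph{or} $\TT_2$ must be a bulb of $I$, i.e.\ lies in $\calK(I)$ which has exactly $L$ elements. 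Hence at least $(t'-L)\vee 0$ of the $t'$ non-isomorphic bulbs lie in $\calK(H)\setminus\calK(I)$, which by (UT.1)--(UT.2) forces decoration $\{\SS_1,\SS_2\}$ and a single appearance (a second appearance would be $\{\TT_1,\TT_2\}$-decorated, hence in $\calK(I)$). The symmetric argument with $S_1\cong S_2\cong H$ gives the $\{\TT_1,\TT_2\}$ count. This is a one-step set inclusion, not a refined inequality chain, and it is how the paper concludes.
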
 

By \prettyref{claim:U_T_1_structure}~\ref{UT:1}, 
$\UTone$ consists of 1-chandeliers and tangled chandeliers. So the unlabeled
$\UTone$ can be generating by applying the merging procedure to some unlabeled $t$-chandelier $W$.
Moreover, $|D_e|=2$ or $4$ for each edge $e$ in $\UTone$.
Recall that $K_{12}$ and $K_{21}$ are both 3-decorated. So $E(\UTone\cap K_{21}) = \emptyset$ and $E(\UTone\cap K_{12}) = \emptyset$ and then
\begin{align}
    e_1 = \frac{1}{2} \left( e( (K_{22}\cup K_{21}) \cap \UTone) +  e((K_{22}\cup K_{12}) \cap \UTone) \right)  = e(\UTone\cap K_{22}) \,. \label{eq:e_1}
\end{align}
In other words, $e_1$ is precisely the number of $4$-decorated edges in $\UTone$, which are obtained by merging 2-decorated edges in $W$. Hence, $\UTone \in \calM_{e_1}(W)$. 
As a result, $e(W) = e(\UTone)+e_1=v_1+e_1$. Since $W$ is a $t$-chandelier with $t(K+M)$ edges, it follows that $K+M$ divides $v_1+e_1$, and $t = \frac{v_1+e_1}{K+M}$.  

Let $t'$ denote the number of non-isomorphic bulbs in $\UTone$. By \prettyref{claim:U_T_1_structure}~\ref{UT:2}, each non-isomorphic bulb must appear at most twice . Since there are $t$ bulbs in total, we have $\frac{t}{2} \le t' \le t$. Hence, by \prettyref{claim:U_T_1_structure}~\ref{UT:1}-\ref{UT:3}, it follows that $W \in \calF_{t',t}$. Moreover, we have 
$$
2v_1+2e_1 = \sum_{e \in E(\UTone)} |D_e| \le  \sum_{e \in E(U)} |D_e| = 4N,
$$
and hence $v_1+e_1 \le 2N$. 

Next, we show that $t' \le \tilde{t}$, where $\tilde{t}$ was defined in \prettyref{eq:tilde_t}. First, we show that $t' \le \floor{\frac{\ell+2e_1}{2(K+M)}}{}+L$. It suffices to show
\begin{align}
    \ell\ge 2 ((t'-L)\vee 0) (K+M) -2 e_1 \label{eq:ell_1_2_e_1}\,.
\end{align}
Fix a bulb $\B$  in $\UTone$. Then, $\B\subset \calG(i,a)$ for some $a \in \calNTone$. If $\calG(i,a)$ is a $1$-chandelier, then the wire connecting $\B$ should have the same decoration as $\B$. If $\calG(i,a)$ is a tangled chandelier, then the children wire connecting $\B$ should have the same decoration as $\B$, while the parent wire is $4$-decorated. 
By \prettyref{claim:U_T_1_structure}~\ref{UT:3}, the total number of bulbs that are decorated by $\{\SS_1, \SS_2\}$ or $\{\TT_1, \TT_2\}$ is at least $2 ((t'-L)\vee 0)$. Hence, in total, there are at least $2 ((t'-L)\vee 0)(K+M)-2 e_1$ edges are decorated by $\{\SS_1, \SS_2\}$ or $\{\TT_1, \TT_2\}$, where we subtract $2 e_1$ because some edges
may appear in parent wires and hence are $4$-decorated.
Moreover, for any edge $e$ in $\UTone$, if it is colored by $\SS_1$ and $\SS_2$ (resp.~$\TT_1$ and $\TT_2$), it must belong to $K_{11}$. Then, \prettyref{eq:ell_1_2_e_1} follows.


Second, we show that $t' \le 2L-\indc{e_1=0}$. 
It suffices to show that if $e_1 = 0$, $t'\le 2L-1$. Suppose not. Then, we have $t'= t=2L$. 
By \prettyref{claim:U_T_1_structure}~\ref{UT:3}, we have $L$ number of bulbs are decorated by $\{\SS_1, \SS_2\}$, and the other $L$ number of bulbs are decorated $\{\TT_1,\TT_2\}$. Hence, we have $S_1=S_2$ and $T_1=T_2$ in this case, and
$U=\UTone$. Then, by \prettyref{eq:e_1}, we have $e_1 = e(K_{22}) = e(S_1\cap T_1)$. 
If $e_1=0$,  $E(S_1\cap T_1) =\emptyset$. Given $S_1\cup T_1 $ is a tree, then $V(S_1\cap T_1) = \{i\}$,  which contradicts with our assumption that $V(S_1\cap T_1) \neq \{i\}$, given $S_1= S_2$ and $T_1=T_2$,  in \prettyref{eq:constraint2} for any $(S_1,T_1,S_2,T_2)\in \calW_{ii}$. 
Hence, it follows that $t'\le \tilde{t}$.  

Lastly, we show that $\aut(\UTone) \ge w(\UTone)$. 
By \prettyref{claim:U_T_1_structure}~\ref{UT:1}, for each $a \in \calNTone$,
$\calG(i,a)$ is either a $1$-chandelier or a tangled chandelier. 
Let $\calK(\calG(i,a))$ denote the set of $K$-bulbs in $\calG(i,a)$. 
For any bulb $\calB \in \calK(\calG(i,a))$, each of its automorphism 
naturally induces a distinct automorphism of the entire 
$\UTone$ by permuting the node labels of $\B$ according to the automorphism and keeps all other node labels unchanged. Therefore,\footnote{In general, \prettyref{eq:autUT1}  can be an strict inequality, for example, when a tangled chandelier has two isomorphic bulbs, in which case swapping the two bulbs and the attached children wires is another automorphism.}
\begin{equation}
\aut(\UTone) \ge \prod_{a \in \calNTone} \prod_{\calB\in \calK(\calG(i,a))} \aut(\calB).     \label{eq:autUT1}
\end{equation}

Next, we show that $w(\UTone) = \prod_{a \in \calNTone} \prod_{\calB\in \calK(\calG(i,a))} \aut(\calB)$, 
where $w(\UTone)$ is defined per \prettyref{eq:w_U'}. 
Fix any bulb $\B$ from $S_1$ contained in $\UTone$. Then, $\B \subset (S_1)_a \subset \calG(i,a)$ for some neighbor $a$ of $i$ in $S_1$.  By \prettyref{claim:U_T_1_structure}~\ref{UT:1}, $\calB$ in $\calG(i,a)$ must be $2$-decorated and has the same decoration for all its edges. 
Without loss of generality, assume that $\calB$ in $\calG(i,a)$ is decorated by $\{\SS_1,\TT_1\}$. 
By \prettyref{claim:U_T_1_structure}~\ref{UT:1}, it follows that $(S_1)_a=(T_1)_a$
and thus 
$\B \in \calK(T_1)$.  
Analogous arguments hold for bulbs from $T_1$, $S_2$ and $T_2$. 
Hence
\begin{align*}
w(\UTone) & =\prod_{\calB \in \calK(S_1): \calB \subset \calUTone} \aut^{1/2}(\calB)
\prod_{\calB \in \calK(T_1): \calB \subset \calUTone} \aut^{1/2}(\calB)
\prod_{\calB \in \calK(S_2): \calB \subset \calUTone} \aut^{1/2}(\calB)
\prod_{\calB \in \calK(T_2): \calB \subset \calUTone} \aut^{1/2}(\calB) \\
& =\prod_{a \in \calNTone} \prod_{\calB\in \calK(\calG(i,a))} \aut(\calB).
\end{align*}
In conclusion, we have that $\aut(\UTone) \ge w(\UTone)$.

It remains to prove \prettyref{claim:U_T_1_structure}.
\begin{enumerate}[label=(\roman*)]
    \item 
   
    Fix $a \in \calNTone $. Without loss of generality, assume that $(i,a) \in S_1$. Then $\calG(i,a) \cap S_1$ consists of an $M$-wire $P \equiv (i,a,\ldots,b)$ and a bulb $\calB \equiv (S_1)_b$ rooted at $b$. Consider an edge $e$ in $\calB$. Then there is a path $P'$ in $\calB$ from the root $b$ to $e$. Since $\calG(i,a)$ is a tree, the path $(P,P')$ is the unique path from $i$ to $e$ in $S_1 \cup T_1\cup S_2 \cup T_2$. If $e\in T_1$, then there exists a path $P''$ from $i$ to $e$ in $T_1$. By the uniqueness of the path $(P,P')$, $(P,P')$ must coincide with $P''$
    and hence belong to $T_1$. Analogous argument holds for $S_2,T_2$. 
    It follows that any edge $e'$ on $(P,P')$ in $\calG(i,a)$ must have decoration $D_{e'} \supset D_e$. We further claim that $|D_e| =2$.  If instead $|D_e|\ge 3$, then $P$ is $3$-decorated in $\calG(i,a)$, and hence there are at least $M$ edges  that are at least $3$-decorated in $\calG(i,a)$, which contradicts with \prettyref{eq:purebulb}. 
    
    Next, we show that $\calB$ in $\calG(i,a)$ must be decorated by $D_e$. 
    Suppose that there is another edge $\tilde{e}$ in $\calB$ has decoration $D_{\tilde{e}} \neq D_e$. Let $\tilde{P}$ denote the path in $\calB$ from the root $b$ to $\tilde{e}$. Similarly, we can show that any edge $e'$ on $(P,\tilde{P})$ in $\calG(i,a)$ must have decoration $D_{e'} \supset D_{\tilde{e}}$. Hence, any edge $e'$ on $P$ must have $ D_{e'} \supset D_e \cup D_{\tilde{e}}$. Since  $|D_e|, |D_{\tilde{e}}| \ge 2$ and $D_e \neq D_{\tilde{e}}$, it follows that $|D_{e'}| \ge 3$. 
    Therefore, there are at least $M$ edges that are at least $3$-decorated in $\calG(i,a)$, which contradicts with \prettyref{eq:purebulb}. 

    Without loss of generality, assume $\calB$ is decorated by $\{\SS_1, \TT_1\}$. Then, it follows that $(S_1)_a = (T_1)_a$. If $(i,a) \in S_2$, given $\calB$ is only decorated by $\{\SS_1, \TT_1\}$, we have $(S_2)_a \neq (S_1)_a$ and $(S_2)_a \neq (T_1)_a$. 
    By an analogous argument as the above paragraph, we must have $(i,a) \in T_2$ and $(S_2)_a = (T_2)_a$. Hence, all edges in $\calG(i,a)$ is either $2$-decorated or $4$-decorated. 
    Let $ \calG'(i,a) $ denote the subgraph of $\calG(i,a)$ induced by all edges $e$ with $|D_e| = 4$. We claim that $\calG'(i,a)$ must be an parent wire with length less than $M$. For any $4$-decorated edge $e$ in $\calG(i,a)$, let $\tilde{P}$ denote the path connecting $i$ to $e$.  Then $\tilde{P}$ must be a unique path in $S_1 \cup T_1\cup S_2 \cup T_2$ that connects $i$ to $e$. Given $e$ belongs to $S_1$, the path $\tilde{P}$ must belong to $S_1$.  Analogous argument holds for $T_1,S_2,T_2$. Hence $P$ must be $4$-decorated. In view of \prettyref{eq:purebulb}, $ \calG'(i,a) $ must be a parent wire with length less than $M$. It follows that $\calG(i,a)$ is a tangled chandelier with all edges on parent wire is $4$-decorated, and each edge outside parent wire is $2$-decorated. 
    
  If $(i,a) \not \in S_2$, then $ (i,a) \not \in T_2$; otherwise, by following the argument as above, we arrive at contradiction $(i,a) \in S_2$. It follows that $\calG(i,a)$ is a $2$-decorated $1$-chandelier with decoration $\{\SS_1,\TT_1\}$.

    \item  
    This follows directly from the facts that each non-isomorphic bulb appears at most once in $S_1$ (resp.~$T_1$, $S_2$, $T_2$) and each bulb is $2$-decorated. 
    
    \item  
    Suppose $S_1, S_2 \cong H$, $T_1,T_2 \cong I$ for some $H,I \in \calT$. Note that $H$ has $L$ non-isomorphic bulbs and $I$ also has $L$ non-isomorphic bulbs. Recall that there are  $t'$ non-isomorphic bulbs in $\UTone$. Thus there are least $(t'-L) \vee 0$ bulbs belonging to $\calJ(H)\backslash \calJ(I)$, and  $(t'-L) \vee 0$ bulbs belonging to $\calJ(I)\backslash \calJ(H)$ in $\UTone$. If a bulb belongs to $\calJ(H)\backslash \calJ(I)$, then it must be decorated by $\SS_1$ and $\SS_2$ and only appear once in $\UTone$; similarly for a bulb belonging to $\calJ(I)\backslash \calJ(H)$. 


\end{enumerate}
\end{proof}

\subsubsection{Proof of \prettyref{lmm:sum_f_t_tilde}}

Before proving \prettyref{eq:sum_f_t_tilde}, we first present the following claim. 
\begin{claim}\label{claim:f_a_b}
Suppose $|\calJ| \ge 2L$. 
For any $a,b\in\naturals$ such that 
$\frac{b}{2} \le a\le b-1$ and $b \le 2L$, we have 
\begin{align}
     \frac{ f_{a,b}}{ f_{a+1,b} } \le  \frac{6L^2}{|\calJ|-2L}\, ,  \label{eq:f_a_b_mono} 
\end{align}
and for any $a \in \naturals$, 
\begin{align}
    \frac{f_{a,a}}{f_{a,a+1}} \le 1 \, . \label{eq:f_a_a}
\end{align}
\end{claim}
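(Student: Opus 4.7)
The plan is to prove both inequalities by direct computation of the telescoping ratios of the binomial and multinomial coefficients in the definition of $f_{a,b}$, with a case split on whether $a < L$, $a = L$, or $a > L$, since the $\vee$ and $\wedge$ operations all toggle behavior at $a = L$. The constraint $b \le 2L$ (and hence $a \le 2L-1$) is what keeps the polynomial-in-$L$ factors under control.

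For \prettyref{eq:f_a_b_mono}, I will factor $f_{a,b}/f_{a+1,b}$ as a product of four ratios corresponding to the four pieces in the definition of $f_{a,b}$. Writing $x = 2a-b$ and $y = b-a$, so that $x \ge 0$, $y \ge 1$ and $x \le 2L-2$, the two ordinary-binomial pieces combine to
\[
\frac{\binom{|\calJ|}{a}}{\binom{|\calJ|}{a+1}} \cdot \frac{\binom{a}{b-a}}{\binom{a+1}{b-a-1}} = \frac{(x+2)(x+1)}{y(|\calJ|-a)}.
\]
In the subcase $a+1 \le L$, the multinomial ratio is $1$ and the $6$-ratio is $1/6$, giving $(x+2)(x+1)/(6y(|\calJ|-a)) \le 4L^2/(6(|\calJ|-2L))$. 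In the subcase $a \ge L$, the multinomial ratio equals $(u+1)^2/((x+2)(x+1))$ with $u = (a-L)\vee 0 \le L-1$, and the $6$-ratio equals $6$, yielding $6(u+1)^2/(y(|\calJ|-a)) \le 6L^2/(|\calJ|-2L)$ (the boundary $a=L$ is the special case $u=0$ of this formula). In both regimes the stated bound follows from $y \ge 1$ and $|\calJ|-a \ge |\calJ|-2L$.

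For \prettyref{eq:f_a_a}, the factors $\binom{|\calJ|}{a}$ and $6^{(2L-a)\wedge a}$ (the latter depending only on $a$) cancel between $f_{a,a}$ and $f_{a,a+1}$, reducing the ratio to
\[
\frac{f_{a,a}}{f_{a,a+1}} = \frac{1}{a}\cdot \frac{\binom{a}{(a-L)\vee 0,(a-L)\vee 0,(2L-a)\wedge a}}{\binom{a-1}{(a-L)\vee 0,(a-L)\vee 0,(2L-a-1)\wedge (a-1)}}.
\]
For $1 \le a \le L$ both multinomials equal $1$, so the ratio is $1/a \le 1$. For $L < a \le 2L-1$, the identity $2(a-L)+(2L-a-1) = a-1$ lets one simplify the multinomial ratio to $a/(2L-a)$, and the overall ratio becomes $1/(2L-a) \le 1$. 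For larger $a$ one of the multinomials is degenerate, and the bound is trivial in the range of $a$ that is actually used in the subsequent application inside \prettyref{lmm:sum_f_t_tilde}.

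The main obstacle is purely bookkeeping: keeping the three case splits ($a<L$, $a=L$, $a>L$) consistent so that the $\vee$'s and $\wedge$'s resolve correctly, and verifying that the $(x+2)(x+1)$ factor produced by the ordinary binomial ratio is always absorbed, either by the multinomial denominator (when $a \ge L$) or by the crude bound $(x+2)(x+1) \le 4L^2$ (when $a < L$).
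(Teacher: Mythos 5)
Your proposal is correct and takes essentially the same approach as the paper: direct computation of the telescoping ratios with a case split at $a=L$ where the $\vee$'s and $\wedge$'s resolve. The paper's intermediate simplification of $f_{a,a}/f_{a,a+1}$ to $1/(2L-a)$ for all $1 \le a \le 2L-1$ is, as your computation shows, only an equality for $a \ge L$ (for $a < L$ the ratio is $1/a$), but the final conclusion $\le 1$ holds in both regimes and your treatment handles this cleanly.
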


Note that for any $v_1,e_1,\ell \in \naturals$ with $  K+M$  divides $v_1+e_1 $ and $v_1+e_1\le 2N$, we have $0 \le  \tilde{t} \le 2L-\indc{e_1=0}$, and $\tilde{t} \le  t \le (2L)\wedge 2\tilde{t}$, in view of \prettyref{eq:tilde_t}. 
Then, we have 
\begin{align}
    & \sum_{v_1\ge 0}  \sum_{\ell\ge 0} |\rho|^{\ell}   f_{\tilde{t},t} \, \, \indc{K+M \mid v_1+e_1   }  \indc{ v_1+e_1 \le 2N }  \nonumber \\
    & \le ~  \sum_{v_1\ge 0}  \sum_{\ell\ge 0}  \sum_{a = 0}^{ 2L-\indc{e_1=0}} \sum_{b=a}^{(2L)\wedge (2a)}   |\rho|^{\ell}  f_{a,b}   \indc{b = \frac{v_1+e_1}{K+M}}  \indc{a \le \floor{\frac{\ell+2e_1}{2(K+M)}}{} + L}  \nonumber\\
    & \le ~   \sum_{a = 0}^{ 2L-\indc{e_1=0}} \sum_{b=a}^{(2L)\wedge (2a)}  f_{a,b} \,\, \sum_{\ell \ge 2(a-L)(K+M)-2e_1} |\rho|^{\ell} \nonumber \\
    & \le ~ 4 \sum_{a = 0}^{ 2L-\indc{e_1=0}} \underbrace{\rho^{2(a-L)(K+M)-2e_1} \sum_{b=a}^{(2L)\wedge (2a)}   f_{a,b}}_{\triangleq h_a} \, , \label{eq:e_1_sum_true_1}
    \end{align}
where in the second-to-last inequality we interchanged the order of the summation 
and used the fact that if $a  \le \floor{\frac{\ell+2e_1}{2(K+M)}} + L$, then
$
\ell \ge 2\left(a - L \right)(K+M) - 2e_1;
$
the last inequality holds by our assumption that $|\rho| \le \frac{3}{4}$.

Note that for $0 \le a \le 2L-1$, 
\begin{align*}
    \frac{h_{a}}{ h_{a+1}}
    & \le \frac{2\sum_{b=a+1}^{(2L)\wedge (2a)} f_{a,b}}{\rho^{2(K+M)}\sum_{b=a+1}^{(2L)\wedge (2a+2)} f_{a+1,b}
    \, \, } \le \frac{12 L^2}{\rho^{2(K+M)}(|\calJ|-2L)
    \, \, } \le \frac{1}{2} \,,
\end{align*}
    where the first inequality holds because $f_{a,a} \leq f_{a,a+1}$ by \prettyref{eq:f_a_a};
the second inequality holds due to
  \prettyref{eq:f_a_b_mono} and $|\calJ| \ge 28L^2$, in view of  
$\frac{12 L^2}{\rho^{2(K+M)}(|\calJ|-2L)} \le \frac{1}{2} $; and the third inequality holds by our assumption that $\frac{12 L^2}{\rho^{2(K+M)}(|\calJ|-2L)} \le \frac{1}{2}$ in \prettyref{eq:true_pair_constraint}.

It follows from~\prettyref{eq:e_1_sum_true_1} that 
\begin{align*}
 & \sum_{v_1\ge 0}  \sum_{\ell\ge 0} |\rho|^{\ell}   f_{\tilde{t},t} \, \, \indc{K+M \mid v_1+e_1   }  \indc{ v_1+e_1 \le 2N } \nonumber \\
 & \le ~ 8 h_{2L-\indc{e_1=0}} \nonumber \\
& = ~ 8 \rho^{2N- 2(K+M) \indc{e_1=0} -2e_1} \sum_{b=2L-\indc{e_1=0}}^{2L}   f_{2L-\indc{e_1=0},b} \, .
\end{align*}

In the case of $e_1>0$, note that 
\begin{align*}
    f_{2L,2L} = ~ \binom{|\calJ|}{2L}\binom{2L}{L,L} \le \binom{|\calJ|}{L}^2\, .
\end{align*}
In the case of $e_1=0$, since $ \frac{f_{2L-1,2L-1}}{f_{2L-1,2L}} \le 1$ by \prettyref{eq:f_a_a} in \prettyref{claim:f_a_b}, it follows
\begin{align*}
     \rho^{-2(K+M)} (f_{2L-1,2L-1}+f_{2L-1,2L})  
     & \le ~ 2 \rho^{-2(K+M)}f_{2L-1,2L} \nonumber  \\
     & = ~ 12 \rho^{-2(K+M)} \binom{|\calJ|}{2L-1}\binom{2L-1}{1}\binom{2L-2}{L-1,L-1,0} \\
      & \le ~  \frac{12 L^2 }{\rho^{2(K+M)}|\calJ|} \binom{|\calJ|}{L}^2\, .
\end{align*}
Combining the last three displayed equations yields that 
\begin{align}
 &  \sum_{v_1\ge 0}  \sum_{\ell\ge 0} |\rho|^{\ell}   f_{\tilde{t},t} \, \, \indc{K+M \mid v_1+e_1   }  \indc{ v_1+e_1 \le 2N }  \nonumber \\
  & \le ~ 8 \rho^{2N}  \left(\rho^{-2e_1} \indc{e_1 \neq 0}+ \frac{12L^2}{\rho^{2(K+M)}|\calJ|}\indc{e_1 = 0} \right) \binom{|\calJ|}{L}^2\, ,\label{eq:e_1_sum_true}
\end{align}
which proves the desired result since $|\calT| = \binom{|\calJ|}{L}$.

\begin{proof}[Proof of \prettyref{claim:f_a_b}]
We have
\begin{align*}
    \frac{f_{a,a}}{f_{a,a+1}}
    & = \frac{ \binom{a}{(a-L) \vee 0,(a-L) \vee 0, (2L-a) \wedge a} }{  a \binom{a-1}{(a-L) \vee 0,(a-L) \vee 0, (2L-a-1) \wedge (a-1)}  }.
\end{align*}
If $|\calJ|\ge 2L$, for any $a\in\naturals$ such that $a \le (b-1)\wedge (L-1)$, we have
\begin{align*}
    \frac{f_{a,b}}{f_{a+1,b}}
    & = ~ \frac{\binom{|\calJ|}{a} \binom{a}{b-a} \binom{2a-b}{2a-b}  6^{ a}}{\binom{|\calJ|}{a+1} \binom{a+1}{b-a-1} \binom{2a+2-b}{2a+2-b}  6^{ a+1}}  = ~ \frac{\left(2a-b+2\right)\left(2a-b+1\right)}{6\left(b-a\right)(|\calJ|-a)}  \le \frac{L^2}{6(|\calJ|-L)} \, ,
\end{align*}
and for any $a\in\naturals$ such that $L \le a \le b-1 \le 2L-1$,
\begin{align*}
    \frac{f_{a,b}}{f_{a+1,b}}
    & = ~ \frac{\binom{|\calJ|}{a} \binom{a}{b-a} \binom{2a-b}{a-L, a-L, 2L-b }  6^{ 2L-a }}{\binom{|\calJ|}{a+1} \binom{a+1}{b-a-1} \binom{2a+2-b}{a+1-L,a+1-L, 2L-b}  6^{2L-a-1}} = ~ \frac{6\left(a-L+1\right)^2}{\left(b-a\right)(|\calJ|-a)} \le \frac{6L^2}{|\calJ|-2L+1} \,. 
\end{align*}
Hence, we have that for any $a\le b-1$, we have
\begin{align*}
    \frac{ f_{a,b} }{ f_{a+1,b}  }
      \le ~ \frac{6L^2}{|\calJ|-2L}\,.
\end{align*}
Also, we have that $\frac{ f_{0,0} }{ f_{0,1}  } = 1$, and for any $1 \le a  \le 2L-1$,
\begin{align*}
    \frac{f_{a,a}}{f_{a,a+1}}
= \frac{ \binom{a}{(a-L) \vee 0,(a-L) \vee 0, (2L-a) \wedge a} }{  a \binom{a-1}{(a-L) \vee 0,(a-L) \vee 0, (2L-a-1) \wedge (a-1)}  }=\frac{1}{2L-a} \le 1\,. 
\end{align*}

\end{proof} 

\subsection{Proof of \prettyref{prop:fake_pairs_improved} for fake pairs}
\label{sec:fake_pair}

In this subsection, we analyze the case when $i$ and $j$ are a fake pair, i.e., $i\neq j$. 
By \prettyref{eq:var_Phi_ij}, it suffices to show that under \prettyref{eq:fake_pairs_constraint},
\begin{align}
    \frac{\Gamma_{ij}}{\expect{\Phi_{ii}}^2} = O\left(\frac{1}{|\calT| \rho^{2N} }\right)  \,.  \label{eq:Xi_ij_mean_square}
\end{align}
Recall that $\calW_{ij}(v,k)$ is the set of $(S_1(i),T_1(i),S_2(j),T_2(j))$ such that $S_1\cup T_1 \cup S_2 \cup T_2$ has $v+2$ vertices and excess $k$ with constraints \prettyref{eq:constraint1} and \prettyref{eq:constraint2} satisfied. Under the constraint \prettyref{eq:constraint1}, we must have $e(S_1\cup T_1 \cup S_2 \cup T_2) = v+k+2 \le 2N$. 
Since $S_1(i)$ and $S_2(i)$ are rooted at $i$,  and $T_1(j)$ and $T_2(j)$ are rooted at $j$, $S_1\cup T_1 \cup S_2 \cup T_2 $ must contain at most two components so that $k\ge -2$.
By \prettyref{eq:fake_pairs_constraint}, we have $N^2=o(n)$.
Thus combining \prettyref{prop:mean_Phi_ij} and \prettyref{eq:var_bound_P} yields that
\begin{align}
    \frac{\Gamma_{ij}}{\expect{\Phi_{ii}}^2} 
     & \le ~ (1+o(1)) n^{-2N} |\calT|^{-2} \rho^{-2N}  \sum_{k \ge -2} \sum_{v=0}^{2N-k-2}  q^{-2N+v+k+2}  P_{ij}(v,k) \, \label{eq:fake_pair}.
\end{align}

We first proceed with the simple case of $k=-2$, where $S_1\cup T_1\cup S_2\cup T_2$ is a disconnected graph consisting of two connected components.
By \prettyref{eq:constraint1}, we must have $S_1=T_1$ and $S_2 =T_2$, and  $v = 2N $. 
In this case, recalling the definition from \prettyref{tab:K_ell_m}, we have $E(K_{11})=\emptyset$.
Then, it follows from the definition of $P_{ij}$ in \prettyref{eq:P_ij} that
\begin{align}
    P_{ij}(2N,-2) 
    &  = ~ \sum_{(S_1(i),T_1(i),S_2(j),T_2(j))\in\calW(2N,-2)}  \left(\aut(S_1)\aut(T_1)\aut(S_2)\aut(T_2)\right)^{\frac{1}{2}} 
    \notag \\
    & \overset{(a)}{\le} ~  \sum_{H \in \calT} \aut(H)^2 |\{S_1(i): S_1 \cong H\}|  |\{S_2(j): S_2 \cong H\}|  \notag \\
    & \overset{(b)}{=}~
     |\calT| \, n^{2N}  \,, \label{eq:case-k=-2}
\end{align}
where $(a)$ holds because $\calW_{ij}(v,k)$ is a subset of $\calW_{ij}$ and then if $S_1,S_2\cong H$ and $T_1,T_2\cong I$, given $S_1=T_1$ and $S_2=T_2$, we must have $H=I$ and $ \left(\aut(S_1)\aut(T_1)\aut(S_2)\aut(T_2)\right)^{\frac{1}{2}} = \aut(H)^2$;  $(b)$ holds because for any rooted tree $H\in \calT$, 
$|\{S (i): S \cong H\}| = |\{S_2(j): S_2 \cong H\}|= \sub_n(H) \le \frac{n^N}{\aut(H)}$ following from \prettyref{eq:a_H}.


The case of $k\ge -1$ is much more sophisticated, for which we derive the following upper bound on $P_{ij}(v,k)$. 
\begin{lemma}\label{lmm:W_v_k}
       For any $v, k$ where $k\ge -1$ and $v\le 2N-k-2$,  we have
       \begin{align}
     P_{ij}(v,k)  
     \le &  n^{v}  
     |\calT| \, 4^{L}  L^{ 2L \wedge (4K+2)} ( 6\Universal)^{4K+4M}    \nonumber \\
     & \left( 11 (2N+1)^3 R^4 (11\Universal)^{4(K+M)} \right)^{k+2} 
     \left(R^{\frac{2}{M} } (11 \Universal)^{\frac{4(K+M)}{M}}  \right)^{2N-v-k-2}
          \,,
\label{eq:W_v_k}
       \end{align}
where we recall that $\R$ is an upper bound on the number of automorphisms of each bulb of any chandelier in $\calT$.
 \end{lemma}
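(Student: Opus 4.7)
The plan is to parallel the true-pair calculation in \prettyref{lmm:W_v_k_ell}, extended to handle the two distinct roots $i$ and $j$. Since we restrict to $k \ge -1$, the union graph $U = S_1 \cup T_1 \cup S_2 \cup T_2$ is connected; in particular there is a path in $U$ from $i$ to $j$. I would first set up the analogous decomposition $U = \UTone \cup \UTtwo \cup \UN$, where the subgraphs $\calG(i,a)$ (and the symmetrically defined counterparts $\calG(j,a)$) are classified by whether $\calG(\cdot,a)$ is (i) a tree with fewer than $M$ edges of decoration-size $\ge 3$, (ii) a tree with at least $M$ such edges, or (iii) contains a cycle through the root. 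Pieces of type (i), (ii), (iii) at both $i$ and $j$ form $\UTone$, $\UTtwo$, $\UN$ respectively.

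The first step is to establish the analogues of \prettyref{claim:U_decompose_true} and \prettyref{lmm:k_true}: the three pieces meet only at $\{i,j\}$; $\UN$ carries the entire excess $k$ of $U$; the total number of bridging neighbors at $i$ and at $j$ is at most $2k + 4$ via a cycle-basis argument applied at each root (the extra $+2$ relative to \prettyref{lmm:k_true} reflects the two components formed when separating at $i$ and $j$); the weight factorizes as $w(\UTone) w(\UTtwo) w(\UN) = (\aut(S_1) \aut(T_1) \aut(S_2) \aut(T_2))^{1/2}$; and $w(\UN) \le R^{2(2k+4)}$, $w(\UTtwo) \le R^{4e_2/M}$ for the appropriate edge-count $e_2$. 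The enumeration bounds $|\calUN(v_N,k)| \le n^{v_N}(11\Universal)^{v_N}(v_N+1)^{2(k+2)} 11^{k+2}$ and $|\calUTtwo(v_2,e_2)| \le n^{v_2}(11\Universal)^{v_2}$ then follow exactly as in \prettyref{eq:calUN_true}--\prettyref{eq:calU_T_2_true}, producing the factors $(11(2N+1)^3 R^4 (11\Universal)^{4(K+M)})^{k+2}$ and $(R^{2/M}(11\Universal)^{4(K+M)/M})^{2N-v-k-2}$ in the final bound, where I use $e_2 \le 2N - v - k - 2$ from the decoration accounting in \prettyref{eq:S1_T1_S2_T2} adapted to $e(U) = v+k+2$.

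The bulk of the work lies in bounding the $\UTone$ contribution, and this is where the fake-pair case genuinely departs from the true-pair case. A structural lemma (the analogue of \prettyref{claim:U_T_1_structure}) should show that $\UTone$ consists of $2$-decorated $1$-chandelier branches and tangled chandelier branches at each of $i$ and $j$, together with a distinguished \emph{spine} joining $i$ to $j$. Because $U$ is connected and $\UN, \UTtwo$ sit disjointly around the two roots, at least one branch from $i$ must fuse with one branch from $j$ into this spine, whose length is at most $2(K+M)$; enumerating its unlabeled shape and decorations contributes the factor $(6\Universal)^{4K+4M}$. The decoration constraints $S_1 \Delta T_1 \subset S_2\cup T_2$ and $S_2\Delta T_2 \subset S_1 \cup T_1$ then force the non-spine bulbs at $j$ to mirror those at $i$, so the two chandeliers share the same isomorphism type. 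This is precisely why the chandelier enumeration contributes only $|\calT|$ rather than $|\calT|^2$. The remaining factor $4^L L^{2L \wedge (4K+2)}$ accounts for the choices of which branches serve as the spine, the $\le 4$ membership bits on each of the at most $L$ non-spine branches at each root, and the wire-merging patterns via \prettyref{lmm:merge_bound}; the cap $2L \wedge (4K+2)$ arises because $j$ either sits on a wire (at most $\sim 2L$ admissible positions) or inside a bulb (at most $\sim 4K+2$ admissible attachment points).

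The main obstacle will be the structural lemma for $\UTone$ at fake pairs: one has to rigorously show the existence and essential uniqueness of the spine, rule out pathological configurations where $\UTone$ contains multiple spine-like connecting paths, and handle all edge cases (notably, when $j$ is embedded inside a bulb of a branch from $i$, so that the ``wire plus bulb'' chandelier geometry is stretched rather than cleanly bisected). Once this is done, the final assembly of the four contributions into \prettyref{eq:W_v_k} is a routine summation of the same flavor as the true-pair case, completed by bounding $v(U) \le v+2$ and $v_1 + v_2 + v_N \le v$.
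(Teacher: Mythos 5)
Your overall skeleton (decompose $U$ into $\UTone$, $\UTtwo$, $\UN$; push cycles and heavily-decorated branches into $\UTtwo$, $\UN$; bound each piece separately) matches the paper, but the structural picture of $\UTone$ is wrong in a way that would derail the enumeration.

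The key fact the paper uses — and which you have inverted — is that $\UT(i)$ and $\UT(j)$ are \emph{vertex-disjoint trees} (\prettyref{claim:U_decompose_fake}\ref{N:1}). This is built into the very definitions~\prettyref{eq:calN_T_i_fake}--\prettyref{eq:calN_T_j_fake}: a branch $\calG(i,a)$ qualifies for $\calNT(i)$ only if it is a tree \emph{not containing $j$} (and symmetrically for $\calNT(j)$). Consequently there is \emph{no} spine joining $i$ to $j$ inside $\UTone$; every path from $i$ to $j$ lives entirely in $\UN$. Your proposal to identify a distinguished spine inside $\UTone$, force branches from $i$ and $j$ to fuse along it, and charge the factor $(6\Universal)^{4K+4M}$ to enumerating that spine, describes a decomposition that the paper deliberately avoids, and it does not obviously lead to a clean counting bound (among other things, you would need to control the spine's location and how $\UTtwo$, $\UN$ attach to it, while the paper's definitions make those interactions vanish).

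The actual sources of the remaining factors are also different from what you posit. The paper splits $\UTone(i) = \Vprime \cup \Vdoubleprime$, where $\Vprime$ is the sub-chandelier at $i$ whose edges are decorated exactly $\{\SS_1,\TT_1\}$, and $\Vdoubleprime$ is the dangling subtree of $(S_2)_i \cup (T_2)_i$; \prettyref{claim:U_20}\ref{C:6} shows $\Vdoubleprime$ can only touch $\Vprime$ along wires. \prettyref{lmm:diluted_chandelier_tree} then enumerates $\Vdoubleprime$ (size $\leq 2K+2M-2$, root degree $\leq L \wedge (2K+1)$) and its embedding into the wires of $\Vprime$, giving $(6\Universal)^{2K+2M-1}\,L^{L\wedge(2K+1)}$; applying this at both $i$ and $j$ produces $(6\Universal)^{4K+4M-2}\,L^{2L\wedge(4K+2)}$. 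These factors have nothing to do with a spine or with wire-merging (\prettyref{lmm:merge_bound} is used for true pairs, not here). Finally, the $|\calT|\,4^L$ factor comes from \prettyref{claim:U_20}\ref{C:7}: every bulb of $\Vprime$ or $\Wprime$ must lie in $\calJ(H)\cap\calJ(I)$ (where $H,I$ are the two chandelier isomorphism types), so there are at most $L$ non-isomorphic bulbs altogether; each can appear in $\Vprime$, $\Wprime$, both, or neither. You recover the same numbers but by positing that $H$ and $I$ must be the same type, which is false; the relevant statement is the much weaker one about the bulbs being shared, and getting the quantifiers right here is exactly what makes the $|\calT|$ (rather than $|\calT|^2$) bound correct. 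Without the vertex-disjointness of $\UT(i)$ and $\UT(j)$ and the $\Vprime/\Vdoubleprime$ split, none of this machinery assembles.
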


\begin{remark}
To appreciate the significance of~\prettyref{eq:W_v_k}, it is instructive to consider the special case $k=-1$, in which $S_1\cup T_1 \cup S_2 \cup T_2 $ is a tree. In this case, we can get a straightforward bound $|P_{ij}(v,-1)|\le \R^{2L} n^v \Universal^{v+1}11^{v+1}$, 
which follows from the definition of $P_{ij}$ in \prettyref{eq:P_ij} together with the following two facts: $(a)$ $\left(\aut(S_1)\aut(T_1)\aut(S_2)\aut(T_2)\right)^{\frac{1}{2}} \le \R^{2L}$ in view of $\aut(H) \le \R^L$ for any $L$-chandelier $H$; and $(b)$ $|\calW_{ij}(v,-1)| \le n^v\Universal^{v+1}11^{v+1}$,
where $\Universal^{v+1}$ counts all possible unlabeled trees with $v+1$ edges and $11^{v+1}$ enumerates all possible assignment of edges in $E(S_1\cup T_1 \cup S_2 \cup T_2) $
to $E(S_1), E(T_1), E(S_2),E(T_2)$. 
However, this simple bound is too loose when plugged in the RHS of \prettyref{eq:fake_pair}, for the following two reasons: (i) Since $R$ will be chosen to $\exp(\Theta(K))$, the factor of $R^L$ can be too large; (ii) The $(11\Universal)^v$ factor can be much larger than $n^2$ when $v=2N-1$ and 
$N=C \log n$ for some big constant $C$.  To circumvent these catastrophic factors, in the proof of \prettyref{lmm:W_v_k} we  examine the anatomy of $S_1\cup T_1 \cup S_2 \cup T_2 $
and crucially leverage the chandelier structure.
\end{remark}
Applying \prettyref{lmm:W_v_k} and \prettyref{eq:case-k=-2} to \prettyref{eq:fake_pair},
we have
\begin{align*}
    \frac{\Gamma_{ij}}{\expect{\Phi_{ii}}^2} 
      & \le 
     ~ (1+o(1))|\calT|^{-1} \rho^{-2N} \Biggl\{ 1 + 
        4^{L}  L^{ 2L \wedge (4K+2)} ( 6\Universal) ^{4K+4M}\\
     &~~~~~ \sum_{k \ge -1} \left(\frac{11 (11\Universal)^{4(K+M)} \R^4 (2N+1)^3 }{n}\right)^{k+2} \\
     &~~~~~ \sum_{v=0}^{2N-k-2} \left(\frac{R^{\frac{2}{M} } (11 \Universal)^{\frac{4(K+M)}{M}} }{nq} \right)^{2N -(v+k+2)}  \Biggr\}\\
     & \overset{(a)}{\le }    (1+o(1))|\calT|^{-1}\rho^{-2N} \Biggl\{ 1 +  4^{L+1}  L^{ 2L \wedge (4K+2)} ( 6\Universal) ^{4K+4M}  \\
     & ~~~~~ 
     \left( \frac{11 (11\Universal)^{4(K+M)} \R^2 (2N+1)^3 }{n} \right)\Biggr\}\\
     & \overset{(b)}{=}O\left( |\calT|^{-1} \rho^{-2N} \right) \,,
\end{align*}
where $(a)$ holds because by the condition 
$\frac{R^{\frac{2}{M} } (11 \Universal)^{\frac{4(K+M)}{M}} }{nq} \le \frac{1}{2}$ in \prettyref{eq:fake_pairs_constraint},
\begin{align*}
     \sum_{v=0}^{2N-k-2} \left(\frac{R^{\frac{2}{M} } (11 \Universal)^{\frac{4(K+M)}{M}} }{nq} \right)^{2N -v-k-2} \le 2 \,, 
\end{align*}
and moreover, by the second condition in \prettyref{eq:fake_pairs_constraint}, we have $\frac{11 (11\Universal)^{4(K+M)} \R^2 (2N+1)^3 }{n} \le \frac{1}{2}$ so that
\begin{align*}
    \sum_{k \ge -1} \left(\frac{11 (11\Universal)^{4(K+M)} \R^2 (2N+1)^3 }{n}\right)^{k+2} \le 2\left( \frac{11 (11\Universal)^{4(K+M)} \R^2 (2N+1)^3 }{n} \right)\, ;
\end{align*}
$(b)$ holds by the second condition in \prettyref{eq:fake_pairs_constraint}. 





\subsubsection{Proof of \prettyref{lmm:W_v_k}}
Recall that there is a one-to-one correspondence between the $4$-tuple $(S_1(i), T_1(i), S_2(j), T_2(j))$ and the decorated union graph $U=S_1\cup T_1 \cup S_2 \cup T_2$. To bound $P_{ij}(v,k)$, it suffices to enumerate $U$
and bound the numbers of automorphisms. As such, we apply the idea of ``divide-and-conquer''.
In particular, we  decompose $U$ into multiple parts and bound them separately.

\begin{figure}[ht]
    \centering
    \includegraphics[width=0.9\textwidth]{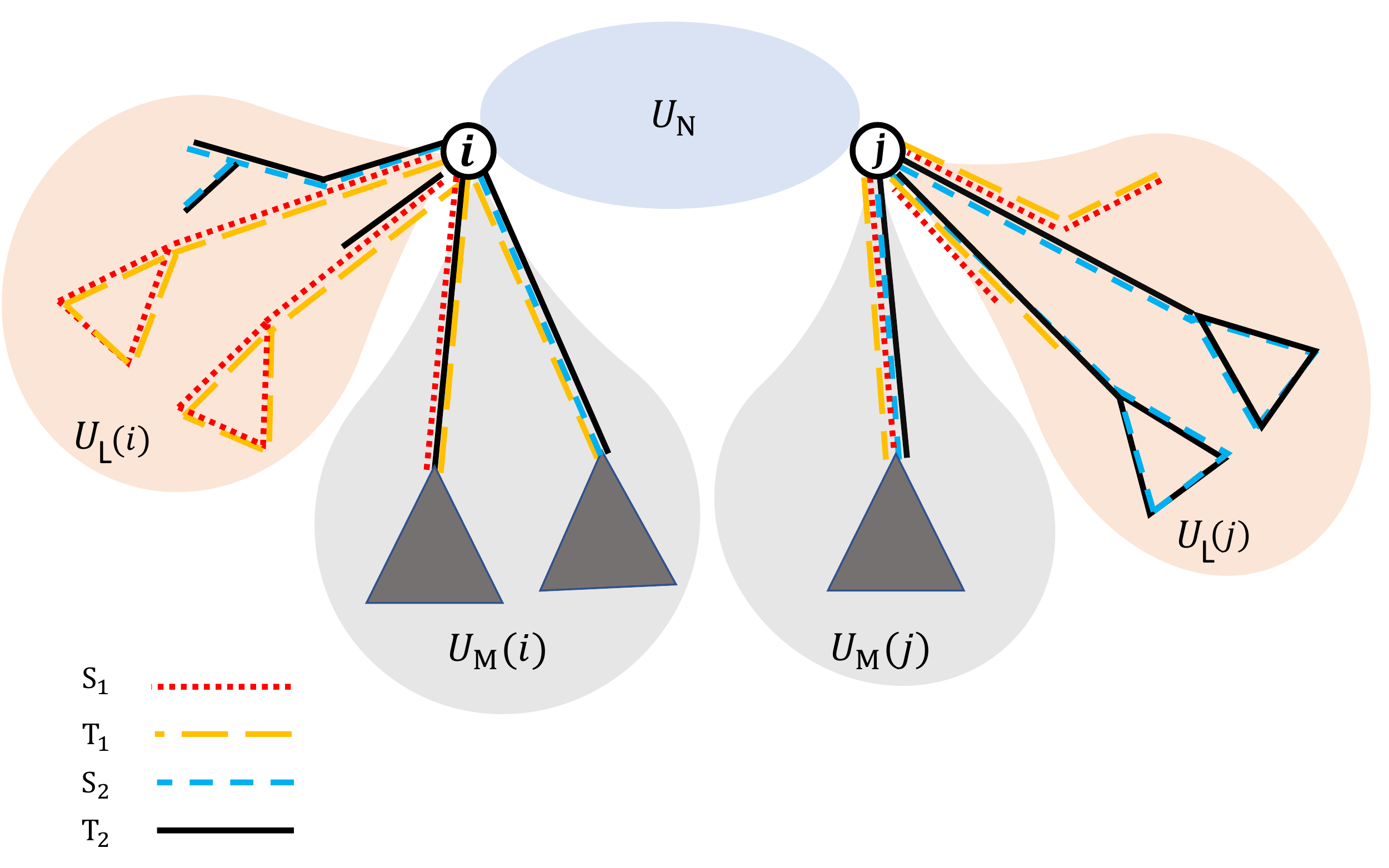}
\centering
    \caption{
 An illustration of the decomposition of $U$ as $\UTone(i) \cup \UTtwo(i) \cup \UTone(j) \cup \UTtwo(j) \cup \UN$ in the same convention as \prettyref{fig:decomp_true}.
    } 
    \label{fig:fake_pair_decomp}
\end{figure}

We first decompose $U$ into three edge-disjoint parts $\UT(i)$, $\UT(j)$, and $\UN$,
where $\UT(i)$ (resp.\ $\UT(j)$) is a tree rooted at $i$ (resp.\ $j$), and $\UN$ is a connected graph containing both $i$ and $j$ with the 
same access $k$, as depicted in~\prettyref{fig:fake_pair_decomp}.

Specifically, for any neighbor $a$ of $i$ in $U$,
consider the graph $U$ with the edge $(i,a)$ removed and 
let $\calC(i,a)$ be the connected component therein that contains $a$.
Let $\calG(i,a) $ denote the graph union of $ \calC(i,a)$ and the edge $(i,a)$. Let 
\begin{align}
    \calNT(i) 
    & = ~ \big\{ a : D_{(i,a)}  \cap \{\SS_1, \TT_1\} \neq \emptyset\, , \,\; \calG(i,a) \text{ is a tree  not containing $j$} \big\}  \label{eq:calN_T_i_fake}\\
    \calNN(i) 
    & = ~ \big\{ a : D_{(i,a)}  \cap \{\SS_1, \TT_1\} \neq \emptyset  \big\} \backslash \calNT(i) \label{eq:calN_N_i_fake}. 
\end{align}
Similarly, define
\begin{align}
    \calNT(j) 
    & =~ \big\{ a :  D_{(j,a)} \cap \{ \SS_2, \TT_2\} \neq \emptyset\,, \;\, \calG(j,a) \text{ is a tree not containing $i$} \big\} \, ,\label{eq:calN_T_j_fake} \\
    \calNN(j) 
    & =~ \big\{ a :  D_{(j,a)} \cap \{ \SS_2, \TT_2\} \neq \emptyset \big\} \backslash  \calNT(j)\, .\label{eq:calN_N_j_fake}
\end{align} 

Then we decompose $U$ according to $\calNT(i)$ and $\calNT(j)$:
\begin{align}
    \UT(i) \triangleq ~\bigcup_{a \in \calNT(i)} \calG(i,a) , \quad \UT(j) \triangleq ~\bigcup_{a \in \calNT(j)} \calG(j,a) , \quad
    \UN \triangleq ~ U \backslash (\UT(i) \cup \UT(j)) \, . \label{eq:U_decompose_fake}
\end{align} 
If $\calNT(i)$ (resp.\ $\calNT(j)$) is empty, 
we define $\UT(i)$ (resp.\ $\UT(j)$) as the graph consisting of the single vertex $i$ (resp.\ $j$) by default. 

Next, we further decompose $\UT(i)$ into two edge-disjoint subtrees and similarly for $\UT(j)$. See~\prettyref{fig:fake_pair_decomp} for an illustration. In particular, 
define
\begin{align}
\calNTtwo(i) & = \left\{ a \in \calNT(i): \left|  \{e \in E(\calG(i,a))  : |D_e| \ge 3\} \right| 
\ge M \right\},  \label{eq:mixbulb} \\
\calNTone(i)
& = \calNT(i) \backslash \calNTtwo(i).
     \label{eq:purebulb}
\end{align}
This leads to the partition $\UT(i) = \UTone(i) \cup \UTtwo(i) $, where
\begin{align}
    \UTone(i) \triangleq \bigcup_{a \in \calNTone(i)} \calG(i,a) \, , \quad 
    \UTtwo(i) \triangleq \bigcup_{a \in \calNTtwo(i)} \calG(i,a) \, . \label{eq:U_i_decompose_fake}
\end{align}
When $\calNTone(i)$ (resp.\ $\calNTtwo(i)$) is empty, we define $\UTone(i)$ (resp.\ $\UTtwo(i)$)
to be the graph with a single vertex $i$, respectively.  
Similarly, we define $\calNTone(j)$, $\calNTtwo(j)$, $\UTone(j)$, and $\UTtwo(j)$. 
We then define
\begin{align}
    \UTone = \UTone(i) \cup \UTone(j) \,, \quad \UTtwo = \UTtwo(i) \cup \UTtwo(j) \,. \label{eq:U_T_1_U_T_2}
\end{align}

Overall we have 
$U = \UN \cup \UTone\cup \UTtwo$, which will be shown to be an edge-disjoint union.
The intuition behind this decomposition is as follows. Analogous to the case of true pairs in \prettyref{sec:pf-W_v_k_ell}, roughly speaking, $\UN$ captures the overlapping part that contains cycles, and $\UTtwo$ deals with the case where each branch contains
at least $M$ edges that are $3$ or $4$-decorated.
Again, we can control the sizes of  both $\UN$ and $\UTtwo$ and apply a crude counting argument. In comparison, $\UTone$ is more delicate. Following a similar reasoning as in the analysis of true pairs, each bulb in $\UTone$ is exactly $2$-decorated. Moreover, here since $S_1, T_1$
are rooted at $i$ and $S_2, T_2$ are rooted at $j \neq i$, we can further argue that the bulbs in $\UTone(i)$ must be decorated by $\{\SS_1, \TT_1\}$, 
while the bulbs in $\UTone(j)$ must be decorated by $\{\SS_2, \TT_2\}$. In other words, within $\UTone$, only the wires of $S_1\cup T_1$
and $S_2 \cup T_2$ can overlap and their bulbs cannot; see~\prettyref{fig:decomp_true} for an illustration. 
This observation is crucial for bounding the contribution of $\UTone$ in \prettyref{lmm:fake_enumeration}.

Based on the decomposition of $U$, we further assign 
appropriate weights to $\UTone$, $\UTtwo$ and $\UN$ to represent their contributions to the number of  automorphisms:
 \begin{align}
w(\UTone) & \triangleq ~ w_{S_1}(\UTone(i)) w_{T_1}(\UTone(i)) w_{S_2}(\UTone(j)) w_{T_2}(\UTone(j)) \,, \label{eq:w_UL} \\
w(\UTtwo) & \triangleq ~ w_{S_1}(\UTtwo(i)) w_{T_1}(\UTtwo(i)) w_{S_2}(\UTtwo(j)) w_{T_2}(\UTtwo(j)) \,,
\label{eq:w_U'_fake} \\
   w(\UN) & \triangleq ~w_{S_1}(U\backslash \UT(i)) w_{T_1}(U\backslash \UT(i)) w_{S_2}(U\backslash \UT(j)) w_{T_2}(U\backslash \UT(j))\,. \label{eq:w_UN}
\end{align}
where we applied the notation \prettyref{eq:w_S_U'}, namely, for any chandelier $S$
and any graph $G \subset U$, $
w_S(G) \triangleq \prod_{\B \in \calK(S), \, \B\subset G} \aut(\B)^{\frac{1}{2}}$.

It is easy to verify that $\UT(i)$, $\UT(j)$, and $\UN$ satisfy the following properties (proved at the end of this subsection). 
\begin{claim}
\label{claim:U_decompose_fake}
Assume that $k\geq -1$. 
\begin{enumerate}[label=(\roman*)]
\item \label{N:0} For all $a \in \calNT(i)$, $\calC(i,a) = (S_1)_a \cup (T_1)_a \cup (S_2)_a \cup (T_2)_a$, and
the number of vertices in $\calC(i,a)$ is between $M+K$ and $2M+2K-1$; the same holds for $\calC(j,a)$ for all $a \in \calNT(j)$.
  \item \label{N:1}
  $\UT(i) $ and $\UT(j)$ are two vertex-disjoint trees, where $\UTone(i)$ and $\UTtwo(i)$ (resp.~$\UTone(j)$ and $\UTtwo(j)$) only share one common vertex $i$ (resp.~$j$).
  
   \item  \label{N:3}
   Each of the graphs $\UTone \cap \UTtwo$, $\UTone \cap \UN$, and $\UTtwo \cap \UN$ consists of only two isolated vertices $i$ and $j$. 
  \item \label{N:4}
  $\UN$ is a connected graph with the same excess as $U$, and the number of edges in $\UN$ is bounded by $ \left( |\calNN(i)|+ |\calNN(j)|\right)(M+K)$.
 
  \item \label{N:5}
    $ w(\UTone)w(\UTtwo)w(\UN) =(\aut(S_1)\aut(T_1)\aut(S_2)\aut(T_2))^{\frac{1}{2}}$, $w(\UTtwo) \le \R^{|\calNTtwo(i)|+|\calNTtwo(j)|}$, and $w(\UN) \le \R^{|\calNN(i)|+ |\calNN(j)|}$. 
     \end{enumerate}
\end{claim}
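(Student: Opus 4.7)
The proof parallels that of Claim~\ref{claim:U_decompose_true} in the true-pair case, with the main new complication arising from the fact that $S_1, T_1$ are rooted at $i$ while $S_2, T_2$ are rooted at $j \neq i$, so the two ``sides'' of $U$ must be tracked separately. I would proceed through the five parts~\ref{N:0}--\ref{N:5} in order, closely mimicking the template of the true-pair proof.

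For part~\ref{N:0}, I would follow the argument for items~\ref{T:0}--\ref{T:1} of Claim~\ref{claim:U_decompose_true}: by contradiction, any edge of $\calC(i,a)$ outside $(S_1)_a \cup (T_1)_a \cup (S_2)_a \cup (T_2)_a$ would lie in a descendant subgraph rooted at a different neighbor $b$ of $i$, forcing $(i,b) \in \calC(i,a)$ and contradicting the component structure of $\calC(i,a)$. Here I would additionally use that $\calG(i,a)$ does not contain $j$ to rule out paths leaving the component through $j$. The vertex-count bound then follows from the decoration constraint $2 \le |D_e| \le 4$ together with each branch of $S_m$ or $T_m$ having exactly $M+K$ edges.

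For part~\ref{N:1}, the crucial intermediate step is that $\calC(i,a) \cap \calC(i,a') = \emptyset$ for distinct $a, a' \in \calNT(i)$. I would first note $i \notin \calC(i,a)$ (else $\calG(i,a)$ contains a cycle), then observe that no neighbor $a' \neq a$ of $i$ in $U$ can lie in $\calC(i,a)$, for the edge $(i,a')$ would cross components of $U \setminus \{(i,a)\}$. A shared vertex $c \in \calC(i,a) \cap \calC(i,a')$ would yield a path from $a$ to $a'$ in $U \setminus \{(i,a)\}$ via $c$, forcing $a' \in \calC(i,a)$ and contradicting the previous observation. Hence the $\calG(i,a)$'s meet only at $i$, making $\UT(i)$ a tree, and symmetrically for $\UT(j)$. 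The subclaim that $\UTone(i)$ and $\UTtwo(i)$ share only $i$ is immediate from the partition $\calNT(i) = \calNTone(i) \sqcup \calNTtwo(i)$. Vertex-disjointness of $\UT(i)$ and $\UT(j)$ uses the non-containment clauses in \eqref{eq:calN_T_i_fake} and \eqref{eq:calN_T_j_fake}: we have $j \notin \UT(i)$ and $i \notin \UT(j)$ by definition, and any other shared vertex $c$ would force some $b \in \calNT(j)$ into $\calC(i,a)$ via the path from $c$ to $j$ in $\calG(j,b)$, while $j \notin \calC(i,a)$, again contradicting the component structure of $U \setminus \{(i,a)\}$.

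For part~\ref{N:3}, the three intersection statements follow by combining the partitions $\UT(i) = \UTone(i) \cup \UTtwo(i)$, $\UT(j) = \UTone(j) \cup \UTtwo(j)$, and $U = \UT(i) \cup \UT(j) \cup \UN$ from \eqref{eq:U_decompose_fake}--\eqref{eq:U_T_1_U_T_2} together with the vertex-disjointness obtained in~\ref{N:1}. For part~\ref{N:4}, the excess identity $e(\UN) - v(\UN) = e(U) - v(U) = k$ comes from $\UT(i)$ and $\UT(j)$ being vertex-disjoint trees, so inclusion--exclusion on vertices and additivity on edges give $e(U) = e(\UTone(i)) + e(\UTtwo(i)) + e(\UTone(j)) + e(\UTtwo(j)) + e(\UN)$ and $v(U) = v(\UTone(i)) + v(\UTtwo(i)) + v(\UTone(j)) + v(\UTtwo(j)) + v(\UN) - 4$, matching the tree identities $e(\UT(\cdot)) = v(\UT(\cdot)) - 1$. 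Connectivity of $\UN$ follows by writing $\UN = \{i, j\} \cup \bigcup_{a \in \calNN(i)} \calG(i,a) \cup \bigcup_{b \in \calNN(j)} \calG(j,b)$ and arguing analogously to \prettyref{lmm:k_true}, where each $\calG(i,a)$ (resp.~$\calG(j,b)$) either contains cycles or reaches the opposite root, hence remains attached to the $\{i,j\}$ backbone. The edge-count bound $e(\UN) \le (|\calNN(i)| + |\calNN(j)|)(M+K)$ follows because each such branch contains at most $M+K$ edges by the chandelier structure, as in part~\ref{N:0}. Finally, for part~\ref{N:5}, the weight identity decomposes $\aut(S_1)\aut(T_1)\aut(S_2)\aut(T_2)$ via \eqref{eq:aut_H}: each bulb of $S_1$ (resp.~$T_1$) sits inside exactly one of the three regions $\UTone(i)$, $\UTtwo(i)$, or the $i$-side $U \setminus \UT(i)$, according to whether the neighbor $a$ of $i$ whose subtree contains it lies in $\calNTone(i)$, $\calNTtwo(i)$, or $\calNN(i)$; the analogous statement for $S_2, T_2$ with $j$ in place of $i$ gives the full identity. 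The two bounds $w(\UTtwo) \le R^{|\calNTtwo(i)| + |\calNTtwo(j)|}$ and $w(\UN) \le R^{|\calNN(i)| + |\calNN(j)|}$ follow from $\aut(\calB) \le R$ uniformly over bulbs and the fact that each branch rooted at $i$ or $j$ contributes at most one bulb per chandelier.

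The main obstacle is part~\ref{N:1}, specifically the vertex-disjointness of $\UT(i)$ and $\UT(j)$: the component-theoretic arguments used in the true-pair case do not directly apply because the two ``sides'' $S_1 \cup T_1$ and $S_2 \cup T_2$ are rooted at distinct vertices, and one must carefully combine the tree structure of each $\calG(i,a)$, the non-containment of the opposite root, and the fact that no edge can cross a component of $U \setminus \{(i,a)\}$. Once this disjointness is established, the remaining parts follow the template of Claim~\ref{claim:U_decompose_true} with essentially only notational modifications.
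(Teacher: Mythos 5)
Your treatment of parts \ref{N:0}, \ref{N:1}, and \ref{N:5} is essentially the paper's. There are, however, two gaps in the remaining parts.

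The more serious one is the edge bound in part \ref{N:4}. You argue $e(\UN) \le (|\calNN(i)|+|\calNN(j)|)(M+K)$ ``because each such branch contains at most $M+K$ edges by the chandelier structure, as in part~\ref{N:0}.'' But part~\ref{N:0} only controls $\calG(i,a)$ for $a\in\calNT(i)$; for $a\in\calNN(i)$ the component $\calC(i,a)$ need not be a tree, may contain cycles or reach back to $j$, and can be much larger than $M+K$. Moreover the $\calG(i,a)$'s over $a\in\calNN(i)$ need not be edge-disjoint, so even if a per-branch bound held it would not simply add up. The paper's argument is global rather than per-branch: since each of $S_1,T_1$ has $L$ branches at $i$ one gets $|\calNT(i)|+|\calNN(i)|\ge L$, and similarly $|\calNT(j)|+|\calNN(j)|\ge L$; by part~\ref{N:0} each $\calG(i,a)$ with $a\in\calNT(i)$ contributes at least $M+K$ edges to $\UT(i)$, so $e(\UT(i))+e(\UT(j))\ge (|\calNT(i)|+|\calNT(j)|)(M+K)\ge (2L-|\calNN(i)|-|\calNN(j)|)(M+K)$; and combining this with $e(U)\le 2N = 2L(M+K)$ from~\eqref{eq:eU2N} and the pairwise edge-disjointness of $\UN,\UT(i),\UT(j)$ yields the claimed bound on $e(\UN)$.

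Part~\ref{N:3} is also not ``immediate from~\ref{N:1} and the partitions.'' Vertex-disjointness of $\UT(i)$ and $\UT(j)$ does handle $\UTone\cap\UTtwo$, but it says nothing about $\UN\cap\UT(i)$ or $\UN\cap\UT(j)$, and it is not automatic that $\UN$, defined as $U\backslash(\UT(i)\cup\UT(j))$, meets $\UT(i)$ only at $i$: a vertex of $\UT(i)$ could in principle be an endpoint of an edge outside $\UT(i)\cup\UT(j)$. The paper rules this out by a component-structure argument: a common vertex $c\ne i$ would lie in $\calC(i,a)$ for some $a\in\calNT(i)$ and would also have to be incident to an edge $e\notin\UT(i)$, but since $\calC(i,a)$ is a full connected component of $U\setminus\{(i,a)\}$, any such $e$ lies in $\calC(i,a)\subset\UT(i)$, a contradiction. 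The paper also shows (using connectivity of $U$, i.e.\ $k\ge -1$) that $\UN$ in fact contains both $i$ and $j$, since the unique $i$--$j$ path in $U$ cannot use edges of $\UT(i)$ or $\UT(j)$. Both observations are needed for parts~\ref{N:3} and~\ref{N:4} and are not supplied by your argument.
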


The following lemma shows that $|\calNN(i)| + |\calNN(j)| $ is small when the excess $k$ is small. This will be useful for bounding the size of  $\UN$ and $w(\UN)$, and facilitating the enumeration of $\UN$. 
\begin{lemma}\label{lmm:k_fake}
        For any $-1\le k\le N-1$, we have
        \begin{align*}
                |\calNN(i)|+|\calNN(j)| \le 4(k+2) \,.
        \end{align*}
        
\end{lemma}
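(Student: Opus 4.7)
The proof will mirror that of Lemma \ref{lmm:k_true} but accounts for the presence of two roots $i, j$, which explains the factor of $4$ in place of $2$. By Claim \ref{claim:U_decompose_fake}\ref{N:4}, $\UN$ is connected with the same excess as $U$, so its cycle rank equals $k+1$. I will split
\[
\calNN(i) = \calNNone(i) \cup \calNNtwo(i), \qquad \calNNone(i) \triangleq \{a \in \calNN(i) : \calC(i,a) \not\ni i\},
\]
and define $\calNNone(j), \calNNtwo(j)$ analogously. Correspondingly, I decompose $\UN$ into a ``hanging'' part $\UNone \triangleq \bigcup_{a \in \calNNone(i)} \calG(i,a) \cup \bigcup_{a \in \calNNone(j)} \calG(j,a)$ and a ``core'' part $\UNtwo \triangleq \UN \setminus \UNone$, which will be shown to be edge-disjoint and to share only the vertices $i,j$, so that their cycle ranks add up to $k+1$.

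For the $\calNNone$ contribution, I will show that for distinct $a, a' \in \calNNone(i)$, the subgraphs $\calG(i,a)$ and $\calG(i,a')$ pairwise share only the vertex $i$, by exactly the same reasoning as in Lemma \ref{lmm:k_true} (if they shared another vertex, $\calC(i,a)$ would contain $i$ via $\calG(i,a')$, contradicting $a \in \calNNone(i)$). Each such $\calG(i,a)$ either (a) does not contain $j$, in which case it must be non-tree (since $a \notin \calNT(i)$) and hence contributes at least one independent cycle to $\UN$, or (b) contains $j$, in which case it provides a path from $i$ to $j$ that, combined with the path inherent in the remainder of $\UN$ joining $i$ and $j$, yields an independent cycle. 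An identical analysis applies to $\calNNone(j)$. Accounting for both roots, this will yield $|\calNNone(i)| + |\calNNone(j)| \le 2\,m_1$ where $m_1$ is the cycle-rank contribution of $\UNone$.

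For the $\calNNtwo$ contribution, the definition gives, for each $a \in \calNNtwo(i)$, a cycle in $\UN$ (and in fact in $\UNtwo$) that passes through the edge $(i,a)$; the analogous statement holds at $j$. Picking a cycle basis of $\UNtwo$ of size $m_2 = (k+1) - m_1$, every edge of $\UNtwo$ incident to $\{i,j\}$ that lies in a cycle is covered by some basis cycle, and each basis cycle uses at most two edges at $i$ and at most two edges at $j$. This will yield $|\calNNtwo(i)| + |\calNNtwo(j)| \le 4\,m_2 + O(1)$, where the small additive slack absorbs the edge $(i,j)$ (if present) and boundary effects. Combining the two bounds with $m_1 + m_2 = k+1$ gives $|\calNN(i)| + |\calNN(j)| \le 4(k+2)$.

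The main obstacle is handling the case where $\calG(i,a)$ for $a \in \calNNone(i)$ contains $j$: such subgraphs may overlap with $\UT(j)$ or with $\calG(j,b)$ for some $b \in \calNN(j)$, so the clean edge-disjoint decomposition used in the true-pair case does not apply directly. The technical work is to show that each such ``cross-root'' $\calG(i,a)$ still contributes at least one independent cycle to the cycle basis of $\UN$ (by pairing it with an independent $i$--$j$ path already present), so that the cycle-rank accounting goes through with only a constant-factor loss, which is absorbed in passing from the true-pair bound $2(k+1)$ to $4(k+2)$.
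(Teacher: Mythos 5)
Your definition of $\calNNone(i)$ requires only that $\calC(i,a)$ avoid $i$, which is strictly weaker than the paper's requirement that $\calC(i,a)$ avoid \emph{both} $i$ and $j$. This makes a real difference. Under your definition a ``cross-root'' piece $\calG(i,a)$ can contain $j$ and simultaneously be a tree (this is consistent with $a\notin\calNT(i)$, since $\calNT(i)$ only excludes subgraphs that are trees \emph{not containing $j$}). Such a piece contributes no cycle of its own, and your proposed fix---pairing it with ``the path inherent in the remainder of $\UN$ joining $i$ and $j$''---fails when no such remainder path exists. Concretely, take $\UN$ to be the single path $i - a - j$, so $k=-1$: then $a\in\calNNone(i)$ and $a\in\calNNone(j)$ in your sense, $\UNone=\UN$, and the cycle rank $m_1$ of $\UNone$ is $0$, so your claimed bound $|\calNNone(i)|+|\calNNone(j)|\le 2m_1$ reads $2\le 0$. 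You flagged the cross-root overlap issue at the end, but the sketch does not resolve it; this is the missing idea.

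The paper avoids this via two moves. First, it defines $\calNNone(i)$ to require $\calC(i,a)$ to avoid both $i$ and $j$, so every $\calG(i,a)$ with $a\in\calNNone(i)$ is a self-contained non-tree hanging off $i$ (hence at least one cycle each), and moreover $\UNone(i)$ and $\UNone(j)$ are genuinely vertex-disjoint, which your decomposition cannot guarantee. Second, for the leftover $\calNNtwo$ vertices, it passes to the augmented graph $\widehat{U}_{\sfN_2}$ obtained by adding a phantom vertex $d$ adjacent to both $i$ and $j$. This augmentation is exactly the device that manufactures the ``independent $i$--$j$ path already present'' that your argument silently assumes: every $a\in\calNNtwo(i)$ has $\calC(i,a)$ reaching back to $i$ or across to $j$, and in either case there is a cycle in $\widehat{U}_{\sfN_2}$ through $(i,a)$ (routed through $d$ when needed). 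The excess of $\widehat{U}_{\sfN_2}$ is $e(\UNtwo)-v(\UNtwo)+2$; that $+2$ is precisely where the $k+2$ (rather than $k+1$) in the statement comes from, and with it the cycle-basis counting (each basis cycle uses at most two edges at $i$ and two at $j$) yields the $1/4$ factor cleanly, with no unquantified $O(1)$ slack of the kind your sketch relies on.
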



Then by \prettyref{claim:U_decompose_fake}\ref{N:3}, we have
\begin{align}
|V(\UTone) \backslash\{i,j\}|+|V(\UTtwo)\backslash\{i,j\}|+|V(\UN)\backslash\{i,j\}| = v. \label{eq:v_sum_bound_fake}
\end{align}
By \prettyref{eq:mixbulb}, for any $a\in \calNTtwo(i)$, we have $ e(\calG(i,a)\cap (K_{21}\cup K_{22})) + e(\calG(i,a)\cap (K_{12}\cup K_{22})) \ge M$. 
An analogous statement holds for $a\in \calNTtwo(j)$. By \prettyref{eq:S1_T1_S2_T2},  $ \frac{1}{2} \left( e (K_{22}\cup K_{21}) +  e (K_{22}\cup K_{12})\right) =  2N - (v+k+1)$. Then, we have
\begin{align}
     |\calNTtwo(i)|+|\calNTtwo(j)| \le  \frac{2\left(2N-(v+2+k)\right) }{M} \triangleq \bTtwo . \label{eq:calNM_i_j}
\end{align}
It follows by~\prettyref{claim:U_decompose_fake}\ref{N:0}, \ref{N:5} and \prettyref{eq:calNM_i_j} that
\begin{align}
|V(\UTtwo)\backslash\{i,j\}| & \le 2(M+K) \left(|\calNTtwo(i)|+|\calNTtwo(j)| \right)
\le 2(M+K) \bTtwo  \label{eq:v_2_fake} \,, \\
w(\UTtwo) & \le 
\R^{|\calNTtwo(i)|+|\calNTtwo(j)|} =  \R^{\bTtwo}.  \label{eq:w_U_T_2_fake}
\end{align}
By~\prettyref{claim:U_decompose_fake}\ref{N:4}, \ref{N:5} and \prettyref{lmm:k_fake}, in view of $k\ge -1$, we have
\begin{align}
|V(\UN)\backslash\{i,j\}| & \le \left( |\calNN(i)|+ |\calNN(j)|\right)(M+K) -k-2 \le 4(k+2)(M+K) -1 \label{eq:v_N_fake} \,, \\
w(\UN) & \le \R^{|\calNN(i)|+|\calNN(j)|}\le \R^{4(k+2)} \,.  \label{eq:w_UN_fake}
\end{align}




Let $\calU(v,k)$ denote the set of all possible $U$ with $v+2$ vertices and excess $k$. Let $\calUTone(\ell)$ and $\calUTtwo(\ell)$ denote the set of 
all possible $\UTone$ and $\UTtwo$ with $\ell+2$ vertices, respectively. Let $\calUN(\ell,k)$ denote the set of 
all possible $\UN$ with $\ell+2$ vertices and excess $k$.

By \prettyref{eq:P_ij} and $|\rho|\le 1$, for any $v\le 2N-k-2$ and $k\ge -1$, 
we have 
 \begin{align}
     P_{ij}(v,k) 
     & \le  ~ \sum_{(S_1,T_1,S_2,T_2)\in \calW_{ij}(v,k)}   (\aut(S_1)\aut(T_1)\aut(S_2)\aut(T_2))^{\frac{1}{2}} \nonumber\\
     & \overset{(a)}{=} ~ \sum_{U\in \calU(v,k)} w(\UTone)w(\UTtwo)w(\UN) \nonumber\\
     &  \overset{(b)}{\le}  ~
      R^{4(k+2)+\bTtwo }
    \sum_{\ell=0}^{2(M+K)\bTtwo} \sum_{m=0}^{4(k+2)(M+K) -1}
     \PTone (v-\ell-m)  \left| \calUTtwo(\ell) \right|
     \left| \calUN(m, k) \right|,
     \label{eq:U_decompose_fake_bound}
 \end{align}
where 
\begin{align}
    \PTone(\ell)
    & = ~ \sum_{\UTone \in \calUTone(\ell)}  w(\UTone) \label{eq:P_T_1_fake};
\end{align}
$(a)$ holds by \prettyref{claim:U_decompose_fake}\ref{N:5}; 
 $(b)$ holds due to~\prettyref{eq:v_sum_bound_fake}--\prettyref{eq:w_UN_fake}.

First, we bound $|\calUTtwo(\ell)|$.
Recall that $\UTtwo=\UTtwo(i)\cup \UTtwo(j)$,
where 
$\UTtwo(i)$ and $\UTtwo(j)$ are two vertex-disjoint trees rooted at $i$ and $j$, respectively. 
Thus there are at most $\ell \Universal^{\ell}$  unlabeled, non-decorated $\UTtwo$ with $\ell$ edges.
Recall that each edge can be decorated in at most $11$ ways. Hence, 
\begin{align}
    |\calUTtwo(\ell) | \le ~ \binom{n}{\ell}\ell! \ell \Universal^{\ell} 11^{\ell} \le ~   \ell  (11\Universal n)^{\ell}   \,.  \label{eq:calU_T_2_fake}
\end{align}

Next, we bound $\calUN(\ell,k)$. By \prettyref{lmm:enum} and \prettyref{eq:Universal}, the total number of unlabeled non-decorated graphs $\UN$ with $\ell+2$ vertices and excess $k$ is bounded by 
\[
     \Universal^{\ell+1} {\binom{\binom{\ell+2}{2}}{k+1}}\le \Universal^{\ell+1} (\ell+2)^{2(k+1)}  \,.
\]
Recall that each edge can be decorated in at most $11$ ways, and $\UN$ contains vertices $i$ and $j$ and has excess $k$. Therefore,
\begin{align}
        |\calUN(\ell,k)|
        &  \le ~ \binom{n}{\ell} (\ell+2)!\Universal^{\ell+1} (\ell+2)^{2(k+1)}
        11^{\ell+2+k}  \nonumber \\
        & \le ~ n^{\ell}\Universal^{\ell+1} (\ell+2)^{2(k+2)}
        11^{\ell+2+k} \,.
         \label{eq:fake_enumeration_non_tree}
\end{align}

Finally, we provide the following lemma to bound $\PTone(\ell)$. In contrast to $\UTtwo$ and $\UN$, straightforward bounds on the enumeration and the weights will no longer suffice. Instead, the proof relies on showing that $\UTone$ is close to a
chandelier.
\begin{lemma}\label{lmm:fake_enumeration}
For $0 \le \ell \le v \le 2N-k-2$ and $ k\ge -1$,  we have
\begin{align}
    \PTone(\ell) 
    & \le~ n^{\ell} |\calT| 4^{L}  L^{ 2L \wedge (4K+2)} ( 6\Universal) ^{4K+4M-2}
    \label{eq:fake_enumeration_tree>0} \,.
\end{align}
\end{lemma}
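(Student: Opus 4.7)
The plan is to mirror the enumeration strategy of \prettyref{lmm:U_T_1_true}, now exploiting the additional rigidity provided by $i \neq j$. The linchpin is the fake-pair analogue of \prettyref{claim:U_T_1_structure}(i): every edge in $\UTone(i)$ has decoration exactly $\{\SS_1,\TT_1\}$, and every edge in $\UTone(j)$ has decoration exactly $\{\SS_2,\TT_2\}$. I would argue this by contradiction: if an edge $e \in \UTone(i)$ had $\SS_2 \in D_e$, then the unique $S_2$-path $P$ from the root $j$ to $e$ must cross from outside $\UT(i)$ to inside; by the vertex-disjointness of $\UT(i)$ and $\UT(j)$ together with $\UT(i)\cap\UN=\{i,j\}$ (\prettyref{claim:U_decompose_fake}(ii)--(iii)), the only possible entry point is the vertex $i$, and the remaining portion of $P$ inside $\UTone(i)$ would force enough $\ge 3$-decorated edges on some branch $\calG(i,a_0)$ to contradict the defining constraint \prettyref{eq:purebulb} for $\calNTone(i)$. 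Combined with $|D_e|\ge 2$, this pins down $D_e=\{\SS_1,\TT_1\}$; a symmetric argument handles $\UTone(j)$.

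Given this rigidity, $\UTone(i)$ is simultaneously a rooted subtree at $i$ of two chandeliers $S_1, T_1 \in \calT$, and each branch $\calG(i,a)$ with $a \in \calNTone(i)$ is a common initial segment of a wire+bulb branch of $S_1$ and of $T_1$: either a proper prefix of the wire (length $< M$), or a full $M$-wire followed by a rooted subtree of a common $K$-bulb. Since $\UTone(i)$ and $\UTone(j)$ are vertex-disjoint by \prettyref{claim:U_decompose_fake}(ii), they can be labeled independently, contributing the $n^{\ell}$ factor for the $\ell$ non-root vertices. For the unlabeled enumeration, I would follow the scheme of \prettyref{lmm:calU_T_1_bound}--\prettyref{lmm:decoupled_chandelier_tree} and \prettyref{lmm:merge_bound}: specify the ambient chandelier governing the overlap structure (contributing $|\calT|$), book-keep the branch assignments among $S_1,T_1,S_2,T_2$ and the wire-merging pattern (the factors $4^L$ and $L^{2L \wedge (4K+2)}$), and then enumerate the residual subtree shapes with admissible $2$-decorations on each edge, recalling that each branch has at most $2(K+M)-1$ edges by \prettyref{claim:U_decompose_fake}(i) (contributing $(6\Universal)^{4K+4M-2}$, with $\Universal$ bounding unlabeled rooted-tree counts via \prettyref{eq:Universal} and $6$ accounting for the two-element decorations on each edge).

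Finally, I would upper-bound $w(\UTone)$ by $\aut(\UTone)$: by the rigid decoration, each bulb entirely contained in $\UTone(i)$ belongs to both $S_1$ and $T_1$, so the two $\aut(\B)^{1/2}$ factors multiply to $\aut(\B)$, and the product over such bulbs is bounded by $\aut(\UTone(i))$ in the spirit of \prettyref{eq:autUT1}; the analogous bound holds for $\UTone(j)$, and vertex-disjointness yields $\aut(\UTone) = \aut(\UTone(i))\,\aut(\UTone(j))$. Dividing the labeled enumeration count by $\aut(\UTone)$ then produces the claimed estimate.

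The principal obstacle is the rigid-decoration claim in the first step: it is subtler than its true-pair counterpart because $S_2$ or $T_2$ could a priori intersect $\UT(i)$ by entering through the shared complete-graph vertex $i$, and ruling this out requires a careful use of \prettyref{eq:purebulb} together with the topology of the decomposition in \prettyref{claim:U_decompose_fake}. A secondary but nontrivial concern in the enumeration step is to extract only a single factor of $|\calT|$: the naive approach of independently choosing $S_1$ and $T_1$ would yield a wasteful $|\calT|^2$, but the rigid $\{\SS_1,\TT_1\}$ decoration forces $T_1$'s structure to be largely pinned down by $S_1$ on the common support $\UTone(i)$, which is what saves the exponent.
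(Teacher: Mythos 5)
The central premise of your argument—that every edge in $\UTone(i)$ has decoration exactly $\{\SS_1,\TT_1\}$—is false, and the contradiction you sketch to establish it does not go through. Trace it carefully: if an $S_2$-path from $j$ enters $\UTone(i)$ through $i$ and terminates at an edge $e$ lying on the \emph{wire} of some branch $\calG(i,a)$ at distance $d$ from $i$ with $d < M$, then only those $d < M$ edges of that branch become $\ge 3$-decorated. The defining constraint \prettyref{eq:purebulb} for $a \in \calNTone(i)$ only forbids having $\ge M$ such edges in $\calG(i,a)$, so there is no contradiction. What is true (and what the paper proves in \prettyref{claim:U_20}\ref{C:4}) is the weaker statement that every edge in the \emph{bulb} part of $\calG(i,a)$ has decoration exactly $\{\SS_1,\TT_1\}$, while wire edges satisfy only $D_e \supset \{\SS_1,\TT_1\}$ and may additionally carry $\SS_2$ or $\TT_2$. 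Indeed, \prettyref{claim:U_20}\ref{C:5} explicitly allows $(S_2)_i \cup (T_2)_i$ to intersect $\UTone(i)$ nontrivially along the wires.

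This gap is not cosmetic: the entire enumeration machinery in the paper—the decomposition $\UTone(i) = \Vprime \cup \Vdoubleprime$, the auxiliary family $\calF_{t,m,d}$, and \prettyref{lmm:diluted_chandelier_tree}—is designed precisely to account for $\Vdoubleprime$ and $\Wdoubleprime$, the subtrees of $\UTone(i)$ and $\UTone(j)$ carrying $\SS_2,\TT_2$ and $\SS_1,\TT_1$ decorations respectively. Your "residual subtree shapes" factor $(6\Universal)^{4K+4M-2}$ has no source if the decoration is rigid: with $\Vdoubleprime=\Wdoubleprime=\emptyset$, each $\calG(i,a)$ would be a full wire-plus-bulb already determined by $S_1$ up to isomorphism, and no such factor would appear. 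Conversely, once you admit that $(S_2)_i$ and $(T_2)_i$ can creep into $\UTone(i)$ along wires, you must enumerate their shapes (at most $2K+2M-2$ edges each, three two-element decorations per edge, and $\binom{t}{\ell}\ell!$ ways to graft onto $\le L$ wires)—this is exactly \prettyref{lmm:diluted_chandelier_tree} and the provenance of $L^{2L\wedge(4K+2)}(6\Universal)^{4K+4M-2}$. To repair the proof you would need to drop the rigid-decoration claim in favor of the bulb-only version and explicitly introduce and bound the $\Vdoubleprime,\Wdoubleprime$ contributions along the lines of \prettyref{claim:U_20}\ref{C:5}--\ref{C:6}.

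A secondary issue worth flagging: your automorphism bound $w(\UTone) \ge \aut(\UTone(i))\aut(\UTone(j))$ is stated in the right direction, but your justification ("the two $\aut(\B)^{1/2}$ factors multiply to $\aut(\B)$") silently uses that a bulb contained in $\UTone(i)$ comes from both $S_1$ and $T_1$. That does follow from the bulb-only rigidity (as the paper shows in \prettyref{claim:U_20}\ref{C:8}), but not from the argument you gave, which attempted to prove a stronger claim that is false.
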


Substituting  \prettyref{eq:calU_T_2_fake}, \prettyref{eq:fake_enumeration_non_tree}, and \prettyref{eq:fake_enumeration_tree>0}, 
into \prettyref{eq:U_decompose_fake_bound}
yields that 
for $v \le 2N-k-2$ and $k \ge -1$,
\begin{align*}
     P_{ij}(v,k)
     \le &   R^{4(k+2)+\bTtwo } n^{v}
     |\calT| 4^{L}  L^{ 2L \wedge (4K+2)} ( 6\Universal)^{4K+4M-2}  \left( 2N+1 \right)^{2k+5}  \\
   &  \sum_{\ell=0}^{2(M+K)\bTtwo}  (11\Universal )^{\ell}  \sum_{m=0}^{4(k+2)(M+K) -1}
     \Universal^{m+1}  11^{m+2+k} \\
     \le &   R^{4(k+2)+\bTtwo } n^{v}
     |\calT| 4^{L}  L^{ 2L \wedge (4K+2)} ( 6\Universal)^{4K+4M}    \\
     & \left( 11 (2N+1)^3 \right)^{k+2} \left( 11 \Universal\right)^{2(M+K)\bTtwo+4(k+2)(K+M)},
\end{align*}
where the first inequality holds because $\ell, m \le v \le 2N-1$; the last inequality holds due to $k \ge -1$
and 
\begin{align*}
\sum_{\ell=0}^{2(M+K)\bTtwo} \left( 11 \Universal \right)^\ell & \le  2\left( 11 \Universal \right)^{2(M+K)\bTtwo} ,\\
\sum_{m=0}^{4(k+2)(M+K) -1}
     \left(11 \Universal\right)^m  
     & \le \left(11 \Universal\right)^{4(k+2)(M+K)}.
\end{align*}

\begin{proof}[Proof of Claim~\ref{claim:U_decompose_fake}]

\begin{enumerate}[label=(\roman*)]

\item The proof is analogous to that of~\prettyref{claim:U_decompose_true}~\ref{T:1}, so we omit the details. 
There is  only a minor difference when showing 
any edge $e$ in $\calC(i,a)$
must belong to $ (S_1)_a \cup (S_2)_a \cup (T_1)_a \cup (T_2)_a$. 
If not, then 
there must exist $e \in E(C(i,a))$ but 
$e \not\in E((S_1)_a \cup (S_2)_a \cup (T_1)_a \cup (T_2)_a)$. Analogous to the proof  of~\prettyref{claim:U_decompose_true}~\ref{T:1}, we can show that if $e = (i,b)$ or $e $ belongs to $(S_1)_b \cup (S_2)_b \cup (T_1)_b \cup (T_2)_b$  for any $b\neq a$, $\calG(i,a)$ contains a cycle through $i$, contradicting the definition of $a \in \calNT$. Hence, $e\not \in E(S_1\cup T_1 \cup (S_2)_i \cup (T_2)_i)$, which further implies that 
$e \in E(S_2 \backslash (S_2)_i) \cup E(T_2 \backslash (T_2)_i)$.
It follows that $\calC(i,a)$ must contain $j$, contradicting the definition of $a \in \calNT$.

    \item Note that for distinct vertices $a, b \in \calNT(i)$, 
    $\calG(i,a)$ and  $\calG(i,b)$ only share a common vertex $i$, for otherwise $\calG(i,a)$ would contain a cycle through $i$. Therefore, $\UT(i)$ is a tree rooted at $i$, where $\UTone(i)$ and $\UTtwo(i)$ only share one common $i$. 
    The same conclusions hold for $G(j,a)$, $\UT(j)$, $\UTone(j)$ and $\UTtwo(j)$. 
    Moreover, $\UT(i)$ and $\UT(j)$ are vertex-disjoint; otherwise $\calG(i,a)$ would contain $j$ for some $a \in \calNT(i)$.

    \item 
    By~\ref{N:1},
    it follows that $\UTone$ and $\UTtwo$ only share two isolated common vertices $i$ and $j$. 
    Next, we show $\UN$ and $\UTone$ (resp.~$\UTtwo$) only share two isolated common vertices $i$ and $j$. We first show $\UN$ contains both $i$ and $j$. Since $U$ is connected, there exists a path $P$ from $i$ to $j$ in $U$. Such a path $P$ must be edge-disjoint from $\UT(i)$ and $\UT(j)$; otherwise, $\calG(i,a)$ 
    would contain $j$  for some $a \in \calNT(i)$ or $G(j,a)$ would contain $i$ for some $a \in \calNT(j)$. Therefore, the path $P$
    must be in $\UN$ and hence $\UN$ contains both $i$ and $j$. Then, we show that $\UN$ and $\UT(i)$ do not share any vertex other than $i$. Suppose not and let $c \neq i$ denote a common vertex of $\UN$ and $\UT(i)$. Then $c$ must belong to $\calC(i,a)$ for some $a \in \calNT(i)$. Also, since $\UN=U\backslash (\UT(i) \cup \UT(j))$, $c$ must be incident to an edge $e$ not in $\UT(i)$. However, by definition of $\calC(i,a)$, $e$ belongs to $\calC(i,a)$ and hence $\UT(i)$, which is a contradiction. 
    Analogously, we can show that $\UN$ and $\UT(j)$ do not share any vertex other than $j$. Since $\UTone=\UTone(i) \cup \UTone(j)$ and $\UTtwo=\UTtwo(i) \cup \UTtwo(j)$, our claim follows.  
  
  \item By~\ref{N:3} and the  connectivity of $U$, $\UN$ must be connected. Moreover, we have that 
   \begin{align*}
    v(U) & = v(\UT(i))+v(\UT(j))+v(\UN) - 2  \\
    e(U) & = e(\UT(i))+e(\UT(j))+e(\UN)\, .
   \end{align*}
   Combining the above with the facts that $\UT(i)$ and $\UT(j)$ are trees, we get that 
   $
   e(\UN)- v(\UN) =  e(U)-  v(U) =k.
   $

 
 Since $i$ has at least $L$ neighbors in $S_1\cup T_1$, by \prettyref{eq:calN_T_i_fake} and \prettyref{eq:calN_T_j_fake}, we have 
  $ |\calNT(i)| +  |\calNN(i)| \ge L$. Similarly, we can show that  $ |\calNT(j)| +  |\calNN(j)| \ge L$. 
  Then, we have
   \begin{align*}
       |\calNT(i)|+ |\calNT(i)| \ge 2 L-  |\calNN(i)|-  |\calNN(j)| \,.
   \end{align*}
   By~\ref{N:0}, we have
   \begin{align*}
       e(\UT(i))+e(\UT(j))\ge (  |\calNT(i)|+ |\calNT(i)| )(M+K) \,.
   \end{align*}
   By~\ref{N:1} and \ref{N:3}, we know that $\UN$, $\UT(i)$ and $\UT(j)$ are pairwise edge-disjoint. By \prettyref{eq:eU2N}, $e(U)\le 2N = 2L(M+K)$ and hence
   \begin{align*}
       e(\UN) \le (|\calNN(i)|+ |\calNN(j)|)(M+K) \,.
   \end{align*}

   \item 

    Fix a bulb $\B$ from $S_1$, $\B \subset (S_1)_a$ for some $a$ that is a neighbor of $i$ in $S_1$. Then, by \prettyref{eq:calN_T_i_fake} and \prettyref{eq:calN_N_i_fake}, $a\in \calNT(i) \cup \calNN(i)$. By~\ref{N:0}, for any $a \in \calNT(i)$, $(S_1)_a \subset \calC(i,a) \subset \calG(i,a)$. It follows from \prettyref{eq:U_i_decompose_fake} that $\B \subset \UTone(i)$ (resp.~$\UTtwo(i)$) if and only if $a \in \calNTone(i)$ (resp.~$ \calNTtwo(i)$). Moreover, $\B \subset U\backslash \UT(i)$, if and only if $a \in \calNN(i)$. Therefore,
   $\calB$ is contained in 
   exactly one of $\UTone(i)$, $\UTtwo(i)$,
   and $U\backslash \UT(i)$. 
   
    Recall from~\prettyref{eq:aut_H} that $\aut(S_1) = \prod_{\calB\in \calK(S_1)}\aut(\calB)$, given $S_1$ is a chandelier with $L$ non-isomorphic bulbs. Analogous arguments hold for $T_1,S_2,T_2$.
   Therefore, in view of~\prettyref{eq:w_S_U'} and \prettyref{eq:w_U'_fake}, it follows that
    \[
    (\aut(S_1)\aut(T_1)\aut(S_2)\aut(T_2))^{\frac{1}{2}} = w(\UTone)w(\UTtwo)w(\UN) \,.
    \]

   Note that each bulb $\calB \in \calK(S_1)$ must be contained in distinct $(S_1)_a$ for $a$ that is a neighbor of $i$ in $S_1$. There are at most $|\calNTtwo(i)|$ (resp.~$|\calNN(i)|$) bulbs from $S_1$ contained in $\UTtwo(i)$ (resp.~$U\backslash \UT(i)$). Then, given $\aut(\calB)\le \R$ for any $\calB \in \calK(S_1)$, by \prettyref{eq:w_S_U'}, we have $w_{S_1}(\UTtwo(i))\le \R^{\frac{1}{2}|\calNTtwo(i)|}$ (resp.~$w_{S_1}(U\backslash \UT(i))\le \R^{\frac{1}{2}|\calNN(i)|}$). Analogous arguments hold for $T_1,S_2,T_2$. Hence, it follows that 
   \[
   w(\UTtwo) \le \R^{|\calNTtwo(i)|+|\calNTtwo(j)|}\,, \quad w(\UN) \le \R^{|\calNN(i)|+ |\calNN(j)|} \,.
   \]

\end{enumerate}

\end{proof}

\subsubsection{Proof of \prettyref{lmm:k_fake}}\label{sec:k_fake}

By \prettyref{claim:U_decompose_fake}\ref{N:4}, the excesses of $U$ and $\UN$ are both $k$. It suffices to prove that 
\begin{align}
  e(\UN) - v(\UN)+ 2 \ge ~ \frac{1}{4} (|\calNN (i)|+|\calNN (j)|)\, . \label{eq:target_1}
\end{align}

Define 
\begin{align*}
    \calNNone(i) & = ~ \{a \in \calNN(i): \text{ $ \calC(i,a) $ does not contain $i$ or $j$}\} \, ,\\
    \calNNone(j) & = ~\{a \in \calNN(j): \text{ $ \calC(j,a) $ does not contain $i$ or $j$}\} \, .
\end{align*}
Let 
\[
\UNone(i) = ~ \bigcup_{a\in \calNNone(i) } \calG(i,a) \,, \quad 
\UNone(j) = ~ \bigcup_{a\in \calNNone(j) } \calG(j,a) \,. 
\]
Then we have 
\begin{align}
    e(\UNone(i))- v(\UNone(i))+1 & \ge ~ |\calNNone(i)|, \label{eq:E_U_1_i}\\ 
    e(\UNone(j)) - v(\UNone(j))|+1 & \ge ~  |\calNNone(j)|  \, .\label{eq:E_U_1_j}
\end{align} 
Indeed, for any $a \neq b \in  \calNNone(i)$, $\calG(i,a)$ and $\calG(i,b)$ are edge disjoint and share the only one common vertex $i$; otherwise $\calC(i,a)$ would contain $i$.
Furthermore, 
for each $a \in \calNNone(i)$, by definition of $ \calNNone(i)$, $\calC(i,a)$ does not contain $j$, which implies that $\calG(i,a) = \calC(i,a) \cup (i, a)$ does not either. 
Since $a \notin \calNT(i)$ (recall the definition of $\calNT(i)$ in \prettyref{eq:calN_T_i_fake}), 
each $\calG(i,a)$ has at least one cycle and hence $e(\calG(i,a)) \geq v(\calG(i,a))$. 
Therefore,
\begin{align*}
e(\UNone(i))
=\sum_{a \in \calNNone(i)} e(G(i,a))
\ge \sum_{a \in \calNNone(i)} v(G(i,a))
= v(\UNone(i))
+|\calNNone(i)|-1,
\end{align*}
where the last equality holds because the root $i$ has been over-counted by 
$|\calNNone(i)|-1$ times.
The proof of \prettyref{eq:E_U_1_j} follows similarly.

By definition of $\calNNone(i)$, for any $a \in \calNNone(i)$, $\calG(i,a)$ must not contain $j$;  for any $a \in \calNNone(j)$, $\calG(j,a)$ must not contain $i$. 
It follows that  $\UNone(i)$ and $ \UNone(j)$ are vertex-disjoint. Define
\begin{align}
    \calNNtwo(i)  = ~ \calNN(i)\backslash \calNNone(i) \,, \quad 
    \calNNtwo(j)  = ~ \calNN(j)\backslash \calNNone(j)  \, .  \label{eq:calN_N_2_i_j}
\end{align}
Let 
\begin{align} 
\UNtwo = \UN \backslash (\UNone(i) \cup \UNone(j))  \, . \label{eq:UN_2}
\end{align}
Since $\UNone(i)$ and $ \UNone(j)$ are vertex-disjoint, we have
$e(\UNtwo) = e(\UN)- e(\UNone(i))-e(\UNone(j))$, and $ v(\UNtwo) = v(\UN)- v(\UNone(i)-v(\UNone(j)) + 2$. 
By \prettyref{eq:target_1}, \prettyref{eq:E_U_1_i} and \prettyref{eq:E_U_1_j}, 
it suffices to show
\begin{align}
    e(\UNtwo) - v(\UNtwo) +  2 \ge \frac{1}{4}\left( |\calNNtwo(i)|+ |\calNNtwo(j)| \right) \, . \label{eq:target_2}
\end{align} 

To this end, first note that $\UNtwo$ is a connected graph. Indeed, by  \prettyref{claim:U_decompose_fake}, $\UN$ is a connected graph. Since $i,j\in V(\UN)$, there exists a path, say $i,u,\ldots,v,j$, where $u\in \calNN(i)$ and $v\in \calNN(j)$. In fact, by definition of $\calNNtwo(i),\calNNtwo(j)$, we have 
$u\in \calNNtwo(i)$ and $v\in \calNNtwo(j)$. Therefore this path is contained in $\UNtwo$. Since there is a path between any vertex and $i$ or $j$, we conclude that $\UNtwo$ is connected.

Next, we define a new graph $\widehat{U}_{\sfN_2}$ by adding a new vertex $d \not \in [n]$ and connect it to $i$ and $j$. 
Then, we have
\[
e(\widehat{U}_{\sfN_2})- v(\widehat{U}_{\sfN_2}) + 1=  
e(\UNtwo)- v(\UNtwo) + 2 \, .
\]
So to show \prettyref{eq:target_2}, it suffices to prove
\begin{align}
    e(\widehat{U}_{\sfN_2})- v(\widehat{U}_{\sfN_2}) + 1 \ge \frac{1}{4} \left(|\calNNtwo(i)|+|\calNNtwo(j)|\right)\,. \label{eq:target_3}
\end{align}
To show this, it is helpful to recall the following result (cf.~e.g.~\cite[Theorem 1.9.5]{diestel}): For any connected graph with $v$ vertices and $e$ edges, there exist a total of $e-v+1$ cycles (called a \emph{cycle basis}), such that any cycle can be expressed as a set symmetric difference of those in the cycle basis.

Since $\widehat{U}_{\sfN_2}$ is connected, it has a cycle basis of size $e(\widehat{U}_{\sfN_2})- v(\widehat{U}_{\sfN_2}) + 1 \equiv m$.
We claim that for each $a\in \calNNtwo(i)$, there exists a cycle in $\widehat{U}_{\sfN_2}$ that contains the edge $(i,a)$. Therefore, this cycle can be expressed as a set symmetric difference of cycles in the cycle basis and hence this edge $(i,a)$ is contained in one of the $m$ cycles in the cycle basis of $\widehat{U}_{\sfN_2}$. Since each cycle contains at most two edges incident to $i$, we have $2m \geq |\calNNtwo(i)|$.
Similarly, $2m \geq |\calNNtwo(j)|$.
This completes the proof of \prettyref{eq:target_3} and hence the lemma.

It remains to justify the above claim. 
Indeed, by \prettyref{eq:calN_N_2_i_j}, we have 
\begin{align*}
    \calNNtwo(i) & = ~  \calNN(i)\backslash \calNNone(i) =~ \{a \in \calNN(i): \text{ $ \calC(i,a) $ contains $i$ or $j$}\}  \, .
\end{align*}
The claim follows by considering the following two cases:
\begin{itemize}
    \item Suppose that $\calC(i,a)$ contains $i$. Then there exists a path from $a$ to $i$ in $\calC(i,a)\subset \UNtwo$, hence a cycle in $\widehat{U}_{\sfN_2}$ containing $(i,a)$;

    \item Suppose that $\calC(i,a)$ contains $j$ \emph{but not} $i$. By the same reasoning, there exists a path $(a, u, \ldots v, j)$ in $\calC(i,a)\subset \UNtwo$, such that $i$ does not lie on this path. Using the newly added vertex $d$, we have found a a cycle $(a,u,\ldots,v,j,d,i,a)$ in $\widehat{U}_{\sfN_2}$  that contains $(i,a)$.
\end{itemize}

\subsubsection{Proof of \prettyref{lmm:fake_enumeration}}


To enumerate $\UTone(i) \cup \UTone(j)$, the key is to recognize that, by \prettyref{eq:purebulb}, 
\begin{align}
    \calNTone(i)
= \calNT(i) \backslash \calNTtwo(i)
& = \left\{ a \in \calNT(i): \left|  \{e \in E(\calG(i,a))  : |D_e| \ge 3\} \right| < M \right\},
\end{align}
so that the overlap between $S_1\cup T_1$ and $S_2\cup T_2$ in $\calUTone$ only occurs on the wire part, not the bulb part. 
See~\prettyref{fig:fake_pair_decomp} 
for an illustration of $\UTone(i)$ and $\UTone(j)$. 

We first define a special family $\calF$ of \emph{unlabeled} decorated graphs.  
Fix integers $t,m,d$.
Fix a decorated $t$-chandelier $U'$ (recall \prettyref{def:chandelier}),
such that each edge of $U'$ is decorated by both $\SS_1$ and $\TT_1$. 
Let $U''$ denote a decorated tree rooted at $i$ satisfying the following conditions:
\begin{enumerate}
\item $U''$ has at most $m$ edges and the degree of $i$ in $U''$ is at most $d$;
    \item Each edge of $U''$ is decorated by $\SS_2$ or $\TT_2$, or both;
    \item All edges in $U' \cap U''$ must be on the wires of $U'$.
\end{enumerate}
For a fixed $U'$, let $\calF_{t,m,d}(U')$ denote the set of all possible $U'\cup U''$ as decorated unlabeled graphs. 
\begin{lemma}\label{lmm:diluted_chandelier_tree}
\begin{align}
\left| \calF_{t,m,d} (U')\right| \le  \sum_{e=0}^m (6\Universal)^e \sum_{\ell=0}^d \binom{t}{\ell} \ell! \le (6\Universal)^{m+1}  t^d.
\label{eq:calF_bound}
\end{align}
\end{lemma}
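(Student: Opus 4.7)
The plan is to establish the bound by constructing an explicit combinatorial encoding of each element of $\calF_{t,m,d}(U')$ and counting the number of such encodings.

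The first step is the key structural observation: since every shared edge must lie on a wire of $U'$, and the wires of $U'$ are edge-disjoint paths emanating from the common root $i$, the intersection $U' \cap U''$ is necessarily a disjoint union of initial segments of wires of $U'$. Consequently, each root-branch of $U''$ either begins with a shared prefix lying along a single wire of $U'$ (of some length between $1$ and $M$) before branching off with new edges, or is entirely edge-disjoint from $U'$.

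With this structure in hand, I would encode each unlabeled decorated graph $W = U' \cup U'' \in \calF_{t,m,d}(U')$ by the following tuple of combinatorial choices:
(i) the number of edges $e = e(U'') \in \{0, 1, \ldots, m\}$;
(ii) the unlabeled rooted tree shape of $U''$, of which there are at most $\Universal^e$ by~\prettyref{eq:Universal};
(iii) a decoration for each edge of $U''$ chosen from $\{\{\SS_2\}, \{\TT_2\}, \{\SS_2, \TT_2\}\}$, yielding $3^e$ choices;
(iv) a ``shared-with-$U'$'' versus ``not-shared'' flag for each edge of $U''$, giving $2^e$ choices (over-counting structurally infeasible markings); and
(v) for each root-branch whose initial edge is flagged as shared, an assignment to a wire of $U'$, with distinct shared branches going to distinct wires---this contributes at most $\binom{t}{\ell}\ell!$ choices, where $\ell \le d$ denotes the root degree of $U''$ (over-counting by pretending all $\ell$ branches are aligned). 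Multiplying and summing yields
\begin{equation*}
|\calF_{t,m,d}(U')| \le \sum_{e=0}^m \sum_{\ell=0}^d \Universal^e \cdot 3^e \cdot 2^e \cdot \binom{t}{\ell}\ell! = \sum_{e=0}^m (6\Universal)^e \sum_{\ell=0}^d \binom{t}{\ell}\ell!,
\end{equation*}
which is the first inequality in~\prettyref{eq:calF_bound}. The second inequality then follows from the geometric sum $\sum_{e=0}^m (6\Universal)^e \le (6\Universal)^{m+1}$ and the bound $\sum_{\ell=0}^d \binom{t}{\ell}\ell! \lesssim t^d$, after absorbing absolute constants into the exponent $m+1$.

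The main obstacle will be verifying that the encoding faithfully represents each $W$; in particular, showing that the shape of $U''$ together with the shared-edge markers and wire-assignment determine $W$ uniquely (up to the acknowledged over-counting). This will follow by exploiting that edges purely in $U''$ bear only $\{\SS_2, \TT_2\}$-type decorations, whereas edges of $U' \cap U''$ also inherit the $\{\SS_1, \TT_1\}$ decorations of $U'$, so shared and non-shared edges are distinguishable in $W$, allowing $U''$ and its embedding into the fixed $U'$ to be recovered from $W$.
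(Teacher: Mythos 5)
Your proposal is correct and follows essentially the same route as the paper's proof: both enumerate the number $e$ of edges of $U''$, the unlabeled tree shape of $U''$ (at most $\Universal^e$ choices), the edge decorations from $\{\{\SS_2\},\{\TT_2\},\{\SS_2,\TT_2\}\}$ ($3^e$ choices), the subset $V=U'\cap U''$ of shared edges ($2^e$ choices), and finally the alignment of the shared initial segments to the $t$ wires of $U'$ ($\sum_{\ell=0}^d\binom{t}{\ell}\ell!$ choices), giving the identical product $(6\Universal)^e\binom{t}{\ell}\ell!$. The one caveat, present in both your write-up and the paper's, is that the structural claim that $U'\cap U''$ consists of \emph{initial} segments of wires does not follow from condition (3) alone as you assert -- it holds in the intended application because $U'\cup U''$ is a tree there -- but this does not affect the validity of the over-count.
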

\begin{proof}
Recall that $U'$ is fixed. To enumerate $U'\cup U''$ up to isomorphisms, it suffices to enumerate the unlabeled version of $U''$ and its subgraph $V$, together with the embedding of $V$ as a subgraph of $U'$. 

First, suppose that $U''$ has $0 \le e \le m $ edges. Since $U''$ is a tree, 
its isomorphism class has at most $\Universal^{e}$ possibilities.
Moreover, the decoration of each edge has at most $3$ possibilities. Hence, there are at most $ (3\Universal)^{e}$ possibilities of $U''$. 
Next, choose $V=U'\cap U''$ to be one of $2^e$ different subgraphs of $U''$. 
By assumption, $V$ is be a tree rooted at $i$ with $\ell$ incident paths (of possibly different lengths) for some $0 \le \ell \le d$, and these paths are subgraphs of the $t$ wires in $U'$.
Therefore, the number of ways to embed $V$ in $U'$ is at most $\sum_{\ell=0}^d \binom{t}{\ell} \ell!$. 
Combining these factors together yields the first inequality in~\prettyref{eq:calF_bound}.
The second inequality holds due to $\sum_{e=0}^m (6\Universal)^e \le 2 (6\Universal)^m$ and 
$\sum_{\ell=0}^d \binom{t}{\ell} \ell! \le 2 t^d$.
\end{proof}

Next, we apply~\prettyref{lmm:diluted_chandelier_tree} to count $\UTone(i) \cup \UTone(j)$. 
Let $\Vprime$ (resp.~$\Vdoubleprime$) denote the subgraph of $\UTone(i)$ induced by all edges $e$ with $\{\SS_1, \TT_1\} \subset D_e$ (resp.~$\{\SS_2, \TT_2\} \cap D_e \neq \emptyset$) with decoration set $D'_e=\{\SS_1,\TT_1\}$ (resp.\ $D''_e=\{\SS_2, \TT_2\} \cap D_e$).
 Note that $\Vprime$ and $\Vdoubleprime$ need not be edge-disjoint; 
 nevertheless, we have 
 \[
 \UTone(i) = \Vprime \cup \Vdoubleprime
 \]
 because $|D_e|\geq 2$ for each edge $e$ so if 
 $\{\SS_2, \TT_2\} \cap D_e =\emptyset$, then we must have
 $D_e = \{\SS_1, \TT_1\}$.
 Similarly, let $\Wprime$ (resp.\ $\Wdoubleprime$) denote the subgraph of $\UTone(j)$ induced by all edges $e$ with
 $\{\SS_2, \TT_2\} \subset D_e$ (resp.\ $\{\SS_1, \TT_1\} \cap D_e \neq \emptyset$) with decoration
 $D'_e=\{\SS_2,\TT_2\}$
 (resp.\ $D''_e=\{\SS_1, \TT_1\} \cap D_e $),
 so that 
 $\UTone(j)=\Wprime\cup \Wdoubleprime$. 
%
 
Then we  have the following claim. 
\begin{claim}\label{claim:U_20}
\begin{enumerate}[label=(\roman*)]
    \item \label{C:4} $\Vprime$ (resp.\ $\Wprime$) is a  chandelier rooted at $i$ (resp.\ $j$) 
   

    \item \label{C:7} The total number of non-isomorphic bulbs in $\Vprime$ and $\Wprime$ is at most $L$. 
    
    \item \label{C:5} $\Vdoubleprime$ (resp.\ $\Wdoubleprime$) is either empty or a 
    tree rooted at $i$  (resp.\ $j$) with at most $2K+2M-2$ edges,
    and the degree of the root is at most $L \wedge (2K+1)$.

    \item \label{C:6} All edges in $\Vprime \cap \Vdoubleprime$ (resp.\ $\Wprime \cap \Wdoubleprime$) must be on the wires of $\Vprime$ (resp.\ $\Wprime$).
    
    \item \label{C:8} 
    $w(\UTone) \ge \aut(\UTone(i) )\aut(\UTone(j) )$, where $w(\UTone)$ is defined in \prettyref{eq:w_UL}.
    
\end{enumerate}
 \end{claim}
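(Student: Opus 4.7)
The plan is to establish all five assertions via a branch-by-branch analysis of $\UTone(i) = \bigcup_{a \in \calNTone(i)} \calG(i,a)$ from \prettyref{claim:U_decompose_fake}, mirroring the true-pair analysis of \prettyref{claim:U_T_1_structure}. For each $a \in \calNTone(i)$, the key leverage is that $\calG(i,a)$ is a tree with at most $2(M+K)-2$ edges and, by the definition of $\calNTone(i)$, strictly fewer than $M$ edges of decoration size $\geq 3$; the analogous statement holds for $\UTone(j)$.

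I would first handle (i) and (iv) together. For (iv), suppose $e$ lies in $\Vprime \cap \Vdoubleprime$, so $|D_e| \geq 3$. Tracing the unique path from $i$ to $e$ in the tree $\calG(i,a)$ that contains $e$, I would argue that the $\{\SS_2, \TT_2\}$ component of the decoration must propagate back toward $i$: because $S_2$ and $T_2$ are connected and rooted at $j$, their intersections with $\calG(i,a)$ are subtrees that must include the path from $i$ down to any $S_2$- or $T_2$-edge. If $e$ lay on a bulb of the chandelier $\Vprime$, then this propagation would force all $M$ edges of the wire leading to that bulb to be at least $3$-decorated, contradicting $a \in \calNTone(i)$. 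Hence every such $e$ must lie on the wires of $\Vprime$, giving (iv). The same analysis shows $\Vprime \cap \calG(i,a)$ is either empty or a wire-bulb branch whose bulb is shared by $S_1$ and $T_1$; aggregating over $\calNTone(i)$ yields (i). The arguments for $\Wprime$ are symmetric, with the roles of $\{\SS_1,\TT_1\}$ and $\{\SS_2,\TT_2\}$ interchanged.

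Parts (ii) and (iii) then follow by bookkeeping. For (ii), each bulb of $\Vprime$ lies in $\calK(S_1) \cap \calK(T_1)$ and each bulb of $\Wprime$ lies in $\calK(S_2) \cap \calK(T_2)$; since $S_1, T_1, S_2, T_2$ are $(L,M,K,R)$-chandeliers with bulbs drawn from $\calJ$, a pigeonhole over the $4L$ branch slots combined with the sharing constraints yields at most $L$ non-isomorphic bulbs across $\Vprime$ and $\Wprime$. For (iii), $\Vdoubleprime \subset \UTone(i)$ is a subtree rooted at $i$; the bound $2(M+K)-2$ follows from the edge count of $\calG(i,a)$ in \prettyref{claim:U_decompose_fake}\ref{N:0}, and the root-degree bound $L \wedge (2K+1)$ at $i$ comes from counting neighbors of $i$ via edges of $S_2$ or $T_2$ incident to $i$---at most $L$ if $i$ is the root of $S_2$ or $T_2$, and at most $2K+1$ if $i$ sits inside a $K$-bulb of $S_2$ or $T_2$.

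Finally, for (v) I would show $\aut(\UTone(i)) \leq w_{S_1}(\UTone(i)) w_{T_1}(\UTone(i))$. By (i) and (iv), the only sources of non-trivial symmetry in $\UTone(i)$ are the bulbs of the chandelier subgraph $\Vprime$; each such bulb $\B$ is shared between $S_1$ and $T_1$, so it contributes $\aut(\B)^{1/2}$ to each of $w_{S_1}(\UTone(i))$ and $w_{T_1}(\UTone(i))$, totalling $\aut(\B)$. Any additional structure from $\Vdoubleprime$ only rigidifies $\UTone(i)$ further and can only decrease $\aut(\UTone(i))$, so the inequality holds. Combining with the symmetric statement for $\UTone(j)$ and the definition \prettyref{eq:w_UL} proves (v). The main obstacle I anticipate is the fine structural classification underlying (i), specifically ruling out pathological overlap patterns within a single branch $\calG(i,a)$ when $(S_1)_a, (T_1)_a$ diverge or when the $\{\SS_2,\TT_2\}$-decorated part intrudes beyond the wire into a bulb---this is the fake-pair counterpart of \prettyref{claim:U_T_1_structure}\ref{UT:1}. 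Once that branch-level dichotomy is in place, (ii), (iii), and (v) reduce to straightforward combinatorial accounting.
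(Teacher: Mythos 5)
Your high-level approach—branch-by-branch analysis of $\UTone(i) = \bigcup_{a\in\calNTone(i)}\calG(i,a)$, exploiting path uniqueness in $\calG(i,a)$ and the ``fewer than $M$ edges with $|D_e|\ge3$'' constraint—is the right one and matches the paper. However, there are two genuine gaps.

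\textbf{Gap in (i)/(iv).} Your propagation argument is circular: you suppose $e$ lies ``on a bulb of the chandelier $\Vprime$'' to derive (iv), but establishing that $\Vprime$ is a chandelier with well-defined wires and bulbs \emph{is} part~(i), which you then say follows by ``the same analysis.'' To break the circularity, the paper works instead with the $S_1$-branch $\calG(i,a)\cap S_1 = P \cup \B$ (an $M$-wire $P$ and $K$-bulb $\B$), whose structure is known a priori, and shows $\B\subset S_1\cap T_1$ and $\B \cap (S_2\cup T_2)=\emptyset$. More importantly, you skip a crucial step. With the $S_1$-bulb $\B$ in hand, a single $e\in\B\cap S_2$ only forces $P\subset S_1\cap S_2$, which is $2$-decorated—no contradiction yet. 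The paper must invoke the chandelier structure of $S_2$ rooted at $j$: since the $j$-to-$b$ path has length $\ge M+1$, the subtree $(S_2)_b$ has at most $K-1$ edges, so some $e'\in\B\setminus S_2$ exists; then $e'\in T_1\cup T_2$, and tracing $e'$ back forces $P$ into $T_1$ or $T_2$, giving $P$ a third decoration. Your one-shot propagation argument cannot reach the $3$-decoration threshold without this second witness $e'$.

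\textbf{Gap in (v).} You aim to prove $\aut(\UTone(i))\le w_{S_1}(\UTone(i))\, w_{T_1}(\UTone(i))$ (matching the claim's stated $\ge$ between $w(\UTone)$ and $\aut(\UTone(i))\aut(\UTone(j))$). But your justification, ``the only sources of non-trivial symmetry in $\UTone(i)$ are the bulbs of $\Vprime$,'' is false: $\Vdoubleprime$ can itself contribute automorphisms (any subtree of $\Vdoubleprime$ hanging off a wire vertex may have non-trivial rooted automorphisms, and none of these are priced into $w(\UTone(i)) = \prod_{\B\in\calK(\Vprime)}\aut(\B)$). In fact the correct and needed inequality runs the other way: $\aut(\UTone(i)) \ge \prod_{\B\in\calK(\Vprime)}\aut(\B) = w(\UTone(i))$, which the paper proves by extending each bulb automorphism by the identity elsewhere. (The inequality sign in the claim statement appears to be a typo; the application in the enumeration of $\PTone(\ell)$ uses the bound $w(\UTone)\le\aut$.) As stated, your argument has a fatal gap here.

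On the remaining parts: your (ii) via ``pigeonhole over the $4L$ branch slots'' is vague; the clean observation is that since $S_1,S_2\cong H$ and $T_1,T_2\cong I$, every bulb appearing in $\Vprime$ or $\Wprime$ is isomorphic to a rooted tree in $\calK(H)\cap\calK(I)$, a set of size at most $L$. And in (iii), the $L$ in the degree bound $L\wedge(2K+1)$ comes from $|\calNTone(i)|\le L$ (the root $i$ can only have children in $\Vdoubleprime$ through the branches $\calG(i,a)$), not from $i$ being the root of $S_2$ or $T_2$—in fact $i\ne j$ so $i$ is never the root of $S_2$ or $T_2$.
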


Now, we are ready to enumerate $\UTone(i) \cup \UTone(j)$. 
First, we enumerate the bulbs in the chandeliers $\Vprime$ and $\Wprime$. Note that there are at most a total of $L$ non-isomorphic bulbs, and each of them can be in $\Vprime$ or $\Wprime$ or both or none. 
Thus there are at most $ \binom{|\calJ|}{L} 4^ L = |\calT| 4^ L$  different choices for the isomorphic classes of $\Vprime$ and $\Wprime$. 
Next, fix one such choice.
Then $\Vprime \cup \Vdoubleprime$ belongs to 
$\calF_{t,e,d}(\Vprime)$ for some fixed $t \le L$, $e \le 2K+2M-2$, and $d \le L \wedge (2K+1)$,
and similarly for $\Wprime \cup \Wdoubleprime$. Therefore, applying \prettyref{lmm:diluted_chandelier_tree}, we get that the total number of unlabeled decorated $\UTone(i) \cup \UTone(j) = 
(\Vprime \cup \Vdoubleprime)\cup 
(\Wprime \cup \Wdoubleprime)$ is at most
$$
|\calT| 4^L (6\Universal)^{4K+4M-2}  L^{2L \wedge (4K+2)} .
$$


    

    
Finally, since there are at most 
$ \frac{n^{\ell}}{\aut(\UTone (i))\aut(\UTone (j))}$ different vertex labelings of $\UTone(i)\cup \UTone(j)$ (excluding $i$ and $j$) and by \prettyref{claim:U_20}~\ref{C:8}, $w(\UTone) \ge \aut(\UTone(i) )\aut(\UTone(j) )$, we deduce that 
\begin{align*}
\PTone(\ell) & =  \sum_{\UTone \in \calUTone(\ell)}  w(\UTone) \\
& \le |\calT| 4^L (6\Universal)^{4K+4M-2}  L^{2L \wedge (4K+2)}
\frac{n^{\ell}}{\aut(\UTone (i))\aut(\UTone (j))}  \aut(\UTone(i) )\aut(\UTone(j) )\\
& \le n^{\ell}|\calT| 4^L (6\Universal)^{4K+4M-2}  L^{2L \wedge (4K+2)},
\end{align*}
yielding the desired \prettyref{eq:fake_enumeration_tree>0}.

We finish by proving \prettyref{claim:U_20}. It suffices to focus on $\Vprime$ and $\Vdoubleprime$, as
the proofs for $\Wprime$ and $\Wdoubleprime$ are entirely analogous.
\begin{enumerate}[label=(\roman*)]
    \item 
In terms of graphs, we have 
\begin{align}
\Vprime &= \UTone(i) \cap S_1 \cap T_1 = 
\bigcup_{a \in \calNTone(i)} \calG(i,a) \cap S_1 \cap T_1  \label{eq:U20p}  \\
\Vdoubleprime&=  \UTone(i) \cap (S_2 \cup  T_2) =
\bigcup_{a \in \calNTone(i)} \calG(i,a) \cap (S_2 \cup T_2).\label{eq:U20pp}
\end{align}

We show that for each $a \in \calNTone(i)$, 
$\calG(i,a) \cap S_1 \cap T_1$ consists of a wire followed by a bulb. Furthermore, this bulb 
is contained in
both $S_1$ and $T_1$ but not $S_2$ or $T_2$.

Fix $a \in \calNTone(i) \subset \calNT(i)$. By definition, 
$(i,a) \in S_1 \cup T_1$. Without loss of generality, assume that $(i,a) \in S_1$. Then $\calG(i,a) \cap S_1$ consists of an $M$-wire $P \equiv (i,a,\ldots,b)$ and a $K$-bulb $\calB \equiv (S_1)_b$ rooted at $b$.
Consider an edge $e$ in $\calB$. 
Then there is a path $P'$ in $\calB$ from the root $b$ to $e$. Since $\calG(i,a)$ is a tree, the path $(P,P')$ is the unique path from $i$ to $e$ in the union graph $S_1\cup T_1\cup S_2\cup T_2$.
 Next we show that the entire bulb $\calB$ is contained in $S_1 \cap T_1$. Suppose that this fails for some $e$ in $\calB$. Then $e \in S_2 \cup T_2$, say, $e\in S_2$.
Since $S_2$ is a tree rooted at $j$, there exists a path $P''$ from $j$ to $e$ in $S_2$. 
This path must contain $(i,a)$, for otherwise $\calC(i,a)$ would contain $j$.
By the uniqueness of the path $(P,P')$, we conclude that 
\[
P''=(j,\ldots,\underbrace{i,a,\ldots,b}_{P},P').
\]
In particular, $P \in S_1\cap S_2$ and $e\in (S_2)_b$.
Furthermore, by the chandelier structure of $S_2$, since the subpath of $P''$ from $j$ to $b$ has at least $M+1$ edges, so that $|(S_2)_b| \leq K-1 < K = |\calB|$. 
Thus there exists an edge $e' \in \calB$ not in $S_2$, so $e' \in T_1 \cup T_2$, say $e'\in T_2$. Applying the same argument to $e'$, the $M$-wire $P$ is also in $T_2$ and hence in 
$S_1 \cap S_2 \cap T_2$, contradicting the definition of $a\in \calNTone(i)$.

In all, in terms of decorations, we have shown that for any $a\in \calNTone(i)$, any edge $e$ in the bulb part of $\calG(i,a)$ satisfies $D_e=\{\SS_1,\TT_1\}$ and any $e$ in the wire part satisfies $D_e \supset \{\SS_1,\TT_1\}$.

\item Note that $S_1 \cong S_2 \cong H$ and $T_1 \cong T_2 \cong I$ for some chandelier $H,I \in \calT$. In the proof of~\ref{C:4}, we have shown that every bulb  in $\Vprime$ is decorated by exactly $\{\SS_1, \TT_1\}$
and hence isomorphic to some rooted tree in $\calJ(H) \cap \calJ(I)$. The proof is complete since $|\calJ(H)| = |\calJ(I)| = L$.


\item 
Suppose $\Vdoubleprime \neq \emptyset$. In view of \prettyref{eq:U20pp}, we aim to show that for each $a\in \calNTone(i)$, 
$\calG(i,a) \cap (S_2\cup T_2)$ is either empty or a tree. 
Consider a connected component $C$ of $\calG(i,a) \cap (S_2\cup T_2)$. It suffices to show that $C$ contains the edge $(i,a)$.
Indeed, consider an edge $e \neq (i,a)$ in $C$. 
Then $e \in S_2\cup T_2$. Then there exists a path from $j$ to $e$ in $C$. This path must contain $(i,a)$ for otherwise $\calG(i,a)$ contains $j$. This shows $(i,a) \in C$ as desired, and, in addition, $\calG(i,a) \cap (S_2\cup T_2) \subset (S_2)_i \cup (T_2)_i$ for every $a \in \calNT(i)$.
As a result, $\Vdoubleprime \subset (S_2)_i \cup (T_2)_i$.
Note that since $i\neq j$, the subtree $(S_2)_i$ contains at most $M-1+K$ edges. Thus $\Vdoubleprime$ has at most $2K+2M-2$ edges. 
Finally, the neighbors of $i$ in $\Vdoubleprime$ are in $\calNTone(i)$ and $|\calNTone(i)| \le L$. 
Also,
there are at most $2K$ neighbors of $i$ in $(S_2)_i \cup (T_2)_i \supset \Vdoubleprime$. Therefore, the degree of $i$ in $\Vdoubleprime$ is at most $L \wedge 2K$.

\item This follows from the proof of~\ref{C:4}, where we have shown that $D_e=\{\SS_1, \TT_1\}$ for all edges $e$ in each bulb.  

\item 
Define $w(\UTone(i)) \triangleq ~ w_{S_1}(\UTone(i)) w_{T_1}(\UTone(i))$ and $w(\UTone(j))\triangleq ~ w_{S_2}(\UTone(j)) w_{T_2}(\UTone(j))$. It suffices to show that $w(\UTone(i)) \ge \aut(\UTone(i))$ and $w(\UTone(j)) \ge \aut(\UTone(j))$. 
 
 First, we show that $\aut(\UTone(i)) \ge \prod_{\calB\in \calK(\Vprime)} \aut(\calB)$, where recall that $\Vprime$ is a chandelier rooted at $i$ by \prettyref{claim:U_20}~\ref{C:4}.
 For any bulb $\B \in \calK(\Vprime)$, any of its automorphism naturally induces an automorphism of the entire $\UTone(i)$ by only permuting the node labels of $\calB$ according
 to the automorphism and keeping all the other node labels unchanged. Hence $\aut(\UTone(i)) \ge \prod_{\calB\in \calK(\Vprime)} \aut(\calB)$.

 Next, we show that $w(\UTone(i))=\prod_{\calB\in \calK(\Vprime)} \aut(\calB)$. 
 Fix any bulb $\B$ from $S_1$ contained in $\UTone$. Then, $\B \subset (S_1)_a \subset \calG(i,a)$ for some neighbor $a$ of $i$ in $S_1$.
 Since $\calB$ in $\calG(i,a)$ must be decorated by $\{\SS_1, \TT_1\}$, $\B$ must be contained in $\Vprime$ and $(S_1)_a =(T_1)_a$. Hence, $\B \in \calK(T_1)$. 
 Similarly, we can show that for any  bulb $\B$ from $T_1$ contained in $\UTone$, 
 $\B \in \calK(S_1)$. Therefore,
 $$
 \{\calB \in \calK(S_1), \calB \subset \UTone \}
 = \{\calB \in \calK(T_1), \calB \subset \UTone \}
 =\{\calB \in \calK(\Vprime)\}.
 $$
It follows that 
\begin{align*}
w(\UTone(i)) &= \prod_{\calB \in \calK(S_1), \calB \subset \UTone} \aut^{1/2}(\calB)
 \prod_{\calB \in \calK(T_1), \calB \subset \UTone} \aut^{1/2}(\calB) \\
& =\prod_{\calB\in \calK(\Vprime)} \aut(\calB).
\end{align*}
Therefore, we have $\aut(\UTone(i)) \ge w(\UTone(i))$. Analogous argument holds for $\aut(\UTone(j)) \ge w(\UTone(j))$.




\end{enumerate}

\section{Approximated similarity scores by color coding} \label{sec:color_coding}

In this section, following~\cite{mao2021testing}, 
we provide a polynomial-time algorithm to approximately compute the similarity scores $\{\Phi_{ij}\}_{i,j\in [n]}$  in \prettyref{eq:Phi_ij} when $\calT$ is the family of chandeliers\footnote{In fact, the algorithm does not rely on the chandelier structure and works for any trees.} 
of size $O(\log n)$, using the idea of color coding~\cite{alon1995color,alon2008biomolecular}. 

\paragraph{Approximate signed rooted subgraph count}
Let $H$ be a rooted connected graph with $N+1$ vertices.
For each $i\in [n]$,
 we first approximately count the signed graphs rooted at $i$ that are isomorphic to $H$. Specifically, given a weighted adjacency matrix $M$ on $[n]$, we generate a random coloring $\mu: [n] \to [N+1]$ that assigns a color to each vertex of $M$ from the color set $[N+1]$ independently and uniformly at random. 
Given any $V\subset[n]$, let $\chi_{\mu} (V)$ indicate that $\mu(V)$ is colorful, i.e., $\mu(x)\neq \mu(y)$ for any distinct $x,y \in V$. In particular, if $|V|=N+1$, then $\chi_\mu (V)=1$ with probability 
\begin{equation}
r \triangleq \frac{(N+1)!}{(N+1)^{N+1}}.
\label{eq:r}
\end{equation} 
Define
  \begin{align}
        X_{i,H} (M,\mu)
        & \triangleq  \sum_{S(i)\cong H} \chi_{\mu}(V(S)) \prod_{(u,v)\in E(S)} M_{uv} \, .\label{eq:X_H_M}
    \end{align}
Then $\expect{  X_{i,H}(M,\mu)}=r W_{i,H}(M)$, where $W_{i,H}(M)$ is defined in \prettyref{eq:W_i_H}. Hence, $X_{i,H} (M, \mu) /r$ is an unbiased estimator of $W_{i,H}(M)$.

When $H$ is a tree, the color coding together with the recursive tree structure enables us to use dynamic programming
to count colorful trees and compute 
$X_{i,H} (M, \mu)$ efficiently. 
This is summarized as \cite[Algorithm 2]{mao2021testing} for unrooted trees and the same algorithm with minor adjustments also works for rooted trees. First, since
$H$ is already a rooted tree, the step of assigning an arbitrary vertex of $H$ as its root is not needed and thus the rooted tree $T_N$  constructed is exactly $H$ itself. 
Second, as an intermediate step, \cite[Algorithm 2]{mao2021testing} computes $Y(i,T_N,  [N+1], \mu )$, which is the same as
$\aut(H) X_{i,H} (M, \mu)$. Hence, 
we can simply output $ \frac{1}{\aut(H)}  Y(i,T_N,  [N+1], \mu )$ as the rooted tree count $X_{i,H} (M, \mu)$.



Finally, we generate independent random colorings $\mu_1,\ldots,\mu_t$ and  average over  
$X_{i,H} (M ,\mu_m )$'s to better approximate $W_{i,H}(M)$, where we set
\[
t \triangleq \ceil{1/r}{}.
\]

\paragraph{Approximate similarity scores}

To approximate $\Phi_{ij} \equiv  \Phi_{ij}^{\calT}$ in \prettyref{eq:Phi_ij}, we apply the above idea to each chandelier $H\in \calT$. 
Generate $2t$ random colorings $\{\mu_a\}_{a=1}^t$ and $\{\nu_a\}_{a=1}^t$ which are independent copies of $\mu$ that map $[n]$ to $[N+1]$. Define
    \begin{align}
         \tilde{\Phi}_{ij}
        & \triangleq   \frac{1}{r^2} \sum_{H\in \calT}\aut(H)
         \left(\frac{1}{t} \sum_{a=1}^t  X_{i,H}(\Bar{A},\mu_a ) \right) \left(\frac{1}{t} \sum_{a=1}^t  X_{j,H} (\Bar{B},\nu_{a}) \right) 
        \label{eq:tilde_Phi_ij} \, .
   \end{align} 
Then $\expect{\tilde{\Phi}_{ij} \mid A, B}= \Phi_{ij}$. Moreover,
 the following result bounds the approximation error under the same 
 conditions as those in Propositions \ref{prop:true_pair}
 and \ref{prop:fake_pairs_improved} for the second moment calculation.

\begin{proposition}\label{prop:tilde_Phi_ij}
For any $i\in[n]$, if \prettyref{eq:true_pair_constraint} holds,
\begin{align}
    \frac{\Var[\tilde{\Phi}_{ii} - \Phi_{ii}]}{ \expect{\Phi_{ii} }^2} 
    =  O
     \left(\frac{L^2 }{\rho^2 nq} + \frac{L^2 }{\rho^{2(K+M)}|\calJ|} \right) \, ; \label{eq:tilde_upper_bound_true_pair}
\end{align}
for any $i\neq j$, if \prettyref{eq:fake_pairs_constraint} holds, 
\begin{align}
    \frac{\Var[\tilde{\Phi}_{ij} -  \Phi_{ij}]}{ \expect{\Phi_{ii} }^2} =O\left(\frac{1}{|\calT| \rho^{2N} }\right)  \,.
    \label{eq:tilde_upper_bound_fake_pair}
\end{align}
\end{proposition}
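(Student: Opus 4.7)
The plan is to use the law of total variance to express $\Var[\tilde\Phi_{ij} - \Phi_{ij}]$ as an expectation of a conditional variance, and then to expand that conditional variance into pieces that are already controlled by the second-moment analysis of $\Phi_{ij}$ itself carried out in \prettyref{sec:second-moment}. Since the color samplings $(\mu_a)_{a=1}^t, (\nu_a)_{a=1}^t$ are independent of $(A,B)$ and of each other, and $\Expect[\tilde\Phi_{ij} \mid A, B] = \Phi_{ij}$, we have $\Var[\tilde\Phi_{ij} - \Phi_{ij}] = \Expect\bigl[\Expect[(\tilde\Phi_{ij} - \Phi_{ij})^2 \mid A, B]\bigr]$.

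Writing $\epsilon_{i,H} \triangleq \tilde W_{i,H} - W_{i,H}$ and $\delta_{j,H} \triangleq \tilde W_{j,H} - W_{j,H}$, the difference decomposes as $\tilde\Phi_{ij} - \Phi_{ij} = \sum_H \aut(H)\bigl[\epsilon_{i,H} W_{j,H} + W_{i,H}\delta_{j,H} + \epsilon_{i,H}\delta_{j,H}\bigr]$. Using that $\epsilon$ and $\delta$ are conditionally independent given $(A,B)$ with zero conditional mean, the nine cross terms in the conditional expectation of the square collapse to
\[
\Expect[(\tilde\Phi_{ij}-\Phi_{ij})^2 \mid A,B] = \sum_{H,I\in\calT}\aut(H)\aut(I)\bigl[W_{i,H}W_{i,I}\,C^B_{H,I} + W_{j,H}W_{j,I}\,C^A_{H,I} + C^A_{H,I}C^B_{H,I}\bigr],
\]
with $C^A_{H,I} \triangleq \Cov(\tilde W_{i,H},\tilde W_{i,I}\mid A)$ and $C^B_{H,I}$ defined analogously. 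Since both $V(S)$ and $V(T)$ have $N+1$ vertices but only $N+1$ colors are available, the joint colorful event $\chi_\mu(V(S))\chi_\mu(V(T)) = 1$ forces $V(S) = V(T)$. A direct calculation then yields
\[
C^A_{H,I} = \frac{1}{rt}\sum_{\substack{S(i)\cong H,\,T(i)\cong I\\ V(S) = V(T)}} \bar A_S\bar A_T \;-\; \frac{1}{t}\,W_{i,H}(\bar A)W_{i,I}(\bar A),
\]
and symmetrically for $C^B_{H,I}$.

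After substitution and taking expectation over $(A,B)$, each surviving piece becomes a weighted sum of $\Expect[\bar A_{S_1}\bar A_{T_1}\bar B_{S_2}\bar B_{T_2}]$ over 4-tuples $(S_1,T_1,S_2,T_2)$ with $S_1(i),S_2(j)\cong H$ and $T_1(i),T_2(j)\cong I$, \emph{subject additionally} to one or two vertex-set constraints $V(S_1) = V(T_1)$ and/or $V(S_2) = V(T_2)$, plus residuals proportional to $\Expect[\Phi_{ij}^2]/t$. The key observation is that the constraint $V(S_\ell) = V(T_\ell)$ with $|V(S_\ell)| = |V(T_\ell)| = N+1 \ge 2$ automatically implies $V(S_\ell) \cap V(T_\ell) \neq \{i\}$ (respectively $\{j\}$), so the non-identity constraint \prettyref{eq:constraint2} that underlies $\Gamma_{ij}$ is satisfied for free. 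Hence each of these constrained sums is a subsum of the one bounded by $\Gamma_{ij}$ in \prettyref{eq:var_Phi_ij}, and the bounds of Propositions \ref{prop:true_pair} and \ref{prop:fake_pairs_improved} apply verbatim. Since $t = \lceil 1/r \rceil$ ensures $rt \in [1, 2]$, the prefactors $1/(rt)$ and $1/(r^2 t^2)$ are $\Theta(1)$ and do not affect the order of the bound; the residual terms $\Expect[\Phi_{ij}^2]/t$ are dominated because $1/t \le r = e^{-\Theta(N)}$ is much smaller than the target rates $\Gamma_{ii}/\mu^2$ (for true pairs) or $1/(|\calT|\rho^{2N})$ (for fake pairs) under the parameter choice \prettyref{eq:K_L_M_R_simple}.

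The main obstacle is the third term $\Expect[C^A_{H,I}C^B_{H,I}]$, whose expansion produces a doubly-constrained sum in which both $V(S_1) = V(T_1)$ and $V(S_2) = V(T_2)$ hold simultaneously; one must verify that the decorated-graph decomposition into $\UTone, \UTtwo, \UN$ developed in \prettyref{sec:second-moment} (for both true and fake pairs) remains valid under these additional constraints. Fortunately, the two vertex-set constraints only \emph{restrict} the admissible configurations and leave intact every step in the chandelier-based enumeration and the bulb/wire weight bounds, so the same estimates carry through. Assembling the three contributions under the standing conditions \prettyref{eq:true_pair_constraint} and \prettyref{eq:fake_pairs_constraint} then yields the desired \prettyref{eq:tilde_upper_bound_true_pair} and \prettyref{eq:tilde_upper_bound_fake_pair}.
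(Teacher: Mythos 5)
Your overall plan---condition on $(A,B)$, observe $\Expect[\tilde\Phi_{ij}\mid A,B]=\Phi_{ij}$, and reduce the conditional variance of the approximation error to something controlled by the second-moment bound $\Gamma_{ij}$ that already underlies Propositions~\ref{prop:true_pair} and~\ref{prop:fake_pairs_improved}---is the right one, and matches the paper's strategy. However, there is a concrete error in the middle step that breaks the argument.

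You claim that ``the joint colorful event $\chi_\mu(V(S))\chi_\mu(V(T))=1$ forces $V(S)=V(T)$,'' and from this deduce the exact formula
\[
C^A_{H,I}=\frac{1}{rt}\sum_{V(S)=V(T)}\bar A_S\bar A_T-\frac{1}{t}W_{i,H}(\bar A)W_{i,I}(\bar A).
\]
This is false. Two sets of size $N+1$ can each be colorful under the same coloring without being equal: for instance, if $V(S)\cap V(T)=\{i\}$, one can color $V(S)\setminus\{i\}$ and $V(T)\setminus\{i\}$ separately with the $N$ colors other than $\mu(i)$, so both indicators equal $1$. The correct statement is that $\Expect[\chi_\mu(V(S))\chi_\mu(V(T))]$ equals $r$ when $V(S)=V(T)$, equals $r^2$ when $V(S)\cap V(T)=\{i\}$ (independence), and takes some intermediate value for partial overlap; it is never $0$. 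Carrying out the covariance expansion correctly gives
\[
C^A_{H,I}=\frac{1}{r^2t}\sum_{\substack{S(i)\cong H,\,T(i)\cong I\\ V(S)\cap V(T)\neq\{i\}}}\bar A_S\bar A_T\bigl(\Expect[\chi_\mu(V(S))\chi_\mu(V(T))]-r^2\bigr),
\]
where only the configurations with $V(S)\cap V(T)\neq\{i\}$ survive---a much larger set than $\{V(S)=V(T)\}$, and, notably, precisely one of the three alternatives in the constraint~\prettyref{eq:constraint2} that defines $\Gamma_{ij}$. Your claimed formula would miss all the terms with $\{i\}\subsetneq V(S)\cap V(T)\subsetneq V(S)$, so the downstream argument that ``each of these constrained sums is a subsum of the one bounded by $\Gamma_{ij}$'' is not established, and the residual $\Expect[\Phi_{ij}^2]/t$ term does not in fact appear once the subtraction is folded correctly into the factor $\Expect[\chi\chi]-r^2$.

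The paper sidesteps this by not trying to compute the color covariance exactly. It works with the combined quantity $Y_{ij}(\mu,\nu)=\sum_H\aut(H)X_{i,H}(\bar A,\mu)X_{j,H}(\bar B,\nu)$ and bounds $\Expect[\Cov(Y_{ij}(\mu_a,\nu_b),Y_{ij}(\mu_c,\nu_d)\mid A,B)]$ directly: one observes that the configurations with $S_1=S_2$, $T_1=T_2$, and $V(S_1)\cap V(T_1)=\{i\}$ make the color factor exactly $r^4$ and hence vanish against the subtracted $r^4\Phi_{ij}^2$ term (this is exactly the condition \prettyref{eq:constraint2} dropped in $\Gamma_{ij}$); the remaining configurations are bounded by $\Expect[\chi_{\mu_a}\chi_{\mu_c}]\le r^{1+\indc{a\neq c}}$ without needing to know its exact value, giving $\Expect[\Cov(\cdot)]\le (r^{2+\indc{b\neq d}+\indc{a\neq c}}-r^4)\Gamma_{ij}$ and, after summing over $a,b,c,d$, $\Var[\tilde\Phi_{ij}-\Phi_{ij}]\le 3\Gamma_{ij}$. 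Your proposal can likely be repaired along similar lines by replacing the false ``$V(S)=V(T)$'' simplification with the correct observation about $V(S)\cap V(T)\neq\{i\}$ and bounding the excess color covariance by $r$, but as written the argument has a genuine gap.
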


Finally, we show that the approximate similarity scores $\tilde{\Phi}_{ij}$ can be computed efficiently using \prettyref{alg:approximated_Phi_ij}. 
\begin{algorithm}
    \caption{Approximate similarity scores via color coding}\label{alg:approximated_Phi_ij}
    \begin{algorithmic}[1]
    \State{\bfseries Input:} Centered adjacency matrices $\bar A$ and $\bar B$ and integers $K,L,M,N, R$.
    \State 
    Apply the algorithm for generating rooted trees in \cite[Sec.~5]{beyer1980constant}
    to
    list all non-isomorphic rooted trees with $K$ edges, compute $\aut(H)$ for each rooted tree using the 
    automorphism
    algorithm for trees in \cite[Sec.~2]{colbourn1981linear}, and return $\calJ$ as the subset of rooted trees whose number of automorphisms is at most $R$.
    \State Generate $(K,L,M,R)$-chandeliers using $\calJ$ to obtain $\calT$ per \prettyref{def:chandelier}. 
    \State Generate \iid random colorings $\{\mu_a\}_{a=1}^t$ and $\{\nu_a\}_{a=1}^t$ mapping $[n]$ to $[N+1]$. 
    \For{each $a = 1 ,\cdots, t$}
    \State{\bfseries} 
    For each $H \in \calT$, compute $\{X_{i,H}(\overline{A},\mu_a)\}_{i\in [n]}$ and $\{X_{j,H}(\overline{B},\nu_a)\}_{j\in [n]}$ via \cite[Algorithm 2]{mao2021testing} with adjustments described after~\prettyref{eq:X_H_M}.
    \EndFor
    \State{\bfseries Output:}  $\{\tilde{\Phi}_{ij}\}_{i,j \in [n]}$ according to  \prettyref{eq:tilde_Phi_ij}.
    \end{algorithmic}
\end{algorithm}

\begin{proposition}\label{prop:Y_calT_computation}
\prettyref{alg:approximated_Phi_ij} computes $\{\tilde{\Phi}_{ij}\}_{i,j\in [n]}$ in time $O\left(n^2 (3e \alpha)^N \right)$. 
Furthermore, when $nq \ge 2$, under
the choice of $K,L,M, \R \in \naturals$ as per~\prettyref{eq:K_L_M_R_simple}, the time complexity is $O(n^{ c/\epsilon})$, where
$\epsilon$ is from \prettyref{eq:K_L_M_R_simple}
and
$c$ is an  absolute constant.
\end{proposition}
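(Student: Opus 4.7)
The plan is to bound the running time of \prettyref{alg:approximated_Phi_ij} by adding up the costs of its four phases: preprocessing (Steps~2--3), drawing colorings (Step~4), the color-coding dynamic program (Steps~5--7), and the final aggregation (Step~8). The DP phase will dominate, so the bulk of the work is to bound it carefully and then substitute the parameter choice \prettyref{eq:K_L_M_R_simple}.

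First I would dispose of the preprocessing. The constant-amortized-time generator of \cite[Sec.~5]{beyer1980constant} lists all non-isomorphic rooted trees with $K$ edges, and the linear-time automorphism routine of \cite[Sec.~2]{colbourn1981linear} retains those with $\aut(H) \le R$; together with \prettyref{eq:Universal} this costs $O(K \beta^K)$ and yields $|\calJ| \le \beta^K$. Enumerating $L$-subsets of $\calJ$ then assembles $|\calT| = \binom{|\calJ|}{L} \le \beta^{KL}$ chandeliers, each storable in $O(N)$ space; this is absorbed into the DP cost. Step~4 draws $t$ i.i.d.\ colorings in $O(t n)$, using the Stirling estimate $1/r \le e^{N+1}$ with $r$ as in \prettyref{eq:r}.

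Next I would analyze the color-coding DP, which is the heart of the proof. For a fixed $H \in \calT$ and a fixed coloring $\mu$, the modified \cite[Algorithm 2]{mao2021testing} runs a bottom-up DP whose state is a pair (host vertex, color subset used by the current subtree), with transitions amounting to convolutions over children's color subsets; the standard color-coding analysis (as in \cite{alon1995color}) bounds the cost of computing $\{X_{i,H}(\bar M,\mu)\}_{i\in[n]}$ by $O(n^2 \cdot 2^{N+1} \cdot N)$. Summing over $H \in \calT$ and over $t$ colorings yields total DP work
\begin{align*}
|\calT| \cdot t \cdot O(n^2 \cdot 2^N \cdot N)
\le O\!\left(\beta^{KL} \cdot e^{N+1} \cdot n^2 \cdot 2^N \cdot N \right).
\end{align*}
The claimed base $3e\alpha$ emerges by combining $|\calT| \le \beta^{KL} \le \alpha^{-N}$ (using $KL \le N$ and the Otter asymptotic \prettyref{eq:otter}), the Stirling bound $1/r \le e^{N+1}$, and the $2^N$ subset-enumeration overhead, with the precise constant tracked by careful bookkeeping. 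Step~8 finally assembles $\tilde{\Phi}_{ij}$ from the precomputed $X$-values in time $O(n^2 |\calT|)$, which is subdominant. Aggregating everything gives the first bound $O(n^2 (3e\alpha)^N)$.

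For the second claim I would simply substitute \prettyref{eq:K_L_M_R_simple}. When $nq \ge 2$, we have $M/K = C_3/\log(nq) = O(1)$, and since $L = C_1/\epsilon$ and $K = C_2 \log n$,
\[
N = (K+M)L = O(KL) = O\!\left(\frac{\log n}{\epsilon}\right).
\]
Consequently $(3e\alpha)^N = n^{O(1/\epsilon)}$, giving the $O(n^{c/\epsilon})$ bound. The main obstacle is the tight bookkeeping in the DP step: pairing each $\alpha^{-1}$ factor coming from a bulb of a chandelier against the $e$ factor from $1/r$ and the $2$ from subset enumeration to recover exactly the constant $3e\alpha$. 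Fortunately, because $N = \Theta(\log n / \epsilon)$ in the relevant regime, any modest slack in this constant only inflates the exponent of the polynomial-time bound by a factor depending on $\epsilon$, so the second conclusion is robust to the exact numerical value of the exponential base.
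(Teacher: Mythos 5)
Your overall plan mirrors the paper's: bound each phase, observe the color-coding DP dominates, and substitute \prettyref{eq:K_L_M_R_simple}. The second conclusion is sound in both proofs: once one has a bound of the form $n^2 C^N$ for some absolute constant $C$, the choice $N=(K+M)L = O(\log n/\epsilon)$ when $nq\ge 2$ gives $n^{O(1/\epsilon)}$, and this is robust to the precise value of $C$, exactly as you note.

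Where you deviate is in the bookkeeping that produces the base of the exponential, and that part of the argument does not go through as written. The paper bounds the per-coloring DP by citing \cite[Lemma 2]{mao2021testing}, which gives $O(|\calT|\, N\, 3^{N}\, n^2)$ for \emph{all} $H\in\calT$, i.e.\ a $3^N$ factor per tree and coloring. This $3^N$ arises from the naive subset-pair convolution $\sum_{S}2^{|S|}=3^N$ at each DP node; your $O(n^2\, 2^{N+1}\, N)$ bound would require a fast subset-convolution subroutine that the cited algorithm does not use, so $2^N$ is not justified here. Second, your chain ``$|\calT|\le\beta^{KL}\le\alpha^{-N}$'' is reversed: \prettyref{eq:Universal} requires $\beta>1/\alpha$, so $\beta^{KL}$ \emph{dominates} $\alpha^{-N}$ rather than being bounded by it. Combining your own estimates $\beta^{KL}\cdot e^{N+1}\cdot 2^N\cdot N\, n^2$ would give a base of roughly $2e\beta$, which is neither the $3e\alpha$ you assert nor the $3e\beta$ that the paper's proof actually derives. (Incidentally, the paper's proposition statement says $(3e\alpha)^N$ while its proof concludes with $(3e\beta)^N$, so the stated base appears to be a slip on the paper's side as well; but that is a bug in the paper, not a license to skip the constant-tracking.) To make your argument complete, replace the $2^N$ with $3^N$ per the cited lemma, and upper-bound $|\calT|$ correctly by $\beta^{KL}\le\beta^N$, which together with $t=\lceil 1/r\rceil=O(e^N)$ yields $O\bigl(n^2 N(3e\beta)^N\bigr)=O\bigl(n^2(3e\beta)^N\bigr)$ matching the paper's derivation.
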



\begin{proof}[Proof of \prettyref{thm:tilde_Phi_ij_almost}]
Note that 
\begin{align}
    \Var[\tilde{\Phi}_{ij}] 
    & =  ~ \Var[\tilde{\Phi}_{ij} - \Phi_{ij}] +  \Var[\Phi_{ij}] + 2\Cov\left(\tilde{\Phi}_{ij} - \Phi_{ij},\Phi_{ij}\right) \nonumber \\
    & = ~ \Var[\tilde{\Phi}_{ij} - \Phi_{ij}] +  \Var[\Phi_{ij}] \,, \label{eq:Var_Phi_ii}
\end{align}
where the last equality holds because 
$\expect{\tilde{\Phi}_{ij} |A,B} =\Phi_{ij}$  and so
\begin{align*}
    \Cov\left(\tilde{\Phi}_{ij} - \Phi_{ij},\Phi_{ij}\right) 
  =\expect{ \expect{(\tilde{\Phi}_{ij} - \Phi_{ij})|A,B}\Phi_{ij}} = 0 \,.
\end{align*}


Under the assumption of \prettyref{thm:Phi_ij_almost},
both 
\prettyref{eq:true_pair_constraint}
and \prettyref{eq:fake_pairs_constraint} hold.
Since $ \expect{\tilde{\Phi}_{ij}} = \expect{ \Phi_{ij}}$, applying  \prettyref{prop:tilde_Phi_ij} yields
\begin{align*}
    \frac{\Var[\tilde{\Phi}_{ii}]}{ \expect{\tilde{\Phi}_{ii}}^2} 
    =  O\left(\frac{L^2 }{\rho^2 nq} + \frac{L^2 }{\rho^{2(K+M)}|\calJ|} \right) \, ;
\end{align*}
for all $i$ and 
\begin{align*}
    \frac{\Var[\tilde{\Phi}_{ij} ]}{ \expect{\tilde{\Phi}_{ii} }^2} =O\left(\frac{1}{|\calT| \rho^{2N} }\right)  \,.
\end{align*}
for all $i\neq j$. 
In other words, 
Propositions \ref{prop:true_pair}-- \ref{prop:fake_pairs_improved} and hence \prettyref{thm:Phi_ij_almost} continue to hold with $\tilde{\Phi}_{ij} $ in place of $\Phi_{ij}$. The time complexity follows from  \prettyref{prop:Y_calT_computation}.
\end{proof}

\subsection{Proof of \prettyref{prop:tilde_Phi_ij}}


The proof is similar to that  of \cite[Proposition 3 in Section C.1]{mao2021testing}. 
Define
 \begin{align}
        Y_{ij}(\mu,\nu)
        & \triangleq \sum_{H \in \calT}  \aut(H) X_{i,H}(\Bar{A},\mu)   X_{j,H}(\Bar{B},\nu)   \, . \label{eq:Y_calT_ij} 
\end{align}
where $\mu,\nu$ are two independent $(N+1)$-coloring of the vertices in $[n]$.
Then 
\begin{align*}
        \tilde{\Phi}_{ij}= \frac{1}{r^2 t^2} \sum_{a=1}^t \sum_{b=1}^t Y_{ij}(\mu_a,\nu_b )  \,.
\end{align*}
Note that for any $1\le a,b\le t$, $ Y_{ij}(\mu_a,\nu_b) /r^2 $ is an unbiased estimator of $\Phi_{ij}$ as
   \begin{align}
       \expect{Y_{ij}(\mu_a,\nu_b)  \mid A,B }
       & =  r^2 \sum_{H\in \calT} \aut(H) W_{i,H}(\bar{A}) W_{j,H}(\bar{B}) = r^2 \Phi_{ij} \, . \label{eq:E_X_calH_A_B}
   \end{align} 
Moreover, $\{Y_{ij}(\mu_a,\nu_b)\}_{1\le a,b \le t}$ are identically distributed. And conditional on $A$ and $B$, for any $1\le a , b ,c ,d\le t$,  $Y_{ij}(\mu_a,\nu_b)$ and $Y_{ij}(\mu_c,\nu_d)$ are independent if and only if $ a \neq c$ and $b \neq d$. Hence, we have $\expect{\tilde{\Phi}_{ij} - \Phi_{ij}} = 0 $.

Next, we bound the variance of $\tilde{\Phi}_{ij}$. In particular,  we get that
   \begin{align}
       \Var\left(\tilde{\Phi}_{ij} - \Phi_{ij}\right) 
       & = \Var \left(
       \expect{\tilde{\Phi}_{ij} - \Phi_{ij} \mid A, B} \right)
       + \Expect\left[ \Var\left(\tilde{\Phi}_{ij} - \Phi_{ij} \mid A, B \right)\right] \notag \\
       & =  \Expect\left[ \Var\left(\tilde{\Phi}_{ij} \mid A, B \right)\right] \nonumber \\
       & \le 
       \frac{1}{r^4 t^4} \sum_{a=1}^t\sum_{b=1}^t \sum_{c=1}^t\sum_{d=1}^t  \Expect\left[ \Cov\left( Y_{ij}(\mu_a,\nu_b),  Y_{ij}(\mu_c,\nu_d) \mid A, B \right)\right]\, ,\label{eq:total_var}
   \end{align}
   where the second equality holds because 
   $
\expect{\tilde{\Phi}_{ij} \mid A, B} =  \Phi_{ij}$.

Next, we introduce an auxiliary result, bounding 
the conditional covariance.

\begin{lemma} \label{lmm:cov_Phi_ij}
  For any $1\le a,b,c,d \le t$,  and $i,j \in [n]$,
  \begin{align*}
    &\expect{\Cov\left(Y_{ij}(\mu_a,\nu_b),  Y_{ij}(\mu_c,\nu_d) \mid A, B \right)} \le ~  \left( r^{2+ \indc{b \neq d }  + \indc{a \neq c} } -r^4\right) \Gamma_{ij} \, ,
  \end{align*}
  where $\Gamma_{ij}$ is defined in \prettyref{eq:var_Phi_ij}.
\end{lemma}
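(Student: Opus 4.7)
The plan is to directly expand
$\Cov\bigl(Y_{ij}(\mu_a,\nu_b),Y_{ij}(\mu_c,\nu_d)\mid A,B\bigr)$ using \prettyref{eq:X_H_M}--\prettyref{eq:Y_calT_ij} into a sum over $H,I\in\calT$ and labeled copies $S_1(i),S_2(j)\cong H$, $T_1(i),T_2(j)\cong I$, take expectation over $(A,B)$, and exploit the independence of the colorings from $(A,B)$ to factor each summand as $\aut(H)\aut(I)\,\expect{\bar A_{S_1}\bar B_{S_2}\bar A_{T_1}\bar B_{T_2}}\,\Delta(S_1,S_2,T_1,T_2)$, where
\begin{align*}
\Delta \triangleq \expect{\chi_{\mu_a}(V(S_1))\chi_{\nu_b}(V(S_2))\chi_{\mu_c}(V(T_1))\chi_{\nu_d}(V(T_2))} - r^4.
\end{align*}
Since the $\mu$'s and $\nu$'s are mutually independent, the first term of $\Delta$ factors as $\expect{\chi_{\mu_a}(V(S_1))\chi_{\mu_c}(V(T_1))}\cdot\expect{\chi_{\nu_b}(V(S_2))\chi_{\nu_d}(V(T_2))}$.

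Next I will establish the identity $\prob{\chi_\mu(V_1)=\chi_\mu(V_2)=1} = \frac{(N+1)!\,(N+1-k)!}{(N+1)^{2(N+1)-k}}$ for $|V_1|=|V_2|=N+1$ and $|V_1\cap V_2|=k$, obtained by directly counting ordered colorings of $V_1\cup V_2$ that are injective on each of $V_1$ and $V_2$. The ratio between consecutive values of this probability is $(N+1)/(N+1-k)$, which equals $1$ at $k=0$ and strictly exceeds $1$ for $k\ge 1$; consequently the probability equals $r^2$ at both $k=0$ and $k=1$, is non-decreasing in $k$, and attains its maximum $r$ at $k=N+1$. Combined with the mutual independence of $\mu_a$ and $\mu_c$ when $a\neq c$, this gives the uniform bound $\expect{\chi_{\mu_a}(V(S_1))\chi_{\mu_c}(V(T_1))}\le r^{1+\indc{a\neq c}}$ and likewise for the $\nu$-factor, yielding $0\le \Delta \le r^{2+\indc{a\neq c}+\indc{b\neq d}}-r^4$ for every tuple $(S_1,S_2,T_1,T_2)$.

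Finally I will compare the resulting sum to $\Gamma_{ij}$ defined in \prettyref{eq:var_Phi_ij_new}. The factor $\expect{\bar A_{S_1}\bar B_{S_2}\bar A_{T_1}\bar B_{T_2}}$ vanishes unless every edge of $S_1\cup T_1\cup S_2\cup T_2$ is double-covered, i.e., the structural condition \prettyref{eq:constraint1} holds. Among such tuples, the ones excluded from $\calW_{ij}$ satisfy $S_1=S_2$, $T_1=T_2$, and $V(S_1)\cap V(T_1)=\{i\}$, which forces $i=j$ (since $S_1,S_2$ are rooted at $i,j$); for them $|V(S_1)\cap V(T_1)|=|V(S_2)\cap V(T_2)|=1$, and the boundary case $k=1$ of the counting identity gives $\Delta=r^2\cdot r^2-r^4=0$, so these tuples contribute nothing. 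Substituting the uniform upper bound on $\Delta\ge 0$, passing to absolute values of $\expect{\bar A_{S_1}\bar B_{S_2}\bar A_{T_1}\bar B_{T_2}}$ inside the $\calW_{ij}$-sum, and using $\aut(H)\aut(I)=\bigl(\aut(S_1)\aut(T_1)\aut(S_2)\aut(T_2)\bigr)^{1/2}$ then yields the claim.

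The main delicate point is the boundary identity $\prob{\chi_\mu(V_1)=\chi_\mu(V_2)=1}\big|_{k=1}=r^2$, which is responsible both for the clean bound on $\Delta$ and for the vanishing of the ``excluded'' tuples; absent this coincidence the inequality would pick up an additional contribution of order $(\rho\sigma^2)^{2N}$ times the number of such tuples, breaking the clean form of the lemma. The remaining steps are bookkeeping on the coloring probabilities.
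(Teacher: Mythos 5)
Your proposal is correct and follows essentially the same route as the paper: expand the conditional covariance into a sum over $(H,I,S_1,S_2,T_1,T_2)$, observe that the coloring factor equals $r^4$ exactly on the tuples excluded from $\calW_{ij}$ (so those contribute zero), and then bound the coloring factor uniformly by $r^{2+\indc{a\neq c}+\indc{b\neq d}}$ to pull the constant outside of the $\Gamma_{ij}$-sum. You actually supply the explicit count $\frac{(N+1)!\,(N+1-k)!}{(N+1)^{2(N+1)-k}}$ and the monotonicity-in-$k$ argument justifying both the boundary identity $p(1)=r^2$ and the upper bound $p(k)\le r$, facts the paper asserts without proof — a useful addition, but not a different strategy.
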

Then, by \prettyref{eq:total_var}, we get
\begin{align*}
    \Var\left[\tilde{\Phi}_{ij}- \Phi_{ij}  \right]
    & \le~ \frac{1}{r^4 t^4} \sum_{a=1}^t \sum_{b=1}^t \sum_{c=1}^t \sum_{d=1}^t \left( r^{2+  \indc{b \neq d }  + \indc{a \neq c} } -r^4\right)   \Gamma_{ij}\\
    & \le~  \left(\frac{1}{t^2r^2}  + \frac{2}{tr}  \right)   \Gamma_{ij}\\
    & \le ~ 3 \Gamma_{ij} \, ,
\end{align*}
where the inequality holds because $t = \ceil{1/r}{}$. 
By \prettyref{eq:true_pair_constraint} and \prettyref{eq:Xi_ii_mean_square}, \prettyref{eq:tilde_upper_bound_true_pair} follows; by \prettyref{eq:fake_pairs_constraint} and \prettyref{eq:Xi_ij_mean_square}, \prettyref{eq:tilde_upper_bound_fake_pair} follows. Hence, \prettyref{prop:tilde_Phi_ij}
follows.

We are left to prove \prettyref{lmm:cov_Phi_ij}.
\begin{proof}[Proof of \prettyref{lmm:cov_Phi_ij}]

\begin{align*}
     & \Cov\left( Y_{ij}(\mu_a,\nu_b),  Y_{ij}(\mu_c,\nu_d) \mid A, B \right) \\
     & = ~ \expect{Y_{ij}(\mu_a,\nu_b) Y_{ij}(\mu_c,\nu_d) \mid A, B  } - \expect{Y_{ij}(\mu_a,\nu_b) \mid A, B  }\expect{Y_{ij}(\mu_c,\nu_d) \mid A, B  }\\
     & = ~ \expect{Y_{ij}(\mu_a,\nu_b) Y_{ij}(\mu_c,\nu_d) \mid A, B  } - r^4 \Phi_{ij}^2 \,,
\end{align*}
where the last equality holds because $Y_{ij}(\mu,\nu)/r^2$ is an unbiased estimator of $\Phi_{ij}$. 
Next, by \prettyref{eq:Y_calT_ij} and \prettyref{eq:X_H_M}, we get 
\begin{align*}
    \expect{Y_{ij}(\mu_a,\nu_b) Y_{ij}(\mu_c,\nu_d) \mid A, B } 
    & = ~ \sum_{H, I\in \calT} \aut(H)  \aut(I)\sum_{S_1(i),S_2(j) \cong H}  \sum_{T_1(i),T_2(j) \cong I} \\
    &~~~~~  \expect{\chi_{\mu_a} (V(S_1)) \chi_{\mu_c}(V(T_1))}  \expect{\chi_{\nu_b}(V(S_2)) \chi_{\nu_d}(V(T_2))}\\ 
    &~~~~~ 
    \Bar{A}_{S_1}  \Bar{B}_{S_2}  \Bar{A}_{T_1} 
    \Bar{B}_{ T_2} \,.
\end{align*}

Then, we have
\begin{align*}
    \expect{\Cov\left( Y_{ij}(\mu_a,\nu_b),  Y_{ij}(\mu_c,\nu_d) \mid A, B \right)} 
    & = ~ \sum_{H, I\in \calT} \aut(H)  \aut(I)\sum_{S_1(i),S_2(j) \cong H}  \sum_{T_1(i),T_2(j) \cong I}\\
    &~~~~~ \left( \expect{\chi_{\mu_a} (V(S_1)) \chi_{\mu_c}(V(T_1))}  \expect{\chi_{\nu_b}(V(S_2)) \chi_{\nu_d}(V(T_2))} -r^4 \right)\\ 
    &~~~~~ 
    \expect{\Bar{A}_{S_1}  \Bar{B}_{S_2}  \Bar{A}_{T_1} 
    \Bar{B}_{ T_2} } \,.
\end{align*}
Hence, we have
\begin{align*}
    & \expect{\Cov\left( Y_{ij}(\mu_a,\nu_b),  Y_{ij}(\mu_c,\nu_d) \mid A, B \right)} \\
    & \overset{(a)}{\le} ~ \sum_{H, I\in \calT} \aut(H)  \aut(I)\sum_{S_1(i),S_2(j) \cong H}  \sum_{T_1(i),T_2(j) \cong I}  \expect{\Bar{A}_{S_1}  \Bar{B}_{S_2}  \Bar{A}_{T_1} 
    \Bar{B}_{ T_2} }\\
    &~~~~~ \left( \expect{\chi_{\mu_a} (V(S_1)) \chi_{\mu_c}(V(T_1))}  \expect{\chi_{\nu_b}(V(S_2)) \chi_{\nu_d}(V(T_2))} - r^4\right)\\
    &~~~~~   \indc{S_1 \neq S_2 \text{ or } T_1 \neq T_2, \text{ or }V(S_1)\cap V(T_1) \neq \{i\}}  \indc{S_1\Delta T_1 \subset S_2 \cup T_2 \, , \, S_2 \Delta T_2 \subset S_1 \cup T_1 }\\
    & \overset{(b)}{\le} ~ \left( r^{2+ \indc{b \neq d }  + \indc{a \neq c} } -r^4\right) \\
    &~~~~~ \sum_{H, I\in \calT} \aut(H)  \aut(I)\sum_{S_1(i),S_2(j) \cong H}  \sum_{T_1(i),T_2(j) \cong I} \left|\expect{\Bar{A}_{S_1}  \Bar{B}_{S_2}  \Bar{A}_{T_1} 
    \Bar{B}_{ T_2} } \right|  \\
    &~~~~~   \indc{S_1 \neq S_2 \text{ or } T_1 \neq T_2, \text{ or }V(S_1)\cap V(T_1) \neq \{i\}}  \indc{S_1\Delta T_1 \subset S_2 \cup T_2 \, , \, S_2 \Delta T_2 \subset S_1 \cup T_1 } \,,
\end{align*}
where $(a)$ holds because if $S_1 = S_2$, $T_1=T_2$, and $V(S_1)\cap V(T_1) = \{i\}$, 
\[
\expect{\chi_{\mu_a} (V(S_1)) \chi_{\mu_c}(V(T_1))} = \expect{\chi_{\nu_b}(V(S_2)) \chi_{\nu_d}(V(T_2))} = r^2 \,;
\]
$(b)$ holds because 
\[
\expect{\chi_{\mu_{a}}(V(S_1)) \chi_{\mu_{c}}(V(T_1))} \le r^{1+\indc{a \neq c }}, \quad \expect{\chi_{\nu_b}(V(S_2)) \chi_{\nu_{d}}(V(T_2))}\le r^{1+\indc{b \neq d }} \, .
\]

\end{proof}

\subsection{Proof of \prettyref{prop:Y_calT_computation}}

 The algorithm in \cite{dinneen2015constant,wright1986constant}  generates all unlabeled rooted trees with $K$ edges in time $O\left(\alpha^K\right)$.
 Moreover,  for each generated rooted trees, the algorithm in  \cite{colbourn1981linear} computes $\aut(H)$ in time $O(K)$. Hence, the total time complexity to generate $\calJ$ is $O \left( K\alpha^K \right)$. Given $\calJ$, the time complexity to generate $\calT$ is $O(\binom{|\calJ|}{K})=O(|\calT|)$.


As mentioned after~\prettyref{eq:X_H_M}, we can slightly adjust \cite[Algorithm 2]{mao2021testing} to compute $\{X_{i,H}(M,\mu)\}_{i\in [n]}$ given any rooted tree $H$ with $N$ edges, a weighted graph $M$ on $[n]$, and a coloring $\mu: [n] \to [N+1]$. Thus, by~\cite[Lemma 2]{mao2021testing},
for each iteration $m$ in \prettyref{alg:approximated_Phi_ij},  $\{X_{i,H}(\overline{A},\mu_m)\}_{i\in [n]}$ and $\{X_{j,H}(\overline{B},\nu_m)\}_{j\in [n]}$ for all $H \in \calT$ can be computed in $O(|\calT| N 3^{N} n^2)$ time.

 Since $ t = \ceil{1/r} = O(e^{N})$, the total time complexity of \prettyref{alg:approximated_Phi_ij} to output $\{\tilde{\Phi}_{ij}\}_{i,j\in [n]}$ is 

\[
    O\left( K \alpha^K + |\calT| + |\calT| N (3e)^{N} n^2\right) =   O\left(\binom{|\calJ|}{L} N (3e)^{N} n^2 \right) = O\left(\left(3e\beta \right)^{N} n^2 \right) \, ,
\]
where the last equality holds because $(K+M)L =N $, and $|\calJ| \le \beta^{K}$ in view of~\prettyref{eq:Universal}.

Finally, under condition \prettyref{eq:K_L_M_R_simple} and $nq \ge 2$, we have
$$
N=(K+M) L = \frac{C_1 C_2 \log n }{ \epsilon } \left( 1+ \frac{C_3}{\log (nq)} \right)
\le \frac{C_1 C_2 \log n }{ \epsilon } \left( 1+ \frac{C_3}{\log 2} \right).
$$
Hence, the total time complexity is $O \left(n^C\right) $ 
where 
$
C = c/\epsilon
$
for some universal constant $c>0$.

\section{Seeded graph matching}
\label{sec:boost}



Recall that with high probability~\prettyref{alg:GMCC} applied to the class $\calT$ of chandeliers finds a  set $I$ with $|I|=n-o(n)$ and recovers the latent permutation $\pi$ on $I$.
In this section, we develop a seeded graph
matching subroutine (\prettyref{alg:recovery}) that matches the remaining vertices, thereby achieving exact recovery.
Since the seed set $I$ depends on graphs $A$ and $B$, 
we need to show that \prettyref{alg:recovery} succeeds even if the seed set $I$ is chosen adversarially as long as  $|I|=(1-o(1)) n$.

Given $I'\subset [n]$ and an injection $\pi': I' \to [n]$, for any vertex $i$ in $A$ and vertex $j$ in $B$, denote by $\sfN_{\pi'} (i,j)$ the number of common neighbors 
of $i$ and $j$ under the vertex correspondence $\pi'$, namely, 
the number of vertex $u \in  I'$ such that $u$ is a neighbor of $i$ in $A$ and $\pi'(u)$ is a neighbor of $j$ in $B$. 
\begin{algorithm}[H]
\caption{Seeded graph matching}\label{alg:recovery}
\begin{algorithmic}[1]
\State{\bfseries Input:} $A$ and $B$,  a mapping $\hat{\pi}: I \to [n]$, and $\gamma$.
\State{\bfseries}Let $J= I$ and $\tilde{\pi} = \hat{\pi}$. 
\While {there exists $i \notin J$
and $j \notin \tilde{\pi}(J)$ such that $\sfN_{\tilde{\pi}}(i,j) \ge \gamma (n-2) q^2$}
\State Add $i$ to $J$ and let $\tilde{\pi}(i)=j$.
\EndWhile
\State{\bfseries Output:}  $\tilde{\pi} $.
\end{algorithmic}
\end{algorithm}

\prettyref{alg:recovery} keeps adding vertices as new seeds  once we are confident that they are true pairs based on the current seed set, in a similar fashion as the percolation graph matching proposed in \cite{yartseva2013performance}. It is a simplified version of \cite[Algorithm 3.22]{barak2019nearly}, since our initial seeds are guaranteed to be error-free (thanks to \prettyref{thm:Phi_ij_almost}) and so there is no need to clean up any mismatch. 
This allows us to show our \prettyref{alg:recovery} succeeds under the information-theoretic necessary condition of  $nq(q+ \rho(1-q) )\ge (1+\epsilon) \log n $, whereas their algorithm requires $nq(q+ \rho(1-q) )>\log^C n$ for some constant $C>1$. 
Another similar algorithm in prior work is \cite[Algorithm~4]{mao2021exact}, which however requires $nq \le \sqrt{n}/\log n$.

The following proposition gives sufficient conditions for our seeded algorithm to achieve exact recovery.
Let 
\begin{align}
h(x)=x \log x - x +1 \label{eq:rate_fun}
\end{align}
for $x>0$, which is a convex function with the minimum value $0$ achieved at $x=1$.

\begin{proposition}\label{prop:seeded}
Fix an arbitrarily small constant $\epsilon>0$.
 Suppose $A,B \sim \calG(n,q,\rho)$
with $ q \le \frac{1}{2}$, $nq(q+ \rho(1-q) ) \ge (1+\epsilon) \log n$,  and $ \rho \ge \epsilon$. 
Let $\hat{\pi}\equiv \hat{\pi}(A,B)$ 
denote a mapping: $I \to [n]$
such that $\hat{\pi} = \pi|_{I}$ and $|I| \ge \left(1- \epsilon/16 \right )n$.
Let $\gamma$ denote the unique solution in $(1, +\infty)$
to $h(\gamma)= \frac{3 \log n}{(n-2)q^2}$.
Then with probability at least $1-o(1)$, 
\prettyref{alg:recovery} with inputs $\hat{\pi}$ and $\gamma$ outputs $ \tilde{\pi} = \pi$   in $O(n^3q^2)$ time. 
\end{proposition}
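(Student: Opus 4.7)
I plan to reduce the analysis to two concentration events on which the algorithm succeeds deterministically. Define
\begin{align*}
\calE_T &\triangleq \bigl\{\sfN_{\hat\pi}(i,\pi(i)) \ge \gamma(n-2)q^2 \text{ for all } i \in [n]\bigr\}, \\
\calE_F &\triangleq \bigl\{\sfN_\pi(i,j) < \gamma(n-2)q^2 \text{ for all } j \ne \pi(i)\bigr\},
\end{align*}
where $\sfN_\pi(i,j) = \sum_{u\in[n]} A_{iu} B_{j\pi(u)}$ is the oracle common-neighbor count under the full true permutation. A short induction on the while loop will show that on $\calE_F$ every pair added to $J$ is a true pair (so $\tilde\pi$ remains a restriction of $\pi$); and since $\sfN_{\tilde\pi}(i,\pi(i))$ is monotone nondecreasing in $J$ while $\sfN_{\tilde\pi}(i,j) \le \sfN_\pi(i,j)$ for $j \ne \pi(i)$, the event $\calE_T \cap \calE_F$ forces the algorithm to terminate with $\tilde\pi = \pi$.

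For $\calE_F$ I would observe that for $j \ne \pi(i)$ the pairs $(A_{iu}, B_{j\pi(u)})$ over distinct $u \notin \{i, \pi^{-1}(j)\}$ involve pairwise disjoint edges of the correlated model, while the two excluded indices contribute zero; hence $\sfN_\pi(i,j) \sim \mathrm{Bin}(n-2, q^2)$. The upper-tail Chernoff bound gives $\Prob(\sfN_\pi(i,j) \ge \gamma(n-2)q^2) \le \exp(-(n-2)q^2 h(\gamma)) = n^{-3}$ by the defining equation of $\gamma$, and a union bound over at most $n^2$ fake pairs yields $\Prob(\calE_F^c) = o(1)$.

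The crux is $\calE_T$. Here $\sfN_{\hat\pi}(i,\pi(i)) \sim \mathrm{Bin}(|I_i|, p_I)$ with $p_I = q(q+\rho(1-q))$ and $|I_i| \ge (1-\epsilon/16)n - 2$; letting $\lambda \triangleq p_I/q^2 = 1 + \rho(1-q)/q \ge 1+\epsilon$ (using $\rho \ge \epsilon$ and $q \le 1/2$), the threshold lies below the mean precisely when $\gamma < (1-\epsilon/16)\lambda$. I will split by the magnitude of $\lambda$. In the dense regime $\lambda = O(1)$ (equivalently $q$ bounded away from $0$), $np_I = \Theta(n) \gg \log n$ forces $h(\gamma) = 3\log n/((n-2)q^2) = o(1)$ and hence $\gamma = 1+o(1) < \lambda$, so the Chernoff lower tail gives $\exp(-\Omega(n))$ per vertex and the union bound over $n$ is trivial. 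In the sparse regime $\lambda = \omega(1)$ (which at the critical density $np_I \asymp \log n$ is forced by $\rho \ge \epsilon$, since $p_I \ge \rho q$ gives $q = O(\log n/(n\rho))$, whence $\lambda \ge \rho/q = \Omega(n/\log n)$), the defining equation $h(\gamma) \le 3\lambda/(1+\epsilon)$ yields $\gamma \sim 3\lambda/((1+\epsilon)\log\lambda)$, so $s \triangleq \gamma/((1-\epsilon/16)\lambda) = O(1/\log\lambda) = o(1)$ and $1-h(s) = O(s\log(1/s)) = o(1)$. The Chernoff exponent is therefore at least $(1-\epsilon/16)(n-2)p_I h(s) \ge (1 + \Omega(\epsilon) - o(1))\log n$, giving per-vertex probability $n^{-1 - \Omega(\epsilon) + o(1)}$ and $o(1)$ after union bound.

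The main obstacle is the sparse regime at the critical density, where the Chernoff exponent is only barely above $\log n$ and the slack needed to beat the factor $n$ from the union bound comes entirely from the assumed $\epsilon$ together with the refinement $1-h(s) = O(\log\log\lambda/\log\lambda) = o(1)$; this refinement rests on $\lambda \to \infty$ at criticality, which is precisely where the assumption $\rho \ge \epsilon$ enters. For the runtime, I would maintain $N[i,j] = \sfN_{\tilde\pi}(i,j)$ explicitly: adding a new seed $u$ only updates pairs $(i,j)$ with $A_{iu} = B_{j\tilde\pi(u)} = 1$, of which there are at most $d_A(u) \cdot d_B(\tilde\pi(u)) = O((nq)^2)$ by standard degree concentration; summing over the $|J| \le n$ updates and the initialization from $I$ yields $O(n^3 q^2)$ total work, within which scanning for a qualifying pair is absorbed.
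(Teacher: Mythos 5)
Your proof of $\calE_F$ and your induction showing the algorithm only ever adds true pairs match the paper. The gap is in $\calE_T$. You claim $\sfN_{\hat\pi}(i,\pi(i)) \sim \mathrm{Bin}(|I_i|, p_I)$ and apply a lower-tail Chernoff bound; this implicitly treats $I$ as deterministic and independent of the relevant edges. But the proposition allows $\hat\pi$ (and hence $I$) to depend on $(A,B)$ in an arbitrary way, so this distributional claim is false. Concretely, at the critical density $nqs \asymp \log n$ (where $s \triangleq q + \rho(1-q)$), a given vertex $i$ has $\Theta(\log n)$ common neighbors with $\pi(i)$ in $A\cap B$, which is $\ll \epsilon n/16$; an adversarial choice of which $\epsilon n/16$ vertices to exclude from $I$ can simply delete all of them, giving $\sfN_{\hat\pi}(i,\pi(i)) = 0$. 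So $\calE_T$ is not an event of probability $1-o(1)$; in fact it can fail surely in the sparse regime. Moreover, $\calE_T$ is strictly stronger than what the algorithm needs — it would assert that every vertex is matchable from $I$ alone in a single pass, leaving the while-loop with nothing to do.

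The paper's remedy is the percolation argument you bypass. Rather than bounding common-neighbor counts vertex by vertex from the fixed seed $I$, it argues by contradiction about the terminating set $J$: if the loop stops with $J^c \ne \emptyset$, then $e_{A\cap B}(J, J^c) = \sum_{i\in J^c} \sfN_{\tilde\pi}(i,i) < \gamma(n-2)q^2 |J^c|$; on the other hand, Lemma~\ref{lmm:crossing_edge} gives a \emph{uniform} expansion lower bound $e_{A\cap B}(J, J^c) \ge \eta |J||J^c| qs$ over \emph{every} subset $J^c$ with $|J^c| \le \epsilon n/16$ simultaneously (this union bound is affordable because the expansion probability bound is exponentially small in $|J^c|\log n$), and a case split on whether $q$ is bounded away from $0$ shows $\gamma \le \eta(1-\epsilon/16) s/q$, yielding the contradiction. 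It is precisely the uniformity over all small subsets that makes the argument robust to the data-dependent $I$ and captures the fact that $J$ grows over the course of the algorithm. Your two-regime intuition about $\lambda = s/q$ is in the right direction but needs to enter through the bound on $\gamma$ relative to the expansion constant $\eta$, not through a per-vertex binomial tail from a fixed seed set. Your runtime analysis matches the paper and is fine.
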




\begin{proof}[Proof of \prettyref{thm:exact_reovery}]

\prettyref{thm:Phi_ij_almost} ensures that, with probability $1-o(1)$, 
\prettyref{alg:GMCC} returns a mapping $\hat{\pi}: I \to [n]$ in time $O(n^C)$ such that $\hat{\pi}=\pi|_I$ and $I \ge (1-\epsilon/16) n$. 
Furthermore, \prettyref{prop:seeded} implies that, with probability $1-o(1)$, 
\prettyref{alg:recovery} outputs $\tilde{\pi} = \pi$  in $O(n^3 q^2)$ time.
Hence, \prettyref{thm:exact_reovery} follows. 
\end{proof}

\subsection{Proof of \prettyref{prop:seeded}}
To show~\prettyref{alg:recovery} eventually matches all vertices, we need a key lemma below, proving that an \ER random graph $G\sim \calG(n,p)$ satisfies a nice expansion property with high probability.
It extends \cite[Lemma 3.26]{barak2019nearly} from
$np \ge c \log n$ for a large constant $c$ to $np \ge (1+\epsilon) \log n$ by restricting to $|I| \le \epsilon n /16$.


\begin{lemma}\label{lmm:crossing_edge}
Suppose $G \sim \calG(n,p)$ with $np \ge (1+\epsilon)\log n$. 
Let $e_G(I,I^c)$ denote the total number of edges between vertices in $I$ and vertices in $I^c \triangleq [n]\setminus I$ in graph $G$.
With probability at least $1-n^{-\epsilon/8}$, for all subsets $I \subset [n]$ with $|I|\le \frac{\epsilon}{16}n$,
$e_{G}(I,I^c) \ge  \eta |I||I^c| p$, where 
$\eta$ is the unique solution in $(0,1)$ such that 
$h(\eta) = \frac{(1+\epsilon/8)\log n}{(1-\epsilon/16) np}$. In particular,
\begin{align}
\eta \ge \max \left\{ \frac{\epsilon}{16}, 1- \sqrt{ \frac{2(1+\epsilon/8)\log n}{(1-\epsilon/16) np} }\right\}.
\label{eq:eta_lower}
\end{align}

\end{lemma}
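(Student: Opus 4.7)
The plan is to combine a Chernoff lower-tail bound for the Binomial distribution with a union bound over all subsets of small size, and then extract the two claimed lower bounds on $\eta$ from elementary convexity estimates of $h$.

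First, fix $I \subset [n]$ with $|I| = k \leq \frac{\epsilon n}{16}$. Since each of the $k(n-k)$ potential crossing edges is present independently with probability $p$, we have $e_G(I,I^c) \sim \Binom(k(n-k),p)$. The standard Chernoff inequality gives, for any $\eta \in (0,1)$,
\[
\Prob\{e_G(I,I^c) \leq \eta \, k(n-k) p\} \leq \exp\bigl(-k(n-k) p \, h(\eta)\bigr),
\]
where $h$ is as in \prettyref{eq:rate_fun}. Choosing $\eta$ so that $h(\eta) = \frac{(1+\epsilon/8)\log n}{(1-\epsilon/16) np}$ and using $n-k \geq (1-\epsilon/16)n$, the right-hand side is at most $\exp(-k(1+\epsilon/8) \log n) = n^{-k(1+\epsilon/8)}$. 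A union bound over the $\binom{n}{k} \leq n^k$ choices of $I$ of size $k$ yields failure probability at most $n^{-k\epsilon/8}$ at that size. Summing the resulting geometric series over $1 \leq k \leq \epsilon n/16$ gives the stated overall probability $1 - O(n^{-\epsilon/8})$.

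Note that $\eta \in (0,1)$ is well-defined: the choice $np \geq (1+\epsilon) \log n$ is designed precisely so that $h(\eta) < 1 = h(0^+)$, and since $h$ is continuous and strictly decreasing on $(0,1)$ (because $h'(x) = \log x$), a unique $\eta \in (0,1)$ exists. The two lower bounds in \prettyref{eq:eta_lower} then follow from standard estimates on $h$ near its endpoints. For the bound $\eta \geq 1 - \sqrt{2t}$ with $t \triangleq h(\eta)$, use the Taylor expansion $h(1-y) = y^2/2 + O(y^3)$ together with the fact that the auxiliary function $y \mapsto h(1-y) - y^2/2$ has a non-negative derivative on $[0,1)$, yielding $h(1-y) \geq y^2/2$; monotonicity of $h$ then converts this into $\eta \geq 1 - \sqrt{2t}$. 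For the bound $\eta \geq \epsilon/16$, it suffices by monotonicity to verify $h(\epsilon/16) \geq \frac{(1+\epsilon/8)}{(1-\epsilon/16)(1+\epsilon)}$ for all $\epsilon \in (0,1]$, which is a one-variable inequality confirmed by a direct comparison of series expansions (the right-hand side is at most $1 - \frac{13\epsilon}{16} + O(\epsilon^2)$, while the left-hand side equals $1 - \frac{\epsilon}{16}(1 - \log(\epsilon/16)) - \frac{\epsilon}{16}$, and the gap is easily checked to be non-negative on $(0,1]$).

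The main obstacle I anticipate is the extension of the range of $|I|$ beyond the regime treated in \cite[Lemma 3.26]{barak2019nearly}, where $np$ was assumed to exceed a large constant times $\log n$. Here we are at the information-theoretic threshold $np \geq (1+\epsilon)\log n$, so the Chernoff bound has almost no slack, and it is crucial that we restrict to $|I| \leq \epsilon n/16$ (making $n - k \geq (1-\epsilon/16)n$) to produce the factor $(1+\epsilon/8)$ in the exponent after the union bound, leaving a net $n^{-k\epsilon/8}$ decay. The elementary but slightly delicate step is the numerical verification that $h(\epsilon/16)$ dominates the ratio on the right, guaranteeing the constant floor $\eta \geq \epsilon/16$ that the seeded-matching algorithm relies upon.
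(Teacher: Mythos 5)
Your proposal follows exactly the paper's route: Chernoff lower tail for $\Binom(k(n-k),p)$, choosing $\eta$ so that the union-bound entropy $n^k$ is beaten by a margin $n^{-k\epsilon/8}$, and extracting the two floors on $\eta$ via monotonicity of $h$ together with $h(x)\ge(1-x)^2/2$. Two small points are worth flagging. First, your formula for $h(\epsilon/16)$ carries an extra $-\tfrac{\epsilon}{16}$; the correct identity is $h(\epsilon/16)=1-\tfrac{\epsilon}{16}\bigl(1-\log(\epsilon/16)\bigr)$. Second, the asserted inequality $h(\epsilon/16)\ge\frac{1+\epsilon/8}{(1-\epsilon/16)(1+\epsilon)}$ does \emph{not} hold for all $\epsilon\in(0,1]$: for $\epsilon\lesssim 16e^{-12}$ the $\tfrac{\epsilon}{16}\log(\epsilon/16)$ term makes the left side drop below the right side, so the ``direct comparison of series expansions'' needs the caveat that $\epsilon$ is bounded away from $0$ (this is a glitch shared with the paper's own proof, which asserts the same inequality without verification, and is harmless in context since $\epsilon$ is treated as a fixed constant throughout).
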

\begin{proof}
Note that the conclusion trivially holds when $I$ is empty. 
Fix a nonempty set $I\subset[n]$ with size $k$ where $1 \le k < \frac{\epsilon}{16}n$. Then $e_{G}(I, I^c) \sim \Binom(k(n-k),p)$. Letting $\mu=k(n-k)p$ and using the multiplicative Chernoff bound for binomial distributions~\cite[Theorem 4.5]{ProbabilityComputing05}, we get that for any $x \in (0,1)$, 
$$
\prob{e_{G}(I,I^c) \le x k(n-k) p} 
 \le \exp\left( - \mu h(x) \right),
$$
where $h(x)$ is defined in~\prettyref{eq:rate_fun}. 
Since  $np \ge (1+\epsilon) \log n$, it follows that
\begin{align}
\frac{(1+\epsilon/8) \log n}{(1-\epsilon/16) np} \le \frac{(1+\epsilon/8)}{(1-\epsilon/16)(1+\epsilon)}<1 .
\label{eq:expansion_1}
\end{align}
Therefore, there exists a unique solution $\eta$ in $(0,1)$ such that 
$h(\eta)=\frac{(1+\epsilon/8)\log n}{(1-\epsilon/16) np}$. 
By choosing $x=\eta$, we have
$$
\prob{e_{G}(I,I^c) \le \eta k(n-k) p} 
 \le \exp\left( - \mu \frac{(1+\epsilon/8)\log n}{(1-\epsilon/16) np}  \right)
 \le \exp \left( - (1+\epsilon/8) k \log n \right) ,
$$
where the last inequality holds because 
$\mu =k(n-k)p \ge k(1-\epsilon/16) np$.

 Note that the total number of $I \subset [n]$ with $|I|=k$ is $\binom{n}{k} \le n^k$. 
Thus, by a union bound,
\begin{align*}
    & \prob{\exists I \subset [n] \text{ s.t.  $1\le |I| < \frac{\epsilon}{16} n$}, \, \, e_{G}(I,I^c) 
    \le \eta |I|\left(n-|I|\right) p} \\
    & \le ~ \sum_{k=1}^{\floor{\frac{\epsilon}{16} n}{}}
    \exp\left(k  \log n - \left(1+\frac{\epsilon}{8}\right) k \log n \right) \\
    & = ~ \sum_{k=1}^{\floor{\frac{\epsilon}{16} n}{}} \exp\left( -  \frac{\epsilon}{8}k\log n  \right)  
 = O\left(n^{-\frac{\epsilon}{8}}\right)\,.
\end{align*}


Finally, we show $\eta$ satisfies the lower bound  in~\prettyref{eq:eta_lower}. 
Since 
$$
h\left( \frac{\epsilon}{16} \right) \ge \frac{(1+\epsilon/8)}{(1-\epsilon/16)(1+\epsilon)}
\ge \frac{(1+\epsilon/8) \log n}{(1-\epsilon/16) np},
$$
and $h(x)$ is decreasing over $x \in (0,1)$, we have
$\eta \ge \epsilon/16$. 
Moreover, in view of $h(x) \ge (1-x)^2/2$ for $x \in (0,1)$, we have
$$
\eta \ge 1- \sqrt{ \frac{2(1+\epsilon/8)\log n}{(1-\epsilon/16) np} }. 
$$

\end{proof}

\begin{proof}[Proof of~\prettyref{prop:seeded}] Without loss of generality, we assume $\pi = \id$.
For any $u,v \in [n]$, let $\sfN(u,v)$ denote the number of common neighbors of $u$ and $v$ in $A\cap B$. 
If $u\neq v$, then we have
$
\sfN (u,v) \iiddistr \Binom(n-2, q^2) \,. 
$
Let $\mu \equiv  (n-2)q^2$.  
By the multiplicative Chernoff bound for Binomials~\cite[Theorem 4.4]{ProbabilityComputing05},
we get
\begin{align*}
\prob{\sfN(u,v) \ge \gamma \left(n-2 \right)q^2 }  
& \le 
 \exp\left(- \mu h(\gamma) \right) = n^{-3} \,,
\end{align*}
where the last
equality applies  $h(\gamma) = \frac{3\log n}{(n-2)q^2}$.
By a union bound, we have
\begin{align}
    \prob{\exists u \neq v,\text{ s.t. }\sfN\left(u,v\right) \ge \gamma \left(n-2 \right)q^2 } 
    & \le n^2 n^{-3} = n^{-1} \, . \label{eq:exists_UNeq_pi_v}
\end{align} 

Henceforth, we assume that
$\sfN(u,v) < \gamma(n-2)q^2$
for all $ u \neq v$. We prove by induction
that $\tilde{\pi}=\id|_J$ throughout the entire course of~\prettyref{alg:recovery}.
This certainly holds at initialization when $J=I$ and
$\tilde{\pi}=\hat{\pi}$ by our standing assumption. Now, suppose this continues to hold up to the $t$-th execution of the while-loop (line 3-5) in~\prettyref{alg:recovery}. Then
at the $(t+1)$-th execution, since
$\tilde{\pi}=\id|_J$ by the induction hypothesis, it follows that for all $j \neq i$,
$$
\sfN_{\tilde{\pi}}(i,j) \le \sfN(i,j) < \gamma(n-2)q^2.
$$
Thus, at the $(t+1)$-th iteration, either the while-loop terminates 
or some vertex $i$ is added to $J$ with $\tilde{\pi}(i)=i$.
Either way, we have $\tilde{\pi}=\id|_J$. 

Next, we show that at the end of the while-loop, it must hold that 
$J=[n]$ so that $\tilde{\pi}=\id$.
Suppose not. Then by definition, for any $i\in J^c $, $N_{\tilde{\pi}}(i,i) < \gamma (n-2) q^2$. 
Then, by the fact that $\tilde{\pi}=\id|_J$, we have that
\begin{align}
    e_{A \cap B}(J,J^c) 
    =\sum_{i \in J^c} N_{\tilde{\pi}}(i,i) 
    <  \gamma (n-2) q^2 |J^c|. \label{eq:e_G_upperbound}
\end{align}
On the contrary, since $A \cap B \sim \calG(n,qs)$ for $s\triangleq q+\rho(1-q)$,
by \prettyref{lmm:crossing_edge}, with probability at least $1-n^{-\epsilon/8}$, we have 
\begin{align}
e_{A \cap B}(J,J^c)\ge  \eta |J||J^c| qs \ge 
\eta  \left(1- \frac{\epsilon}{16} \right )n |J^c| qs  \,, \label{eq:e_G_lowerbound}
\end{align}
where $\eta$ satisfies~\prettyref{eq:eta_lower} with $qs$ in place of $p$ and the last inequality holds because $|J|\ge |I|
\ge \left(1- \frac{\epsilon}{16} \right )n $.
We claim that 
\begin{align}
\gamma \le \eta  \left(1- \frac{\epsilon}{16} \right )
 \frac{s}{q} \triangleq \bar{\gamma}, \label{eq:claim_gamma}
\end{align}
which implies that \prettyref{eq:e_G_upperbound} and~\prettyref{eq:e_G_lowerbound} contradict with each other. Therefore, we must have $J=[n]$ so that $\tilde{\pi}=\id$.

It remains to prove the claim~\prettyref{eq:claim_gamma}, which reduces to proving
$\bar{\gamma}>1$ and $h(\bar{\gamma}) \ge \frac{3\log n}{(n-2)q^2}$ in view of the definition of $\gamma$ and the monotonicity of $h(\cdot)$ over $[1, \infty)$.
We divide the analysis into two cases depending on whether $q \le q_0$, where $q_0\equiv q_0(\epsilon)$ is some constant that only depends on $\epsilon$.

{\bf Case I: $q \le q_0$.} In view of $\eta \ge \epsilon/16$ as per~\prettyref{eq:eta_lower},  $s=q+\rho(1-q)$,  and $\rho \ge \epsilon$,
we have
$$
\bar{\gamma} \ge  \frac{\epsilon s}{16q} \left(1- \frac{\epsilon}{16} \right )
\ge \frac{\epsilon}{16} \left(1- \frac{\epsilon}{16} \right ) \left( 1+ \frac{\rho(1-q)}{q}  \right) >\exp\left(1+48/\epsilon\right),
$$
where the last inequality holds by choosing $q_0(\epsilon)$ to be a sufficiently small constant. 
It follows that 
$$
h\left( \bar{\gamma} \right) 
\ge \bar{\gamma} \log(\bar{\gamma}) - \bar{\gamma} 
\ge \bar{\gamma} \left(1+48/\epsilon\right) - \bar{\gamma} 
= 48 \bar{\gamma} /\epsilon 
\ge \frac{3 s}{ q} \left(1- \frac{\epsilon}{16} \right ) 
 \ge \frac{3\log n}{(n-2)q^2},
$$
where the last inequality holds in view of $nqs \ge (1+\epsilon)\log n$.

{\bf Case II: $q > q_0$.} In view of~\prettyref{eq:eta_lower} with $qs$ in place of $p$,  we have
$$
\eta \ge 1- \sqrt{ \frac{2(1+\epsilon/8)\log n}{(1-\epsilon/16)nqs}} \ge 1- \frac{\epsilon}{16},
$$
where the last inequality holds for all sufficiently large $n$.
Moreover, in view of $q \le 1/2$ and $\rho \ge \epsilon$, we have $s/q=1+\rho(1-q)/q  \ge 1+\epsilon$. Therefore, 
$$
\bar{\gamma} \ge \left( 1- \frac{\epsilon}{16} \right)^2 (1+\epsilon) \ge \left( 1- \frac{\epsilon}{8}\right) (1+\epsilon) 
\ge 1+ \frac{\epsilon}{2}. 
$$
Using the fact that $h(x) \ge (x-1)^2/3 $ when $x \in [1,2]$
and $h(x)$ is increasing over $[1, +\infty)$, we have
$$
h\left( \bar{\gamma} \right) \ge 
h \left( 1+ \frac{\epsilon}{2} \right)
\ge \frac{\epsilon^2}{12} \ge  \frac{3\log n}{(n-2)q^2},
$$
where the last inequality holds for  all sufficiently large $n$
as $q \ge q_0$.



Finally, we bound the time complexity of~\prettyref{alg:recovery}. 
In each execution of the while-loop in~\prettyref{alg:recovery},
to update $N_{\tilde{\pi}}(i,j)$ for all $i,j$, we just need to consider the newly added seed $u \in J$ and increase $N_{\tilde{\pi}}(i,j)$  by $1$ for every  $(i,j)$ pair, where
$i$ is connected to $u$ in graph $A$ and $j$ is connected to $\tilde{\pi}(u)$ in graph $B$. 
Hence, the total complexity of updating $N_{\tilde{\pi}}(i,j)$ for all $i,j$ for a given seed $u$ is $O(d_A(u) d_B(u))$,
where $d_A(u)$ and $d_B(u)$ are the degrees of $u$ in $A$ and $B$, respectively. 
Summing over all $n$ possible seeds, the total  time complexity is $O\left( 
\sum_u d_A(u) d_B(u) \right)$.
Finally, since $nq \ge \log n$, with high probability,
the max degree in $A$ and $B$ is $O(nq)$
and hence the total complexity is $O(n^3q^2)$.
%
%
\end{proof}





\section*{Acknowledgment}
The authors are grateful for the hospitality and the support of the Simons Institute for the Theory of Computing at the University of California, Berkeley, where part of this work was carried out during the program on ``Computational Complexity of Statistical Inference'' in Fall 2021.
\appendix

\section{Auxiliary results}
\label{app:pre}
The following lemma computes the cross-moments of $\bar{A}_{uv}$ and $\bar{B}_{\pi(u)\pi(v)}$ from the centered adjacency matrices.
\begin{lemma}[{\cite[Lemma 5]{mao2021testing}}]\label{lmm:cond-exp}
Let $(A,B)\sim \calG(n,q,\rho)$.
        Assume $q \le \frac{1}{2}$.
        For any 
        $0 \le \ell, m \le 2$ with $2 \le \ell + m \le 4$, 
        \begin{align}
            \expect{\sigma^{-\ell-m} \bar{A}_{uv}^\ell\bar{B}_{\pi(u)\pi(v)}^m}= \begin{cases}
            \rho^{\indc{\ell=m=1}} & \ell+m=2\\
            \frac{\rho(1-2q)}{\sqrt{q(1-q)}} & \ell+m=3\\
            \frac{q(1-q)+\rho(1-2q)^2}{q(1-q)} & \ell+m =4 
            \end{cases} \,.  \label{eq:cross_moment_equal}
        \end{align}
    Moreover, 
        \begin{align}
              \left|\expect{\sigma^{-\ell-m} \bar{A}_{uv}^\ell\bar{B}_{\pi(u)\pi(v)}^m} \right| \le |\rho|^{\indc{\ell=m=1}}\indc{\ell+m=2}  + \sqrt{\frac{1}{q}} \indc{\ell+m=3}  +\frac{1}{q} \indc{\ell+m =4} \,. \label{eq:cross_moment_bound}
        \end{align}
\end{lemma}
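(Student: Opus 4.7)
The plan is to reduce the computation of all the required cross-moments to a single algebraic identity. The key observation is that $A_{uv}\in\{0,1\}$, so $A_{uv}^2=A_{uv}$, which translates into the quadratic identity
\[
\bar A_{uv}^2 \;=\; (1-2q)\,\bar A_{uv} + \sigma^2,
\]
and analogously $\bar B_{\pi(u)\pi(v)}^2 = (1-2q)\bar B_{\pi(u)\pi(v)} + \sigma^2$. This reduces every product $\bar A_{uv}^\ell \bar B_{\pi(u)\pi(v)}^m$ with $\ell,m\in\{0,1,2\}$ to a linear combination of $1,\bar A_{uv},\bar B_{\pi(u)\pi(v)},\bar A_{uv}\bar B_{\pi(u)\pi(v)}$, whose expectations are $1,0,0,\rho\sigma^2$ respectively (the last one being the definition of the correlation $\rho$ between the two Bernoulli marginals).

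Concretely, I would dispatch the three cases:
(i) $\ell+m=2$: either $(\ell,m)=(2,0)$ or $(0,2)$, yielding $\sigma^2$; or $(\ell,m)=(1,1)$, yielding $\rho\sigma^2$. Dividing by $\sigma^2$ gives the first line of \prettyref{eq:cross_moment_equal}.
(ii) $\ell+m=3$: by symmetry it suffices to take $(\ell,m)=(2,1)$, for which the quadratic identity gives $\E[\bar A_{uv}^2\bar B_{\pi(u)\pi(v)}] = (1-2q)\rho\sigma^2$, hence division by $\sigma^3$ produces $\rho(1-2q)/\sqrt{q(1-q)}$.
(iii) $\ell+m=4$: applying the quadratic identity to both factors gives $\E[\bar A_{uv}^2\bar B_{\pi(u)\pi(v)}^2] = (1-2q)^2\rho\sigma^2 + \sigma^4$, and division by $\sigma^4$ yields the stated formula. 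This verifies \prettyref{eq:cross_moment_equal}.

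The only non-trivial step is the uniform bound \prettyref{eq:cross_moment_bound} under $q\le 1/2$. Using $|\rho|\le 1$, this reduces to two elementary inequalities on $(0,1/2]$: $(1-2q)^2\le 1-q$ and $(1-2q)^2+q(1-q)\le 1-q$, which together give $|1-2q|/\sqrt{q(1-q)}\le 1/\sqrt q$ (case $\ell+m=3$) and $|(1-2q)^2\rho+\sigma^2|/\sigma^2 \le 1/q$ (case $\ell+m=4$). Both inequalities reduce to the single polynomial bound $q(4q-3)\le 0$, valid for $q\in[0,3/4]\supset[0,1/2]$. There is no genuine obstacle here — the ``hard part'' is only bookkeeping the sign and the comparison $(1-2q)^2\le 1-q$, which is what the hypothesis $q\le 1/2$ is used for. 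Since this lemma is quoted verbatim from \cite{mao2021testing}, a one-paragraph verification along these lines suffices.
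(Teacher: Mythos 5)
Your argument is correct in substance and is the natural way to verify this elementary moment computation; the paper itself does not include a proof (it simply cites \cite[Lemma 5]{mao2021testing}), so there is no in-paper proof to compare against. The quadratic relation $\bar A_{uv}^2=(1-2q)\bar A_{uv}+\sigma^2$ is the right reduction, and your three-case computation of \prettyref{eq:cross_moment_equal} checks out.

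One small algebraic slip in the bound \prettyref{eq:cross_moment_bound}: the two inequalities you need are indeed $(1-2q)^2\le 1-q$ and $(1-2q)^2+q(1-q)\le 1-q$, but they do not ``both reduce to $q(4q-3)\le 0$.'' The first simplifies to $q(4q-3)\le 0$ (valid on $[0,3/4]$), while the second simplifies to $3q^2-2q\le 0$, i.e.\ $q(3q-2)\le 0$ (valid on $[0,2/3]$). Since the second is the stronger constraint, the pair holds iff $q\le 2/3$, which still covers the hypothesis $q\le 1/2$, so the conclusion is unaffected. You should also be explicit that $q\le 1/2$ gives $1-2q\ge 0$, which is what lets you drop the absolute value when passing from $|1-2q|$ to $1-2q$ before squaring.
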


\begin{lemma}
\label{lmm:enum}
 The total number of unlabeled connected graphs 
 with $v$ vertices and excess $k\geq -1$ is at most
 \[
|\calJ(v-1)| v^{2k+2},
 \]
 where $|\calJ(K)|$ denotes the number of unlabeled trees with $K$ edges.
\end{lemma}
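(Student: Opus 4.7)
The plan is to use the fact that every connected graph decomposes into a spanning tree plus a set of ``excess'' edges, and to count trees and extra edges separately.

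First, I would reduce to counting augmented graphs $(G,T)$, where $T$ is a distinguished spanning tree of $G$. Each unlabeled connected $G$ with $v$ vertices and excess $k$ has $v+k$ edges and admits at least one spanning tree, so the forgetful map on isomorphism classes $(G,T)\mapsto G$ is surjective; consequently, the number of unlabeled $G$ is at most the number of isomorphism classes of pairs $(G,T)$. An isomorphism class of $(G,T)$ is in turn determined by (i) the isomorphism type of $T$ as an unlabeled unrooted tree with $v$ vertices, together with (ii) the set $E := E(G)\setminus E(T)$ of $k+1$ extra edges, viewed as an orbit of $(k+1)$-subsets of $\binom{V(T)}{2}$ under the action of $\aut(T)$.

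Next, I would bound each factor. Let $t(v)$ denote the number of unlabeled unrooted trees with $v$ vertices. Since rooting a tree at any vertex yields an unlabeled rooted tree, the root-forgetting map from $\calJ(v-1)$ (unlabeled rooted trees with $v-1$ edges, equivalently $v$ vertices) onto the set of unlabeled unrooted trees with $v$ vertices is surjective, giving $t(v) \le |\calJ(v-1)|$. For (ii), the number of orbits of $(k+1)$-subsets of vertex-pairs under $\aut(T)$ is at most the unmodded count $\binom{\binom{v}{2}}{k+1}$, since modding out by any group action only decreases cardinality (and enlarging the subset universe to include edges of $T$ itself only inflates the bound).

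Combining these, the total count is at most
\[
t(v) \cdot \binom{\tbinom{v}{2}}{k+1} \le |\calJ(v-1)| \cdot v^{2(k+1)} = |\calJ(v-1)| \cdot v^{2k+2},
\]
where the inequality uses $\binom{\tbinom{v}{2}}{k+1} \le \binom{v}{2}^{k+1} \le v^{2(k+1)}$. There is no serious obstacle; the only conceptual step is the observation that it suffices to upper-bound the orbit count by the unmodded subset count, so the entire argument reduces to elementary counting once the spanning-tree decomposition is in place.
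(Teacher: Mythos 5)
Your proof is correct and follows essentially the same route as the paper: enumerate the isomorphism class of a spanning tree (bounded by $|\calJ(v-1)|$) and then the $k+1$ excess edges (bounded by $\binom{\binom{v}{2}}{k+1}\le v^{2k+2}$). Your write-up is more careful than the paper's one-line proof, spelling out the surjectivity of the forgetful map $(G,T)\mapsto G$ and the orbit-count argument that the paper leaves implicit, but the underlying counting is the same.
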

\begin{proof}
The total number of such unlabeled graphs is at most 
\begin{align}
    |\calJ(v-1)|{\binom{\binom{v}{2}}{k+1}}\le |\calJ(v-1)| v^{2(k+1)} \,,  \label{eq:count_non_tree}
\end{align}
where we first enumerate all possible
isomorphic classes of spanning trees with $v-1$ edges and then 
add $k+1$ more edges out of all possible $\binom{v}{2}$ node pairs. 
\end{proof}

\section{A data-driven choice of the threshold $\tau$}\label{app:unknown_rho}

In this section, we describe a data-driven approach to choose threshold $\tau$ in~\prettyref{alg:GMCC} without the knowledge of $q$ and $\rho$. For each $i\in [n]$, 
let $\psi(i)$ denote one of the maximizer of $ \Phi_{ij} $ over all $j\in [n]$. 
 Let $k$ denote the corresponding node such that $\Phi_{k\psi(k)} $ is the median of $\{ \Phi_{i\psi(i)}: i \in [n]\}$. Set $\hat{\tau} = \frac{1}{2}\Phi_{k\psi(k)}$. 
We claim that 
$ \frac{1}{2}c \mu \le \hat{\tau} \le \frac{1}{2} (2-c)\mu$
for any constant $0<c<1$ with probability $1-o(1)$ when $nq=\omega(1)$
and with probability $1-3\delta$
for any constant $\delta \in (0,1)$
when $nq\ge C(\epsilon,\delta)$.
Hence by \prettyref{thm:Phi_ij_almost}, $|I|=(1-o(1))n$ with probability $1-o(1)$ in the former case
and 
$\expect{|I|} \ge (1-3\delta)(1-\delta) n \ge (1-4 \delta)$ in the latter case.


It remains to show the claim, which reduces to proving 
$c \mu \le \Phi_{k\psi(k)}\le (2-c)\mu$.
Without loss of generality, we assume $\pi = \id$. 
Let 
\[
J=\left\{  i \in [n]: i \in  \arg \max_{j} \Phi_{ij} \text{ and }  c \mu  \le \Phi_{ii} \le  (2-c)\mu \right\} \,. 
\]
Recall that $F= \{ i: | \Phi_{ii} - \mu | > (1-c)\mu  \} $ as defined in \prettyref{eq:set_F}. By \prettyref{eq:i_neq_j_union}, with probability at least $1-o(1)$, $\Phi_{ij} < c\mu$ for all $ i \neq j$ and hence  $J = [n] \backslash F$.
Moreover, we have $\expect{|F|} \le \gamma n $, where $\gamma$ is given in \prettyref{eq:gamma_F}. By Markov's inequality, 
$
\prob{ |F| \ge  n/3 } \le 3 \gamma \,.
$
Note that $\gamma=o(1)$ if $nq = \omega(1)$, and $\gamma< \delta$ for any constant $\delta \in (0,1)$ if $nq\ge C(\epsilon,\delta)$.
Hence, we have $|J| \ge 2n/3$ with probability $1-o(1)$ if $nq = \omega(1)$, and with probability $1-3\delta$ if $nq\ge C(\epsilon,\delta)$. 
Henceforth assume $|J| \ge  2n/3$. 
If $\Phi_{k \psi(k)} > (2-c)\mu$, then there are at least $n/2$ nodes $i$ with $ \Phi_{i\psi(i)} >  (2-c)\mu$, contradicting $|J| \ge  2n/3$. Analogous argument holds for the case of $\Phi_{k \psi(k)} < c\mu$. Thus, we must have $ c\mu \le \Phi_{k \psi(k)} \le (2-c)\mu$.

\bibliography{low_degree_ref}

\newcommand{\etalchar}[1]{$^{#1}$}
\begin{thebibliography}{FMWX22b}

\bibitem[ABK15]{aflalo2015convex}
Yonathan Aflalo, Alexander Bronstein, and Ron Kimmel.
\newblock On convex relaxation of graph isomorphism.
\newblock {\em Proceedings of the National Academy of Sciences},
  112(10):2942--2947, 2015.

\bibitem[ADH{\etalchar{+}}08]{alon2008biomolecular}
Noga Alon, Phuong Dao, Iman Hajirasouliha, Fereydoun Hormozdiari, and S~Cenk
  Sahinalp.
\newblock Biomolecular network motif counting and discovery by color coding.
\newblock {\em Bioinformatics}, 24(13):i241--i249, 2008.

\bibitem[AR02]{arvind2002approximation}
Vikraman Arvind and Venkatesh Raman.
\newblock Approximation algorithms for some parameterized counting problems.
\newblock In {\em International Symposium on Algorithms and Computation}, pages
  453--464. Springer, 2002.

\bibitem[AYZ95]{alon1995color}
Noga Alon, Raphael Yuster, and Uri Zwick.
\newblock Color-coding.
\newblock {\em Journal of the ACM (JACM)}, 42(4):844--856, 1995.

\bibitem[Bab16]{Babai2016}
L\'{a}szl\'{o} Babai.
\newblock Graph isomorphism in quasipolynomial time [extended abstract].
\newblock In {\em Proceedings of the Forty-eighth Annual ACM Symposium on
  Theory of Computing}, STOC '16, pages 684--697, New York, NY, USA, 2016. ACM.

\bibitem[BCL{\etalchar{+}}19]{barak2019nearly}
Boaz Barak, Chi-Ning Chou, Zhixian Lei, Tselil Schramm, and Yueqi Sheng.
\newblock ({N}early) efficient algorithms for the graph matching problem on
  correlated random graphs.
\newblock In {\em Advances in Neural Information Processing Systems}, pages
  9186--9194, 2019.

\bibitem[BCPP98]{burkard1998quadratic}
Rainer~E Burkard, Eranda Cela, Panos~M Pardalos, and Leonidas~S Pitsoulis.
\newblock The quadratic assignment problem.
\newblock In {\em Handbook of combinatorial optimization}, pages 1713--1809.
  Springer, 1998.

\bibitem[BDER16]{bubeck2016testing}
S{\'e}bastien Bubeck, Jian Ding, Ronen Eldan, and Mikl{\'o}s~Z R{\'a}cz.
\newblock Testing for high-dimensional geometry in random graphs.
\newblock {\em Random Structures \& Algorithms}, 49(3):503--532, 2016.

\bibitem[BH80]{beyer1980constant}
Terry Beyer and Sandra~Mitchell Hedetniemi.
\newblock Constant time generation of rooted trees.
\newblock {\em SIAM Journal on Computing}, 9(4):706--712, 1980.

\bibitem[Bol82]{bollobas1982distinguishing}
B{\'e}la Bollob{\'a}s.
\newblock Distinguishing vertices of random graphs.
\newblock {\em North-Holland Mathematics Studies}, 62:33--49, 1982.

\bibitem[CB81]{colbourn1981linear}
Charles~J Colbourn and Kellogg~S Booth.
\newblock Linear time automorphism algorithms for trees, interval graphs, and
  planar graphs.
\newblock {\em SIAM Journal on Computing}, 10(1):203--225, 1981.

\bibitem[CK16]{cullina2016improved}
Daniel Cullina and Negar Kiyavash.
\newblock Improved achievability and converse bounds for {E}rdos-{R}{\'e}nyi
  graph matching.
\newblock {\em ACM SIGMETRICS performance evaluation review}, 44(1):63--72,
  2016.

\bibitem[CK17]{cullina2017exact}
Daniel Cullina and Negar Kiyavash.
\newblock Exact alignment recovery for correlated {E}rdős-{R}ényi graphs.
\newblock {\em arXiv 1711.06783}, 2017.

\bibitem[CKMP19]{cullina2019partial}
Daniel Cullina, Negar Kiyavash, Prateek Mittal, and H~Vincent Poor.
\newblock Partial recovery of {E}rdős-{R}ényi graph alignment via $k$-core
  alignment.
\newblock {\em Proceedings of the ACM on Measurement and Analysis of Computing
  Systems}, 3(3):1--21, 2019.

\bibitem[CP08]{czajka2008improved}
Tomek Czajka and Gopal Pandurangan.
\newblock Improved random graph isomorphism.
\newblock {\em Journal of Discrete Algorithms}, 6(1):85--92, 2008.

\bibitem[DCKG19]{dai2019analysis}
Osman~Emre Dai, Daniel Cullina, Negar Kiyavash, and Matthias Grossglauser.
\newblock Analysis of a canonical labeling algorithm for the alignment of
  correlated {E}rdos-{R}{\'e}nyi graphs.
\newblock {\em Proceedings of the ACM on Measurement and Analysis of Computing
  Systems}, 3(2):1--25, 2019.

\bibitem[DD22]{ding2022matching}
Jian Ding and Hang Du.
\newblock Matching recovery threshold for correlated random graphs.
\newblock {\em arXiv preprint arXiv:2205.14650}, 2022.

\bibitem[Die17]{diestel}
Reinhard Diestel.
\newblock {\em Graph theory}.
\newblock Springer, fifth edition, 2017.

\bibitem[Din15]{dinneen2015constant}
Michael~J Dinneen.
\newblock Constant time generation of free trees.
\newblock {\em University of Auckland Lecture}, 2015.

\bibitem[DMWX21]{ding2021efficient}
Jian Ding, Zongming Ma, Yihong Wu, and Jiaming Xu.
\newblock Efficient random graph matching via degree profiles.
\newblock {\em Probability Theory and Related Fields}, 179(1):29--115, 2021.

\bibitem[FMWX22a]{FMWX19a}
Zhou Fan, Cheng Mao, Yihong Wu, and Jiaming Xu.
\newblock Spectral graph matching and regularized quadratic relaxations {I}
  algorithm and {G}aussian analysis.
\newblock {\em Foundations of Computational Mathematics}, pages 1--55, Jun
  2022.

\bibitem[FMWX22b]{FMWX19b}
Zhou Fan, Cheng Mao, Yihong Wu, and Jiaming Xu.
\newblock Spectral graph matching and regularized quadratic relaxations {II}:
  {Erd\H{o}s-R\'{e}nyi} graphs and universality.
\newblock {\em Foundations of Computational Mathematics}, pages 1--51, Jun
  2022.

\bibitem[GM20]{ganassali2020tree}
Luca Ganassali and Laurent Massouli{\'e}.
\newblock From tree matching to sparse graph alignment.
\newblock In {\em Conference on Learning Theory}, pages 1633--1665. PMLR, 2020.

\bibitem[GML21]{ganassali2021impossibility}
Luca Ganassali, Laurent Massouli{\'e}, and Marc Lelarge.
\newblock Impossibility of partial recovery in the graph alignment problem.
\newblock In {\em Conference on Learning Theory}, pages 2080--2102. PMLR, 2021.

\bibitem[GML22]{ganassali2021correlation}
Luca Ganassali, Laurent Massouli\'{e}, and Marc Lelarge.
\newblock {Correlation Detection in Trees for Planted Graph Alignment}.
\newblock In {\em 13th Innovations in Theoretical Computer Science Conference
  (ITCS 2022)}, pages 74:1--74:8, 2022.

\bibitem[HM22]{Hall2020partial}
Georgina Hall and Laurent Massouli{\'e}.
\newblock Partial recovery in the graph alignment problem.
\newblock {\em Operations Research}, 2022.

\bibitem[HS17]{hopkins2017bayesian}
Samuel~B Hopkins and David Steurer.
\newblock Efficient bayesian estimation from few samples: Community detection
  and related problems.
\newblock In {\em 2017 IEEE 58th Annual Symposium on Foundations of Computer
  Science (FOCS)}, pages 379--390, Los Alamitos, CA, USA, Oct 2017. IEEE
  Computer Society.

\bibitem[Jor69]{Jordan1869}
Camille Jordan.
\newblock Sur les assemblages de lignes.
\newblock {\em Journal für die reine und angewandte Mathematik}, 70:185--190,
  1869.

\bibitem[LFF{\etalchar{+}}16]{lyzinski2016graph}
Vince Lyzinski, Donniell Fishkind, Marcelo Fiori, Joshua Vogelstein, Carey
  Priebe, and Guillermo Sapiro.
\newblock Graph matching: Relax at your own risk.
\newblock {\em IEEE Transactions on Pattern Analysis \& Machine Intelligence},
  38(1):60--73, 2016.

\bibitem[MMS10]{makarychev2010maximum}
Konstantin Makarychev, Rajsekar Manokaran, and Maxim Sviridenko.
\newblock Maximum quadratic assignment problem: Reduction from maximum label
  cover and lp-based approximation algorithm.
\newblock In {\em International Colloquium on Automata, Languages, and
  Programming}, pages 594--604. Springer, 2010.

\bibitem[MNS15]{mossel2015reconstruction}
Elchanan Mossel, Joe Neeman, and Allan Sly.
\newblock Reconstruction and estimation in the planted partition model.
\newblock {\em Probability Theory and Related Fields}, 162(3):431--461, 2015.

\bibitem[MRT21a]{mao2021exact}
Cheng Mao, Mark Rudelson, and Konstantin Tikhomirov.
\newblock Exact matching of random graphs with constant correlation.
\newblock {\em arXiv preprint arXiv:2110.05000}, 2021.

\bibitem[MRT21b]{mao2021random}
Cheng Mao, Mark Rudelson, and Konstantin Tikhomirov.
\newblock Random graph matching with improved noise robustness.
\newblock In {\em Proceedings of Thirty Fourth Conference on Learning Theory},
  volume 134 of {\em Proceedings of Machine Learning Research}, pages
  3296--3329, 2021.

\bibitem[MSOI{\etalchar{+}}02]{milo2002network}
Ron Milo, Shai Shen-Orr, Shalev Itzkovitz, Nadav Kashtan, Dmitri Chklovskii,
  and Uri Alon.
\newblock Network motifs: simple building blocks of complex networks.
\newblock {\em Science}, 298(5594):824--827, 2002.

\bibitem[MU05]{ProbabilityComputing05}
Michael Mitzenmacher and Eli Upfal.
\newblock {\em Probability and Computing: Randomized Algorithms and
  Probabilistic Analysis}.
\newblock Cambridge University Press, USA, 2005.

\bibitem[MWXY21]{mao2021testing}
Cheng Mao, Yihong Wu, Jiaming Xu, and Sophie~H Yu.
\newblock Testing network correlation efficiently via counting trees.
\newblock {\em arXiv preprint arXiv:2110.11816}, 2021.

\bibitem[NS08]{narayanan2008robust}
Arvind Narayanan and Vitaly Shmatikov.
\newblock Robust de-anonymization of large sparse datasets.
\newblock In {\em 2008 IEEE Symposium on Security and Privacy (sp 2008)}, pages
  111--125. IEEE, 2008.

\bibitem[Ott48]{otter1948number}
Richard Otter.
\newblock The number of trees.
\newblock {\em Annals of Mathematics}, pages 583--599, 1948.

\bibitem[OW22]{olsson2022automorphisms}
Christoffer Olsson and Stephan Wagner.
\newblock Automorphisms of random trees.
\newblock In {\em 33rd International Conference on Probabilistic, Combinatorial
  and Asymptotic Methods for the Analysis of Algorithms (AofA 2022)}. Schloss
  Dagstuhl-Leibniz-Zentrum f{\"u}r Informatik, 2022.

\bibitem[PG11]{pedarsani2011privacy}
Pedram Pedarsani and Matthias Grossglauser.
\newblock On the privacy of anonymized networks.
\newblock In {\em Proceedings of the 17th ACM SIGKDD international conference
  on Knowledge discovery and data mining}, pages 1235--1243, 2011.

\bibitem[P{\'o}l37]{polya1937kombinatorische}
George P{\'o}lya.
\newblock Kombinatorische anzahlbestimmungen f{\"u}r gruppen, graphen und
  chemische verbindungen.
\newblock {\em Acta mathematica}, 68:145--254, 1937.

\bibitem[RPS{\etalchar{+}}21]{ribeiro2021survey}
Pedro Ribeiro, Pedro Paredes, Miguel~EP Silva, David Aparicio, and Fernando
  Silva.
\newblock A survey on subgraph counting: concepts, algorithms, and applications
  to network motifs and graphlets.
\newblock {\em ACM Computing Surveys (CSUR)}, 54(2):1--36, 2021.

\bibitem[SXB08]{singh2008global}
Rohit Singh, Jinbo Xu, and Bonnie Berger.
\newblock Global alignment of multiple protein interaction networks with
  application to functional orthology detection.
\newblock {\em Proceedings of the National Academy of Sciences},
  105(35):12763--12768, 2008.

\bibitem[Ume88]{umeyama1988eigendecomposition}
Shinji Umeyama.
\newblock An eigendecomposition approach to weighted graph matching problems.
\newblock {\em IEEE Transactions on Pattern Analysis and Machine Intelligence},
  10(5):695--703, 1988.

\bibitem[WROM86]{wright1986constant}
Robert~Alan Wright, Bruce Richmond, Andrew Odlyzko, and Brendan~D McKay.
\newblock Constant time generation of free trees.
\newblock {\em SIAM Journal on Computing}, 15(2):540--548, 1986.

\bibitem[WXY22]{wu2021settling}
Yihong Wu, Jiaming Xu, and Sophie~H. Yu.
\newblock Settling the sharp reconstruction thresholds of random graph
  matching.
\newblock {\em IEEE Transactions on Information Theory}, 68(8):5391--5417, Apr
  2022.

\bibitem[YG13]{yartseva2013performance}
Lyudmila Yartseva and Matthias Grossglauser.
\newblock On the performance of percolation graph matching.
\newblock In {\em Proceedings of the first ACM conference on Online social
  networks}, pages 119--130, 2013.

\bibitem[ZBV08]{zaslavskiy2008path}
Mikhail Zaslavskiy, Francis Bach, and Jean-Philippe Vert.
\newblock A path following algorithm for the graph matching problem.
\newblock {\em IEEE Transactions on Pattern Analysis and Machine Intelligence},
  31(12):2227--2242, 2008.

\end{thebibliography}
\bibliographystyle{alpha}

\end{document}